\def\llncs{0}
\def\fullpage{1}
\def\anonymous{0}
\def\notxfont{0}
\def\submission{0}
\def\cameraready{0}
\def\commentanonymous{0}
\def\anonymous{1}
\def\llncs{1}
\def\submission{1}
\def\llncs{1}
\def\anonymous{0}
\def\llncs{0}
\definecolor{darkblue}{rgb}{0,0,0.6}
\definecolor{darkgreen}{rgb}{0,0.5,0}
\definecolor{maroon}{rgb}{0.5,0.1,0.1}
\definecolor{dpurple}{rgb}{0.2,0,0.65}
\newtheoremstyle{thicktheorem}%
{\topsep}
{\topsep}
{\itshape}{}%
{\bfseries}%
{.}
{ }%
{\thmname{#1}\thmnumber{ #2}%
		\thmnote{ (#3)}%
}
\newtheoremstyle{remark}%name
{\topsep}
{\topsep}
	{}%body font
	{}%indent amount
	{}%theorem head font
	{.}%punctuation after theorem head
	{ }%space after theorem head
	{\textit{\thmname{#1}}\thmnumber{ #2}%theorem head specs
			\thmnote{ (#3)}%
	}
	\theoremstyle{thicktheorem}
	\newtheorem{theorem}{Theorem}[section]
	\newtheorem{lemma}[theorem]{Lemma}
    \newtheorem{assumption}[theorem]{Assumption}
	\newtheorem{corollary}[theorem]{Corollary}
	\newtheorem{proposition}[theorem]{Proposition}
	\newtheorem{definition}[theorem]{Definition}
	\newtheorem{game}[theorem]{Game}
	\theoremstyle{remark}
	\newtheorem{claim}[theorem]{Claim}
	\newtheorem{remark}[theorem]{Remark}
    \newtheorem{observation}[theorem]{Observation}
	\crefname{theorem}{Theorem}{Theorems}
	\crefname{assumption}{Assumption}{Assumptions}
	\crefname{construction}{Construction}{Constructions}
	\crefname{corollary}{Corollary}{Corollaries}
	\crefname{conjecture}{Conjecture}{Conjectures}
	\crefname{definition}{Definition}{Definitions}
	\crefname{exmaple}{Example}{Examples}
	\crefname{experiment}{Experiment}{Experiments}
	\crefname{counterexample}{Counterexample}{Counterexamples}
	\crefname{lemma}{Lemma}{Lemmata}
	\crefname{observation}{Observation}{Observations}
	\crefname{remark}{Remark}{Remarks}
	\crefname{claim}{Claim}{Claims}
	\crefname{fact}{Fact}{Facts}
	\crefname{note}{Note}{Notes}
 \crefname{appendix}{App.}{Appendices}
 \crefname{section}{Sec.}{Sections}
\renewcommand*{\backref}[1]{}
	\renewcommand*{\backref}[1]{(Cited on page~#1.)}
\definecolor{bcolor}{rgb}{0.53, 0.66, 0.42}
\newcommand{\taiga}[1]{$\ll$\textsf{\color{blue} Taiga: { #1}}$\gg$}
\newcommand{\mor}[1]{$\ll$\textsf{\color{red} Tomoyuki: { #1}}$\gg$}
\newcommand{\matthew}[1]{$\ll$\textsf{\color{violet} Matthew: { #1}}$\gg$}
\newcommand{\bnote}[1]{$\ll$\textsf{\color{bcolor} Bruno: { #1}}$\gg$}
\newcommand{\eli}[1]{$\ll$\textsf{\color{orange} Eli: { #1}}$\gg$}
\newcommand{\taiga}[1]{}
\newcommand{\mor}[1]{}
\newcommand{\matthew}[1]{}
\newcommand{\bnote}[1]{}
\newcommand{\eli}[1]{}
\newcommand{\OWP}{\mathsf{OWPuzz}}
\newcommand{\puzz}{\mathsf{puzz}}
\newcommand{\ans}{\mathsf{ans}}
\newcommand{\Ver}{\mathsf{Ver}}
\newcommand{\Samp}{\algo{Samp}}
\newcommand{\chosen}{\leftarrow}
\newcommand{\la}{\leftarrow}
\newcommand{\ra}{\rightarrow}
\newcommand{\seteq}{\coloneqq}
\newcommand{\cA}{\mathcal{A}}
\newcommand{\cB}{\mathcal{B}}
\newcommand{\cC}{\mathcal{C}}
\newcommand{\cD}{\mathcal{D}}
\newcommand{\cG}{\mathcal{G}}
\newcommand{\cH}{\mathcal{H}}
\newcommand{\cI}{\mathcal{I}}
\newcommand{\cK}{\mathcal{K}}
\newcommand{\cL}{\mathcal{L}}
\newcommand{\cM}{\mathcal{M}}
\newcommand{\cN}{\mathcal{N}}
\newcommand{\cP}{\mathcal{P}}
\newcommand{\cQ}{\mathcal{Q}}
\newcommand{\cR}{\mathcal{R}}
\newcommand{\cS}{\mathcal{S}}
\newcommand{\cT}{\mathcal{T}}
\newcommand{\cU}{\mathcal{U}}
\newcommand{\cV}{\mathcal{V}}
\def\makeuppercase#1{
\expandafter\newcommand\csname tl#1\endcsname{\widetilde{#1}}
}
\def\makelowercase#1{
\expandafter\newcommand\csname tl#1\endcsname{\widetilde{#1}}
}
\newcommand{\N}{\mathbb{N}}
\newcommand{\R}{\mathbb{R}}
\newcommand{\A}{\entity{A}}
\newcommand*{\PP}{\keys{PP}}
\newcommand*{\algo}[1]{\ensuremath{\mathsf{#1}}}
\newcommand*{\entity}[1]{\mathcal{#1}}
\newenvironment{boxfig}[2]{\begin{figure}[#1]\fbox{\begin{minipage}{0.97\linewidth}
                        \vspace{0.2em}
                        \makebox[0.025\linewidth]{}
                        \begin{minipage}{0.95\linewidth}
            {{
                        #2 }}
                        \end{minipage}
                        \vspace{0.2em}
                        \end{minipage}}}{\end{figure}}
\newcommand{\bit}{\{0,1\}}
\newcommand{\Gen}{\algo{Gen}}
\newcommand{\Vrfy}{\algo{Ver}}
\newcommand{\vrfy}{\algo{Vrfy}}
\newcommand{\negl}{{\mathsf{negl}}}
\newcommand{\poly}{{\mathrm{poly}}}
\newcommand{\zo}[1]{\{0,1\}^{#1}}
\newcommand{\complexityfont}[1]{\mathsf{#1}}
\def\PP           {\complexityfont{PP}}
\newcommand{\SD}{\Delta}
\newcommand{\E}{\mathbb{E}}
\newcommand{\BigPr}[2]{
\Pr\left[
\begin{array}{c}
#1
\end{array}
:
\begin{array}{c}
#2
\end{array}
\right]
}
\let\oldvec\vec% Store \vec in \oldvec
\let\vec\oldvec% Restore \vec from \oldvec
\renewcommand*\l@author[2]{}
\renewcommand*\l@title[2]{}
\title{
\textbf{Cryptographic Conditions for Efficient Testing of Distributions and Quantum States
%%\thanks{}
}}
\begin{document}
%\author{}
%\institute{}

\ifnum\anonymous=1
\author{\empty}
%\institute{\empty}
\else
%
%  For camera ready version.
%
\ifnum\llncs=1
\index{Taiga, Hiroka}
\author{
	Taiga Hiroka\inst{1} 
}
\institute{
	Yukawa Institute for Theoretical Physics, Kyoto University, Japan  \and NTT Corporation, Tokyo, Japan
}
\else
%
%   For full/eprint version, etc.
%
\author[1]{Bruno Cavalar}
\author[2]{Eli Goldin}
\author[1]{Matthew Gray}
\author[3,4]{Taiga Hiroka}
\author[3]{Min-Hsiu Hsieh}
\author[4,3]{Tomoyuki Morimae}
\affil[1]{{\small University of Oxford, UK}\authorcr{\small \{bruno.cavalar,matthew.gray\}@cs.ox.ac.uk}}
\affil[2]{{\small New York University, USA}\authorcr{\small eli.goldin@nyu.edu}}
\affil[3]{{\small Hon-Hai Research Institute, Taiwan}\authorcr{\small \{min-hsiu.hsieh,taiga.hiroka\}@foxconn.com}}
\affil[4]{{\small Yukawa Institute for Theoretical Physics, Kyoto University, Japan}\authorcr{\small tomoyuki.morimae@yukawa.kyoto-u.ac.jp}}

\renewcommand\Authands{, }
\fi %%%%% END OF LNCS branch
\fi

\ifnum\llncs=1
\date{}
\else
\date{}
\fi

\maketitle

%\ifnum\llncs=0
%\thispagestyle{fancy}
%\rhead{YITP-24-125}
%\else
%\fi

\pagenumbering{gobble} % Turn off page numbering temporarily

\begin{abstract}

One of the most fundamental problems in distribution testing is the identity testing problem:
given samples $x_1,\ldots,x_s$, the goal is to determine whether the samples are drawn from a target distribution $\mathcal{D}$.
When $\mathcal{D}$ is a distribution over $\bit^n$, the optimal sample complexity of identity testing is known to be $\Omega(\sqrt{2^n})$.
Furthermore, most existing results assume that the samples $x_1,\ldots,x_s$ are generated independently from an unknown distribution.

In this work, we overcome both of these limitations by initiating study of distribution testing in a more realistic setting.
In our model, the unknown distribution is promised to be efficiently samplable, while allowing the observed samples $x_1,\ldots,x_s$ to be adversarially generated and arbitrarily correlated.
Under this model, we show that polynomially many samples suffice to verify distributions.
We further characterize the computational complexity of verifying classically- and quantumly-samplable distributions.
In particular, we prove that every classically-samplable distribution is efficiently verifiable if and only if one-way functions do not exist.
For quantumly samplable-distributions, we show that efficient verification is possible assuming the non-existence of one-way puzzles, while QEFID implies the hardness of such verification.
Our techniques also extend to verifications of quantum states: every efficiently generatable quantum state is verifiable assuming the non-existence of weak non-uniform EFI, whereas the existence of EFI implies hardness.

In establishing some of our results, we employ Kolmogorov complexity techniques in a novel manner.
We also present multiple applications of Kolmogorov complexity that are of independent interest.
In particular, we show that certified randomness with a classical efficient prover can be achieved without computational assumptions when inefficient verification is allowed.
Furthermore, we also show that a natural quantum extension of a well-studied Kolmogorov complexity measure provides a good benchmark for certifying sampling-based quantum advantage.

\end{abstract}

\ifnum\cameraready=1
\else
\ifnum\llncs=1
\else
\newpage
  \setcounter{tocdepth}{2}      % sections in table if depth < i
  \setcounter{secnumdepth}{3}   % sections numbered if depth < i
  \tableofcontents
  \pagenumbering{arabic}
  \setcounter{page}{0}          % set the table contents page as 0-th page
  \thispagestyle{empty}
  \clearpage
\fi
\fi

\section{Introduction}\label{sec:intro}

One of the most fundamental problems in the field of hypothesis testing is
the following: is it possible to check whether samples from some unknown
distribution (over classical strings) $\mathcal{G}$ are actually from some explicit distribution
$\mathcal{D}$? This is often known as the identity testing problem. The
identity testing problem has been extremely well-studied in the statistical
setting
(\cite{idtestclose,optidtest,structidtest,idtestlower,idtestupper}
for some examples). Ignoring error terms, it is known that when the
distribution $\mathcal{D}$ has support $\{0,1\}^n$, the optimal sample complexity
for the identity testing problem is roughly
$O(\sqrt{2^n})$~\cite{optidtest,idtestlower,idtestupper}. As such, implementing this identity tester requires a number of samples exponential in the problem size and is untenable in practice.

\if0However, many distributions of interest have large support size. In particular, distributions which can be sampled by efficient algorithms may have support exponential in their output length. When this is the case, the optimal identity tester also requires exponential samples, and so is untenable in practice. 
\fi

One may hope to bypass this lower bound by considering restricted settings.
The above optimal identity tester is constructed so that it is not fooled
by any (even inefficiently-sampled) distributions. However, in most
applications, the distributions under consideration are efficiently-samplable,
and therefore it is enough to consider only identity testers that are not fooled by efficiently-sampled distributions.

We then have the following natural question, which is the main question of this work: 
\begin{center}
\emph{
Which distributions have identity testers which cannot be fooled by efficiently samplable distributions, and under what assumptions?
}
\end{center}

\paragraph{On the independence assumption of identity testers.}

Identity testers typically assume that the samples received all come from the same underlying distribution, i.e. they are independently and identically distributed (i.i.d.). However, in the real world, there is no way to enforce this form of independence. In general, and as we will see in particular in the setting of quantum advantage, we may want our tester to also reject correlated samples which are far from our target distribution.
\if0
If we allow the adversarial distribution to be arbitrary, then it is impossible to detect correlated samples without exponential blowup. The adversarial distribution may simply be constant on some sequence of samples accepted by our verifier\footnote{In detail, consider an identity tester for the uniform distribution over $\{0,1\}^n$ taking in polynomially many samples. This identity tester must accept some string $x_1,\dots,x_{poly}$. And so the constant distribution outputting $x_1,\dots,x_{poly}$ will be accepted by the tester, despite its marginal distribution being far from uniform over $\{0,1\}^n$.}.
However, if we restrict our adversarial distributions to be efficiently samplable, this issue no longer applies. Thus, we may hope to achieve this strong form of identity testing in our setting.
\fi

We will call a distribution \textit{efficiently verifiable} if there exists an efficient identity tester which cannot be fooled by any efficiently-samplable distribution, \textbf{even if the samples may be correlated}. We then ask
\begin{center}
    \emph{Under what conditions are distributions efficiently verifiable?}
\end{center}

\if0
STOC version:
In the field of distribution testing~\cite{idtestclose,optidtest,structidtest,idtestlower,idtestupper}, one of the most fundamental problems is identity testing: given samples $x_1,...,x_s$ drawn from an unknown distribution $\cA$, decide whether $\cA$ is statistically close to a known distribution $\cD$ or not.
When $\cD(1^n)$ is a distribution 
% over 
supported on
\bnote{Added ``supported'' to satisfy both Matt and Taiga?}
$\bit^n$, the optimal sample
complexity is known to be $\Omega(\sqrt{2^{n}})$. \matthew{given how the
start of this is phrased it's more accurate to say that its
$\Omega(\sqrt{|\mathsf{Supp(\cD)}})$. We can leave its as is, but it's a
little off.}\taiga{I would like to emphasize that it is exponential of $n$.
If we write $\Omega(\sqrt{|\mathsf{Supp(\cD)}})$, we need to explain
$\mathsf{Supp(\cD)}$ is exponential.}
Consequently, under the standard formulation, identity testing requires exponentially many samples in the problem size $n$.

One may attempt to overcome the exponential sample complexity by restricting attention to more structured settings. The optimal identity tester above is constructed against arbitrary distributions, including those that are not efficiently samplable. However, in many practical scenarios, the distributions under consideration are efficiently samplable, raising the possibility that one only needs testers that remain sound against such distributions.

Another limitation of identity testing is the reliance on an i.i.d.~promise: the samples $x_1,...,x_{s}$ are assumed to be independently drawn from a single distribution $\cD$.
Many existing techniques crucially exploit this assumption, and it is unclear how they extend to settings where the samples may be correlated.
However, in many realistic scenarios, samples may exhibit arbitrary correlations.
In such cases, we would like our tester to accept sample collections where the underlying marginal distributions are close to the target. Given such a sample collection, it is easy to produce a single sample from the target distribution via post-processing. 
\taiga{I changed the sentence based on Matthew's comment.}

Motivated by two major obstacles: the exponential sample complexity of unrestricted identity testing, and the reliance on an i.i.d.~assumption, a natural question is whether these two obstacles can be overcome simultaneously. If there exists an identity tester for a distribution secure against efficient, adaptive adversaries, we will say that distribution is verifiable.
This leads to the following central question:
\begin{center}
\emph{
What is the computational complexity of verifying/testing efficiently-samplable distributions?\footnote{In this paper, we use the words "verify" and "test" interchangeably. "Testing" is the common terminology in the identity testing literature, whereas "verifying" is popular in the quantum advantage literature.}
}
\end{center}

\matthew{It would be nice to put in some kind of remark explaining why we are using the language of "verification" rather than the language of identity testing.}
\mor{Maybe we can justify the usage of the word "verification" by saying that it is more popular in the quantum setting.}
\bnote{I would be happy with
    ``What is \ldots of \textbf{testing/verifying} efficiently-samplable
    distributions?''
}
\taiga{@Bruno: could you add this in the footnote?}
\bnote{I added to the problem statement. If you don't like it, let me know.}\eli{I actually don't like this. In my mind the reason we use verifying is because we are actually solving a different task (i.e. adaptive and efficient adversaries). We also could consider the same question for not efficiently-samplable distributions but against efficient adversaries, and this is not captured at all. }
\bnote{This is a good point -- how about we use verifying in the problem statement and briefly elucidate why we change from testing to verifying in the rest of the paper in the footnote?}
\taiga{I agree with Eli, and add the sentence proposed by Eli. }

\color{black}
\fi

\paragraph{Quantum Scenarios.}

The question raised above is also well-motivated in several quantum scenarios.
One notable scenario is verification of sampling-based quantum advantage, which is one of the most well-studied frameworks for demonstrating quantum advantage~\cite{STOC:AarArk11,BreJozShe10,BreMonShe16}.
Sampling-based quantum advantage refers to the existence of distributions that are hard to classically efficiently
sample but easy to sample by an efficient quantum algorithm. 
One often-claimed disadvantage of sampling-based quantum advantage compared with
more sophisticated frameworks, such as proofs of quantumness~\cite{FOCS:BCMVV18}, is the lack of efficient (or even inefficient) verification.
While several inefficient verification methods were proposed, these protocols rely on newly-introduced assumptions~\cite{CCC:AarChe17,AaronsonGunn}.
Moreover, these inefficient verification schemes consider search problems, and do not verify the distribution itself.
So far, there is generally no way for the public to verify whether the output of a sampling-based quantum advantage protocol was genuinely sampled from the required distribution.

\if0
Another important scenario is the verification of quantum states~\cite{PRXQuantum.2.010201,gs007}.
This problem can be interpreted as a quantum extension of classical identity testing.
Given $s$ copies of a quantum state $\rho$, the verifier aims to determine whether $\rho$ is close to a known target state $\sigma$.
When the target state consists of $n$ qubits, the optimal sample complexity is known to scale as $\Omega(2^n)$.
In addition, the standard formulation assumes that the verifier receives independent and identical copies $\rho^{\otimes s}$.
Therefore, quantum state verification inherits two fundamental limitations that also appear in classical identity testing: exponential sample complexity and a strong i.i.d.~assumption on the samples.
\fi

Another important scenario is the verification of quantum states~\cite{PRXQuantum.2.010201,gs007}. This problem can be viewed as a quantum analogue of classical identity testing. Given $s$ copies of a quantum state $\rho$, the verifier aims to decide whether $\rho$ is close to a known target state $\sigma$. When arbitrary $n$-qubit quantum states are allowed, the optimal sample complexity scales as $\Omega(2^n)$. 
Although several works~\cite{ABCL25,PFMO25} have studied verification in non-i.i.d. settings, these approaches still incur exponential sample complexity. Therefore, it is an important open question whether quantum state verification can be carried out in a genuinely non-i.i.d. manner with only polynomially many samples in realistic settings.

These two scenarios naturally raise the question of whether we can
efficiently verify quantumly-samplable distributions and quantum states in
a realistic setting.
In this work, we focus on the following open question:
\begin{center}
\emph{
What is the computational complexity of verifying quantumly-efficiently-samplable distributions, and efficiently-generatable quantum states?
}
\end{center}

\if0
Several sub-universal quantum computing models have been shown to exhibit sampling-based quantum advantage:
their output probability distributions cannot be classically efficiently sampled 
under some computational assumptions.\footnote{The assumptions are that the polynomial-time hierarchy will not collapse to the third (or second) level, and some additional newly-introduced assumptions called the ``anticoncentration property'' and ``average-case $\sharp\mathbf{P}$-hardness''.} For example, the Boson sampling model~\cite{STOC:AarArk11} (quantum computing with non-interacting photons),
the IQP model~\cite{BreJozShe10,BreMonShe16} (quantum computing with commuting gates), 
the one-clean-qubit model~\cite{FKMNTT18} (quantum computing with a single pure qubit),
and the random-circuit model~\cite{NatPhys:BFNV19} (quantum computing with random gates)
are well-studied sub-universal models.

One advantage of sampling-based quantum advantage is that the experimental implementations seem plausible to achieve with near-future devices.
On the other hand, one often-claimed disadvantage of sampling-based quantum advantage compared with
more sophisticated frameworks, such as proofs of quantumness~\cite{FOCS:BCMVV18}, is the lack of efficient (or even inefficient) verification. There is generally no way for the public to verify whether the output of a sampling-based quantum advantage protocol was genuinely sampled from the required distribution.

This disadvantage of sampling-based quantum advantage has been justified by appealing to the intuition that verifying distributions is
impossible (even inefficiently). For example, \cite{HKEG19} showed that
verifying distributions requires exponentially many samples.
However, in their impossibility result, unknown distributions can be any (even inefficiently-sampled) distributions, 
while, in the realistic setting, considering only efficiently-sampled distributions is enough. And so we may hope that if we had an identity tester which could not be fooled by efficient (classical) algorithms, then we could use this to verify quantum advantage protocols.
\fi

\subsection{Our Results}

In this work, we answer those questions, and obtain the following two main results. 
\begin{enumerate}
    \item 
    Every classically-efficiently-samplable (quantumly-efficiently-samplable) distribution can be verified efficiently if infinitely-often one-way functions (infinitely-often one-way puzzles) do not exist (\cref{inf:owf,inf:q_verify}).
   \item We extend the result above to the quantum state setting, and show that every efficiently-generatable quantum state can be 
   verified efficiently if weak non-uniform infinitely-often EFIs do not exist (\cref{informal:state_verification}).
\end{enumerate}
    Here, efficient verification means that the number of samples is polynomial and the running time of the verification algorithm is polynomial.
    We could also consider the setting where the number of samples is polynomial but the running time of the verification algorithm is not necessarily polynomial.
    In this setting, we obtain the following results without any assumptions.
    \begin{enumerate}
        \item 
     Every classically-efficiently-samplable and quantumly-efficiently-samplable distribution can be verified with a polynomial number of samples
     by a potentially inefficient verification algorithm. 
   \item We extend the result above to the quantum state setting, and show that every efficiently-generatable quantum state can be 
   verified with a polynomial number of samples by a potentially inefficient verification algorithm.
    \end{enumerate}

In the course of proving our main results, we also obtain several results of broader interest,
including a construction of a protocol for classical certified randomness with inefficient verification~(\Cref{inf:cert_rand}), and a ``universal'' verifier for quantum sampling advantage~(\Cref{inf:qcd}).
We describe our contributions in detail below.

\subsubsection{Distribution Verification}

\paragraph{Formulating the Definition of Efficiently-Samplable-Distributions.}

Our first contribution is to initiate the study of distribution verification and quantum state verification in a realistic setting.
Informally, in distribution verification, the model assumes that the unknown distribution $\cA$ is efficiently samplable, while allowing the observed samples $x_1,...,x_s$ to be adversarially generated in a correlated way.
Quantum state verification is defined in a similar way: the model assumes that the unknown state $\rho$ is efficiently generatable, while allowing the observed samples are not necessarily $\rho^{\otimes s}$ but an entangled state over the $s$ registers.
More precisely, we define distribution verification as follows:
\begin{definition}[Informal version of Definition~\ref{def:adaptive}]\label{inf:def_ver}
We say that the algorithm $\cD$ is verifiable if, for any function $\epsilon:\N\ra(0,1)$ and any polynomial $t$, there exist a polynomial $s$ and an efficient algorithm $\Vrfy$ such that the following hold.
\begin{description}
    \item[Correctness:] If $x_1,\ldots,x_{s(n)}\in\bit^n$ are generated independently and identically according to the known target algorithm $\cD(1^n)$, then $\Vrfy\left(x_1,...,x_{s(n)}\right)$ accepts with overwhelming probability.
    \item[Adaptive-Soundness:] 
    For any uniform algorithm $\cA$ running in time $t(n)$ such that $$\mathsf{SD}(\mathsf{Marginal}_{\cA}(1^n),\cD(1^n))\geq \epsilon(n),$$ $\Vrfy\left(x_1,...,x_{s(n)}\right)$ accepts with probability at most $1-\epsilon(n)+\frac{1}{\poly(n)}$ for all sufficiently large $n\in\N$, where $x_1,...,x_{s(n)}\gets\cA(1^n)$.
Here, $\mathsf{SD}(\cdot,\cdot)$ denotes statistical distance (a.k.a.\ total variation distance), and $\mathsf{Marginal}_{\cA}(1^n)$ is the distribution defined as follows:
\begin{enumerate}
\item Run $x_1,...,x_{s}\la \cA(1^n)$.
\item Sample $i\la[s]$.
\item Output $x_i$.
\end{enumerate}
\end{description}
\end{definition}

This model is more general than the standard identity testing model in that it permits the unknown algorithm $\cA$ to produce samples with arbitrary correlations.
At the same time, it is more restrictive in that $\cA$ is promised to be efficiently samplable, rather than allowing arbitrary distributions.
We view this restriction as well motivated: in applications such as verifying quantum sampling advantage, it is natural to consider that an unknown distribution is promised to be efficiently samplable. 
It is also desirable that 
a tester would accept only 
a sampler
whose marginal distribution is close to
the target distribution. If the test passes on input $x_1,\dots,x_s$, then it is easy to recover a sample from the correct distribution by picking a random input $x_i$.

\matthew{[FOR CAMERA READY] Add in something about getting us many samples via parallel repetition.}

\paragraph{Classically-Efficiently-Samplable Distribution Verification.}
We begin by establishing a distribution-verification result for classically-efficiently-samplable distributions.
We show the connection between verification of classically-efficiently-samplable distribution and classical cryptography.

\begin{theorem}[Restated in \cref{thm:easiness}]
\label{inf:owf}
Assume that infinitely-often one-way functions (OWFs) do not exist.
Then, any PPT-samplable distribution $\cD$ is verifiable with a PPT algorithm.
\end{theorem}

Since one-way functions can be broken by a $\mathsf{NP}$ oracle, we immediately get the following corollary
\begin{corollary}
    Any PPT-samplable distribution $\cD$ is verifiable with a PPT algorithm querying an $\mathsf{NP}$ oracle.
\end{corollary}

Here, a OWF is a function $f$ that can be computed in deterministic polynomial-time, but is hard to invert for any probabilistic polynomial-time algorithm. 
OWFs are one of the most central classical cryptographic primitives~\cite{FOCS:ImpLub89,Hill99,C:GolGolMic84,STOC:Rompel90,C:Naor89}.

Let us make two remarks on this result.
First, the assumption of the non-existence of OWFs is used only for constructing a PPT verification algorithm. Without this assumption, our construction still yields a verifier that uses only a polynomially number samples though it is no longer efficient.
Importantly, even constructing such an inefficient verifier appears to be new to the field.
Second, somewhat interestingly, we use the Kolmogorov complexity of classical bit strings in a novel way to obtain \cref{inf:owf}.
The Kolmogorov complexity of a classical bit string $x$ is the length of the shortest program that outputs $x$.
To the best of our knowledge, this is the first time where the techniques of Kolmogorov complexity are applied for establishing a statement of distribution testing.

We also observe the following hardness result.

\begin{proposition}[Restated in \cref{thm:hardness}]\label{inf:hard_owf}

Assume that (infinitely-often) OWFs exist.
Then, any PPT-samplable distribution $\cD(1^n)$ with $H(\cD(1^n))=n^{\Omega(1)}$ is not verifiable with a PPT verification algorithm.
Here, $H(\cD(1^n))$ represents the Shannon entropy of $\cD(1^n)$.
\end{proposition}

Note that \Cref{inf:hard_owf} establishes a strong impossibility result.
In particular, it shows that when the entropy of the known distribution is sufficiently large, constructing an efficient verification protocol for distributions is impossible as long as OWFs exist.
Furthermore, by combining this result with \cref{inf:owf}, we can see that the hardness of verifying classically-efficiently-samplable distributions is characterized by the existence of OWFs.

The proof of~\Cref{inf:hard_owf} is relatively simple.
We rely on a pseudorandom generator (PRG), which can be derived from OWFs~\cite{Hill99}.
Let us consider a PPT-samplable distribution $\cD(1^n)$ with sufficiently large Shannon entropy.
The key idea is to replace the internal randomness $r$ used in $\cD(1^n; r)$ with pseudorandom output from the PRG.
This yields an adversarial distribution $\cA(1^n)$ that is statistically far, but computationally indistinguishable from $\cD(1^n)$.

\paragraph{Quantumly-Efficiently-Samplable Distribution Verification.}
Now, we move to the quantum-case. First, we show a connection between verification of quantumly-efficiently-samplable distributions, and two central cryptographic primitives in the quantum-computation classical-communication (QCCC \cite{DBLP:conf/crypto/AustrinCCFLM22}) setting.

\begin{theorem}[Restated in \cref{thm:OWPuzz_Verify,thm:q_hardness}]\label{inf:q_verify}
The condition 1 implies the condition 2, and the condition 2 implies the condition 3.
\begin{enumerate}
\item Infinitely-often one-way puzzles (OWPuzss) do not exist.
\item Any QPT-samplable distribution, which outputs a classical string, is verifiable with a QPT algorithm.
\item Infinitely-often QEFIDs do not exist.
\end{enumerate}
\end{theorem}

Since one-way puzzles can be broken by a $\mathsf{PP}$ oracle~\cite{Cavalar_2025}, we immediately get the following corollary.
\begin{corollary}
\label{cor:pp-ver}
    Any QPT-samplable distribution $\cD$ is verifiable with a PPT algorithm querying a $\mathsf{PP}$ oracle.\footnote{
        We also give a different, more direct, proof of \Cref{cor:pp-ver}
        without using \Cref{inf:q_verify} in \Cref{sec:BQP=PP},
        employing a quantum variant of Kolmogorov complexity.
    }
\end{corollary}

Here, OWPuzzs are a quantum analogue of OWFs~\cite{STOC:KhuTom24}.
A OWPuzz consists of a pair of algorithms $(\Gen,\Vrfy)$.
$\Gen$ is a quantum polynomial-time (QPT) algorithm that outputs a pair $(\puzz,\ans)$ of bit strings, 
and $\Vrfy$ is an unbounded algorithm that verifies their consistency.
The security condition requires that, given $\puzz$, no QPT algorithm can produce a valid $\ans$.
We also consider QEFID, which are defined as pairs of quantumly efficiently samplable distributions that are statistically far but computationally indistinguishable from each other.

Let us make two remarks on this result.
First, \cref{inf:q_verify} nearly yields a full computational complexity characterization of verifying quantumly-efficiently-samplable distributions.
On the one hand, \cref{inf:q_verify} shows that the hardness of verifying quantumly-efficiently-samplable distributions implies the existence of OWPuzzs.
On the other hand, \cref{inf:q_verify} implies that the existence of QEFID implies the hardness of verifying quantumly-efficiently-samplable distributions.
At present, we do not know how to construct QEFID from OWPuzzs, although OWPuzzs do imply a non-uniform variant of QEFID~\cite{STOC:KhuTom24,C:ChuGolGra24}.
Consequently, if QEFID can be constructed from its non-uniform variant, then the tight characterization of 
verifying quantumly-efficiently-samplable distributions 
with OWPuzzs is achieved.

Second, the proof of \cref{inf:q_verify} differs from that of \cref{inf:owf} due to a subtle technical obstacle.
In particular, the Kolmogorov-complexity-based techniques of \cref{inf:owf} rely on an average-case classical probability estimation algorithm obtained from the non-existence of OWFs.
While the non-existence of OWPuzzs is also known to imply the easiness of average-case quantum probability estimation~\cite{EC:CGGH25,C:HirMor25,STOC:KhuTom25}, the resulting guarantee is not strong enough to support the Kolmogorov-complexity-based techniques used in the proof of \cref{inf:owf}.
Consequently, we need to develop new techniques, which do not rely on Kolmogorov complexity. 
For details, see \cref{sec:tech_overview}.

\if0
\matthew{comment out here till end of sub-sub-section? Bruno and Eli: +2}

To prove \cref{inf:q_verify}, we also establish the following auxiliary result, \cref{informal:owpuzz}:
\begin{theorem}[Restated in \cref{thm:owpuzz}]\label{informal:owpuzz}
Assume that (infinitely-often) weak non-uniform QEFID exists.
Then, (infinitely-often) OWPuzzs exist.
\end{theorem}
Here, a weak non-uniform QEFID is a pair of QPT algorithms, $\Gen_0(1^n,\mu)$ and $\Gen_1(1^n,\mu)$, parameterized by  $\mu\in[n]$.
Roughly speaking, we require that, for some choice of $\mu$ and some constant $c$, no QPT algorithms $\cA$ satisfy
\begin{align*}
&\abs{\Pr[1\la\cA(1^n,\mu,x):x\la\Gen_1(1^n,\mu)]-\Pr[1\la\cA(1^n,\mu,x):x\la\Gen_0(1^n,\mu)]}\\
&\geq \mathsf{SD}(\Gen_0(1^n,\mu),\Gen_1(1^n,\mu))-n^{-c}
\end{align*}
for all sufficiently large $n\in\N$.

Let us compare \cref{informal:owpuzz} with  the related result of~\cite{C:ChuGolGra24}, where the authors also study how to construct OWPuzzs from QEFID.
They show that QEFID implies OWPuzzs, and that weak OWPuzzs can be amplified to full OWPuzzs in certain parameter regimes.
A natural approach would therefore be to first convert weak non-uniform QEFID into weak OWPuzzs, and then lift the latter to obtain a full OWPuzzs.
However, the transformation from weak OWPuzzs to OWPuzzs in \cite{C:ChuGolGra24} requires parameter constraints that are not satisfied in our setting, so to the best of our knowledge this approach does not apply.
We therefore develop an alternative technique.
\bnote{Let us point to the relevant subsection where this technical point is discussed.}
\taiga{I start to think that this paragraph should be moved to the main-text.}
\fi

\if0
\begin{theorem}[Restated in \cref{thm:OWPuzz_Verify}]\label{informal:OWPuzz_Verify}
Assume that (infinitely-often) one-way puzzles (OWPuzzs) do not exist.
Then, any QPT algorithm $\cD$ is verifiable with a QPT algorithm.
\end{theorem}

Let us remark that the proof of \cref{informal:OWPuzz_Verify} differs from that of \cref{inf:owf} due to a subtle technical obstacle.
In particular, the Kolmogorov-complexity-based techniques of \cref{inf:owf} relies on an average-case classical probability estimation algorithm obtained from the non-existence of OWFs.
While the non-existence of OWPuzzs is also shown to imply the easiness of an average-case quantum probability estimation, the resulting guarantee is not strong enough to support the Kolmogorov-complexity-based techniques used in the proof of \cref{inf:owf}.
Consequently, we need to develop new techniques, which does not rely on Kolmogorov complexity. 
For details, see \cref{sec:Tech_Kolmogorov,sec:Tech_Universal}.

To prove \cref{informal:OWPuzz_Verify}, we also establish the following auxiliary result, \cref{informal:owpuzz}:
\begin{lemma}[Restated in \cref{thm:owpuzz}]\label{informal:owpuzz}
Assume that (infinitely-often) weak non-uniform QEFID exists.
Then, (infinitely-often) OWPuzzs exist.
\end{lemma}
Here, a weak non-uniform QEFID is a pair of QPT algorithms $\Gen_0(1^n,\mu)$ and $\Gen_1(1^n,\mu)$, parameterized by  $\mu\in[n]$.
Roughly speaking, we require that, for some choice of $\mu$ and some constant $c$, no QPT algorithms $\cA$ satisfy
\begin{align*}
&\abs{\Pr[1\la\cA(1^n,\mu,x):x\la\Gen_1(1^n,\mu)]-\Pr[1\la\cA(1^n,\mu,x):x\la\Gen_0(1^n,\mu)]}\\
&\geq \mathsf{SD}(\Gen_0(1^n,\mu),\Gen_1(1^n,\mu))-n^{-c}
\end{align*}
for all sufficiently large $n\in\N$.

Let us compare \cref{informal:owpuzz} with  the related result of~\cite{C:ChuGolGra24}.
They showed that QEFID implies OWPuzz, and that weak OWPuzzs can be amplified to full OWPuzzs in certain parameter regimes.
A natural approach would therefore be to first convert weak non-uniform QEFID into weak OWPuzz, and then lift the latter to obtain a full OWPuzz.
However, the transformation from weak OWPuzzs to OWPuzzs in \cite{C:ChuGolGra24} requires parameter constraints that are not satisfied in our setting, so to the best of our knowledge this approach does not apply.
We therefore develop an alternative technique.

We also show the following hardness result.
\begin{theorem}[Restated in \cref{thm:hardness_EFI}]\label{thm:QEFID}
Assume that (infinitely-often) QEFID exists.
Then, there exists a QPT algorithm $\cQ$, which is hard to verify by a QPT algorithm.
\end{theorem}
Let us note that \cref{thm:QEFID,informal:OWPuzz_Verify} nearly yields a full computational complexity characterization of verifying quantumly-efficiently-samplable distributions.
On the one hand, \cref{informal:OWPuzz_Verify} shows that the hardness of verifying quantumly-efficiently-samplable distributions implies the existence of OWPuzzs.
On the other hand, \cref{thm:QEFID} implies that the existence of QEFID implies the hardness of verifying quantumly-efficiently-samplable distributions.
At present, we do not know how to construct QEFID from OWPuzzs, although OWPuzzs do imply a non-uniform variant of QEFID~\cite{STOC:KhuTom24,C:ChuGolGra24}.
Consequently, if one can construct QEFID from its non-uniform variant, then one can characterize the computational complexity of verifying quantumly-efficiently-samplable distributions in terms of OWPuzzs.
\fi

\subsubsection{Quantum State Verification}
% Third, we obtain a result on quantum state verification.
Next we show that the tractability of efficient quantum state verification is tightly connected with the existence of minimal quantum cryptography.

\begin{theorem}[Restated in \cref{thm:state_verification,thm:hardness_EFI}]\label{informal:state_verification}
Condition 1 implies condition 2, and condition 2 implies condition 3.
\begin{enumerate}
\item Infinitely-often weak non-uniform EFI do not exist.
\item Any QPT-generatable mixed state is efficiently verifiable.
\item Infinitely-often EFI do not exist.
\end{enumerate}
\end{theorem}

Here, an EFI is a pair of QPT algorithms $(\Gen_0,\Gen_1)$ that output quantum mixed states that are statistically far but computationally indistinguishable from each other~\cite{ITCS:BCQ23}.
A weak non-uniform EFI is a pair of QPT algorithms $(\Gen_0,\Gen_1)$, where each $\Gen_b$ takes as input the security parameter $1^n$ and $\mu\in[n]$, and outputs a quantum state $\rho(1^n,b,\mu)$.
Informally, the requirement is that for some $\mu\in[n]$ and for some polynomial $p$, no QPT algorithms can distinguish $\rho(1^n,0,\mu)$ from $\rho(1^n,1,\mu)$ with probability greater than or equal to $$\mathsf{TD}(\rho(1^n,0,\mu),\rho(1^n,1,\mu))-\frac{1}{p(n)}.$$

\cref{informal:state_verification} nearly characterizes the complexity of quantum state verification in terms of EFI-type assumptions. Specifically, it shows that if there exists an efficiently-generatable quantum state that is hard to verify, then weak non-uniform EFI must exist. Conversely, it shows that the existence of EFI implies the hardness of verifying quantum states. Therefore, if one could construct (uniform) EFI from weak non-uniform EFI, then the hardness of quantum state verification would be characterized by the existence of EFI.

\if0
Third, we obtain a result on quantum state verification.
\begin{theorem}[Restated in \cref{thm:state_verification}]\label{informal:state_verification}
Assume that (infinitely-often) weak non-uniform EFI does not exist.
Then, for any QPT algorithm $\cD$, for any polynomial $t$, and any inverse polynomial $\epsilon$, there exists a QPT algorithm $\Vrfy$ and a polynomial $s$ such that the following hold:
\begin{enumerate}
\item (\textbf{Completeness}) If the verifier receives $s(n)$ independent copies of the target state, i.e., $$\rho_{\cR[1],...,\cR[s(n)]}=\sigma(\cD_n)^{\otimes s(n)},$$ then  $\Vrfy(\rho_{\cR_1,...,\cR[s(n)]})$ accepts with overwhelming probability.    
Here, $\sigma(\cD_n)$ denotes the mixed  state generated by $\cD(1^n)$.

\item (\textbf{Soundness})
For any uniform quantum algorithm $\cA$ running in time $t(n)$ such that $$\mathsf{TD}(\mathsf{Marginal}_{\cA}(1^n),\cD(1^n))\geq \epsilon(n),$$ the verifier $\Vrfy(\rho(\cA_n)_{\cR[1],...,\cR[s(n)]})$ accepts with probability at most $1-\epsilon(n)+\frac{1}{\poly(n)}$ for all sufficiently large $n\in\N$.
Here, $\rho(\cA_n)_{\cR[1],...,\cR[s(n)]}$ is a quantum state generated by $\cA(1^n)$, and the induced marginal is $$\mathsf{Marginal}_{\cA}(1^n)\seteq \frac{1}{s(n)}\sum_{i\in[s(n)]}\rho(\cA_n)_{\cR[i]},$$ where $\rho(\cA)_{\cR[i]}\seteq \allowbreak\mathsf{Tr}_{\cR[1],...,\cR[i-1],\cR[i+1],...\cR[s(n)]}\allowbreak(\rho(\cA_n)_{\cR[1],...,\cR[s(n)]})$. Finally $\mathsf{TD}(\cdot,\cdot)$ denotes trace distance.
\end{enumerate}
\end{theorem}

Here, a weak non-uniform EFI is a pair of QPT algorithms $(\Gen_0,\Gen_1)$, which each $\Gen_b$ takes as input the security parameter $1^n$ and $\mu\in[n]$, and outputs $\rho(b,\mu)$.
Informally, the requirement is that for some $\mu\in[n]$ and for some polynomial $p$, no QPT algorithms can distinguish $\rho(0,\mu)$ from $\rho(1,\mu)$ with probability at least $$\mathsf{TD}(\rho(0,\mu),\rho(1,\mu))-\frac{1}{p(n)}.$$

We also establish the following hardness result.

\begin{theorem}[Restated in \cref{thm:hardness_EFI}]\label{informal:hardness_EFI}
Assume that (infinitely-often) EFI exists.
Then, there exists a QPT algorithm $\cD(1^n)$ such that no QPT algorithm can decide whether an unknown quantum state is close to $\cD(1^n)$ or not.
\end{theorem}

Taken together, \cref{informal:state_verification} and \cref{informal:hardness_EFI} nearly characterize the complexity of quantum state verification in terms of EFI-type assumptions. Specifically, \cref{informal:state_verification} shows that if there exists an efficiently generatable quantum state that is hard to verify, then weak non-uniform EFI must exist. Conversely, \cref{informal:hardness_EFI} shows that the existence of EFI implies the hardness of verifying quantum states. Therefore, if one could construct (uniform) EFI from weak non-uniform EFI, then the hardness of quantum state verification would be characterized by the existence of EFI.
\fi

\subsubsection{Other Fundamental Results on Distribution Verification}

\paragraph{Unconditional Verification of Distributions with Small Entropy.}

So far, we have studied how to efficiently verify any general efficiently-samplable distribution or efficiently-generatable quantum states under the non-existence of certain cryptographic primitives.
It is natural to ask which family of distributions can be verified efficiently without any computational assumptions.
We also study this natural direction, and show how to verify any efficiently-samplable distributions with small entropy.

\begin{proposition}
    [Restated in \cref{thm:small_entropy}]\label{thm:unconditional_verifiaction}
Every PPT (resp. QPT)-samplable distribution \(\mathcal{D}(1^n)\) with \(H(\mathcal{D}(1^n)) = O(\log n)\) is verifiable with a PPT (resp. QPT) algorithm. Here, $H(\mathcal{D}(1^n))$ represents the Shannon entropy of $\cD(1^n)$. 
\end{proposition}

The proof is based on a simple observation. When the entropy of $\cD$ is small, the high-probability outputs of $\cD$ can be succinctly described.
This allows one, given samples from an unknown distribution, to test distance by comparing them against these representative elements.

Although \Cref{thm:unconditional_verifiaction} follows from a relatively simple observation, its statement is essentially optimal, as it already achieves the strongest guarantee one can hope for without relying on computational assumptions.
In particular, it cannot be extended to distributions with higher entropy.
This is because any classically efficiently samplable distribution with sufficiently high entropy is not verifiable assuming the existence of OWFs, as shown in \Cref{inf:hard_owf}.~\footnote{Strictly speaking, \cref{thm:unconditional_verifiaction} does not preclude the possibility of slightly improving the entropy parameter.
Indeed, \Cref{inf:hard_owf} only rules out verification of classically-efficiently-samplable distributions $\cD(1^n)$ whose entropy satisfies $H(\cD(1^n)) = n^{\Omega(1)}$.
However, we expect that our impossibility can be extended to distributions with entropy $H(\cD(1^n)) = \omega(\log n)$ under the assumption of sub-exponentially secure OWFs.
}

\paragraph{Impossibilities in the non-i.i.d Setting.}

To complement the feasibility results, we also establish impossibility results that clarify the limits of feasibility.
More specifically, in \cref{informal:state_verification,inf:q_verify,inf:owf}, we have constructed $\Vrfy$ algorithms such that $\Vrfy\left(x_1,...,x_{s(n)}\right)$ accepts with probability at most $1-\epsilon(n)+\frac{1}{\poly(n)}$ for any time-bounded uniform algorithm $\cA$ with $\mathsf{SD}\left(\mathsf{Marginal}_{\cA}(1^n),\cD(1^n)\right)\geq \epsilon(n)$.
It is natural to ask whether one can further decrease the acceptance probability below $1-\epsilon(n)$. We argue that, in general, this is impossible in our model, as suggested by the following theorem.

\begin{proposition}[Restated in \cref{thm:impossibility}]\label{informal:impossibility}
\if0
Let $\cD$ be a samplable distribution such that there exists a samplable distribution $\cD^*$ with $\mathsf{SD}(\cD^*(1^n),\cD(1^n))=1-\alpha$.
Then, for any $\beta>0$, any function $s$, and any $\Vrfy$, if
\begin{align*}
\Pr[\top\la\Vrfy\left(x_1,...,x_{s(n)}\right):x_1,...,x_{s(n)}\la\cD(1^n)^{\otimes s(n)}]\geq 1-\beta,
\end{align*}
then there exists an algorithm $\cA$ such that $\mathsf{SD}(\mathsf{Marginal}_{\cA}(1^n),\cD(1^n))\geq \epsilon(n)$, yet 
\begin{align*}
\Pr[\top\la\Vrfy\left(x_1,...,x_{s(n)}\right):x_1,...,x_{s(n)}\la\cA(1^n) ]\geq\left(1-\frac{\epsilon}{1-\alpha}\right)(1-\beta).
\end{align*}
\fi
For any function $\epsilon$, for any polynomials $s$, any distribution $\cD$, there exists an adversarial distribution $\cA$ such that 
$\mathsf{SD}(\mathsf{Marginal}_{\cA}(1^n),\cD(1^n))\geq \epsilon(n)$, and for any verifier $\Vrfy$ if
\begin{align*}
\Pr[\top\la\Vrfy\left(x_1,...,x_{s(n)}\right):x_1,...,x_{s(n)}\la\cD(1^n)^{\otimes s(n)}]=1
\end{align*}
then
\begin{align*}
\Pr[\top\la\Vrfy\left(x_1,...,x_{s(n)}\right):x_1,...,x_{s(n)}\la\cA(1^n) ]\geq 1-\epsilon.
\end{align*}
\end{proposition}

We emphasize that \cref{informal:impossibility} remains valid even when the sample complexity $s$ is allowed to be an arbitrary function. Consequently, the soundness bound in \cref{informal:state_verification,inf:q_verify,inf:owf} appears essentially tight from an information-theoretic perspective, and we therefore focus on the parameter regime specified in \cref{informal:state_verification,inf:q_verify,inf:owf}.\footnote{We remark that quantum state verification can be seen as a generalization of distribution verification. Therefore, \cref{informal:impossibility} also suggests the impossibility of significantly improving the parameter of \cref{informal:state_verification}.}

The proof of \cref{informal:impossibility} is based on a simple observation: Consider the algorithm $\cA$ that outputs samples from $D(1^n)^{\otimes s}$ with probability $1-\epsilon$, and from $D^*(1^n)^{\otimes s}$ with probability $\epsilon$, where $\cD^*$ is an arbitrary distribution satisfying $\mathsf{SD}(\cD^*(1^n),\cD(1^n))=1$.
It follows from the construction that $\Ver$ accepts with probability at least $1-\epsilon$, whereas $\mathsf{SD}(\mathsf{Marginal}_{\cA}, D) \ge \epsilon$.

Since our results concern verification of the marginal, it is natural to ask whether one could go further and {learn} the marginal distribution—or even reconstruct the underlying quantum state. Beyond the limitation implied by \cref{informal:impossibility}, we establish additional negative results for learning: in the parameter regimes relevant to applications, learning the marginal distribution of an unknown (possibly correlated) sampler is not achievable.

\begin{proposition}[Restated in \cref{thm:impossibility_learn}]\label{informal:impossibility_learn}
For any function $s$ and any constants $1/2>\alpha,\beta>0$, there is no algorithm $\mathsf{Learn}$ satisfying the following guarantee: for every uniform algorithm $\cA$ that, on input $1^n$, outputs $x_1,...,x_{s(n)}\in\bit^{n}$, 
\begin{align*}
\Pr\left[\mathsf{SD}\left(h(1^n),\mathsf{Marginal}_{\cA}(1^n)\right)\leq \alpha(n):
\begin{array}{ll}
     &x_1,...,x_{s(n)}\la \cA(1^n)  \\
     & h\la\mathsf{Learn}(1^n,x_1,...,x_{s(n)})
\end{array}
\right]\geq 1-\beta(n)
\end{align*}
for some $n\in\N$.
\end{proposition}

We prove \cref{informal:impossibility_learn} via a simple contrapositive argument.
Specifically, we show that the existence of a learning algorithm for the marginal distribution would contradict \cref{informal:impossibility}.

\paragraph{Inefficiently-Verifiable Certified Randomness Without Computational Assumptions.}
A related primitive to quantum advantage is something called ``certified randomness''. Informally, a certified randomness protocol is an interactive protocol between a prover $\cP$ and a (potentially inefficient) verifier $\cV$ such that whenever the verifier accepts, the output produced by a malicious prover must have high min-entropy. Note that it is \textit{impossible} to achieve certified randomness with a classical prover secure against non-uniform adversaries, since an adversarial prover can always hardcode the randomness of the honest prover. However, recent works have constructed certified randomness protocols with \textit{quantum} provers from the hardness of LWE~\cite{JACM:BCMVV21}, from assumptions about random quantum circuits~\cite{STOC:AarHun23}, and in the random oracle model~\cite{FOCS:YamZha22,BBFGT26}.

We will focus on a particular setting: single round publicly verifiable certified randomness (1PVCR)~\cite{BBFGT26,FOCS:YamZha22}. These are protocols where the prover publishes a single message that can be verified by anyone. In particular, the verifier will not send any challenge to the prover. Using Kolmogorov complexity in a similar way as our proof of~\Cref{inf:owf}, we construct 1PVCR unconditionally. In fact, our prover will be a \textit{classical} algorithm, with the caveat that our verifier will be inefficient. 

Recall that the impossibility of certified randomness for classical provers only applies when the adversarial prover is allowed non-uniform advice. We thus bypass this impossibility by only requiring security against \textit{uniform} adversaries. In fact, 1PVCR is anyway impossible against non-uniform adversaries, even with a quantum prover~\cite{FOCS:YamZha22}.

\begin{theorem}[Restated in \cref{thm:certified_randomness}]\label{inf:cert_rand}
There exists a pair $(\cV,\cP)$ of an unbounded algorithm $\cV$ and a PPT algorithm $\cP$ such that the following two properties are satisfied.
Here, $\cP$ takes the security parameter $1^n$ as input and outputs a classical bit string $x$, 
and $\cV$ takes $1^n$ and $x$ as input and outputs $\top/\bot$. 
\begin{description}
\item[Correctness:]
The verifier $\cV(1^n,x)$ outputs $\top$ with overwhelming probability when $x$ is generated by the honest prover $\cP(1^n)$. 
\item[(Side-Information Free) Security:] 
If a malicious, uniform QPT adversary $\cA(1^n)$ can make the verifier $\cV$ accept with non-negligible probability,
then the min-entropy of $\cA_{\top}(1^n)$ must be large. 
Here, $\cA_{\top}(1^n)$ denotes the output distribution of the algorithm $\cA(1^n)$ 
conditioned on the event that $\cV$ accepts.
\end{description}
\end{theorem}

All previous constructions of 1PVCR are in the random oracle model~\cite{BBFGT26,FOCS:YamZha22}. Like our construction, the protocol from~\cite{BBFGT26} has an inefficient verification protocol. While the construction of~\cite{FOCS:YamZha22} has an efficient verifier, its security only holds assuming the Aaronson-Ambainis conjecture. In the setting of certified randomness more generally, all existing constructions require either computational assumptions or idealized models, in addition to having a quantum prover.\footnote{As a caveat, there are a number of works constructing certified randomness unconditionally in the multi-prover setting (e.g.~\cite{STOC:VazVid12,KA17}), which we do not consider.} In comparison, our protocol's security does not require any idealized models and holds unconditionally. As such, our result is an (almost) strict improvement to the result of~\cite{BBFGT26}\footnote{The one advantage their result has is that they also handle sub-exponential time adversaries, while we restrict our adversaries to polynomial time.}.

Note that a number of existing constructions of certified randomness also provide guarantees against forms of leakage (e.g. \cite{STOC:AarHun23,JACM:BCMVV21,KA17}).\footnote{One form of leakage that has been considered~\cite{STOC:AarHun23} is the following: in this setting we strengthen the adversary by allowing them to entangle an ancilla register with its working register before the interaction begins, and during the interaction the adversary cannot access the ancilla. The stronger security definition requires that the certified randomness is ``uncorrelated'' with the information in the ancilla.} In fact, certified randomness protocols with classical provers cannot be secure against leakage, even in the interactive setting against uniform adversaries~\cite{HHM_inpreparation}.

We remark that the idea that Kolmogorov complexity is a good metric for the entropy of a distribution has been previously observed~\cite{GV04}, but as far as the authors are aware there has been no formal connection to certified randomness.

\if0 
In the course of proving this result, we find that Kolmogorov complexity also yields an interesting application to certified randomness.

 Informally, certified randomness is an interactive protocol between a prover $\cP$ and a verifier $\cV$ such that 
 whenever the verifier accepts, the output produced by a malicious prover must have high min-entropy.
 Certified randomness with a PPT verifier were 
 constructed based on the hardness of LWE~\cite{JACM:BCMVV21} and from a random oracle~\cite{FOCS:YamZha22}.
 Non-interactive certified randomness with an inefficient verifier was also constructed based on
 a newly-introduced assumption~\cite{STOC:AarHun23} and from a random oracle~\cite{BBFGT26}. 

 Using techniques similar to those employed in \cref{inf:owf}, we can show that non-interactive inefficiently-verifiable
 certified randomness exists unconditionally:

 We make four remarks regarding this result.
 \begin{enumerate}
 \item 
 The previous construction~\cite{STOC:AarHun23} requires an assumption, while our construction is unconditional.
 However, our result and theirs are incomparable because their security is stronger than ours: they consider security against adversaries who can entangle
 an ancilla register with the working register, while we do not consider such adversaries.
 \item
 The previous construction~\cite{STOC:AarHun23} requires the prover to be quantum, while in our case the prover is PPT.
 However, again, this is because we consider an ancilla-free notion of security \matthew{might want to add in some discussion about what a ancilla-involved notion would be. Don't need to though}. If we require security with ancilla\matthew{should this be ancillas plural?},
 classical prover cannot realize certified randomness~\cite{HHM_inpreparation}.
 \item 
 Our result and that of \cite{BBFGT26} consider ancilla-free
 security, 
 but their result is based on the random oracle, while our result is unconditional.
 \item
 Our techniques rely heavily on the adversaries being uniform algorithms.
 Indeed, non-interactive certified randomness with non-uniform security is impossible even if the verifier is allowed to be inefficient as observed in \cite{FOCS:YamZha22}.
 It remains an open question whether inefficiently-verifiable interactive certified randomness with non-uniform security can be constructed unconditionally.
 \end{enumerate}
\if0
Let us compare our results with existing results.
First, the previous construction~\cite{STOC:AarHun23} requires an assumption, while our construction is unconditional.
 In addition to this, the construction in \cite{SOTC:AarHUn23} requires the prover to be quantum while in our case the prover is PPT. 
 However, our result and theirs are incomparable because their security is stronger than ours: they consider security against adversaries who can entangle
 an ancilla register with the working register, while we do not consider such adversaries.
 \cite{BBFGT26} also considers inefficient verification.
 Our result and that of \cite{BBFGT26} consider ancilla-free security, but their result is based on the random oracle, while our result is unconditional.

Let us emphasize that our techniques rely heavily on the adversaries being uniform algorithms.
 Indeed, non-interactive certified randomness with non-uniform security is impossible even if the verifier is allowed to be inefficient as observed in \cite{FOCS:YamZha22}.
 It remains an open question whether inefficiently-verifiable interactive certified randomness with non-uniform security can be constructed unconditionally.
\fi
\fi

\if0
\paragraph{Kolmogorov Complexity as a Benchmark Measure for Sampling-Based Quantum Advantage.}

Besides certified randomness, we show a further application of Kolmogorov complexity to the verification of sampling-based quantum advantage.
In proving \cref{inf:cert_rand}, we introduce a quantum variant of time-bounded Kolmogorov complexity $quK^t(x)$.
Informally, 
$quK^t(x)$ is the negative logarithm of the probability that a randomly chosen program $\Pi$ of length at most $t$ when processed by a universal quantum Turing machine, outputs $x$ within $t$ steps.
We show that the gap between $quK^t(x)$ and its classical analogue $uK^t(x)$ serves as a benchmark that unconditionally certifies sampling-based quantum advantage.

\begin{theorem}[Informal; restated in \cref{lem:QAS-qkt}]\label{inf:qcd}
Let us define $qcd^t(x)\seteq uK^t(x)-quK^t(x)$.
The value $qcd^t(x)$ satisfies the following two conditions:
\begin{description}
    \item[Quantum Easiness:]
    For any QPT algorithm $\cD_Q$ such that, for any PPT algorithm $\cD$, $\mathsf{SD}(\cD_Q(1^n), \cD(1^n))\geq 1-\negl(n)$,
    the value $qcd^t(x)$ must be large with high probability over $x\la\cD_\cQ(1^n)$.
    \item[Classical Hardness:] For any PPT algorithm $\cA$, the value $qcd^t(x)$ must be small with high probability over $x\la\cA(1^n)$.
\end{description}
\end{theorem}
%There is one remark. 
We remark that,
in the quantum easiness requirement above, 
a QPT algorithm $\cD_Q(1^n)$ should satisfy
$\mathsf{SD}(\cD_Q(1^n), \cD(1^n))\geq 1-\negl(n)$ for any PPT algorithm $\cD$.
This is stronger than what is achieved in
the standard sampling-based quantum advantage proofs~\cite{STOC:AarArk11,BreMonShe16}
where
a QPT algorithm $\cD_Q(1^n)$ satisfies 
$\mathsf{SD}(\cD_Q(1^n), \cD(1^n))\geq \epsilon$ with some constant $0<\epsilon<1$ for any PPT algorithm $\cD$.
However, we show in \cref{app:strong_qas} that
what is required in the quantum easiness, namely,
that
$\cD_Q(1^n)$ satisfies
$\mathsf{SD}(\cD_Q(1^n), \cD(1^n))\geq 1-\negl(n)$ for any PPT algorithm $\cD$, 
is possible assuming the existence of quantum PRGs (QPRGs) secure against $QPT^{\mathbf{NP}}$.

As far as we know, all previous benchmarks of quantum advantage (such as \cite{CCC:AarChe17,AaronsonGunn}) 
are model-dependent: the benchmarks are constructed based on specific properties
of the model (such as the randomness of random quantum circuits). In contrast, \cref{inf:qcd} is model-independent: any distribution that is not PPT samplable can be detected.

Our \cref{inf:qcd} shows that, given a sample $x$ generated by an unknown algorithm, one can determine whether the underlying algorithm exhibits sampling-based quantum advantage by evaluating $qcd^t(x)$.
We emphasize that the verification procedure derived from \cref{inf:qcd} is incomparable with the one obtained in \cref{inf:q_verify}.
In \cref{inf:q_verify}, for each quantum algorithm $\cQ$, we constructed a verifier $\Vrfy_{\cQ}$ that tests whether an unknown algorithm is close to the specific algorithm $\cQ$. 
Consequently, $\Vrfy_{\cQ}$ may reject even when an unknown algorithm $\cD^*$ achieves sampling-based quantum advantage because $\cD^*$ might be statistically far from $\cQ$. 
In contrast, \cref{inf:qcd} enables us to construct a single verifier $\Vrfy$ that certifies whether $x$ is generated by an arbitrary quantum algorithm that is hard to simulate for any PPT algorithm.\eli{We should probably write this as an explicit corollary.}
\bnote{+1}

Beyond the applications discussed above, we believe that the quantity $qcd^t(x)$ introduced in this paper has further theoretical applications.
In particular, it can be regarded as a quantum generalization of computational depth $cd^t(x)$, defined as the gap between $uK^t(x)$ and $K(x)$ introduced in \cite{AntFor06}.
Computational depth has been shown to admit several applications in worst-case to average-case reductions.\footnote{As a notable example, it plays a central role in the recent breakthrough establishing that
$\mathbf{PH}\nsubseteq \mathbf{DTIME}(2^{O(n/\log n))}$ (reps. $\mathbf{UP}\nsubseteq \mathbf{DTIME}(2^{O(n/\log n))}$) implies $\mathbf{DisPH}\nsubseteq \mathbf{AvgP}$ (resp. $\mathbf{DisNP}\nsubseteq \mathbf{AvgP} $)~\cite{CCC:GKLO22,STOC:Hir21}.
To the best of our knowledge, every existing proof fundamentally relies on computational depth, even though the statement itself appears entirely unrelated to Kolmogorov complexity.
In addition, computational depth has found further applications in worst-case characterizations of one-way functions~\cite{TCC:LiuPas24,FOCS:HirNan23}.
}
It is therefore natural to ask whether its quantum analogue $qcd^t(x)$ can serve a similar role in the study of quantum average-case complexity.
We leave this question for future work.
\fi

\paragraph{Kolmogorov Complexity as a Benchmark Measure for Sampling-Based Quantum Advantage.}

Besides certified randomness, we show a further application of Kolmogorov complexity to the verification of sampling-based quantum advantage.
In proving \cref{inf:cert_rand}, we introduce a quantum variant of time-bounded Kolmogorov complexity $quK^t(x)$.
Informally, 
$quK^t(x)$ is the negative logarithm of the probability that a randomly chosen program $\Pi$ of length at most $t$ when processed by a universal quantum Turing machine, outputs $x$ within $t$ steps.
We show that the gap between $quK^t(x)$ and its classical analogue $uK^t(x)$ serves as a benchmark that unconditionally certifies sampling-based quantum advantage.

\begin{theorem}[Restated in \cref{lem:QAS-qkt}]\label{inf:qcd}
Let us define $qcd^t(x)\seteq uK^t(x)-quK^t(x)$.
The value $qcd^t(x)$ satisfies the following two conditions:
\begin{description}
    \item[Quantum Easiness:]
    For any QPT algorithm $\cD_Q$ such that, for any PPT algorithm $\cD$, $\mathsf{SD}(\cD_Q(1^n), \cD(1^n))\geq 1-\negl(n)$,
    the value $qcd^t(x)$ must be large with high probability over $x\la\cD_\cQ(1^n)$.
    \item[Classical Hardness:] For any PPT algorithm $\cA$, the value $qcd^t(x)$ must be small with high probability over $x\la\cA(1^n)$.
\end{description}
\end{theorem}
%There is one remark. 
We remark that,
in the quantum easiness requirement above, 
a QPT algorithm $\cD_Q(1^n)$ must satisfy \emph{strong} quantum advantage (i.e.
$\mathsf{SD}(\cD_Q(1^n), \cD(1^n))\geq 1-\negl(n)$ for any PPT algorithm $\cD$).
This is stronger than what is achieved in
the standard sampling-based quantum advantage proofs~\cite{STOC:AarArk11,BreMonShe16}
where
a QPT algorithm $\cD_Q(1^n)$ satisfies 
\emph{weak} quantum advantage
(i.e.
$\mathsf{SD}(\cD_Q(1^n), \cD(1^n))\geq \epsilon$ with some constant $0<\epsilon<1$ for any PPT algorithm $\cD$).
However, we show in \cref{app:strong_qas} that
%what is required in the quantum easiness, namely,
%that
%$\cD_Q(1^n)$ satisfies
%$\mathsf{SD}(\cD_Q(1^n), \cD(1^n))\geq 1-\negl(n)$ for any PPT algorithm $\cD$, 
strong quantum advantage
is possible assuming the existence of quantum PRGs (QPRGs) secure against $QPT^{\mathbf{NP}}$.

Based on the properties of $qcd^t(x)$ in~\cref{inf:qcd}, we construct a universal verification algorithm for quantum sampling-based advantage.
In particular, the verifier accepts with high probability on samples generated by any QPT algorithm that is hard to simulate by PPT algorithms, while rejecting with high probability on samples generated by any PPT algorithm.
\begin{corollary}[Restated in \cref{thm:QAS_Verification}]\label{inf:QAS_Verification}
There exists a potentially inefficient algorithm $\Vrfy$ satisfying the following properties, and it can be implemented by a QPT algorithm if infinitely-often OWPuzzs do not exist.
\begin{description}
\item[Correctness:] If $x\in\bit^n$ is generated by an arbitrary QPT algorithm $\cD_\cQ(1^n)$ such that for any PPT algorithms $\cD(1^n)$, $\mathsf{SD}(\cD_{\cQ}(1^n),\cD(1^n))\geq 1-\negl(n)$, then $\Vrfy(x)$ accepts with high probability.
\item[Soundness:] 
If $x\in\bit^n$ is generated by any PPT algorithm $\cD(1^n)$, then $\Vrfy(x)$ rejects with high probability.
\end{description}
\end{corollary}

As far as we know, all previous benchmarks of quantum advantage (such as \cite{CCC:AarChe17,AaronsonGunn}) 
are model-dependent: the benchmarks are constructed based on specific properties
of the model (such as the randomness of random quantum circuits). In contrast, \cref{inf:QAS_Verification} is model-independent: any distribution that is not PPT samplable can be detected.

We emphasize that the verification algorithm constructed in \cref{inf:QAS_Verification} is incomparable with the one obtained in \cref{inf:q_verify}.
In \cref{inf:q_verify}, for each quantum algorithm $\cQ$, we constructed a verifier $\Vrfy_{\cQ}$ that tests whether an unknown algorithm is close to the specific algorithm $\cQ$. 
Consequently, $\Vrfy_{\cQ}$ may reject even when an unknown algorithm $\cD^*$ achieves sampling-based quantum advantage because $\cD^*$ might be statistically far from $\cQ$. 
In contrast, \cref{inf:QAS_Verification} enables us to construct a single verifier $\Vrfy$ that certifies whether $x$ is generated by an arbitrary quantum algorithm that is hard to simulate for any PPT algorithm.

Beyond the applications discussed above, we believe that the quantity $qcd^t(x)$ introduced in this paper has further theoretical applications.
In particular, it can be regarded as a quantum generalization of computational depth $cd^t(x)$, defined as the gap between $uK^t(x)$ and $K(x)$ introduced in \cite{AntFor06}.
Computational depth has been shown to admit several applications in worst-case to average-case reductions.\footnote{As a notable example, it plays a central role in the recent breakthrough establishing that
$\mathbf{PH}\nsubseteq \mathbf{DTIME}(2^{O(n/\log n))}$ (reps. $\mathbf{UP}\nsubseteq \mathbf{DTIME}(2^{O(n/\log n))}$) implies $\mathbf{DisPH}\nsubseteq \mathbf{AvgP}$ (resp. $\mathbf{DisNP}\nsubseteq \mathbf{AvgP} $)~\cite{CCC:GKLO22,STOC:Hir21}.
To the best of our knowledge, every existing proof fundamentally relies on computational depth, even though the statement itself appears entirely unrelated to Kolmogorov complexity.
In addition, computational depth has found further applications in worst-case characterizations of one-way functions~\cite{TCC:LiuPas24,FOCS:HirNan23}.
}
It is therefore natural to ask whether its quantum analogue $qcd^t(x)$ can serve a similar role in the study of quantum average-case complexity.
We leave this question for future work.

\subsection{Technical Overview}\label{sec:tech_overview}

\newcommand{\Dis}{\mathsf{Dis}}
\newcommand{\Marg}{\mathsf{Marginal}}

We will begin by presenting the classical variation of our main result (\Cref{inf:owf}):
if one-way functions do not exist, then every classically samplable distribution is verifiable. We will first construct a simple
proof using tools from the field of meta-complexity.
This approach will fail in the quantum setting, but will lead to several new and interesting applications. We will then detail the (more complex) techniques needed to verify quantum samplable distributions (over both classical strings and quantum states) from assumptions about quantum cryptography.

\paragraph{Primer on meta-complexity.}
\textit{Meta-complexity} studies the computational complexity of problems about computational complexity. This field was initiated by the study of a metric on strings known as \textit{Kolmogorov complexity}~\cite{SOLOMONOFF19641,Kolmogorov1968ThreeAT,Chai69}. The Kolmogorov complexity of a string $x$ (denoted $K(x)$) is the length of the shortest Turing machine outputting $x$. It turns out that this (and related notions) are very good proxies for the amount of "randomness" contained in a string.

Importantly for us, Kolmogorov complexity satisfies two key properties on strings output by samplable distributions: a \textit{coding theorem} and \textit{incompressability}. 
\begin{enumerate}
    \item Coding theorem: For all algorithms $\mathcal{D}$, for all inputs $x$,
    $$K(x) \leq \log_2 \frac{1}{\Pr[x \gets \mathcal{D}]} + O(1).$$
    \item Incompressability: For all algorithms $\mathcal{D}$, with high probability over $x\gets \mathcal{D}$,
    $$K(x) \geq \log_2 \frac{1}{\Pr[x \gets \mathcal{D}]} - O(1).$$
\end{enumerate}
Together, these two properties give that for $x$ output by any samplable distribution, $2^{-K(x)}$ is a very good proxy for the probability that the distribution outputs $x$. 

We further note that this error is one-sided. $K(x)$ will always be a upper bound on $-\log \Pr[x \gets \mathcal{D}]$.

\paragraph{Starting point: Aaronson's inefficient verification algorithm.}
Aaronson~\cite{Aaronson14} used this idea to show the equivalence between $\mathbf{SampBQP}=\mathbf{SampBPP}$ and $\mathbf{FBQP}=\mathbf{FBPP}$. At the core of his idea is an \textit{inefficient} verification of distributions protocol using polynomially-many samples. The key idea is as follows, since $K(x)$ is a good proxy for $-\log \Pr[x \gets \mathcal{D}]$, we can verify that $x$ really was sampled from $\mathcal{D}$ just by comparing $K(x)$ and $-\log \Pr[x \gets \mathcal{D}]$!

In particular, Aaronson's verifier works as follows. On input $y_1,\dots,y_s$ (for some particular polynomial $s$), it accepts if and only if
$$K(y_1,\dots,y_s)\geq \log_2\left(\frac{1}{\Pr[y_1\gets \mathcal{D}]\cdots \Pr[y_s\gets \mathcal{D}]}\right) - (\log n)^2.$$
It follows from Aaronson's argument that the following two properties are satisfied
\begin{itemize}
    \item 
{\bf Correctness.} \(\Vrfy\left(y_1,\dots,y_{s}\right)=\top\) with high probability when
$(y_1,\dots,y_{s})\la\cD^{\otimes s(n)}$.
\item
{\bf Soundness.} 
For any 
uniform adversary $\cA$ that outputs $(y_1,...,y_{s})$ such that 
the marginal distribution
\(\Marg_{\cA}\) is statistically far from \(\cD\),
\(\Vrfy\left(y_1,\dots,y_{s}\right)=\bot\) with high probability.
Here, \(\Marg_{\cA}\) denotes the algorithm that samples \(y_1,\dots,y_{s(n)}\la\cA\),
chooses $i\gets [s]$, and outputs \(y_i\).
\end{itemize}

In particular, correctness follows immediately from the incompressibility applied to the distribution $\mathcal{D}^{\otimes s}$. On the other hand, if \(\Marg_{\cA}\) is statistically far from \(\cD\), a simple probabilistic argument shows that 
$${\Pr[(y_1,\dots,y_s)\gets \A]} \gg {\Pr[y_1\gets \cD]\cdots \Pr[y_s\gets \cD]}$$
and so by coding theorem
$$K(y_1,\dots,y_s) \approx \log_2\left(\frac{1}{\Pr[(y_1,\dots,y_s)\gets \A]}\right) \ll \log_2\left(\frac{1}{\Pr[y_1\gets \cD]\cdots \Pr[y_s\gets \cD]}\right).$$
The verifier rejects outputs of $\cA$ and is thus sound.

In this construction, although the number of samples is polynomial, one crucial issue of the construction is that Kolmogorov complexity is in general uncomputable, and so this verifier cannot be implemented by any Turing machine (even inefficient ones)\footnote{\cite{Aaronson14} also discusses computable variations of his verifier, but these do not have efficient implementations.}. 

\paragraph{Making Aaronson's verification efficient.}We observe that the \textit{only} properties of Kolmogorov complexity used by Aaronson are the coding theorem and incompressability. And so the same inefficient verifier works if we replace $K(x)$ with \textit{any} metric satisfying these two properties. Fortunately, it turns out that there is such a metric which is additionally easy to compute \textit{as long as one-way functions do not exist}.

In particular, we use a variant of time-bounded Kolmogorov complexity which we denote $uK^t$ (\cite{FOCS:HirNan23} calls this $q^t$), which is defined as the negative logarithm of the output probabilities of the \textit{universal classical time-bounded distribution}. Formally,
$$uK^t(x) = -\log_2 \left(\Pr_{\Pi\gets \{0,1\}^t}\left[x\gets \mathcal{U}^t(\Pi)\right]\right)$$
where $\mathcal{U}$ is any universal Turing machine and $\mathcal{U}^t(\Pi)$ is the output tape of $\mathcal{U}$ on input $\Pi$ after $t$ steps. The complexity measure $uK^t(x)$ is known to satisfy a coding theorem and incompressability~\cite{LiVitanyi93,FOCS:HirNan23}. And so Aaronson's verifier works when we replace $K(x)$ with $uK^t(x)$.

In order to implement Aaronson's verifier efficiently, we need an efficient algorithm to compute the two values $\Pr[y_1\gets \cD]\cdots \Pr[y_s\gets \cD]$ and $uK^t(y_1,\dots,y_s)$.
For both of these tasks, we use the well-known equivalence between the existence of one-way functions and the task of \textit{probability estimation}~\cite{FOCS:ImpLev90,STOC:IlaRenSan22,CHK25}. In particular, if one-way functions do not exist, then for any efficient distribution $\cD$ there exists an algorithm $\mathsf{Approx}_{\cD}$ such that with high probability over $x\gets \cD$, $\mathsf{Approx}_{\cD}(x) = (1\pm \epsilon)\cdot \Pr[x\gets \cD]$. Furthermore, this algorithm operates with \textit{one-sided error} in the following sense: for \textit{all} $x$,
$$\mathsf{Approx}_{\cD}(x) \leq \Pr[x\gets \cD]$$
with high probability over the internal randomness of $\mathsf{Approx}_{\cD}$.

This gives us an efficient means of estimating $\Pr[y_1\gets \cD]\cdots \Pr[y_s\gets \cD]$ as long as one-way functions do not exist. The ability to solve probability estimation also gives rise to an approximation algorithm for $uK^t$~\cite{FOCS:ImpLev90,FOCS:HirNan23}. Formally, if one-way functions do not exist, then there exists a PPT estimation algorithm $\mathsf{Approx}_{uK^t}$ such that with high probability over any PPT-samplable distribution $\cD$, $\mathsf{Approx}_{uK^t}(x)\approx uK^t(x)$.

And so, we can efficiently implement Aaronson's verifier by simply checking whether
$$-\log_2\left(\mathsf{Approx}_{uK^t}(y_1,\dots,y_s)\right) \geq -\log_2 \left(\mathsf{Approx}_{\cD^{\otimes s}}(y_1,\dots,y_s)\right)-(\log n)^2.$$

Correctness follows immediately from combining the correctness of the approximation algorithms with incompressibility. On the other hand, there is a slight problem with soundness. In particular, the behavior of $\mathsf{Approx}_{\cD^{\otimes s}}$ is not guaranteed when $y_1,\dots,y_s$ are adversarially produced and do not come from $\cD^{\otimes s}$. However, one-sided error guarantees that the value received is at least a lower bound, and so the verifier will still reject on far sources.
%\footnote{In fact, our estimate of $uK^t(y_1,\dots,y_t)$ will always be correct, since all samplable distributions have constant overlap with the universal distribution produced by random Turing machines.\taiga{I think that the estimation algorithm of $uK^t(x)$ is not sufficiently explained, so it might be unclear.}
Putting this all together,~\Cref{inf:owf} follows.

\paragraph{Failure in the quantum setting.} As in the classical setting, one can define a quantum analogue $quK^t(x)$ of the classical time-bounded Kolmogorov complexity $uK^t(x)$ and show that $quK^t(x)$ satisfies both incompressibility and the coding theorem.
Consequently, if one can efficiently test
\begin{align*}
quK^t\left(y_1,...,y_{s(n)}\right) \geq \log_{2}\!\left(\frac{1}{\Pr[(y_1,...,y_{s(n)})\leftarrow\cD(1^n)^{\otimes s(n)}]}\right) - (\log n)^2    
\end{align*}
for any QPT algorithm $\cD$, then it would be possible to verify any QPT-samplable distributions (over classical strings).
Indeed, a deterministic polynomial-time algorithm with access to a $\mathbf{PP}$ oracle can compute both
$\Pr[(y_1,\dots,y_{s(n)})\leftarrow\mathcal{D}(1^n)^{\otimes s(n)}]$
and
$quK^t(y_1,\dots,y_{s(n)})$
for all $y_1,\dots,y_{s(n)}$~\cite{FR99}.
Therefore, such an algorithm can verify any QPT-samplable distributions.

Just as in the classical setting, one could hope to show that probability estimation for quantum samplable sources is equivalent to some quantum cryptographic object. Indeed, \cite{EC:CGGH25,C:HirMor25,STOC:KhuTom25} showed that if no
OWPuzzs exist, then, for every QPT algorithm $\cD$, there exists a QPT algorithm $\mathsf{Approx}$ that can estimate $\Pr[y\la\cD]$ with high probability on average over $y\gets \cD$.
This might suggest that the $\mathbf{PP}$ oracle in the above construction could be replaced using the probability estimator achieved from the non-existence of OWPuzzs. However, the quantum setting lacks the one-sided error property necessary to achieve soundness in the classical setting. Intuitively, the one-sided error property comes from the fact that given some randomness, one can verify whether it indeed produces the correct output. This is a task not always possible quantumly.

\paragraph{Verifying quantum samplable distributions (of classical strings).}
We now give intuition for our proof of~\Cref{inf:q_verify}. Our verifier for quantum samplable distributions begins with the following simple observation: for any given fixed constant length, there are only a constant number of Turing machines to consider. This gives us the following template for a distinguisher: First, develop a way to distinguish samples from $\cD^{\otimes s}$ from samples from $\A$ for a \textit{fixed} adversarial distribution $\A$. Then, on input $(y_1,\dots,y_s)$, try every single possible adversarial distribution $\A$ and run the corresponding distinguisher. If $(y_1,\dots,y_s)$ came from $\cD$, then all of these distinguishers should output "$\cD$". But if $(y_1,\dots,y_s)$ came from a bad distribution, then one of them should output "$\A$". Note that this approach requires the individual distinguishers to succeed with very high probability in order to maintain correctness when taking a union bound over all adversarial distributions.\footnote{It also requires that the individual distinguishers always return "$\cD$" when the distributions are close, but this will happen automatically in our approach.}

At the heart of instantiating this approach lies a (nearly) optimal \textit{universal distinguisher}. In particular, this will be an algorithm $\Dis_{\cD_0,\cD_1}(x)$ which distinguishes between samples from distributions $\cD_0,\cD_1$ with optimal advantage for all $\cD_0,\cD_1$ samplable by constant length Turing machines. Formally, let $\SD(\cD_0,\cD_1)$ be the total variation distance between $\cD_0$ and $\cD_1$. Then
\begin{equation*}
    \begin{split}
        \Pr_{x\gets \cD_1}[1\gets\Dis_{\cD_0,\cD_1}(x)] - \Pr_{x\gets \cD_0}[1\gets \Dis_{\cD_0,\cD_1}(x)] \approx \SD(\cD_0,\cD_1).
    \end{split}
\end{equation*}

% \taiga{I would like to suggest modify From here}
As a considerably simplified example, consider the setting where the adversarial distribution $\A^{\otimes s}$ is non-adaptive and so consists of $s$ independent samples. Then we could check which distribution $(y_1,\dots,y_s)$ comes from by running $\Dis_{\cD,\A}(y_1),\dots,\Dis_{\cD,\A}(y_s)$ and seeing if the proportion of answers saying "$\cD$" is larger than $\Pr_{x\gets \cD}[1\gets \Dis_{\cD,\A}(x)]$. 

The key challenge then is to develop a verifier which also works for adaptive adversaries $(y_1,\dots,y_s)\gets \A$. Our insight will be to focus our attention on the individual marginal distributions of $\A$, followed by a careful probabilistic argument.

To verify whether $y_1,\dots,y_s$ came from some fixed distribution $\cD$, our verifier will do the following
\begin{enumerate}
    \item For all distributions $\cA$ samplable by a constant-size quantum Turing machine:
    \begin{enumerate}
        \item Define $\Marg_\A^i$ to be the $i$th marginal of $\A$. That is, the distribution derived by sampling $(y_1,\dots,y_s)\gets \A$ and outputting $y_i$.
        \item Run $\Dis_{\cD,\Marg_\A^1}(y_1),\dots,\Dis_{\cD,\Marg_\A^s}(y_s) \to a_1^{\A},\dots,a_s^{\A}$.
        \item Sample $(x_1,\dots,x_s)\gets \cD^{\otimes s}$. Run $\Dis_{\cD,\Marg_\A^1}(x_1),\dots,\Dis_{\cD,\Marg_\A^s}(x_s) \to b_1^{\A},\dots,b_s^{\A}$.
        \item Define $A^\A = \frac{1}{s}\sum_i a_i$, $B^\A = \frac{1}{s}\sum_i b_i$.
        \item If the gap $\abs{A^\A-B^\A}$ is small, reject.
    \end{enumerate}
    \item If none of the prior tests reject, then accept.
\end{enumerate}
Note that since there are only a constant number of constant-size Turing machines, assuming $\Dis$ is efficient, this procedure is efficient is as well.

Correctness follows from a simple concentration bound. Fix any adversary $\A$. For $(y_1,\dots,y_s)\gets \cD$, then $a_i^\A$ and $b_i^\A$ are identically distributed. And so $$\E[A^\A - B^\A] =\frac{1}{s}\sum_i \E[a_i^\A-b_i^\A] = 0.$$
Since $A^\A-B^\A$ is the sum of i.i.d. random variables, it is close to its expectation with all but inverse exponential probability. Taking the union bound over all Turing machines, the verifier will accept honest samples with very high probability.

Adaptive soundness also follows from analyzing $\E[A^\A-B^\A]$. In particular, fix any adversary $\A$ such that $\SD(\cD,\Marg_\A)$ is large. We show that with high probability, $A^\A - B^\A$ is also large, and so the verifier will reject in the round corresponding to $\A$. However, if $(y_1,\dots,y_s)$ came from $\A$, then we have for any fixed $i$,
$$\E\left[\abs{b_i^\A-a_i^\A}\right]=\abs{\Pr_{x_i\gets\cD}[\Dis_{\cD,\Marg_\A^i}(x_i)\to 1] - \Pr_{x\gets\Marg_\A^i}[\Dis_{\cD,\Marg_\A^i}(x_i)\to 1]}$$ $$\approx \SD(\cD,\Marg_\A^i).$$
Writing down the expectation of $B^\A-A^\A$, we get 
$$\E\left[\abs{B^\A-A^\A}\right]=\E_i\left[\abs{b_i^\A-a_i^\A}\right] \approx  \E_i[\SD(\cD,\Marg_\A^i)] = \SD(\cD,\Marg_\A).$$
By Markov's inequality, $B^\A-A^\A$ will by large with reasonably high probability.

\paragraph{Implementing the universal distinguisher efficiently.}

It thus remains to be shown how to efficiently implement the universal distinguisher $\Dis$. We will do this by assuming the non-existence of one-way puzzles. Although the proof of this statement in detail is technically involved, the intuition is fairly simple. 

It is well known that if one-way puzzles do not exist, it is possible to solve the task of \textit{universal extrapolation}~\cite{C:ChuGolGra24}. In particular, there exists an efficient algorithm $\mathsf{Ext}$ such that for all efficient samplers $(x,y)\gets \mathsf{Samp}$, $(x,y)$ and $(x,\mathsf{Ext}(x))$ are close in statistical distance. In other words, $\mathsf{Ext}_{\mathsf{Samp}}(x)$ samples from the conditional distribution on $y$ given $x$. This leads to the following distinguisher for $\cD_0,\cD_1$. Define the sampler $\mathsf{Samp}_{\cD_0,\cD_1}$ which samples $b\gets \{0,1\}$, $x_b\gets \cD_b$, and outputs $(x_b,b)$. By definition, $\mathsf{Ext}_{\mathsf{Samp}_{\cD_0,\cD_1}}(x_b)=b$ with good probability, and so $\mathsf{Ext}_{\mathsf{Samp}_{\cD_0,\cD_1}}$ is a pretty good distinguisher. Amplifying with parallel repetition gives a distinguisher between $\cD_0$ and $\cD_1$ which succeeds with almost optimal probability, i.e. $\SD(\cD_0,\cD_1)$.

\paragraph{Verifying quantum states.}

It turns out that a similar approach can also be used to verify quantum states (\Cref{informal:state_verification}). In this setting, let $\rho_{\cD}$ be the output state of some quantum sampler $\cD$. Given $\rho_{\cR[1],...,\cR[s]}\la\cA$, our goal is to test whether $\rho_{\cR[1],...,\cR[s]}$ is equivalent to a known $\rho_{\cD}$, or whether the marginal state $$\Marg_{\cA}\seteq \frac{1}{s(n)}\sum_{i\in[s]}\rho_{\cR[i]}$$ is statistically far from $\rho_{\cD}$.

At a high level, our distribution verification protocol only relies on having an efficient algorithm which optimally distinguishes between two arbitrary distributions. One may hope that if we could efficiently solve the analogous task for quantum states, we could also solve the verification problem for quantum states. With some work to handle errors induced by measuring the input state, it turns out that this is indeed possible.

In particular, we get that if weak non-uniform EFI pairs do not exist, then efficiently generatable quantum states are also efficiently verifiable. As stated in the introduction, as a sufficiently strong amplification theorem for EFI pairs is not known, we do not manage to show such a result from the non-existence of EFI.

\subsubsection{Other applications of Kolmogorov based verification}

% \taiga{Remove the section below?}
Although meta-complexity is not necessary to prove our main theorems, the techniques used naturally give rise to several other interesting related implications. We discuss two here.

\paragraph{Unconditional certifiable randomness.}

We now give a short sketch of~\Cref{inf:cert_rand}. In particular, our (inefficient) certifiable randomness protocol will simply accept strings with high Kolmogorov complexity. The honest prover will be the uniform distribution. Coding theorem says that distributions with low entropy will output strings with low Kolmogorov complexity and be rejected, while incompressibility says that the uniform distribution will output strings with high Kolmogorov complexity and thus be accepted.

\paragraph{A universal verifier for quantum advantage.}

Recall from the introduction that we claim that if a distribution outputs strings with high "quantum computational depth" $qcd^t(x) = uK^t(x) - quK^t(x)$, then it must demonstrate sampling-based quantum advantage (\Cref{inf:qcd}). If one-way puzzles do not exist, we can estimate computational depth by estimating both $uK^t(x)$ and $quK^t(x)$~\cite{EC:CGGH25}. This gives a natural "universal verifier" $\Ver$ for quantum advantage samplers, consisting of checking if $qcd^t(x)$ is large (\Cref{inf:QAS_Verification}).
%If $\Ver$ accepts outputs from a distribution, then that distribution must be a quantum advantage sampler. 
We now sketch an argument for the correctness of this universal verifier.

For all classical samplers $x\gets \A$, by coding theorem and incompressibility, with high probability $uK^t(x)\approx quK^t(x)\approx -\log_2 \Pr[x\gets \A]$ and so samples produced by classical distributions will have low quantum computational depth.

Therefore, it remains to show that quantum advantage samplers output strings with high $qcd^t$. Let $\mathsf{QAS}$ be any distribution demonstrating sampling-based quantum advantage. By the coding theorem, since $\mathsf{QAS}$ is efficiently samplable by a quantum machine, we must have $quK^t(x)$ small for $x\gets \cD$. Furthermore, since \textit{no} classical machine can sample from $\mathsf{QAS}$, random samples $x\gets \mathsf{QAS}$ must have high $uK^t(x)$.\footnote{Formally, any distribution which outputs strings with low $uK^t(x)$ must have some non-trivial overlap with the classical universal distribution. This contradicts $\mathsf{QAS}$ demonstrating quantum advantage.} And so for any quantum advantage sampler, it must with high probability outputs strings with high quantum computational depth.

% \subsubsection{Lower bounds on distribution verification}

% \eli{This should all go in intro in a single sentence each, it is too easy.}
% \paragraph{Hardness of distribution verification over classical distributions from one-way functions}

% To contrast with our positive results, we can show that if one-way functions exist, then every PPT-samplable distribution $\cD$ with min-entropy at least $n^{c}$ for $c>0$ is not verifiable. In particular, let $\Ver$ be any verifier for $\cD$, and let $\cD(r)$ be the output of the sampler for $\cD$ on input $r$. Let $G$ be any pseudorandom generator. Consider the adversarial distribution $\A(r') = \cD(G(r))$. By pseudorandomness, $\A(r')$ is indistinguishable from $\cD$, and so $\Ver(\A(r_1'),\dots,\A(r_s'))=1$ with high probability. But since $G$ is expanding, the output distribution of $\A$ is statistically far from $\cD$, and so $\Ver$ fails when given outputs of $\A$.

% \paragraph{Hardness of verification in the quantum settings}
% In the quantum setting, we simply observe that an EFI(D) is by definition an unverifiable distribution, and so if EFI(D) exists then quantum distributions over strings/states respectively are unverifiable.

\subsection{Related Works}

\paragraph{Cryptography.}

In classical cryptography the existence of one-way functions (OWFs) is a minimal assumption, and is known to be existentially equivalent to a large class of primitives such as PRGs, PRFs, SKE, MACs, digital signatures, and commitment schemes~\cite{FOCS:ImpLub89,Hill99,C:GolGolMic84,STOC:Rompel90,C:Naor89}.
In the quantum setting a variety of analogous—but potentially weaker—primitives have been proposed, including pseudorandom unitaries and state generators (PRUs/PRSGs)~\cite{C:JiLiuSon18}, one-way state generators (OWSGs)~\cite{C:MorYam22}, one-way puzzles (OWPuzzs)~\cite{STOC:KhuTom24}, and EFI pairs~\cite{ITCS:BCQ23}; these primitives support many applications (private-key quantum money, SKE, MACs, signatures, commitments, MPC, etc.)~\cite{C:JiLiuSon18,C:AnaQiaYue22,C:MorYam22,AC:Yan22,C:BCKM21b,EC:GLSV21}.

Among these primitives, OWPuzzs~\cite{STOC:KhuTom24} and EFI pairs~\cite{ITCS:BCQ23} are especially important. Most cryptographic primitives (with the notable exception of EFI) imply the existence of OWPuzzs~\cite{C:ChuGolGra24}, and the existence of OWPuzzs imply $ \mathbf{BQP}\neq \mathbf{PP}$~\cite{Cavalar_2025}. 
Moreover, the existence of classically-secure OWPuzzs is equivalent to that of inefficiently-verifiable proofs of quantumness~\cite{STOC:MorShiYam25}.
Our results show that the hardness of verification of quantumly-samplable distributions is sandwiched between OWPuzzs and QEFID.

EFI pairs, in turn, are implied by nearly all nontrivial quantum cryptographic primitives~\cite{ITCS:BCQ23} and are believed to be resistant to attacks that use classical oracles~\cite{STOC:LomMaWri24}.
It is known that non-uniform EFI can be converted into EFI~\cite{TCC:HKNY24}.
Furthermore, weak EFI can be amplified to EFI~\cite{STOC:BQSY24}. However, their technique can be applied to weak EFI with certain parameter constraints. 
As a consequence, at present, it is an open problem how to construct EFI from weak non-uniform EFI.
Our results imply that the hardness of verifying quantum states lies between EFI and weak non-uniform EFI.

\paragraph{Verification of distributions.}

Verification of distributions (a.k.a.\ the identity testing problem) has a
long
history~\cite{idtestclose,optidtest,structidtest,idtestlower,idtestupper,Canonne20}.
However, to the best of our knowledge, few works have studied the setting
in which the unknown distribution is promised to be efficiently samplable.
Therefore, here we will only mention the closely related result of
Aaronson~\cite{Aaronson14}.

Aaronson~\cite{Aaronson14} showed that
$\mathbf{SampBQP}\neq\mathbf{SampBPP}$ if and only if
$\mathbf{FBQP}\neq\mathbf{FBPP}$. In proving this result, Aaronson
implicitly showed that every efficiently-samplable distribution is
verifiable with \emph{only polynomially many samples}. However,
the paper does not give a formal definition of verification, and the
verification algorithm presented requires exponential time.\footnote{\cite{Aaronson14} already observed the issue, 
and pointed out that by considering space-bounded Kolmogorov complexity,
at least $\mathbf{PSPACE}$ verification is possible.} In fact, it is
often claimed that verification of sampling-based quantum advantage
requires exponentially many samples~\cite{HKEG19}\footnote{\cite{HKEG19} considers the setting where the unknown distribution is not-necessarily efficient, and therefore cannot be applied to our setting (and the standard setting in sampling-based quantum advantage) where the unknown distribution is efficiently samplable.}; hence the
quantum-advantage community may not be aware that polynomially many samples
suffice for distribution verification under a reasonable definition. By
contrast, we formally define distribution verification and improve the
exponential-time verification algorithm.

\if0
\paragraph{Identity testing beyond independent and identically distributed samples.}

Before our work, only a few works studied the identity testing
problem when the samples are not independent and identically distributed (i.i.d.).
Here, we only mention the related result of Garg et al. \cite{garg2023}.
Their work also considers identity testing beyond the i.i.d. case.
Their definition is different from ours, and we now explain the differences.

First, in their definition, an unknown algorithm is not allowed to output correlated samples.
In their definition, an unknown algorithm $\cA$ is allowed to generate
samples $x_1,\dots,x_s$ from not necessarily identical distributions
$\cD_1,\dots,\cD_s$, but these are not allowed to be correlated. 
More specifically, the distribution $\cD_{i+1}$ cannot depend on the output
of $\cD_1,\dots,\cD_i$.
On the other hand, our definition does not have such a restriction, and
thus our setting is more general.
Second, in their definition, the unknown algorithm $\cA$ is not promised to be efficiently-samplable. 
Hence, their protocol requires exponentially many samples to verify the unknown distribution.
On the other hand, in our definition, an unknown algorithm is promised to
be efficiently-samplable, and thus we are able to show that polynomially-many samples suffice.
\fi

\paragraph{Distribution Learning.}

Solomonoff initiated the foundations of learning theory~\cite{SOLOMONOFF19641,Solomonoff1964_2}.
He proposed a general method for predicting subsequent bits from samples drawn according to an unknown probability distribution, which need not generate bits independently and identically.
While subsequent work introduced  several other learning models~\cite{Valiant84,KMRRSS94,NaoRot06}, such as PAC learning and distribution learning, Solomonoff induction can be viewed as a more general conceptual framework that subsumes many of these models as demonstrated in~\cite{FOCS:HirNan23}.

One might hope to derive our results from Solomonoff’s techniques.
However, there are two obstacles to doing so.
First, Solomonoff’s theory applies to unknown probability distributions, and it is not clear how to adapt his techniques to unknown quantum states.
Second, our verification task requires testing whether the \emph{marginal} of an unknown distribution is far from a given target distribution.
By contrast, Solomonoff prediction focuses on sequence prediction and does not directly yield methods for  learning or verifying marginal distributions.\footnote{In fact, in \cref{informal:impossibility_learn}, we show that learning marginal distribution is impossible in principle. Therefore, it is unclear how to apply Solomonoff's induction to verifying marginal distributions.}
For these reasons, we develop a different approach tailored to the quantum setting and to verifying marginal distributions.

\paragraph{Quantum State Learning.}
Several works study verification of quantum states in the non-i.i.d.~setting, focusing primarily on pure states~\cite{MorTakHay17,MorTak18,ZhuHay19}.
These techniques crucially rely on the fact that the target quantum state is pure, and extending them to mixed-state targets appears non-trivial.
A smaller set of works considers learning or verifying mixed states in non-i.i.d.\ settings, but their definitions differ from ours in ways that make it unclear how to apply their methods in our setting. We briefly highlight these differences below.

In \cite{ABCL25}, they consider the setting, where one receives quantum states $\rho^{\otimes N}$.
Some fraction of the copies might be arbitrarily corrupted, and the task is to learn the original state  prior to the corruption.
By contrast, in our work all quantum states are generated maliciously, and our definition requires verifying properties of these maliciously generated states.

In \cite{PFMO25}, one receives $\rho_1,\dots,\rho_N$, and the goal is to test whether
$$
\rho_{\mathrm{avg}} = \frac{1}{N}\sum_{i \in [N]} \rho_i
$$
is close to a known target quantum state $\sigma$.
Although their definition disallows correlations between distinct states $\rho_i$ and $\rho_j$, our definition permits such correlations.
Conversely, we assume that the unknown states are promised to be efficiently generatable, whereas their setting makes no such promise.

\cite{Fawzi_2024} considers learning from general non-i.i.d. quantum states. In their frame work, a random permutation is first applied to the registers, after which $N-1$ subsystems are used for training and the remaining subsystem is reserved for testing.
The learner is required to output a prediction of the conditional reduced state of the test register induced by its action on the training registers.
In contrast, our framework is tailored to verifying properties of the induced marginal of \(\rho_{\cA_1,\ldots,\cA_N}\), and it is not clear how to adapt their techniques to handle our notion of adversarially correlated generation and verification.

\if0
\paragraph{Meta-Complexity Characterization of OWFs.}
A recent line of research~\cite{FOCS:LiuPas20,C:LiuPas21,STOC:IlaRenSan22,FOCS:HirNan23} shows that the existence of OWFs is equivalent to several average-case hardness notions of Kolmogorov complexity, i.e., the average-case hardness of computing the minimum description length of a given string.  
%Our results rely on these characterizations.
In this paper, we have shown the equivalence between the feasibility of verifying PPT-samplable distributions and
the existence of OWFs. Therefore, by combining it with \cite{FOCS:LiuPas20,C:LiuPas21,STOC:IlaRenSan22,FOCS:HirNan23},
we also obtain the equivalence between the feasibility of verifying PPT-samplable distributions 
and these classical average-case hardness notions of Kolmogorov complexity.
\fi

\paragraph{Kolmogorov-Complexity.}

In the meta-complexity literature~\cite{FOCS:LiuPas20,STOC:Hir21,C:LiuPas21,CCC:GKLO22,STOC:IlaRenSan22,FOCS:HirNan23}, several variants of time-bounded Kolmogorov complexity have been studied.
We introduce a quantum extension of one such variant.
We briefly review the relevant notions below.
Recent work~\cite{STOC:ZOS21} introduced a randomized variant $rK^t(x)$, defined as the minimum description length of a randomized program that outputs $x$ within $t$ steps with high probability.
A probabilistic variant $pK^t(x)$ has also been considered~\cite{CCC:GKLO22}, where a short deterministic program is required to output $x$ correctly for most public random strings.
Our work builds on the notion of $uK^t(x)$, defined as the negative logarithm of the time-bounded universal probability distribution~\cite{SOLOMONOFF19641,FOCS:ImpLev90,FOCS:HirNan23}.%
\footnote{They denote this quantity by $q^t(x)$. To avoid possible confusion with “quantum,” we instead write $uK^t(x)$, where $u$ stands for “universal.”}
In the classical setting, these notions are essentially equivalent: under standard derandomization assumptions, they coincide up to logarithmic factors~\cite{CCC:GKLO22,ZO22,FOCS:HirNan23}.

In contrast, our work focuses exclusively on $quK^t(x)$, a quantum analogue of $uK^t(x)$.
One might ask whether our results could instead be derived using quantum analogues of $rK^t(x)$ or $pK^t(x)$.
We expect that this is unlikely.
Unlike the classical case, the quantum counterparts of $uK^t(x)$, $rK^t(x)$, and $pK^t(x)$ are not necessarily equivalent.
Consequently, it remains unclear how to obtain our results using quantum variants of $rK^t(x)$ or $pK^t(x)$.
% Based on these observations, we establish our results by introducing quantity $quK^t(x)$, which is found to be an appropriate notion for our purpose.

\subsection{Open Problems}
Our research raises several important open questions that remain to be explored.
Below, we highlight a particularly interesting one.

In \cref{inf:q_verify}, we have shown that, assuming the existence of QEFID, there exists a QPT-samplable distribution that is hard to verify.
This result does not exclude the possibility that certain QPT-samplable distributions with high entropy might still be efficiently verifiable.
Therefore, a natural question arises: is every QPT-samplable distribution with high entropy hard to verify?

More concretely, we ask:
\begin{center}
\emph{For any QPT-samplable distribution $\cQ$ with sufficiently large entropy, does there exist a PPT-samplable distribution $\cC$ such that no PPT algorithm can distinguish whether samples came from $\cQ$ or $\cC$?}
\end{center}
An affirmative answer would indicate that no PPT algorithm can detect sampling-based quantum advantage when the entropy of the quantum distribution is sufficiently high.
% \mor{why? there could be a QPT-sampleable low-entropy distribution that cannot be PPT sampled and can be distinguished with PPT distinguisher.}
% \taiga{Added``when the entropy of the quantum distribution is sufficiently high''.}
Conversely, a negative answer would indicate that some form of quantum sampling-based advantage can, in fact, be detected by classical verification.

In the classical case, we already establish an affirmative answer in \Cref{inf:hard_owf}.
Specifically, we show that for any PPT-samplable distribution $\cQ$ with sufficiently large entropy, there exists another PPT-samplable distribution $\cC$ that is statistically far from $\cQ$ yet indistinguishable by any PPT algorithm.
However, our proof technique there crucially relies on the fact that the output of a PPT algorithm can be described by its internal randomness, making it unclear how to extend the argument to the quantum setting.

\subsection{Paper organization}
In \Cref{sec:prelims}, we review standard results and definitions;
we introduce our definitions and basic results about quantum
meta-complexity in \Cref{sec:qmeta}.
In \Cref{sec:def-ver}, we formally introduce our definitions of distribution
verification (\Cref{def:selective,def:adaptive}) and state
verification~(\Cref{def:verify_state}) and prove simple and basic results 
about these definitions.
Next, we prove our main results about classical distribution verification in
\Cref{sec:ver-with-K},
and state verification in \Cref{sec:state-ver}.
We then discuss 
in
\Cref{sec:ver-without-K}
how to achieve quantum distribution verification without
Kolmogorov complexity, employing the nonexistence of one-way puzzles,
and also show the cryptographic hardness of quantum distribution verification.
In \Cref{sec:certified-randomness} we obtain our application to certified
randomness, and in \Cref{sec:universal-ver} we show how to construct a
universal verifier for quantum advantage using 
Kolmogorov complexity ($qcd^t$)
as a benchmark.

\section{Preliminaries}
\label{sec:prelims}

\subsection{Notations}
Here we introduce basic notations and mathematical tools used in this paper.
We denote by $x \chosen X$ an element from a finite set $X$ chosen uniformly at random,
and $y \gets \algo{A}(x)$ denotes assigning to $y$ the output of a
probabilistic or deterministic algorithm $\algo{A}$ on an input $x$. When
we want to explicitly denote that $\algo{A}$ uses randomness $r$, we write $y \gets
\algo{A}(x;r)$. When $D$ is a distribution, $x \chosen D$ denotes sampling
an element from $D$. The notation $\{y_i\}_{i\in[N]}\la\cA(x)^{\otimes N}$
means that $\cA$ is run on input $x$ independently $N$ times, and $y_i$ is
the $i$th result. Let $[\ell]$ denote the set of integers $\{1, \cdots,
\ell \}$, and $y \seteq z$ denote that $y$ is set, defined, or substituted
by $z$. For a string $s \in \zo{\ell}$, $s[i]$ and $s_i$ denotes $i$-th bit
of $s$.
QPT stands for quantum polynomial time. 
PPT stands for (classical) probabilistic polynomial time.
A function $f: \N \ra \R$ is a negligible function if for any constant $c$, there exists $n_0 \in \N$ such that for any $n>n_0$, $f(n) < n^{-c}$. We write $f(n) \leq \negl(n)$ to denote $f(n)$ being a negligible function.
The statistical distance between two distributions $\cD$ and $\cC$ is given by $\Delta(\cD,\cC)\seteq \frac{1}{2}\sum_{x}\abs{\Pr[x\la\cD]-\Pr[x\la\cC]}$ or $\mathsf{SD}(\cD,\cC)\seteq \frac{1}{2}\sum_{x}\abs{\Pr[x\la\cD]-\Pr[x\la\cC]}$.
The trace distance between two quantum states $\rho$ and $\sigma$ is given by $\mathsf{TD}(\rho,\sigma)=\frac{1}{2}\norm{\rho-\sigma}_1$
The Shannon entropy of a distribution $\cD$ is given by $H(\cD)\seteq \sum_{x}\Pr[x\la\cD]\log\left(\frac{1}{\Pr[x\la\cD]}\right)$.

In this work, we use the triangle inequality, and the following concentration inequalities.
\begin{lemma}[Triangle Inequality]\label{triangle}
Let $b_1,...,b_s \in\R$.
Then, we have
\begin{align*}
    \abs{\sum_{i\in[s]}b_i}\leq \sum_{i\in[s]}\abs{b_i}.
\end{align*}
\end{lemma}
\begin{lemma}[Hoeffding's inequality]\label{Hoeffding}
Let $s\in\N$, and $t\in \R$.
Let $X_1,X_2,...,X_s$ be an independent random variable over $[0,1]$.
Then, we have
\begin{align*}
\Pr_{\{b_i\la X_i\}_{i\in[s]} }\left[ \left(\frac{1}{s}\sum_{i\in[s]} b_i-\frac{1}{s}\sum_{i\in[s]}\mathbb{E}[X_i]\right) > t \right]\leq \exp(-2st^2)
\end{align*}
and
\begin{align*}
\Pr_{\{b_i\la X_i\}_{i\in[s]} }\left[ \left(\frac{1}{s}\sum_{i\in[s]}\mathbb{E}[X_i]-\frac{1}{s}\sum_{i\in[s]} b_i\right) > t \right]\leq \exp(-2st^2)
\end{align*}
\end{lemma}

\begin{lemma}[Chernoff bound]\label{Chernoff}
Let $s\in\N$.  
Let $X_1,...,X_s$ be an independent random variable over $\bit$.
Then, for any $\delta>0$, we have
\begin{align*}
\Pr_{\{b_i\la X_i\}_{i\in[s]} }\left[ \abs{\sum_{i\in[s]} b_i-\sum_{i\in[s]}\mathbb{E}[X_i]} \leq \delta\sum_{i\in[s]}\mathbb{E}[X_i] \right]\geq 2\exp\left(-\frac{\delta^2 \sum_{i\in[s(n)]}\mathbb{E}[X_i]}{3}\right).
\end{align*}
\end{lemma}

\subsection{Cryptography}

\begin{definition}[Infinitely-Often One-Way Functions (OWFs)]\label{def:OWFs}
    A function $f:\bit^*\to\bit^*$ that is computable in classical
    deterministic polynomial-time is an infinitely-often one-way function
    (OWF) if, for any PPT adversary $\cA$ and for any polynomial $p$, 
    we have
    \begin{align*}
    \Pr[f(x')=f(x): x\gets\bit^{n}, x'\gets\cA(1^{n},f(x))] \le\frac{1}{p(n)}
    \end{align*}
    for infinitely many $n\in\N$.
\end{definition}

\begin{definition}[One-Way Puzzles (OWPuzzs) \cite{STOC:KhuTom24}]
\label{def:OWPuzz}
A one-way puzzle is a pair $(\Samp, \Vrfy)$ of algorithms with the following syntax:
\begin{itemize}
    \item $\Samp(1^n) \rightarrow (\ans,\puzz)$: It is a QPT algorithm that, on input $1^n$, outputs two classical bit strings $(\ans,\puzz)$. 
    \item $\Vrfy(\ans',\puzz) \rightarrow \top/\bot$: It is an unbounded algorithm that, on input $(\ans',\puzz)$, outputs $\top/\bot$.
\end{itemize}
We require the following propperties.

\paragraph{Correctness:}
\begin{align*}
\Pr_{(\ans,\puzz)\leftarrow \Samp(1^n)} [\top\gets\Vrfy(\ans,\puzz)] \ge 1-\negl(n).
\end{align*}

\paragraph{Security:}
For any \textbf{uniform} QPT adversary $\cA$ and for any polynomial $p$,
\begin{align*}
\Pr_{(\ans,\puzz) \leftarrow \Samp(1^n)} [\top\gets\Vrfy(\cA(1^n,\puzz),\puzz)] \le\frac{1}{p(n)}
\end{align*}
for infinitely many $n\in\N$
\end{definition}

\paragraph{Variants of QEFID.}

\begin{definition}[Infinitely-Often Weak Non-Uniform QEFID]    
An infinitely-often weak non-uniform QEFID with short advice is a QPT algorithm $\Gen$, which takes $1^n$, $b\in \bit$ and $\mu\in[p(n)]$ for an polynomial $p$.
We require that there exists a $\mu^*\in[p(n)]$, a constant $c>0$ and a function $\alpha$ with $\alpha(n)\geq 2n^{-c}$ for all $n\in\N$ such that the following two conditions hold:
\begin{description}
\item[Statistically Far:] 
\begin{align*}
\mathsf{SD}(\Gen(1^n,\mu^*,0),\Gen(1^n,\mu^*,1))\geq \alpha(n)
\end{align*}
for all sufficiently large $n\in\N$.
\item[Computationally Indistinguishability with short advice:]
For any QPT algorithm $\cA$,
\begin{align*}
    &\Bigg|\Pr[1\la\cA(1^n,\mu^*,x):x\la\Gen(1^n,\mu^*,0)]-\Pr[1\la\cA(1^n,\mu^*,x):x\la \Gen(1^n,\mu^*,1)]\Bigg|\\
    &\leq \mathsf{SD}(\Gen(1^n,\mu^*,0),\Gen(1^n,\mu^*,1))-n^{-c}
\end{align*}
for infinitely many $n\in\N$.
\end{description}
\end{definition}

\begin{observation}\label{cor:non_uniform}
Suppose that infinitely-often weak non-uniform QEFIDs with short advice do not exist.
Then, for any constant $c>0$, for any QPT algorithm $\cD$, which takes $1^n$ and outputs $x\in\bit^{m(n)}$, and for any QPT algorithm $\cS$, which takes $1^n$ and $i\in[n]$, and outputs $x\in\bit^{m(n)}$, there exists a QPT algorithm $\mathsf{Dis}$ such that the following holds for all $i\in[n]$:

If 
\begin{align*}
\mathsf{SD}(\cD(1^n),\cS(1^n,i))\geq \alpha(n)    
\end{align*}
for some function $\alpha$ with $\alpha(n)>2n^{-c}$,
then we have
\begin{align*}
\abs{\Pr_{x\la\cS(1^n,i)}[1\la\mathsf{Dis}(1^n,x,i)  ] -\Pr_{x\la\cD(1^n)}[1\la\mathsf{Dis}(1^n,x,i)] }\geq \mathsf{SD}(\cD(1^n),\cS(1^n,i)) -n^{-c}
\end{align*}
for all sufficiently large $n\in\N$.
\end{observation}

\paragraph{Variants of EFI.}

\begin{definition}[Infinitely-Often Weak Non-Uniform EFI]
An infinitely-often weak non-uniform EFI with short advice is a QPT algorithm $\Gen$, which takes $1^n$, $a\in[p(n)]$, and $b\in \bit$ for a polynomial $p$, and outputs an $m(n)$-qubit mixed state.
We require that there exists a $a^*\in[p(n)]$, a constant $c>0$, and a function $\alpha$ with $\alpha(n)\geq 2n^{-c}$ for all $n\in\N$ such that the following two conditions hold:
\begin{description}
\item[Statistically Far:] 
\begin{align*}
\mathsf{TD}(\Gen(1^n,a^*,0),\Gen(1^n,a^*,1))\geq \alpha(n)
\end{align*}
for all sufficiently large $n\in\N$.
\item[Computationally Indistinguishability with short advice:]
For any QPT algorithm $\cA$,
\begin{align*}
    &\Bigg|\Pr[1\la\cA(1^n,a^*,\rho_0):\rho_0\la\Gen(1^n,a^*,0)]-\Pr[1\la\cA(1^n,a^*,\rho_1):\rho_1\la \Gen(1^n,a^*,1)]\Bigg|\\
    &\leq \mathsf{SD}(\Gen(1^n,a^*,0),\Gen(1^n,a^*,1))-n^{-c}.
\end{align*}
for infinitely many $n\in\N$.
\end{description}
\end{definition}

\begin{observation}\label{ob:EFI}
Suppose that infinitely-often weak non-uniform EFI with short advice do not exist.
Then, for any constant $c>0$, for any QPT algorithm $\cD$, which takes $1^n$ as input, and outputs $m(n)$-qubits $\rho$, and for any QPT algorithm $\cS$, which takes $1^n$ and $a\in[n]$, and outputs $m(n)$-qubits, there exists a QPT algorithm $\mathsf{Dis}$ such that the following holds for all $a\in[n]$:

If 
\begin{align*}
    \mathsf{SD}(\cD(1^n),\cS(1^n,a))\geq \alpha(n)
\end{align*}
for some function $\alpha$ with $\alpha(n)\geq 2n^{-c}$, then we have
\begin{align*}
    \abs{\Pr_{\rho(\cS)_i\la\cS(1^n,a)}[1\la\mathsf{Dis}(1^n,a,\rho(\cS)_i)]-\Pr_{\rho(\cD)\la\cD(1^n)}[1\la\mathsf{Dis}(1^n,a,\rho(\cD))] }\geq \mathsf{TD}(\cD(1^n),\cS(1^n,a))-n^{-c}
\end{align*}
for all sufficiently large $n\in\N$.

In particular, this can be achieved by breaking the non-uniform EFI defined by $\Gen(1^n,a^*,0)=\cD(1^n)$ and $\Gen(1^n,a^*,1)=\mathcal{S}(1^n,a^*)$.
\end{observation}

\subsection{Classical Meta-Complexity}\label{sec:meta-for-classic}
Throughout this work, we consider a fixed universal Turing machine $\mathcal{U}$.
Let $\mathcal{U}^t$ refer to the execution of $\mathcal{U}$ for $t$ steps. 
When $r$ is a bit string sampled uniformly at random,
we write $\cU^t(y;r)$ to mean that we run a universal Turing machine $\cU$
on input $y$ and $r$, halting after $t$ steps.
We will call the
output distribution of $\cU^t(y;r)$ the \emph{universal (classical) time
bounded distribution}.

\begin{definition}[Universal time bounded complexity \cite{LiVitanyi93,FOCS:HirNan23}]
     For any bit strings $x,y\in\bit^*$ and for any $t\in\N$, we define
    \begin{align*}uK^t(x|y) \coloneqq -\log_2\Pr_{r \gets \{0,1\}^t}\left[x\la \mathcal{U}^t(y;r)\right].
    \end{align*}
\end{definition}
\begin{remark}
\cite{FOCS:HirNan23} uses $q^t(x)$ to mean $uK^t(x)$.
To avoid the possibility for confusion that the $q$ in $q^t$
\cite{FOCS:HirNan23} might stand for ``quantum'' we relabel the notion as
$uK^t$ with the $u$ standing for ``universal''. 
\end{remark}

% \matthew{As written the theorem below does not hold. The constant $c$ needs to depend on $\cD$. For it to be universal we would need to have a constant length "tag" denoting that the Turing machine encoding has ended. But then that tag cannot be used except for this purpose which means that the overall expressiveness of the language is slightly diminished increasing the K complexity of all $x$. This is why prefix free and non-prefix free K complexities are a bit different. I don't think this should be too much of a problem though. I'm happy to correct the theorem and write a corrected proof.}
%\taiga{I changed $\abs{\cD}$ to $2\abs{\cD}$. }\taiga{I also take the $t$ of $uK^t$ larger than the running time of $\cD$ just in case.}\matthew{Thank you, these two changes make this work. I would prefer to replace $K$ with $\Gamma$ or $F$ to avoid overloading $K$, but it's fine.}\matthew{Should we still be including a citation to ~\cite{FOCS:HirNan23}? }\taiga{I think so. The proof is somewhat similar to \cite{FOCS:HirNan23}, and they give the statement for the first time.}
The following \cref{lem:coding} is a small modification of a theorem shown in \cite{FOCS:HirNan23}, 
where we use $uK^t(x|1^n)$ instead of $uK^t(x)$. For completeness, we provide a proof.
\begin{theorem}[Coding Theorem for $uK^t(x|1^n)$~\cite{FOCS:HirNan23}]\label{lem:coding}
    There exist a universal constant $c$ and a polynomial $t_0$ such that, for any $t$ and for every $t(n)$-time probabilistic algorithm $\cD$, which takes $1^n$ as input and outputs $x\in\bit^{m(n)}$ for any polynomial $m$, we have
    \begin{align*}
        uK^{T}(x|1^n) \leq \log_2 \frac{1}{\Pr[x\la\cD(1^n)]} + 2|\cD| + c
    \end{align*}
    for every $T>t_0(t(n))$.
    Here, $\abs{\cD}$ is the description length of  the algorithm $\cD$.
\end{theorem}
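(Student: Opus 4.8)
The plan is to build a short "semantic" program for $x$ out of the sampler $\cD$, and argue that the universal time-bounded distribution $\mathcal{U}^T(\cdot\,;r)$ assigns $x$ probability at least (a constant times) $\Pr[x\la\cD(1^n)]$, which is exactly the statement $uK^T(x\mid 1^n)\le \log_2\frac{1}{\Pr[x\la\cD(1^n)]}+2|\cD|+c$ after taking $-\log_2$. Concretely, I would fix a short fixed-size interpreter string $\Pi_{\mathrm{int}}$ (of constant length $c_0$) with the property that $\cU$, when run on input $\Pi_{\mathrm{int}}\,\|\,\langle\cD\rangle\,\|\,1^n$ and randomness $r$, first reads the description $\langle\cD\rangle$ of $\cD$ off the input tape (using a self-delimiting encoding, which costs $2|\cD|+O(1)$ bits — hence the $2|\cD|$ term), then simulates $\cD(1^n)$ using the tape $r$ as $\cD$'s internal coin tosses, and writes whatever $\cD$ outputs onto the output tape. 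Let $t'=t(n)$ be $\cD$'s running time; then this simulation on $\cU$ takes at most $t_0(t')$ steps for an appropriate polynomial $t_0$ (universal-machine simulation overhead is polynomial, and depends only on $\cU$), so for every $T>t_0(t(n))$ the computation $\cU^T(\Pi_{\mathrm{int}}\|\langle\cD\rangle\|1^n;r)$ halts and outputs exactly $\cD(1^n;r)$.

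From here the counting is immediate. For each random string $\rho$ that makes $\cD(1^n;\rho)=x$, take any $r\in\bit^T$ whose prefix (on the coin-toss tape region that the interpreter feeds to $\cD$) equals $\rho$; there are $2^{T-|\rho|}$ such $r$'s, and $|\rho|=t(n)\le T$ by choosing $T$ large enough, so
\begin{align}
\Pr_{r\gets\bit^T}\!\left[x\la\cU^T(\Pi_{\mathrm{int}}\|\langle\cD\rangle\|1^n;r)\right]\ \ge\ \Pr_{\rho}\!\left[\cD(1^n;\rho)=x\right]\ =\ \Pr[x\la\cD(1^n)].
\end{align}
Now I use the fact that $\cU$ is universal with a fixed small "self-delimiting prefix" convention: the bound for a specific input $y_0=\Pi_{\mathrm{int}}\|\langle\cD\rangle\|1^n$ can be promoted to a bound on $\Pr_{r}[x\la\cU^T(1^n;r)]$ (the version in which $\cU$ itself samples the program from $r$), losing only a factor $2^{-|y_0|}=2^{-(c_0+2|\cD|+O(1))}$, since a uniformly random $r$ writes down the prefix $y_0$ with exactly that probability and then continues simulating. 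Taking $-\log_2$ of the resulting inequality
\begin{align}
\Pr_{r\gets\bit^T}\!\left[x\la\cU^T(1^n;r)\right]\ \ge\ 2^{-(2|\cD|+c)}\cdot\Pr[x\la\cD(1^n)]
\end{align}
gives $uK^T(x\mid 1^n)\le \log_2\frac{1}{\Pr[x\la\cD(1^n)]}+2|\cD|+c$ for all $T>t_0(t(n))$, with $c$ and $t_0$ depending only on the fixed machine $\cU$, as required.

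The main obstacle — really the only delicate point — is bookkeeping the description-length overhead to land exactly on the coefficient $2$ in front of $|\cD|$: one must use a prefix-free (self-delimiting) encoding of $\langle\cD\rangle$, whose naive cost is $|\cD|+2\log|\cD|+O(1)$, and then absorb the $2\log|\cD|$ into $2|\cD|$ versus $|\cD|$ (so the clean "$2|\cD|$" presentation is what buys the slack), while also being careful that the interpreter $\Pi_{\mathrm{int}}$ and the simulation overhead $t_0$ genuinely depend only on $\cU$ and not on $\cD$ or $n$. The rest (the $2^{T-t(n)}$ counting of extending coin-toss strings, the polynomial simulation overhead of the universal machine) is routine and is exactly where this differs from the $uK^t(x)$ version of \cite{FOCS:HirNan23}: passing $1^n$ on the input tape rather than as part of the program is harmless since it only adds $n$ to the input, not to the program length, and the coding-theorem bound is stated in terms of program length only.
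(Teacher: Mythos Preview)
Your proposal is correct and follows essentially the same approach as the paper: both arguments fix a prefix of $r$ (of total length $\le 2|\cD|+c$) that encodes an interpreter together with the description of $\cD$, so that conditioned on $r$ starting with this prefix, $\cU^T(1^n;r)$ simulates $\cD(1^n)$ using the remaining bits of $r$ as coins, yielding $\Pr_r[x\la\cU^T(1^n;r)]\ge 2^{-(2|\cD|+c)}\Pr[x\la\cD(1^n)]$. The only cosmetic difference is that the paper folds the length information into the interpreter string $K$ (taking $|K|\le|\cD|+c$) rather than into a self-delimiting encoding of $\langle\cD\rangle$, and it argues directly on $\Pr_r[x\la\cU^T(1^n;r)]$ rather than going through your intermediate ``promote'' step, but the content is the same.
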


\if0
\begin{proof}[Proof of \cref{lem:coding}]
    % Let $\cR_x\coloneqq \{r : \cD_n(r) = x\}$. $\left[\cU^t(d,1^n)\to x\right]$ is at least the probability of selecting a program implementing $\cD$, run on an $r \in \mathcal{R}_x$. For each $\cD$ there exists some constant $c$ and prefix of that constant length which will specify that $\cU$ should interpret the program 

    Now, consider the probability that for some $r\in \mathcal{R}_x$, the uniform distribution over Turing machines samples the Turing machine $\cG(\cdot,r)$. This is at least $O\left(\frac{p_x}{2^{|\cD|}}\right)$ since it is at least the probability that $\cD$ is chosen at random, then some randomness $r$ in $\mathcal{R}_x$, as well as the constant length description of the rest of the program (that is, the code stating output the first input and the finishing symbol denoting that the Turing machine encoding has ended).

    Then the definition of $uK^t$ gives us
    $$uK^t(x | 1^n) \leq -\log_2 \left(O\left(\frac{p_x}{2^{|\cD|}}\right)\right) = |\cD| +\log_2 \frac{1}{p_x} + c$$.
\end{proof}
\fi

\begin{proof}[Proof of \cref{lem:coding}]
Let $\cM\in\bit^{\abs{\cD}}$ be an encoding of $\cD$.
Let $K\in\bit^{\abs{K}}$ be an encoding of algorithm that represents that ``the first $\abs{\cD}$ bits are regarded as the description of the program, and the remaining bits are regarded as the program’s $t(n)$ bits of randomness followed by $1^n$''.
There exists a constant $c$ such that $\abs{K}\leq \abs{\cD}+c$.
Conditioned on that the first $\abs{\cD}+\abs{K}$ bits of $r$ is equivalent to $K\|\cM$, $U^{T}(r,1^n)$ behaves in the same way as $\cD(1^n)$.
This event occurs with probability at least $2^{-\abs{K}+\abs{\cD}}$.
Hence, we have
\begin{align*}
\Pr_{r\la\bit^T}[x\la\cU^{T}(r,1^n)]
&\geq 2^{-\abs{K}+\abs{\cD}} \Pr[x\la\cD(1^n)]\\
&\geq 2^{-2\abs{\cD}+c}\Pr[x\la\cD(1^n)].
\end{align*}
This implies that
\begin{equation*}
    uK^{T}(x|1^n)\leq \log_2\frac{1}{\Pr[x\la\cD(1^n) ]}+2\abs{\cD}+c.
    \qedhere
\end{equation*}
\end{proof}

\if0
We also use the following \cref{thm:kraft}, which roughly states that $uK^t(x|1^n)$ is larger than $\log_2\left(\frac{1}{\Pr[x\la\cD_n]}\right)-\alpha$ with high probability over the distribution $x\la\cD_n$.
Note that if we consider the prefix Kolmogorov complexity, a similar result is known to hold~\cite{LiVitanyi93}.\mor{cite a reference} In the \cref{thm:kraft}, instead of prefix Kolmogorov complexity, we consider $uK^t(x|1^n)$ for our purpose with a slight modification of the proof.
\fi

We also use the following \cref{thm:kraft}. 
It was shown 
in \cite{LiVitanyi93} for the prefix Kolmogorov complexity, but here 
we consider $uK^t(x|1^n)$ for our purpose, which needs some modifications of a proof. For clarity, we provide a proof.

\begin{theorem}[Incompressibility for $uK^t(x)$]\label{thm:kraft}
There exists a constant $c$ such that, 
for any $n\in\N$, any algorithm distribution $\cD$, which takes $1^n$ as input and outputs $x\in\bit^{m(n)}$ for any polynomial $m$, any $t>n$, and for any $\alpha>0$
we have
\begin{align*}
\Pr[ uK^{t}(x|1^n)\leq -\log_{2} \Pr[x\la\cD(1^n)]-\alpha :x\la\cD(1^n)]\leq (m(n)+c)\cdot 2^{-\alpha+1}.
\end{align*}
\end{theorem}
\if0
\taiga{Elaborate the remark.}
\begin{remark}
A reader familiar with Kolmogorov complexity might wonder why we do not use Kraft's inequality.
Kraft's inequality holds for prefix Kolmogorov complexity.
Therefore, if we use Kraft's inequality, we need to introduce the prefix Kolmogorov complexity which adds unnecessary complexity to our definitions and proofs.
% The definition of prefix Kolmogorov complexity is a little bit complex and we need to introduce additional non-standard notions such as self-deliminating Turing machine.
% We believe that this approach is annoying.
Therefore, we choose to show \cref{thm:kraft}.
\end{remark}
\fi

\begin{proof}[Proof of \cref{thm:kraft}]
There exists a constant $c$ such that for any $t>n$, and $x\in \bit^{m(n)}$
\begin{align*}
    uK^{t}(x|1^n)\leq m(n)+c 
\end{align*}
for all $n\in\N$.
This directly follows from \cref{lem:coding} by considering a probabilistic algorithm $\cD$, which takes $1^n$ and uniformly randomly output $x\in\bit^{m(n)}$.

For any $n\in\N$, any $x\in\bit^{m(n)}$, and any $s\in[m(n)+c]$, we define
\begin{align*}
\cS_{n,t}(s)\seteq \{x\in\bit^{m(n)}: s-1< uK^t(x|1^n)\leq s\}
\end{align*}
and
\begin{align*}
\cH_{n,t}\seteq \{x\in\bit^{m(n)}: uK^{t}(x|1^n)\leq -\log \left(\Pr[x\la\cD(1^n)] \right)-\alpha\}.
\end{align*}
 
Note that, from the definition of $\cS_{n,t}(s)$ and $\cH_{n,t}$, for all $x\in \cS_{n,t}(s)\cap \cH_{n,t}$, we have
\begin{align*}
\Pr[x\la\cD(1^n)]\leq 2^{-s-\alpha+1}.
\end{align*}
Furthermore, because
$
\Pr_{r\la\bit^t}[x\la U^t(r|1^n)]  \geq 2^{-s}
$ for all $x\in\cS_{n,t}(s)$
and 
$ 
\sum_{x\in\bit^n}\Pr_{r\la\bit^t}[x\la U^t(r)]\leq 1
$, we have
\begin{align*}
\abs{\cS_{n,t}(s)}\leq 2^{s}.
\end{align*}

Therefore, we have
\begin{align*}
&\Pr_{x \la \cD(1^n)}[ uK^{t}(x|1^n)\leq -\log (\Pr[x\la\cD(1^n)])-\alpha]\\
&= \sum_{i\in [m(n)+c]}\sum_{x\in \cS_{n,t}(i)\cap \cH_{n,t}} \Pr[x\la \cD(1^n)]\\
&\leq \sum_{i\in [m(n)+c]}\sum_{x\in \cS_{n,t}(i)\cap \cH_{n,t}} 2^{-i-\alpha+1}\\
&\leq \sum_{i\in[m(n)+c]}\abs{\cS_{n,t}(i)} 2^{-i-\alpha+1}\leq\sum_{i\in[m(n)+c]} 2^{-\alpha+1}\leq  (m(n)+c)\cdot 2^{-\alpha+1}.
\end{align*}
\end{proof}

%\matthew{Verified the above proof, and it will still work after change to thm 2.4}

We use the following \cref{lem:modaaronson} for showing the security of distribution verification.
\cref{lem:modaaronson} is based on Lemma 19 of \cite{Aaronson14},
adapted to use $uK^t(x)$ instead of prefix-free Kolmogorov complexity.
For completeness, we give a proof of \cref{lem:modaaronson} in \cref{appendix}.
\begin{theorem}[\cite{Aaronson14} Adapted to $uK^t$]\label{lem:modaaronson}
Let $\epsilon$ be any function and let $s$, $m$, $t$ be any polynomials, and let $t_0$ be a polynomial given in \cref{lem:coding}.
    Let $\cD$ be any algorithm that takes $1^n$ as input and outputs $x\in\bit^{m(n)}$.
    Let $\cG$ be a classical algorithm that takes $1^n$ and outputs 
    $\left(x_1,\dots,x_{s(n)}\right)\in\bit^{m(n)\times s(n)}$ 
    running in time $t(n)$.
    Define $\mathsf{Marginal}_{\cG}(1^n)$ to be the distribution defined as follows:
    \begin{enumerate}
        \item Sample $\left(y_1,\dots,y_{s(n)}\right)\la\cG(1^n)$.
        \item Sample $i\gets [s(n)]$ uniformly at random.
        \item Output $y_i$
    \end{enumerate}
    Define $p_y \coloneqq \Pr[y\la\cD(1^n)]$.
    Define $A^{\cD}_{n,s,\alpha}$ to be the set
    \begin{align*}
    A^{\cD}_{n,s,\alpha}\coloneqq \left\{\left(y_1,\dots,y_{s(n)}\right) : \log_2\frac{1}{p_{y_1}\dots p_{y_{s(n)}}} \leq uK^{t_0(t(n))}\left(y_1,\dots,y_{s(n)}|1^n\right) + \alpha(n)\right\}.
    \end{align*}
    There exists a constant $C$ such that for all sufficiently large $n\in\N$, if
    \begin{align*}\Pr_{y_1,\dots,y_{s(n)}\la\cG(1^n)}\left[\left(y_1,\dots,y_{s(n)}\right)\in A^{\cD}_{n,s,\alpha}\right] \geq 1-\epsilon(n),
    \end{align*}
    then
    we have
    \begin{align*}
        \SD\left(\cD(1^n), \mathsf{Marginal}_{\cG}(1^n)\right) \leq \epsilon(n) + \sqrt{\frac{\log_2\left(\frac{1}{1-\epsilon(n)}\right)+\alpha(n)+C}{s(n)}}.
    \end{align*}
\end{theorem}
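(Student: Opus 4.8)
The plan is to follow the structure of Lemma~19 of~\cite{Aaronson14}, replacing its appeal to Kraft's inequality for prefix-free Kolmogorov complexity with the coding theorem for $uK^t$ (\cref{lem:coding}), after which the remainder of the argument is purely information-theoretic. First I would apply \cref{lem:coding} to the tuple sampler $\cG$, viewed as a $t(n)$-time algorithm that on input $1^n$ outputs an element of $(\bit^{m(n)})^{\times s(n)}$; this produces a constant $C\coloneqq 2|\cG|+c$, where $c$ is the universal constant of \cref{lem:coding}, such that for all sufficiently large $n$ and every tuple $(y_1,\dots,y_{s(n)})$,
\begin{align}
uK^{t_0(t(n))}(y_1,\dots,y_{s(n)}\mid 1^n)\ \le\ \log_2\frac{1}{\Pr[(y_1,\dots,y_{s(n)})\la\cG(1^n)]}+C.
\end{align}
Plugging this into the inequality that defines $A^{\cD}_{n,s,\alpha}$ gives, for every $(y_1,\dots,y_{s(n)})\in A^{\cD}_{n,s,\alpha}$,
\begin{align}
\Pr[(y_1,\dots,y_{s(n)})\la\cG(1^n)]\ \le\ 2^{\alpha(n)+C}\,p_{y_1}\cdots p_{y_{s(n)}} .
\end{align}
In other words, on the ``good'' set the tuple distribution of $\cG$ is pointwise dominated, up to the multiplicative factor $2^{\alpha(n)+C}$, by the product distribution $\cD(1^n)^{\otimes s(n)}$. (Membership in $A^{\cD}_{n,s,\alpha}$ forces each $p_{y_i}>0$, so the divergences used below are well-defined; note also that efficiency of $\cD$ is never needed, only that of $\cG$.)

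Next I would pass to the conditional distribution $\cG_A$, namely $\cG(1^n)$ conditioned on its output lying in $A^{\cD}_{n,s,\alpha}$. Since the conditioning event has probability at least $1-\epsilon(n)$, we have $\SD(\cG(1^n),\cG_A)\le\epsilon(n)$, and because forming the random marginal is a fixed randomized post-processing step, data processing gives $\SD(\mathsf{Marginal}_{\cG}(1^n),\mathsf{Marginal}_{\cG_A})\le\epsilon(n)$. The pointwise bound above upgrades to $\Pr[\vec y\la\cG_A]\le\frac{2^{\alpha(n)+C}}{1-\epsilon(n)}\,p_{y_1}\cdots p_{y_{s(n)}}$ on the support of $\cG_A$, so the base-$2$ KL divergence obeys
\begin{align}
D_{\mathrm{KL}}(\cG_A\,\|\,\cD(1^n)^{\otimes s(n)})\ \le\ \log_2\frac{1}{1-\epsilon(n)}+\alpha(n)+C\ \eqqcolon\ \beta(n).
\end{align}
Writing $M_i$ for the $i$-th one-coordinate marginal of $\cG_A$, subadditivity of Shannon entropy --- equivalently the tensorization inequality $D_{\mathrm{KL}}(P\,\|\,Q^{\otimes s})\ge\sum_i D_{\mathrm{KL}}(P_i\,\|\,Q)$ for a product reference distribution $Q^{\otimes s}$ --- yields $\sum_{i=1}^{s(n)}D_{\mathrm{KL}}(M_i\,\|\,\cD(1^n))\le\beta(n)$.

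Finally, since $\mathsf{Marginal}_{\cG_A}=\frac{1}{s(n)}\sum_{i=1}^{s(n)}M_i$, convexity of statistical distance, Pinsker's inequality (with the change-of-base constant $\ln 2<1$ absorbed), and concavity of $\sqrt{\cdot}$ give
\begin{align}
\SD(\cD(1^n),\mathsf{Marginal}_{\cG_A})\ \le\ \frac{1}{s(n)}\sum_{i=1}^{s(n)}\SD(\cD(1^n),M_i)\ \le\ \sqrt{\frac{1}{2\,s(n)}\sum_{i=1}^{s(n)}D_{\mathrm{KL}}(M_i\,\|\,\cD(1^n))}\ \le\ \sqrt{\frac{\beta(n)}{s(n)}}.
\end{align}
The triangle inequality combined with $\SD(\mathsf{Marginal}_{\cG}(1^n),\mathsf{Marginal}_{\cG_A})\le\epsilon(n)$ then gives $\SD(\cD(1^n),\mathsf{Marginal}_{\cG}(1^n))\le\epsilon(n)+\sqrt{\beta(n)/s(n)}$, which is the claimed bound.

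I expect the main obstacle to be the very first step: one must obtain the pointwise density bound with an overhead that is an \emph{additive constant} independent of $n$ (this is what makes the final constant $C$ independent of $n$), and the time bound $t_0(t(n))$ used in the definition of $A^{\cD}_{n,s,\alpha}$ must be one to which \cref{lem:coding} genuinely applies for the $t(n)$-time sampler $\cG$. This matching of time bounds, and more generally the passage from Aaronson's unbounded prefix-free complexity to $uK^t$, is exactly what required the careful reproof already carried out in \cref{lem:coding} (and \cref{thm:kraft}). Once Step~1 is in hand, everything else is a routine KL/Pinsker computation; the only additional points to verify are finiteness of the KL divergences (the supports involved are contained in the corresponding supports of $\cD(1^n)$) and the elementary fact that $\ln 2<1$ lets one state Pinsker's inequality with base-$2$ divergences and no extra constant.
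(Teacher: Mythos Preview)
Your proposal is correct and follows essentially the same approach as the paper: apply the coding theorem (\cref{lem:coding}) to $\cG$ to get a pointwise density bound on $A^{\cD}_{n,s,\alpha}$, condition on the good set to obtain $\cG_A$ (the paper calls this $\cB_n$), bound the KL divergence of $\cG_A$ against $\cD^{\otimes s}$ by the worst-case log-ratio, use the tensorization inequality for KL against a product reference, then Pinsker plus Jensen/convexity to pass to the averaged marginal, and finish with the triangle inequality. Your identification of $C=2|\cG|+c$ is in fact slightly more precise than the paper's $|\cG|+O(1)$.
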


\if0
\subsection{Meta-Complexity for Quantum Distribution Verification}
\mor{This should not be in the preliminary section.}
In the following, we introduce $quK^{t}(x)$, which is a quantum analog of $uK^{t}(x)$.
\mor{Do you mean this is not used in our main result?}
\taiga{No, we will use $quK^t(x)$ for distribution verification.}

\matthew{Universal quantum circuit? This should only be possible for fixed run-times.}
\taiga{I changed $\mathsf{QU}$ to $\mathsf{QU}^t$ and add explanation, but we might have a better explanation.}
\matthew{Actually now I'm less sure whether this was previously a problem. I guess it depends on how $c$ is encoded. If it's encoded in a straightforward way i.e. the number of gates and wires are both $O(m(n))$ then this should be fine.}
\taiga{I understood, but, just in case, I stick to $\mathsf{QU}^t$.}
%Let $\mathsf{QU}^t$ be a universal quantum circuit that takes $1^n$ and $c\in\bit^{m(n)}$ as input, where $m$ is a polynomial. Then, it considers $c$ as an encoding of some $t(n)$-depth quantum circuit $C$, and outputs the output of $C$.
Let $\mathsf{QU}^t$ be a quantum algorithm that takes $1^n$ and $c\in\bit^{m(n)}$ as input, where $m$ is a polynomial.
The algorithm considers $c$ as an encoding of a $t(n)$-depth quantum circuit $C$, and outputs the output of $C$.
\mor{what happens if $c$ is not a valid encoding of a quantum circuit?}
\taiga{koko soudan}

Let $\cA$ be a quantum algorithm that takes $1^n$ as input and outputs $x\in\bit^{m(n)}$, where $m$ is a polynomial.
We say that $\cA$ is a $t$-time quantum algorithm if it runs a classical Turing machine 
$c\la \cM^t(1^n)$ 
running in $t$-steps, runs $x\la\mathsf{QU}^t(1^n,c)$,
and outputs $x$.

In the following, we define a quantum analog of $uK^t(x|1^n)$ as follows.
\begin{definition}[$quK^t(x|1^n)$]\label{def:quK}
For any $n,t\in\N$, let us define a quantum algorithm $\cQ^{t}(1^n)$ as follows:
\begin{description}
\item[$\cQ^{t}(1^n)$:]$ $
\begin{enumerate}
\item Sample $\Pi\la\bit^{t}$.
\item Run $c\la\cU^{t}(\Pi,1^n)$.
\item Run $x\la\mathsf{QU}^t(1^n,c)$.
\item Output $x$.
\end{enumerate}
\end{description}
For any $t\in\N$, $x\in\bit^*$ and any $n\in\N$,
let us define $quK^{t}(x|1^n)$ as follows:
\begin{align}
quK^{t}(x|1^n)\seteq -\log_2 \Pr[x\la\cQ^t(1^n)].
\end{align}
\end{definition}

\matthew{Two questions about $quK^t$. 1) Do we want to acknowledge that this is essentially a time bounded version of Gács' notion? 2) given that we are introducing this here, do we need to / want to give an invariance theorem? If so we'd need to argue approximate invariance both with respect to time (which should hit us with a $t\log t$ in the invariance [LV 19 thm 7.1.1]), and an invariance over our choice of gates (which I don't know what we'd get here, probably want to check Bernstein Vazerani and how time efficient their $\epsilon$ approximation results are).}
\taiga{1)I want to acknowledge Gacs notion, but I do not carefully follow the literature, so I appreciate if you could add the remark after the definition. 
2) I do not want to include invariance theorem because we do not need to use0 it.}
We will use \cref{lem:q_coding}, which is a quantum analog of \cref{lem:coding}.
Because the proof is almost the same as that of \cref{lem:coding}, 
we omit it.
\begin{theorem}[Coding Theorem for $quK^t(x)$]\label{lem:q_coding}
There exist a universal constant $c$ and polynomial $t_0$ such that, for every $t(n)$-time quantum algorithm $\cD$, which takes $1^n$ as input and outputs $x\in\bit^{m(n)}$ for any polynomial $m$, we have
\begin{align}
    quK^{T}(x|1^n) \leq \log_2 \frac{1}{\Pr[x\la\cD(1^n)]} + 2|\cD| + c
\end{align}
for every $T>t_0(t(n))$.
Here, $\abs{\cD}$ is the description length of  the algorithm $\cD$.
\end{theorem}

\if0
\begin{proof}[Proof of \cref{lem:q_coding}]
Let us define the sets of $\cS$

With probability $\frac{1}{n}$ over $\ell\la[n]$, we have $\ell=\abs{\cD}$.
Furthermore, with probability $2^{-\abs{\cD}}$ over $\Pi\la \bit^{\ell}$, we have $\cD=\Pi$.
Hence, we have
\begin{align}
\Pr[\cQ^{t(n)}(1^n)\ra x]&\geq \frac{1}{n} 2^{-\abs{\cD}}\cdot \Pr[\cD(1^n)\ra x]\\
quK^{t(n)}(x|1^n)&\leq \log_2\frac{1}{\Pr[\cD(1^n)\ra x]}+\abs{\cD}+\log(n).
\end{align}
\end{proof}
\fi

We will use \cref{thm:q_kraft}, which is a quantum analog of \cref{thm:kraft}.
The proof goes almost the same way as that of \cref{thm:kraft}.
\begin{theorem}[Incompressibility for $quK^{t}(x)$]\label{thm:q_kraft}
There exists a constant $c$ such that, for any $n\in\N$, any algorithm $\cD$, which takes $1^n$ as input and outputs $x\in\bit^{m(n)}$ for any polynomial $m$, any $t>n$, and for any $\alpha>0$, we have
\mor{In this paper, we use the notation of $\cD$ being an algorithm that takes $1^n$ as input and outputs a bit string.}
\taiga{fix}
\begin{align}
\Pr[ quK^{t}(x|1^n)\leq -\log_2 \Pr[x\la\cD(1^n)]-\alpha :x\la\cD(1^n)]\leq (m(n)+c)\cdot 2^{-\alpha+1}.
\end{align}
\end{theorem}

We will use the \cref{lem:q_modaaronson}, which is a quantum analog of \cref{lem:modaaronson}.
The difference is that we consider a QPT algorithm and $quK^{t}(x)$ instead of PPT algorithm and $uK^t(x)$.
Because the proof is the same as that of \cref{lem:modaaronson},
we omit it.

\begin{theorem}[\cite{Aaronson14}]\label{lem:q_modaaronson}
    Let $\epsilon$ be any function and let $s$, $m$ and $t$ be any polynomials.
    Let $\cD$ be any algorithm that takes $1^n$ as input, and outputs $x\in\bit^{m(n)}$.
    Let $\cG$ be a $t(n)$-time quantum algorithm that takes $1^n$ and outputs $x\in\bit^{s(n)\cdot m(n)}$.
    
    Define $\mathsf{Marginal}_{\cG}(1^n)$ to be the distribution defined as follows:
    \begin{enumerate}
        \item Sample $(y_1,\dots,y_s)\la\cG(1^n)$.
        \item Sample $i\gets [s]$ uniformly at random.
        \item Output $y_i$.
    \end{enumerate}

    Define $p_y \coloneqq \Pr[y\la\cD(1^n)]$.

    Define $A^{\cD}_{n,s,\alpha}$ to be the set
    $$A^{\cD}_{n,s,\alpha}\coloneqq \left\{(y_1,\dots,y_{s(n)}) : \log_2\frac{1}{p_{y_1}\dots p_{y_{s(n)}}} \leq quK^{t(n)}(y_1,\dots,y_{s(n)}|1^n) + \alpha(n)\right\}$$

    There exists a constant $C$ such that for all sufficiently large $n\in\N$, if
    $$\Pr_{y_1,\dots,y_{s(n)}\la\cG(1^n)}\left[(y_1,\dots,y_{s(n)})\in A^{\cD}_{n,\alpha}\right] \geq 1-\epsilon(n),$$
    then
    we have
    $$\SD(\cD(1^n), \mathsf{Marginal}_{\cG}(1^n)) \leq \epsilon(n) + \sqrt{\frac{\log_2\frac{1}{1-\epsilon(n)}+\alpha(n)+C}{s(n)}}$$
\end{theorem}

\taiga{Is the following true?}\matthew{I think so, in particular I think this is true with probability over Approx. But I don't actually know where this is proved.}
\mor{The topic is changed, so we have to say something before this theorem.}
We use the following \cref{thm:q_estimate}. \cref{thm:q_estimate} guarantees that if $\mathsf{BQP}=\mathsf{PP}$, then for any QPT algorithm $\cD$, there exists a QPT algorithm $\mathsf{Approx}$, which can compute $\Pr[x\la\cD(1^n)]$ for all $x\in\bit^*$.
\begin{theorem}[$\mathbf{BQP}=\mathbf{PP}$ implies worst-case probability estimation \cite{FR99}]\label{thm:q_estimate}
Assume that $\mathsf{BQP}=\mathsf{PP}$.
Then, for any constant $c\in\N$, and for any QPT algorithm $\cD$, which takes $1^n$ as input and outputs $x\in\bit^{m(n)}$, where $m$ is a polynomial,
there exists a QPT algorithm $\mathsf{Approx}$ such that
\begin{align}
\Pr_{\mathsf{Approx}}\left[\mathsf{Approx}(x,1^n)=\Pr[x\la\cD(1^n)]\right]\geq 1-n^{-c}
\end{align}
for all $x\in\bit^{m(n)}$ and for all sufficiently large $n\in\N$.
\mor{Isn't the success probability is 1 for any $n$?}
\taiga{I am not sure because we might have error to implement $\mathbf{PP}$ oracle assuming $\mathbf{BQP}=\mathbf{PP}$.}
\end{theorem}

\mor{The topic is changed to OWPuzzs. We have to say something before the theorem.}
\taiga{The following part in preliminary are written by Matthew, so I do not fix them.}
\begin{lemma}[No OWPuzzs implies average-case probability estimation Lemma 4.3 of ~\cite{hiroka2024q-crypto-meta}]\label{lem:avg-prob-est}
Assume that there do not exist infinitely-often OWPuzzs. Then, for any constant $c\in\N$, and for any QPT algorithm $\cD$, 
which takes $1^n$ as input and outputs $x\in\bit^{m(n)}$, where $m$ is a polynomial,
there exists a QPT algorithm $\mathsf{Approx}$ such that
    \begin{align*}
        \Pr_{\substack{x \gets \mathcal{D}(1^n) \\ \mathsf{Approx}}}
\left[
\frac{1}{c}\Pr[x\gets\cD(1^n)] \leq \mathsf{Approx}(x) \leq c \cdot \Pr[x\gets\cD(1^n)]
\right]
\geq 1 - \frac{1}{n^q}
    \end{align*}    
    for all sufficiently large $n\in\mathbb{N}$.
    \mor{How about $q$?}
\end{lemma}

Because we can run the probability estimator above against the universal time bounded distribution\mor{Did we define universal time bounded distribution?} \mor{,} \cref{thm:q_estimate} and \cref{{lem:avg-prob-est}} directly imply the following theorems.
\begin{theorem}\label{thm:qvk_estimate}
Assume that $\mathsf{BQP}=\mathsf{PP}$.
Then, for any constant $c\in\N$ and polynomial $t$, there exists a QPT algorithm $\mathsf{Approx}$ such that
\begin{align}
\Pr_{\mathsf{Approx}}[\mathsf{Approx}(x,1^n)=quK^{t(\abs{x})}(x)]\geq 1-n^{-c}
\end{align}
for all $x\in\bit^n$ and all sufficiently large $n\in\N$.
\end{theorem}

\begin{theorem}\label{thm:qvk_estimate-avg}
Assume that there do not exist infinitely-often OWPuzzs.
Then, for any constant $c\in\N$ and polynomial $t$, there exists a QPT algorithm $\mathsf{Approx}$ such that
\begin{align}
\Pr_{\substack{x \gets \mathcal{D}(1^n) \\ \mathsf{Approx}}}[\mathsf{Approx}(x,1^n)=quK^{t(\abs{x})}(x)]\geq 1-n^{-c}
\end{align}
for all sufficiently large $n\in\N$.
\mor{what is $\cD$?}
\end{theorem}

If $\OWP$ do not exist, then post-quantum OWFs also do not exist.\footnote{The reason is as follows. Let $f$ be a post-quantum OWF. Then 
$(\ans,\puzz)\coloneqq (x,f(x))$ serves as a secure $\OWP$.}. 
Therefore \cref{thm:probest} and \cref{thm:ukest} give us the following results.

\begin{theorem}[No i.o.-pq-OWFs implies one sided error probability estimation]\label{thm:pq-ukest}
    Assume that there do not exist infinitely-often post-quantum OWFs.
   % $\not\exists \mathsf{io}$-$\mathsf{OWPuzz}$.
Then, for any constant $c\in\N$ and polynomial $t$, there exists a QPT algorithm $\mathcal{M}$ such that for all sufficiently large $n\in\N$

\begin{align}
\Pr_{\substack{x \gets \mathcal{D}_n \\ \mathsf{Approx}}}[\mathsf{Approx}(x,1^n)=uK^{t(\abs{x})}(x)]\geq 1-n^{-c}.
\end{align}
\mor{what is $\cD$?}

Furthermore, for all sufficiently large $n$, for all $x\in\bit^{m(n)}$,
    \begin{align}
        \Pr_{\mathsf{Approx}}[\mathsf{Approx}(x,1^n) \geq uK^{t(\abs{x})}(x)] \geq 1 - n^{-c}.
    \end{align} 
\end{theorem}

\begin{theorem}[No i.o.-pq-OWFs implies $uK^t$ estimation on all PPT distributions]\label{thm:pq-ukest}
    Assume that there do not exist infinitely-often post-quantum OWFs.
    %$\not\exists \mathsf{io}$-$\mathsf{OWPuzz}$.
Then, for any constant $c\in\N$ and polynomial $t$, there exists a QPT algorithm $\mathsf{Approx}$ such that for all sufficiently large $n\in\N$
\begin{align}
\Pr_{\substack{x \gets \mathcal{D}_n \\ \mathsf{Approx}}}[\mathsf{Approx}(x,1^n)=uK^{t(\abs{x})}(x)]\geq 1-n^{-c}.
\end{align}
\mor{$\cD$?}

Furthermore, for all sufficiently large $n$, for all $x\in\bit^{m(n)}$,
    \begin{align}
        \Pr_{\mathsf{Approx}}[\mathsf{Approx}(x,1^n) \geq uK^{t(\abs{x})}(x)] \geq 1 - n^{-c}.
    \end{align} 
\end{theorem}

\fi

\section{Quantum Meta-Complexity}
\label{sec:qmeta}

In this section, we introduce $quK^{t}(x)$, which is a quantum analog of $uK^{t}(x)$, and show several properties of $quK^{t}(x)$.

% \matthew{Universal quantum circuit? This should only be possible for fixed run-times.}
% \taiga{I changed $\mathsf{QU}$ to $\mathsf{QU}^t$ and add explanation, but we might have a better explanation.}
% \matthew{Actually now I'm less sure whether this was previously a problem. I guess it depends on how $c$ is encoded. If it's encoded in a straightforward way i.e. the number of gates and wires are both $O(m(n))$ then this should be fine.}
% \taiga{I understood, but, just in case, I stick to $\mathsf{QU}^t$.}
%Let $\mathsf{QU}^t$ be a universal quantum circuit that takes $1^n$ and $c\in\bit^{m(n)}$ as input, where $m$ is a polynomial. Then, it considers $c$ as an encoding of some $t(n)$-depth quantum circuit $C$, and outputs the output of $C$.
\subsection{Definition of $quK^t(x)$}
Let $\mathsf{QU}^t$ be a quantum algorithm that takes $1^n$ and $c\in\bit^{*}$ as input.
The algorithm considers $c$ as an encoding of an output length $m$, an input size $s$, a $t(n)$-depth quantum circuit $C$.
Then, it runs $C\ket{0^s}$, measures the first $m$-bits, and outputs the resulting $m$ bits. 
If $c$ is not a valid encoding of a quantum circuit, then we simply output $\bot$.

Let $\cA$ be a quantum algorithm that takes $1^n$ as input and outputs $x\in\bit^{m(n)}$, where $m$ is a polynomial.
We say that $\cA$ is a $t$-time quantum algorithm if it can be run by first running a classical Turing machine 
$c\la \cM^t(1^n)$ 
running in $t$-steps, then running $x\la\mathsf{QU}^t(1^n,c)$,
and outputting $x$.

We define $quK^t(x|1^n)$ as follows.
\begin{definition}[$quK^t(x|1^n)$]\label{def:quK}
For any $n,t\in\N$, let us define a quantum algorithm $\cQ^{t}(1^n)$ as follows:
\begin{description}
\item[$\cQ^{t}(1^n)$:]$ $
\begin{enumerate}
\item Sample $\Pi\la\bit^{t}$.
\item Run $c\la\cU^{t}(\Pi,1^n)$.
\item Run $x\la\mathsf{QU}^t(1^n,c)$.
\item Output $x$.
\end{enumerate}
\end{description}
For any $t\in\N$, $x\in\bit^*$ and any $n\in\N$,
let us define $quK^{t}(x|1^n)$ as follows:
\begin{align*}
quK^{t}(x|1^n)\seteq -\log_2 \Pr[x\la\cQ^t(1^n)].
\end{align*}
\end{definition}

% \matthew{Two questions about $quK^t$. 1) Do we want to acknowledge that this is essentially a time bounded version of Gács' notion? 2) given that we are introducing this here, do we need to / want to give an invariance theorem? If so we'd need to argue approximate invariance both with respect to time (which should hit us with a $t\log t$ in the invariance [LV 19 thm 7.1.1]), and an invariance over our choice of gates (which I don't know what we'd get here, probably want to check Bernstein Vazerani and how time efficient their $\epsilon$ approximation results are).}
% \taiga{1)I want to acknowledge Gacs notion, but I do not carefully follow the literature, so I appreciate if you could add the remark after the definition. 
% 2) I do not want to include invariance theorem because we do not need to use0 it.}
\begin{remark}
This notion is both a quantum generalization of $uK^t$~\cite{LiVitanyi93,FOCS:HirNan23} and a restriction of the notion of quantum Kolmogorov complexity introduced by Gács \cite{Gac01} to time bounded algorithms outputting classical states. While Gács' notion is defined slightly differently, the notions are equivalent and the invariance of Gács' notion applies to ours. 
\end{remark}

\subsection{Properties of $quK^t(x)$}

To prove our main results, we need quantum analogues of many of the theorems and lemmas in \cref{sec:meta-for-classic}. The next three theorems are specifically quantum analogues of \cref{lem:coding}, \cref{thm:kraft}, and \cref{lem:modaaronson}. Conceptually, these analogues are obtained by a straightforward substitution—namely, instead of considering $uK^t(x|1^n)$ and $U^t(x|1^n)$, we consider $quK^t(x|1^n)$ and $Q^t(x|1^n)$, respectively. In all three cases, the proofs go identically to their classical counterparts, and hence we omit them.

\if0
To prove our main results, we need quantum analogues of many of the theorems and lemmas in \cref{sec:meta-for-classic}. The next three theorems are specifically quantum analogues of \cref{lem:coding}, \cref{thm:kraft}, and \cref{lem:modaaronson}. In all three cases their proofs go identically to their classical analogues so we choose to omit them.
\fi

\begin{theorem}[Coding Theorem for $quK^t(x)$ - quantum version of~\Cref{lem:coding}]\label{lem:q_coding}
There exist a universal constant $c$ and polynomial $t_0$ such that, for every $t(n)$-time quantum algorithm $\cD$, which takes $1^n$ as input and outputs $x\in\bit^{m(n)}$ for any polynomial $m$, we have
\begin{align*}
    quK^{T}(x|1^n) \leq \log_2 \frac{1}{\Pr[x\la\cD(1^n)]} + 2|\cD| + c
\end{align*}
for every $T>t_0(t(n))$.
Here, $\abs{\cD}$ is the description length of  the algorithm $\cD$.
\end{theorem}

\if0
\begin{proof}[Proof of \cref{lem:q_coding}]
Let us define the sets of $\cS$

With probability $\frac{1}{n}$ over $\ell\la[n]$, we have $\ell=\abs{\cD}$.
Furthermore, with probability $2^{-\abs{\cD}}$ over $\Pi\la \bit^{\ell}$, we have $\cD=\Pi$.
Hence, we have
\begin{align*}
\Pr[\cQ^{t(n)}(1^n)\ra x]&\geq \frac{1}{n} 2^{-\abs{\cD}}\cdot \Pr[\cD(1^n)\ra x]\\
quK^{t(n)}(x|1^n)&\leq \log_2\frac{1}{\Pr[\cD(1^n)\ra x]}+\abs{\cD}+\log(n).
\end{align*}
\end{proof}
\fi

%We will use \cref{thm:q_kraft}, which is a quantum analog of \cref{thm:kraft}.
%The proof goes almost the same way as that of \cref{thm:kraft}.
\begin{theorem}[Incompressibility for $quK^{t}(x)$ = quantum version of~\Cref{thm:kraft}]\label{thm:q_kraft}
There exists a constant $c$ such that, for any $n\in\N$, any algorithm $\cD$, which takes $1^n$ as input and outputs $x\in\bit^{m(n)}$ for any polynomial $m$, any $t>n$, and for any $\alpha>0$, we have
\begin{align*}
\Pr[ quK^{t}(x|1^n)\leq -\log_2 \Pr[x\la\cD(1^n)]-\alpha :x\la\cD(1^n)]\leq (m(n)+c)\cdot 2^{-\alpha+1}.
\end{align*}
\end{theorem}

%We will use \cref{lem:q_modaaronson}, which is a quantum analog of \cref{lem:modaaronson}.
%The difference is that we consider a QPT algorithm and $quK^{t}(x)$ instead of PPT algorithm and $uK^t(x)$.
%Because the proof is the same as that of \cref{lem:modaaronson},
%we omit it.

\begin{theorem}[Quantum version of~\Cref{lem:modaaronson}]\label{lem:q_modaaronson}
    Let $\epsilon$ be any function and let $s$, $m$ and $t$ be any polynomials.
    Let $\cD$ be any algorithm that takes $1^n$ as input, and outputs $x\in\bit^{m(n)}$.
    Let $\cG$ be a $t(n)$-time quantum algorithm that takes $1^n$ and outputs $x\in\bit^{s(n)\cdot m(n)}$.
    Define $\mathsf{Marginal}_{\cG}(1^n)$ to be the distribution defined as follows:
    \begin{enumerate}
        \item Sample $(y_1,\dots,y_s)\la\cG(1^n)$.
        \item Sample $i\gets [s]$ uniformly at random.
        \item Output $y_i$.
    \end{enumerate}
    Define $p_y \coloneqq \Pr[y\la\cD(1^n)]$.
    Define $A^{\cD}_{n,s,\alpha}$ to be the set
    \begin{align*}
    A^{\cD}_{n,s,\alpha}\coloneqq \left\{(y_1,\dots,y_{s(n)}) : \log_2\frac{1}{p_{y_1}\dots p_{y_{s(n)}}} \leq quK^{t(n)}(y_1,\dots,y_{s(n)}|1^n) + \alpha(n)\right\}
    \end{align*}
    There exists a constant $C$ such that for all sufficiently large $n\in\N$, if
    \begin{align*}
    \Pr_{y_1,\dots,y_{s(n)}\la\cG(1^n)}\left[(y_1,\dots,y_{s(n)})\in A^{\cD}_{n,s,\alpha}\right] \geq 1-\epsilon(n),
    \end{align*}
    then
    we have
    \begin{align*}
    \SD(\cD(1^n), \mathsf{Marginal}_{\cG}(1^n)) \leq \epsilon(n) + \sqrt{\frac{\log_2\frac{1}{1-\epsilon(n)}+\alpha(n)+C}{s(n)}}.
    \end{align*}
\end{theorem}

\section{Definition and Fundamental Results on Distribution and State Verification}
\label{sec:def-ver}

\subsection{Distribution Verification}
%\matthew{This mildly diverges from the definitions above. In particular we had $\cD$ be the distribution and $\cM$ be the algorithm.}
%\mor{I think $\cD$ to be an algorithm is more precise, but in the introduction for simplicity, we set $\cD$ to be distribution.}
In this section,
we introduce two definitions of verification of distributions both for classical and quantum cases.
%The key difference between our definitions and those previously considered \cite{}\matthew{Add citation to the paper Tomoyuki mentioned} is that our definitions consider only efficiently samplable distributions as the adversarial distributions which we must identify if too statistically far. The first definition provides a weaker security guarantee against adaptive adversaries. While the second definition provides a stronger security guarantee against adversaries which must sample their outputs identically and independently.

\begin{definition}[Distribution Verification with Selective Soundness]\label{def:selective}
Let $\cD$ be an algorithm that takes $1^n$ as input, and outputs $x\in\bit^{m(n)}$ where $m$ is an arbitrary polynomial.
Let us denote $X$ to mean that $X\in \{\mbox{PPT, QPT,  determinisitic polynomial-time algorithm querying to $\mathsf{PP}$ oracle}\}$.

We say that the algorithm $\cD$ is selectively-verifiable with an $X$-algorithm if for any polynomial $t$, any function $\epsilon:\N\ra (0,1)$, and any constant $c>0$, there exist a polynomial $s$ and an $X$-algorithm $\Vrfy$ such that the following holds.
\begin{description}
    \item[Correctness:] We have 
    \begin{align*}
        \Pr[\top\la\Vrfy(1^n,x_1,\dots,x_{s}): (x_1,\dots,x_{s})\la\cD(1^n)^{\otimes s}] \geq 1-n^{-c},
    \end{align*}
    for all sufficiently large $n\in\N$. Here, $s = s(n^c,t(n),1/\epsilon(n))$.
    \item[Selective-Soundness:] For any $t(n)$-time uniform quantum adversary $\cA$, which takes $1^n$ as input and outputs $x\in\bit^{m(n)}$, such that
    \begin{align*}
        \Delta(\cA(1^n),\cD(1^n))\geq \epsilon(n),
    \end{align*}
    we have
    \begin{align*}
        \Pr[\top\la\Vrfy(1^n,x_1,\dots,x_{s}):(x_1,\dots,x_{s})\la\cA(1^n)^{\otimes s}] \leq n^{-c},
    \end{align*}
     for all sufficiently large $n\in\N$. Here, $s=s(n^c,t(n),1/\epsilon(n))$.
\end{description}
\end{definition}

\begin{definition}[Distribution Verification with Adaptive Soundness]\label{def:adaptive}
Let $\cD$ be an algorithm that takes $1^n$ as input, and outputs $x\in\bit^{m(n)}$, where $m$ is an arbitrary polynomial.
Let us denote $X$ to mean that $X\in \{\mbox{PPT, QPT,  determinisitic polynomial-time algorithm querying to $\mathsf{PP}$ oracle}\}$.

We say that the algorithm $\cD$ is adaptively-verifiable with an $X$-algorithm if, for any polynomial $t$, any function $\epsilon:\N\ra(0,1)$, and any constant $c>0$, there exist a polynomial $s$ and an $X$-algorithm $\Vrfy$ such that the following hold.
\begin{description}
    \item[Correctness:] We have
    \begin{align*}
    \Pr[\top\la\Vrfy(1^n,x_1,\dots,x_{s}):x_1,\dots,x_{s}\la\cD(1^n)^{\otimes s}] \geq 1-n^{-c}
    \end{align*}
    for all sufficiently large $n\in\N$.
    Here, $s = s(n^c,t(n),1/\epsilon(n))$.
    \item[Adaptive-Soundness:] 
    For any algorithm $\cA$ that takes $1^n$ as input and outputs $(x_1,\dots,x_{s(n)})\in$ \\$\left(\bit^{m(n)}\right)^{s(n)} $, 
    %\matthew{I feel like both here and in section 7 this should be $(x_1,\dots,x_{s(n)})\in\bit^{m(n)\times s(n)} $}\mor{you can modify in that way.} 
    we define an algorithm $\mathsf{Marginal}_{\cA}(1^n)$ as follows:
    \begin{description}
    \item[$\mathsf{Marginal}_{\cA}(1^n)$:] $ $ 
    \begin{enumerate}
        \item Sample $(x_1,\dots,x_s)\la\cA(1^n)$.
        \item Sample $i\gets [s]$.
        \item Output $x_i$.
    \end{enumerate}
    \end{description}
    For any $t(n)$-time uniform quantum adversary $\cA$ with $\Delta(\mathsf{Marginal}_{\cA}(1^n),\cD(1^n))\geq \epsilon(n)$ for all $n\in\N$, we have
    \begin{align*}
    \Pr[\top\gets\Ver(1^n,x_1,\dots,x_s) :(x_1,...,x_s)\gets\cA(1^n)] \leq (1-\epsilon(n)) + n^{-c}
    \end{align*}
    for all sufficiently large $n\in\N$.
    Here, $s = s(n^c,t(n),1/\epsilon(n))$.
\end{description}
\end{definition}

\begin{remark}
Note that in the above two definitions, Definitions~\ref{def:selective} and \ref{def:adaptive}, the adversary $\cA$ is quantum even if $X$ or $\cD$ is classical. 
When we want to consider uniform classical probabilistic $\cA$, 
we explicitly say that $\cD$ is selectively/adaptively-verifiable with an $X$-algorithm with \emph{classical-security}.
\end{remark}

% \matthew{I don't think we should be able to do parallel repetition as hoped for in the not yet written security proof below since there is no guarantee that $\cA$ is sampling things in an i.i.d. fashion given the definitions above.}

We can show that adaptive-verifiability implies selective-verifiability as follows, and hence we focus on adaptive-verifiability throughout this work.
\begin{lemma}
    If an algorithm $\cD$ is adaptively-verifiable, then it is selectively-verifiable.
\end{lemma}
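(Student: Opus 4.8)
The plan is a direct reduction combined with soundness amplification by independent repetition. The key observation is that $s$ independent copies of a \emph{single-output} selective adversary, packaged as one tuple-valued algorithm, form a legal \emph{adaptive} adversary whose induced marginal distribution is exactly that of the original adversary. Fix target selective parameters: a polynomial $t$, a function $\epsilon$ (which we may take to satisfy $\epsilon(n)\ge 1/q(n)$ for some polynomial $q$, the only regime in which the definitions demand a polynomial sample complexity), and a constant $c$. First I would invoke adaptive-verifiability of $\cD$ with error function $\epsilon$, a large constant $c'$, and a large polynomial time bound $t'$, obtaining a polynomial $s_{\mathrm{adapt}}$ and an $X$-algorithm $\Vrfy_{\mathrm{adapt}}$. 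Then set $k(n) := 2c\,q(n)\,n$, put $s(n) := k(n)\cdot s_{\mathrm{adapt}}(n)$, and define $\Vrfy^{*}(1^n,x_1,\dots,x_{s(n)})$ to split its input into $k(n)$ consecutive blocks of length $s_{\mathrm{adapt}}(n)$, run $\Vrfy_{\mathrm{adapt}}$ on each block with fresh independent randomness, and output $\top$ iff all $k(n)$ runs output $\top$. Since $\Vrfy^{*}$ is merely $k(n)$ sequential invocations of $\Vrfy_{\mathrm{adapt}}$, it lies in the same complexity class $X$, and the classical-security variant is preserved.

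For correctness, when $(x_1,\dots,x_{s(n)})\gets\cD(1^n)^{\otimes s(n)}$ every block is distributed as $\cD(1^n)^{\otimes s_{\mathrm{adapt}}(n)}$, so adaptive correctness gives per-block acceptance probability at least $1-n^{-c'}$; a union bound over the $\mathrm{poly}(n)$ many blocks, with $c'$ chosen large enough, yields at least $1-n^{-c}$. For selective security, let $\cA$ be a $t(n)$-time uniform quantum adversary with $\Delta(\cA(1^n),\cD(1^n))\ge\epsilon(n)$. Define $\cA'(1^n)$ to run $\cA(1^n)$ independently $s_{\mathrm{adapt}}(n)$ times and concatenate the outputs; then $\mathsf{Marginal}_{\cA'}(1^n)=\cA(1^n)$, so $\Delta(\mathsf{Marginal}_{\cA'}(1^n),\cD(1^n))\ge\epsilon(n)$ for all $n$, and $\cA'$ is uniform, quantum, and (by the choice of $t'$) runs within time $t'(n)$, hence is a legal adaptive adversary. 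In the selective experiment the $s(n)$ samples are i.i.d.\ from $\cA$, so the $k(n)$ blocks passed to $\Vrfy_{\mathrm{adapt}}$ are mutually independent and each is distributed exactly as $\cA'(1^n)$; adaptive security then bounds the per-block acceptance probability by $(1-\epsilon(n))+n^{-c'}\le 1-\epsilon(n)/2$ for large $n$ and large $c'$, so the probability that all $k(n)$ blocks accept is at most $(1-\epsilon(n)/2)^{k(n)}\le e^{-\epsilon(n)k(n)/2}\le e^{-cn}\le n^{-c}$.

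I expect the only real friction to be the bookkeeping of the time bound $t'$: the adaptive adversary $\cA'$ runs $\cA$ about $s_{\mathrm{adapt}}(n)$ times, so $t'(n)$ must dominate $s_{\mathrm{adapt}}(n)\cdot(t(n)+m(n))$, while $s_{\mathrm{adapt}}$ is itself produced by the adaptive guarantee invoked at parameter $t'$. This is not actually circular, because the sample complexity supplied by our verifiers depends only on $n^{c'}$ and $1/\epsilon(n)$, not on the adversary's running-time bound (as is visible from the statistical-distance estimates of \cref{lem:modaaronson} and \cref{lem:q_modaaronson}, where the sample count is governed by $\alpha=(\log n)^2$ and $1/\epsilon$ alone); hence one may fix $s_{\mathrm{adapt}}$ first and only afterward choose $t'$ as any polynomial exceeding $O(s_{\mathrm{adapt}}(n)(t(n)+m(n)))$. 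The remaining steps are just the routine choices of $c'$ and of $d$ with $k(n)\le n^{d}$, so that both the union bound ($c'\ge c+d$) and the amplification ($n^{-c'}\le\epsilon(n)/2$ for large $n$) go through.
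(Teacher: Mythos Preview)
Your proposal is correct and takes essentially the same approach as the paper: both exploit the key observation that $s$ i.i.d.\ copies of a selective adversary $\cA$ constitute a valid adaptive adversary whose marginal is exactly $\cA(1^n)$, and then amplify by parallel repetition over independent blocks. The only cosmetic difference is that the paper accepts when the \emph{fraction} of accepting blocks exceeds $1-\epsilon(n)/2$ and invokes Hoeffding's inequality on both sides, whereas you accept only when \emph{all} blocks accept and use a union bound for correctness and a direct product bound for soundness; both amplification styles are standard and yield the required $n^{-c}$ error. Your discussion of the time-bound bookkeeping (ensuring $\cA'$ runs within the $t'$ used to invoke adaptive security) is more explicit than the paper's, which glosses over this point.
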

\if0
\taiga{!!Under preparation!! From Here,}
\begin{proof}
    Given a weak verifier $\Ver$ we construct a strong verifier $\Ver'_{n,\ell,t}(x_1,...,x_{s'})$ where $s' = sn^{c+1}$ which runs $\Ver$ on each of the $n^{c+1}$ sets of $x_{is+1},...,x_{(i+1)s}$. $\Ver'$ outputs $\top$ if all of the $\Ver$'s output $\top$ and outputs $\bot$ if any $\Ver$ outputs $\bot$.

    By union bound:
\begin{align*}
    \BigPr{\Ver_{n,\ell,t}(x_1,\dots,x_{sn^{c+1}})\to \top}{\cD_n^{\otimes s}\to x_1,\dots,x_{sn^{c+1}}} &\geq 1-(\negl(n)\cdot n^{c+1})\\
     &\geq 1-\negl(n),
\end{align*}

    so correctness is preserved.

    Each of these sets ($x_{is+1},\dots,x_{(i+1)s}$) is sampled $i.i.d.$ from $\cG^{\otimes s}$ so for each $i$ 

    $$\forall i:\BigPr{\Ver_{n,\ell,t}(x_{is+1},\dots,x_{(i+1)s})\to \bot}{\cG^{\otimes s} \to x_1,\dots,x_{s}} \geq 1-n^{-c}$$

    and

    $$\BigPr{\exists i: \Ver_{n,\ell,t}(x_{is+1},\dots,x_{(i+1)s})\to \bot}{\cG^{\otimes s} \to x_1,\dots,x_{s}} \geq 1-n^{-c}$$

    $$\BigPr{\Ver'_{n,\ell,t}(x_{1},\dots,x_{sn^{c+1}})\to \top}{\cG^{\otimes sn^{c+1}} \to x_1,\dots,x_{sn^{c+1}}} \leq n^{-c}$$
    \eli{Sketch: parallel repetition. By i.i.d.-ness this is easy}
\end{proof}
\taiga{To here.}
\matthew{I'm sure you agree with me, but this proof isn't done yet / doesn't yet prove the desired conclusion.}
\fi

\begin{proof}
In the following, we show that $\cD$ is selectively-verifiable with a QPT algorithm if $\cD$ is adaptively-verifiable with a QPT algorithm.
In a similar way, we can show that $\mathcal{D}$ is selectively-verifiable with an $X$ algorithm if $\mathcal{D}$ is selectively-verifiable with an $X$ algorithm for
\[
X\in\{\text{PPT},\allowbreak\ \text{QPT},\allowbreak\ \text{deterministic polynomial-time algorithm querying to }\mathsf{PP}\allowbreak\ \text{oracle}\}.
\]

Let $s$ be a polynomial and $\Vrfy$ be a QPT algorithm such that 
\begin{align*}
\Pr[\top\la\Vrfy(1^n,x_1,\dots,x_{s(n)}):x_1,\dots,x_s\la\cD(1^n)^{\otimes s}]\geq 1-\frac{\epsilon(n)}{n}
\end{align*}
and
\begin{align*}
\Pr[\top\la\Vrfy(1^n,x_1,\dots,x_{s(n)}):x_1,\dots,x_s\la\cB(1^n)]\leq 1-\left(1-\frac{1}{n}\right)\epsilon(n)
\end{align*}
for any QPT algorithm $\cB$ such that $\Delta(\mathsf{Marginal}_{\cB}(1^n),\cD(1^n))\geq \epsilon(n)$ for all sufficiently large $n\in\N$.

Let $s^*(n)\seteq s(n)\cdot\frac{4n^2}{\epsilon(n)^2}$.
We consider the following QPT algorithm $\Vrfy^*$.
\begin{description}
\item[$\Vrfy^*$:]$ $
\begin{enumerate}
\item Take $x_1,\dots,x_{s^*(n)}$ as input.
\item For all $i\in[\frac{4n^2}{\epsilon(n)^2}]$, run $b_i\la\Vrfy(x_{is+1},\dots,x_{(i+1)s})$.
Let $\mathsf{Count}_{x_1,\dots,x_{s^*(n)}}$ be the number of times such that $b_i=\top$.
\item Output $\top$ if $\frac{\epsilon(n)^2}{4n^2}\mathsf{Count}_{x_1,\dots,x_{s^*(n)}}\geq \left(1-\frac{\epsilon(n)}{2}\right)$.
Otherwise, output $\bot$.
\end{enumerate}
\end{description}

\paragraph{Correctness:}
From Hoeffding's inequality, 
\begin{align*}
\Pr_{x_1,\dots,x_{s^*(n)}\la\cD(1^n)^{\otimes s^*(n)}}\left[\frac{\epsilon(n)^2}{4n^2}\mathsf{Count}_{x_1,\dots,x_{s^*(n)}}-\left(1-\frac{\epsilon(n)}{n}\right)\leq -\frac{\epsilon(n)}{4}\right]\leq \exp\left(-\frac{n^2}{2}\right).
\end{align*}
This implies that
\begin{align*}
\Pr_{x_1,\dots,x_{s^*(n)}\la\cD(1^n)^{\otimes s^*(n)}}\left[\frac{\epsilon(n)^2}{4n^2}\mathsf{Count}_{x_1,\dots,x_{s^*(n)}} \leq \left(1-\frac{\epsilon(n)}{2}\right)\right]\leq \exp\left(-\frac{n^2}{2}\right).
\end{align*}

\paragraph{Selective-Soundness:}
Let $\cA$ be an algorithm which takes $1^n$ as input and outputs $x\in\bit^{m(n)}$ such that 
\begin{align*}
\Delta(\cA(1^n),\cD(1^n))\geq \epsilon(n).
\end{align*}
The definition of $\Vrfy$ guarantees that 
\begin{align*}
\Pr[\top\la\Vrfy(1^n,x_1,\dots,x_{s(n)}):x_1,\dots,x_s\la\cA(1^n)^{\otimes s(n)}]\leq 1-\left(1-\frac{1}{n}\right)\epsilon(n).
\end{align*}

From Hoeffding's inequality,
for any QPT algorithm 
such that 
$
\Delta(\cA(1^n),\cD(1^n))\geq \epsilon(n),
$
we have
\begin{align*}
\Pr_{x_1,\dots,x_{s^*(n)}\la\cA(1^n)^{\otimes s^*(n)}}\left[\frac{\epsilon(n)^2}{4n^2}\mathsf{Count}_{x_1,\dots,x_{s^*(n)}}-\left(1-\left(1-\frac{1}{n}\right)\epsilon(n)\right)\geq \frac{\epsilon(n)}{4}\right]\leq \exp\left(-\frac{n^2}{2}\right).
\end{align*}
This implies that
\begin{align*}
\Pr_{x_1,\dots,x_{s^*(n)}\la\cA(1^n)^{\otimes s^*(n)}}\left[\frac{\epsilon(n)^2}{4n^2}\mathsf{Count}_{x_1,\dots,x_{s^*(n)}}\geq\left(1-\frac{\epsilon(n)}{2}\right)\right]\leq \exp\left(-\frac{n^2}{2}\right)
\end{align*}
for all sufficiently large $n\in\N$.
\end{proof}

\subsection{Quantum State Verification}
In this section, we introduce definition of quantum state verification.

\begin{definition}[Quantum State Verification with Adaptive Soundness]\label{def:verify_state}
Let $\cD$ be an algorithm that takes $1^n$ as input, and outputs  $m(n)$-qubit state $\rho$, where $m$ is an arbitrary polynomial.

We say that the algorithm $\cD$ is adaptively-verifiable if for any polynomial $t$, any inverse polynomial $\epsilon$, and any constant $c$, there exists a polynomial $s$ and a QPT algorithm $\Vrfy$ such that the following holds:
\begin{description}
\item[Correctness:]
We have
\begin{align*}
\Pr\left[\top\la\Vrfy\left(\rho^{\otimes s(n)} \right): \rho^{\otimes s(n)}\la\cD(1^n)^{\otimes s(n)} \right]\geq 1-n^{-c}
\end{align*}
for all sufficiently large $n\in\N$.
\item[Adaptive-Soundness:]
For any algorithm $\cA$ which takes $1^n$ as input and outputs $m(n)\cdot s(n)$-qubit state, we define the algorithm $\mathsf{Marginal}_{\cA}(1^n)$ as follows:
\begin{enumerate}
\item Run $\rho_{\cR[1],...,\cR[s(n)]}\la\cA(1^n)$.
\item Sample $i\la[s(n)]$.
\item Output the $\cR[i]$ register.
\end{enumerate}
For any $t(n)$-time uniform quantum adversary $\cA$ such that
\begin{align*}
\mathsf{TD}(\mathsf{Marginal}_{\cA}(1^n), \cD(1^n))\geq \epsilon(n)
\end{align*}
for all $n\in\N$,
we have
\begin{align*}
\Pr\left[\top\la\Vrfy\left(\rho_{\cR[1],...,\cR[s(n)]} \right): \rho_{\cR[1],...,\cR[s(n)]}\la\cA(1^n) \right]\leq (1-\epsilon(n))+n^{-c}
\end{align*}
for all sufficiently large $n\in\N$.
\end{description}
\end{definition}

\subsection{Impossible Parameter Regime}

In the definitions of adaptive-soundness, we require that the probability that $\Vrfy\left(x_1,...,x_{s(n)}\right)$ outputs $\top$ is at most $(1-\epsilon(n))$ for any adversary $\cA$ such that $\Delta(\mathsf{Marginal}_{\cA}(1^n), \cD(1^n))\geq\epsilon(n)$.
One might think that this requirement is too weak, and we should construct a verifier which outputs $\top$ with negligible probability for any adversary $\cA$ with $\Delta(\mathsf{Marginal}_{\cA}(1^n), \cD(1^n))\geq\epsilon(n)$.
However, let us stress that, in general, such a parameter regime seems impossible to achieve as we suggest in \cref{thm:impossibility}.
Therefore, in our work, we focus on the parameter regime defined in Definitions \ref{def:adaptive} and \ref{def:verify_state}.

\begin{proposition}[Restatement of \cref{informal:impossibility}]\label{thm:impossibility}
Let $\cD:1^n\ra \bit^{m(n)}$ be an algorithm such that there exists another algorithm $\cD^*$ such that $\mathsf{SD}(\cD(1^n),\cD^*(1^n))\geq 1-\alpha(n)$, where $\alpha$ and $m$ are an arbitrary function.
Then, for any $\delta>0$, any function $\epsilon,\beta,s$, no algorithm $\Vrfy$ satisfies the following at the same time:
\begin{description}
\item[Correctness:]
\begin{align*}
\Pr[\top\la \Vrfy(1^n,x_1,...,x_{s(n)}): x_1,...,x_{s(n)}\la\cD(1^n)^{\otimes s(n)}]\geq (1-\beta(n)).
\end{align*}
\item[Adaptive-Soundness:]
For any algorithm $\cA$, which takes $1^n$ as input and outputs $(x_1,...x_{s(n)})\in\bit^{m(n)\cdot s(n)}$, we define the algorithm $\mathsf{Marginal}_{\cA}(1^n)$ as follows:
\begin{enumerate}
    \item Run $x_1,...,x_{s(n)}\la \cA(1^n)$.
    \item Sample $i\in[s(n)]$.
    \item Output the $x_{i}$.
\end{enumerate}
For any uniform adversary $\cA$ such that
\begin{align*}
    \mathsf{SD}(\mathsf{Marginal}_{\cA}(1^n),\cD(1^n))\geq \epsilon(n)
\end{align*}
for all $n\in\N$, we have
\begin{align*}
\Pr[\top\la \Vrfy(1^n,x_1,...,x_s):(x_1,...,x_s)\la\cA(1^n)]\leq \left(1-\frac{\epsilon(n)}{1-\alpha(n)}\right)(1-\beta(n))-\delta
\end{align*}
for some $n\in\N$.
\end{description}
\end{proposition}
\begin{remark}
Note that in \cref{thm:impossibility}, we allow that $\cA$ is any algorithm (including non-uniform algorithm) instead of time-bounded uniform algorithm. 
Therefore, \cref{thm:impossibility} does not explicitly rule out the possibility that we can construct verification algorithm, which is secure against only time-bounded uniform algorithm.
However, in the statement if we consider a uniform QPT algorithm $\cD$ such that there exists another QPT algorithm $\cD^*$ with $\mathsf{SD}(\cD(1^n),\cD^*(1^n))\geq 1-\alpha(n)$, then we can similarly claim that no $\Vrfy$ algorithm satisfy the correctness and soundness with the same parameter secure against sufficiently large time-bounded uniform algorithm.
\end{remark}

\begin{proof}[Proof of \cref{thm:impossibility}]
Let us consider the following algorithm $\cA$.
\begin{description}
    \item[$\cA(1^n)$:]$ $
    \begin{enumerate}
        \item Set $b=1$ with probability $1-\frac{\epsilon(n)}{1-\alpha(n)}$ and set $b=0$ with probability $\frac{\epsilon(n)}{1-\alpha(n)}$.
        \item If $b=1$, run $x_1,...,x_{s(n)}\la\cD(1^n)^{\otimes s(n)}$, and output $x_1,...,x_{s(n)}$.
        \item If $b=0$, run $x_1,...,x_{s(n)}\la\cD^*(1^n)^{\otimes s(n)}$, and output $x_1,...,x_{s(n)}$.
    \end{enumerate}
\end{description}

From the construction of $\cA$, we have
\begin{align*}
\mathsf{SD}(\mathsf{Marginal}_{\cA}(1^n),\cD(1^n))\geq \frac{\epsilon(n)}{1-\alpha(n)}\mathsf{SD}(\cD^*(1^n),\cD(1^n))\geq \epsilon(n).
\end{align*}
Let $\Vrfy$ be an arbitrary algorithm such that
\begin{align*}
\Pr[\top\la\Vrfy(1^n,x_1,...,x_{s(n)}):x_1,...,x_{s(n)}\la\cD(1^n)^{\otimes s(n)} ]\geq 1-\beta(n).
\end{align*}
Then, from the construction of $\cA$, we have
\begin{align*}
&\Pr[\top\la\Vrfy(1^n,x_1...,x_{s(n)}):x_1,...,x_{s(n)}\la\cA(1^n)]\\
&\geq(1-\epsilon(n))\Pr[\top\la\Vrfy(1^n,x_1...,x_{s(n)}):x_1,...,x_{s(n)}\la\cD(1^n)^{\otimes s(n)}]\geq \left(1-\frac{\epsilon(n)}{1-\alpha(n)}\right)(1-\beta(n)).
\end{align*}
This completes the proof.
\end{proof}

\subsection{Impossibility of Learning in Non-iid Setting}

Given that we study the feasibility of verifying marginal distributions, one might wonder whether we can learn a marginal distribution of any unknown algorithm.
Remark that, for useful parameter regimes, it seems impossible to learn the marginal distribution of an unknown algorithm $\cA$ even if we allow $s$ to be an arbitrary function.
This justifies studying verification instead of learning marginal distribution.
Our formal statement of the impossibility is as follows:
\begin{proposition}[Restatement of \cref{informal:impossibility_learn}]\label{thm:impossibility_learn}
For any $\frac{1}{2}>\alpha(n),\beta(n)>0$ and any function $s$, no algorithms $\mathsf{Learn}$ satisfy the following:
For any uniform algorithm $\cA$, which takes $1^n$ and outputs $x_1,...,x_{s(n)}\in\bit^{m(n)}$, we have
\begin{align*}
\Pr\left[\mathsf{SD}(h(1^n),\mathsf{Marginal}_{\cA}(1^n))\leq \alpha(n):
\begin{array}{ll}
x_1,...,x_{s(n)}\la\cA(1^n)\\
h\la\mathsf{Learn}(1^n,x_1,...,x_{s(n)})
\end{array}
\right]\leq 1-\beta(n)
\end{align*}
for some $n\in\N$.
\end{proposition}

\begin{proof}[Proof of \cref{thm:impossibility_learn}]
In the following, suppose that there exists an algorithm $\mathsf{Learn}$, and then, we show that this leads to a contradiction to \cref{thm:impossibility}.

Let $\cD$ be an algorithm such that there exists $\cD^*$ with $\mathsf{SD}(\cD^*(1^n),\cD(1^n))=1$ for all $n\in\N$.
If there exists a $\mathsf{Learn}$ and a function $s$ presented in \cref{thm:impossibility_learn}, we can construct $\Vrfy$ such that
\begin{align*}
\Pr[\top\la\Vrfy\left(x_1,...,x_{s(n)}\right):x_1,...,x_{s(n)}\la \cD(1^n)^{\otimes s(n)} ]\geq 1-\beta(n)
\end{align*}
and
\begin{align*}
\Pr[\top\la\Vrfy\left(x_1,...,x_{s(n)}\right):x_1,...,x_{s(n)}\la \cA(1^n)]\leq \beta(n)
\end{align*} 
for any $\cA$ algorithm and any $\epsilon(n)$ with $\epsilon(n)>\min\left\{2\alpha(n),\frac{1-2\beta(n)}{1-\beta(n)}\right\}$ and $\mathsf{SD}(\mathsf{Marginal}_{\cA}(1^n),\cD(1^n))\geq \epsilon(n)$.
This is a contradiction to \cref{thm:impossibility} because, with the choice of $\epsilon$, we have
\begin{align*}
\Pr[\top\la\Vrfy\left(x_1,...,x_{s(n)}\right):x_1,...,x_{s(n)}\la \cA(1^n)]\leq \beta(n)<(1-\epsilon(n))(1-\beta(n)).
\end{align*}

Now, we give the construction of $\Vrfy$.
\begin{description}
\item[$\Vrfy\left(x_1,...,x_{s(n)}\right)$:]$ $ 
\begin{enumerate}
\item Receive $x_1,...,x_{s(n)}$, which is supposed to be generated by some unknown algorithm $\cA(1^n)$.
\item Run $h\la\mathsf{Learn}(x_1,...,x_{s(n)})$.
\item Output $\top$ if $\mathsf{SD}(h(1^n), \cD(1^n))\leq \epsilon(n)-\alpha(n)$.
Otherwise, output $\bot$.
\end{enumerate}
\end{description}
\paragraph{Correctness.}
With probability at least $\beta(n)$ over $x_1,...,x_{s(n)}\la\cA(1^n)$ and $h\la\mathsf{Learn}(x_1,...,x_{s(n)})$, $h$ satisfies
\begin{align*}
\mathsf{SD}(h(1^n),\cD(1^n))\leq \alpha(n)\leq \epsilon(n)-\alpha(n).
\end{align*}
Hence, we have
\begin{align*}
\Pr[\top\la\Vrfy\left(x_1,...,x_{s(n)}\right):x_1,...,x_{s(n)}\la\cD(1^n)^{\otimes s(n)}]\geq 1-\beta(n).
\end{align*}

\paragraph{Adaptive-Soundness.}
With probability at least $\beta(n)$, $h$ satisfies
\begin{align*}
\mathsf{SD}(h(1^n),\mathsf{Marginal}_{\cA}(1^n))\leq \alpha(n).
\end{align*}

From the triangle inequality, for any $\cA$ such that 
\begin{align*}
\mathsf{SD}(\mathsf{Marginal}_{\cA}(1^n),\cD(1^n))\geq \epsilon(n),
\end{align*}
we have
\begin{align*}
\mathsf{SD}(h(1^n),\cD(1^n))\geq \epsilon(n)-\alpha(n)
\end{align*}
with probability at least $B$.
Therefore, we have
\begin{align*}
\Pr[\top\la\Vrfy\left(x_1,...,x_{s(n)}\right):x_1,...,x_{s(n)}\la\cA(1^n) ]\leq \beta(n).
\end{align*}
\end{proof}

\subsection{Unconditional Efficient Verification Algorithm for Distributions with Small Entropy}

In this section, we prove \cref{thm:small_entropy}, which guarantees that every PPT algorithm with small entropy can be verified by a PPT algorithm.
Note that, by the same argument, we can show that every QPT algorithm with small entropy can be verified by a QPT algorithm.

The soundness requirement of \cref{thm:small_entropy} is slightly stronger than the original definition of distribution verification with adaptive-soundness (Definition~\ref{def:adaptive}).
In Definition~\ref{def:adaptive}, we require that, for every time-bounded adversary $\cA$ with $\Delta(\mathsf{Marginal}_\cA(1^n),\cD(1^n))\geq \epsilon(n)$, the verifier must reject.
In other words, there might exist some unbounded algorithm $\cA$ that makes $\Vrfy$ accept even though $\mathsf{Marginal}_\cA(1^n)$ is statistically far from $\cD(1^n)$.
On the other hand, in \cref{thm:small_entropy}, $\Vrfy$ must reject any unbounded algorithm $\cA$ as long as $\Delta(\mathsf{Marginal}_\cA(1^n),\cD(1^n))\geq \epsilon(n)$.

\begin{proposition}[Adaptive-Verification of Classical Distributions with small entropy by PPT Algorithms (Restatement of \cref{thm:unconditional_verifiaction})]\label{thm:small_entropy}
For any constant $c,d,\alpha>0$, and for any PPT algorithms $\cD$, which takes $1^n$ as input, and outputs $x\in\bit^{m(n)}$ for an arbitrary polynomial $m$, such that $H(\cD(1^n))\geq \alpha\log(n)$, there exists a PPT algorithm $\Vrfy$ and a polynomial $s$ such that the following holds.
\begin{description}
    \item[Correctness:] We have
    \begin{align*}
    \Pr[\top\la\Vrfy(1^n,x_1,\dots,x_{s}):x_1,\dots,x_{s}\la\cD(1^n)^{\otimes s}] \geq 1-n^{-c}
    \end{align*}
    for all sufficiently large $n\in\N$.
    Here, $s = s(n^c,d,1/\epsilon(n))$.
    \item[Adaptive-Soundness:] 
    For any algorithm $\cA$ that takes $1^n$ as input and outputs $\left(x_1,\dots,x_{s(n)}\right)\in\left(\bit^{m(n)}\right)^{s(n)} $, 
    %\matthew{I feel like both here and in section 7 this should be $(x_1,\dots,x_{s(n)})\in\bit^{m(n)\times s(n)} $}\mor{you can modify in that way.} 
    we define an algorithm $\mathsf{Marginal}_{\cA}(1^n)$ as follows:
    \begin{description}
    \item[$\mathsf{Marginal}_{\cA}(1^n)$:] $ $ 
    \begin{enumerate}
        \item Sample $(x_1,\dots,x_s)\la\cA(1^n)$.
        \item Sample $i\gets [s]$.
        \item Output $x_i$.
    \end{enumerate}
    \end{description}
    For any adversary $\cA$ with $\Delta(\mathsf{Marginal}_{\cA}(1^n),\cD(1^n))\geq \epsilon(n)$ for all $n\in\N$, we have
    \begin{align*}
    \Pr[\top\gets\Ver(1^n,x_1,\dots,x_s) :(x_1,...,x_s)\gets\cA(1^n)] \leq (1-\epsilon(n)) + \frac{1}{d}
    \end{align*}
    for all sufficiently large $n\in\N$.
    Here, $s = s(n^c,d,1/\epsilon(n))$.
\end{description}
\end{proposition}

\begin{proof}[Proof of \cref{thm:small_entropy}]
For any constant $c,d>0$, and any PPT algorithm $\cD$ with $H(\cD(1^n))\leq c\log(n)$, we construct a PPT algorithm $\Vrfy$ such that the following is satisfied for some polynomial $s$. 
\begin{itemize}
\item 
\begin{align*}
\Pr[\top\la\Vrfy(x_1,\dots x_{s(n)}): (x_1,\dots,x_{s(n)})\la\cD(1^n)^{\otimes s(n)}]\geq 1-\negl(n)
\end{align*}
for all sufficiently large $n\in\N$.
\item For any PPT algorithm $\cA$ with $\Delta(\mathsf{Marginal}_{\cA}(1^n),\cD(1^n))\geq \epsilon(n)$, we have
\begin{align*}
\Pr[\top\la\Vrfy(x_1,\dots x_{s(n)}): (x_1,\dots,x_{s(n)})\la\cA(1^n)] \leq 1-\epsilon(n)+\frac{1}{d}
\end{align*}
for all sufficiently large $n\in\N$.
\end{itemize}

Let $m$ be a polynomial such that $\cD(1^n)$ outputs $m(n)$-bit strings.
Let us describe our $\Vrfy$ algorithm.
\begin{description}
\item[$\Vrfy$:]$ $
\begin{enumerate}
\item Receive $x_1,\dots,x_{n^{1000cd}}$.
\item Let $A^*_{x_1,\dots,x_{n^{1000cd}}}$ be a distribution that samples $i\la[n^{1000cd}]$, and outputs $x_i$.
\item Run $X_i\la\cD(1^n)$ for all $i\in[m(n)\cdot n^{1000cd}]$. For each $X_i$, let $\mathsf{Count}_{X_1,\dots,X_{m(n)\cdot n^{1000cd}}}(X_i)$ be the number of $j\in[m(n)\cdot n^{1000cd}]$ such that $X_i= X_j$.
\item Let $\mathsf{List}_{n}$ be a set of $X\in\{X_i\}_{i\in[m(n)\cdot n^{1000cd}]} $ such that $ \frac{\mathsf{Count}_{X_1,\dots,X_{m(n)\cdot n^{1000cd}}}(X)}{m(n)\cdot n^{1000cd}}\geq n^{-200cd}$.
\item Output $\top$ if $\abs{\frac{\mathsf{Count}_{X_1,\dots,X_{m(n)\cdot n^{1000cd}}}(X_i)}{m(n)\cdot n^{1000cd}}-\Pr[X_i\la A^*_{x_1,\dots,x_{n^{1000cd}}}]}\leq n^{-100cd}$ for all $X_i\in\mathsf{List}_n$.
Otherwise, output $\bot$.
\end{enumerate}
\end{description}
\paragraph{Correctness.}
Let 
\begin{align*}
\cS_{\cD}(1^n,250cd)\seteq \left\{\Pr[x\la\cD(1^n)]\geq n^{-250cd} \right\}.    
\end{align*}
We use the following Claim~\ref{claim:list}.
\begin{claim}\label{claim:list}
With probability at least $1-\negl(n)$ over $\{X_i\}_{i\in[m(n)\cdot n^{1000cd}]}\la\cD(1^n)^{\otimes m(n)\cdot n^{1000cd}}$, we have
\begin{align*}
x\notin\mathsf{List}_n 
\end{align*}
for all $x\notin\cS_{\cD}(1^n,250cd)$,
and
\begin{align*}
\abs{\frac{\mathsf{Count}_{X_1,\dots,X_{m(n)\cdot n^{1000cd}}}(X)}{m(n)\cdot n^{1000cd}}-\Pr[X\la\cD(1^n) ]}\leq n^{-250cd}
\end{align*}
for all $X\in\mathsf{List}_n$.
\end{claim}

From Hoeffding's inequality, for each $X\in\cS_{\cD}(1^n,250cd)$, we have
\begin{align*}
\Pr_{x_1,\dots,x_{n^{1000cd}}\la\cD(1^n)^{\otimes n^{1000cd}} }\left[\abs{\Pr[X\la\cA^*_{x_1,\dots,x_{n^{1000cd}}}]-\Pr[X\la\cD(1^n) ]}\leq \frac{1}{n^{250cd}}\right]
&\geq 1-2\exp(-\frac{n^{1000cd}}{n^{500cd}})\\
&\geq 1-2\exp(-n^{500cd}).
\end{align*}
This implies that 
\begin{align*}
&\Pr_{x_1,\dots,x_{n^{1000cd}}\la\cD(1^n)^{\otimes n^{1000cd}}}\left[\abs{\Pr[X\la\cA^*_{x_1,\dots,x_{n^{1000cd}}}]-\Pr[X\la\cD(1^n) ]}\leq \frac{1}{n^{250cd}} \mbox{\,\,for all $X\in\cS_{\cD}(1^n,250cd)$} \right]\\
&\geq \left(1-2\exp(-n^{500cd})\right)^{n^{250cd}}\geq 1-\negl(n).
\end{align*}

From Claim~\ref{claim:list}, these imply that with probability at least $1-\negl(n)$, for all $X\in\mathsf{List}_n$, we have
\begin{align*}
&\abs{\frac{\mathsf{Count}_{X_1,\dots,X_{m(n)\cdot n^{1000cd}}}(X)}{m(n)\cdot n^{1000cd}}-\Pr[X\la\cA^*_{x_1,\dots,x^{n^{1000cd}}}]}\\
&\leq \abs{\frac{\mathsf{Count}_{X_1,\dots,X_{m(n)\cdot n^{1000cd}}}(X)}{m(n)\cdot n^{1000cd}}-\Pr[X\la\cD(1^n)]}+\abs{\Pr[X\la\cA^*_{x_1,\dots,x^{n^{1000cd}}}]-\Pr[X\la\cD(1^n)]}\leq 2n^{-250cd}.
\end{align*}
Therefore, we have
\begin{align*}
\Pr_{x_1,\dots,x_{n^{1000cd}}\la\cD(1^n)^{\otimes n^{1000cd}}}[\top\la\Vrfy(x_1,\dots,x_{n^{1000cd}})]\geq 1-\negl(n).
\end{align*}

\begin{proof}[Proof of Claim~\ref{claim:list}]
For each $X\in\bit^{m(n)}$, we have
\begin{align*}
&\Pr_{X_1,\dots,X_{m(n)\cdot n^{1000cd}}\la\cD(1^n)^{\otimes m(n)\cdot n^{1000cd}}}\left[\abs{\frac{\mathsf{Count}_{X_1,\dots,X_{m(n)\cdot n^{1000cd}}}(X)}{n^{1000cd}}-\Pr[X\la\cD(1^n)]}\leq n^{-250cd}\right]\\
&\geq 1-2\exp(-m(n)\cdot n^{500cd}).
\end{align*}
From union bound, with probability at least $1-\negl(n)$ over $X_1,\dots,X_{m(n)\cdot n^{1000cd}}\la\cD(1^n)^{\otimes m(n)\cdot n^{1000cd}}$, we have
\begin{align*}
\abs{\frac{\mathsf{Count}_{X_1,\dots,X_{m(n)\cdot n^{1000cd}}}(X)}{m(n)\cdot n^{1000cd}}-\Pr[X\la\cD(1^n)]}\leq n^{-250cd} 
\end{align*}
for all $X\notin\bit^{m(n)}$.
\end{proof}

\paragraph{Adaptive-Soundness.}

\if0
\taiga{!!Intuition!! From here}
Suppose that $\Vrfy$ outputs $\top$ with high probability.
This roughly means that \cref{ineq:condition} holds with high probability.
From \cref{claim:test}, this means that $\Delta(A^*_{x_1,\dots,x_{100n^{cd}}},\cD(1^n))\leq small$ with high probability over $x_1,\dots,x_{100n^{cd}}\la\cA(1^n)$.
From probabilistic argument, this implies that $\Delta(\mathsf{Marginal}_{\cA}(1^n),\cD(1^n))\leq small$.
\taiga{To here}
\fi

For showing the soundness, we use the following Claims~\ref{claim:test} and \ref{claim:prob}.
For showing Claim~\ref{claim:test}, we use the following Claim~\ref{claim:entropy}.
We defer the proof of them.
\begin{claim}\label{claim:test}
Let $\cD$ be an arbitrary PPT algorithm, which takes $1^n$ as input, outputs $x\in\bit^{m(n)}$, and satisfies $H(\cD(1^n))\leq c\log(n)$.
Let 
\begin{align*}
\cS_{\cD}(1^n,50cd)\seteq\left\{x\in\bit^*:\Pr[x\la\cD(1^n)]\geq \frac{1}{n^{50cd}}\right\}.
\end{align*}
Suppose that 
\begin{align}\label{ineq:condition}
\abs{\Pr[x\la\cA^*_{x_1,\dots,x_{n^{1000cd}}}]-\Pr[x\la \cD(1^n)]}\leq 2n^{-100cd} \mbox{\,\,for all\,\,} x\in \cS_{\cD}(1^n,50cd).
\end{align}
Then, we have
\begin{align*}
\Delta(A^*_{x_1,\dots,x_{n^{1000cd}}},\cD(1^n))\leq\frac{1}{49d}
\end{align*}
for all sufficiently large $n\in\N$.
\end{claim}

\begin{claim}\label{claim:prob}
Let 
\begin{align*}
\cS_{\cD}(1^n,50cd)\seteq\left\{x\in\bit^*:\Pr[x\la\cD(1^n)]\geq \frac{1}{n^{50cd}}\right\}.
\end{align*}
With probability at least $1-\negl(n)$ over $\{X_i\}_{i\in[m(n)\cdot n^{1000cd}]}\la\cD(1^n)^{\otimes m(n)\cdot n^{1000cd}}$, for any $X\in\cS_{\cD}(1^n,50cd)$, we have 
\begin{align*}
\abs{\frac{\mathsf{Count}_{X_1,\dots,X_{m(n)\cdot n^{1000cd}}}(X)}{m(n)\cdot n^{1000cd}}-\Pr[X\la\cD(1^n)]}\leq n^{-250cd}.
\end{align*}
\end{claim}

\begin{claim}\label{claim:entropy}
Let $\cD$ be an arbitrary PPT algorithm, which takes $1^n$ as input, outputs $x\in\bit^{m(n)}$, and satisfies $H(\cD(1^n))\leq c\log(n)$.
We define $\cS_\cD(1^n,d)$ as follows:
\begin{align*}
\cS_\cD(1^n,d)\seteq \left\{x\in\bit^{m(n)}: \Pr[x\la\cD(1^n)]\geq \frac{1}{n^d} \right\}.
\end{align*}
Then, we have
\begin{align*}
\Pr_{x\la \cD(1^n)}[x\in\cS_{\cD}(1^n,d)]\geq 1-\frac{c}{d}.
\end{align*}
\end{claim}

For contradiction, suppose that 
\begin{align*}
\Pr[\top\la\Vrfy(x_1,\dots,x_{n^{1000cd}}):x_1,\dots,x_{n^{1000cd}}\la\cA(1^n)]\geq 1-\epsilon(n)+\frac{1}{d}.
\end{align*}
From Claim~\ref{claim:prob}, with probability at least $1-\negl(n)$ over $\{X_i\}_{i\in[m(n)\cdot n^{1000cd}]}\la\cD(1^n)^{\otimes m(n)\cdot n^{1000cd}}$, for any $X\in\cS_{\cD}(1^n,50cd)$,
\begin{align*}
\abs{\frac{\mathsf{Count}_{X_1,\dots,X_{m(n)\cdot n^{1000cd}}}(X)}{m(n)\cdot n^{1000cd}}-\Pr[X\la\cD(1^n) ]}\leq n^{-250cd} 
\end{align*}
and 
\begin{align*}
\cS_{\cD}(1^n,50cd)\subseteq \mathsf{List}_n.
\end{align*}
This and \cref{ineq:condition} imply that
\begin{align*}
&\Pr[\abs{\Pr[x\la\cA^*_{x_1,\dots,x_{n^{1000cd}}}]-\Pr[x\la \cD(1^n)]}\leq 2n^{-100cd} \mbox{\,\,for all $x\in\cS_{\cD}(1^n,50cd)$}:x_1,\dots,x_{n^{1000cd}}\la\cA(1^n)]\\
&\geq 1-\epsilon(n)-\negl(n)+\frac{1}{d}.
\end{align*}
From Claim~\ref{claim:test}, this implies that
\begin{align}\label{ineq:consequence}
\Pr[\Delta(\cA^*_{x_1,\dots,x_{n^{1000cd}}},\cD(1^n))\leq \frac{1}{49d}:x_1,\dots,x_{n^{1000cd}}\la\cA(1^n)] \geq 1-\epsilon(n)-\negl(n)+\frac{1}{d}.
\end{align}
Let
\begin{align*}
\cT_{n,d}\seteq \left\{x_1,\dots,x_{n^{1000cd}}\in\bit^*:\Delta(\cA^*_{x_1,\dots,x_{n^{1000cd}}},\cD(1^n))\leq \frac{1}{49d} \right\}.
\end{align*}

Then, \cref{ineq:consequence} implies that
\begin{align*}
&\Delta\left(\mathsf{Marginal}_{\cA}(1^n),\cD(1^n)\right)\\
&\leq\mathbb{E}_{x_1,\dots,x_{n^{1000cd}}\la\cA(1^n)}\left[\Delta(\cA^*_{x_1,\dots,x_{n^{1000cd}}},\cD(1^n))\right]\\
&=\sum_{x_1,\dots,x_{n^{1000cd}}}\Pr[x_1,\dots,x_{n^{1000cd}}\la\cA(1^n)]\Delta(\cA^*_{x_1,\dots,x_{n^{1000cd}}},\cD(1^n))\\
&=\sum_{x_1,\dots,x_{n^{1000cd}}\in \cT_{n,d}} \Pr[x_1,\dots,x_{n^{1000cd}}\la\cA(1^n)]\Delta(\cA^*_{x_1,\dots,x_{n^{1000cd}}},\cD(1^n))\\
&+\sum_{x_1,\dots x_{n^{1000cd}}\notin \cT_{n,d}} \Pr[x_1,\dots,x_{n^{1000cd}}\la\cA(1^n)]\Delta(\cA^*_{x_1,\dots,x_{n^{1000cd}}},\cD(1^n))\\
&\leq \sum_{x_1,\dots,x_{n^{1000cd}}\in \cT_{n,d}}\Pr[x_1,\dots,x_{n^{1000cd}}\la\cA(1^n)]\frac{1}{49d}+\sum_{x_1,\dots,x_{n^{1000cd}}\notin \cT_{n,d}}\Pr[x_1,\dots,x_{n^{1000cd}}\la\cA(1^n)]\\
&\leq \frac{1}{49d}\left(1-\epsilon(n)-\negl(n)+\frac{1}{d}\right) +\epsilon(n)+\negl(n)-\frac{1}{d}\leq \epsilon(n)
\end{align*}
for all sufficiently large $n\in\N$.
Here, in the second inequality, we have used \cref{ineq:consequence}.
This is a contradiction.

\begin{proof}[Proof of Claim~\ref{claim:test}]
We have
\begin{align*}
&\Delta(A^*_{x_1,\dots,x_{n^{1000cd}}},\cD(1^n))\\
&=\frac{1}{2}\Bigg(\sum_{x\in\cS_{\cD}(1^n,50cd )}\abs{ \Pr[x\la\cA^*_{x_1,\dots,x_{n^{1000cd}}}] -\Pr[x\la\cD(1^n)]}\\
&+\sum_{x\notin\cS_{\cD}(1^n,50cd) }\abs{ \Pr[x\la\cA^*_{x_1,\dots,x_{n^{1000cd}}}] -\Pr[x\la\cD(1^n)]}
\Bigg)\\
&\leq \frac{1}{2}\Bigg(\abs{\cS_{\cD}(1^n,50cd )}2n^{-100cd}\\
&+\sum_{x\notin\cS_{\cD}(1^n,50cd) }\abs{\Pr[x\la\cA^*_{x_1,\dots,x_{n^{1000cd}}}] -\Pr[x\la\cD(1^n)]}
\Bigg)\\
&\leq n^{-50cd}+\frac{1}{2}\sum_{x\notin\cS_{\cD}(1^n,50cd)}\Pr[x\la\cA^*_{x_1,\dots,x_{n^{1000cd}}}]+\frac{1}{2}\sum_{x\notin\cS_{\cD}(1^n,50cd)}\Pr[x\la\cD(1^n)]\\
&\leq n^{-50cd}+\frac{1}{100d}+\frac{1}{100d}+n^{-50cd}\leq \frac{1}{49d}
\end{align*}
for all sufficiently large $n\in\N$.
Here, in the third inequality, we have used \cref{claim:entropy}, and
\begin{align*}
\sum_{x\in \cS_{\cD}(1^n,50cd)}\Pr[x\la\cA^*_{x_1,\dots,x_{n^{1000cd}}}]
&\geq \sum_{x\in \cS_{\cD}(1^n,50cd)}(\Pr[x\la\cD(1^n)] -2n^{-100cd})\\
&\geq 1-\frac{1}{50d}-2n^{-50cd}.
\end{align*}
\end{proof}

\begin{proof}[Proof of Claim~\ref{claim:prob}]
From Hoeffding's inequality, for each $X\in\cS_{\cD}(1^n,50cd)$, we have
\begin{align*}
\Pr_{x_1,\dots,x_{m(n)\cdot n^{1000cd}}\la\cD(1^n)^{\otimes m(n)\cdot n^{1000cd}} }\left[\abs{\frac{\mathsf{Count}_{x_1,\dots,x_{m(n)}}(X)}{m(n)\cdot n^{1000cd}}-\Pr[X\la\cD(1^n) ]}\leq \frac{1}{n^{250cd}}\right]
\geq 1-2\exp(-n^{500cd}).
\end{align*}
From union bound, we have
\begin{align*}
&\Pr_{x_1,\dots,x_{m(n)\cdot n^{1000cd}}\la\cD(1^n)^{\otimes m(n)\cdot n^{1000cd}} }\left[\abs{\frac{\mathsf{Count}_{x_1,\dots,x_{m(n)}}(X)}{m(n)\cdot n^{1000cd}}-\Pr[X\la\cD(1^n) ]}\leq \frac{1}{n^{250cd}} \mbox{ for all }X\in\cS_{\cD}(1^n,50cd)\right]\\
&\geq \left(1-2\exp(-n^{500cd})\right)^{n^{50cd}}\geq 1-\negl(n) .
\end{align*}

\end{proof}

\begin{proof}[Proof of Claim~\ref{claim:entropy}]
This immediately holds from the definition of entropy as follows:
\begin{align*}
c\log(n)
&\geq H(D(1^n))\\
&=\sum_{x\in\cS_{n,d}}p_x\log\left(\frac{1}{p_x}\right)+\sum_{x\notin\cS_{n,d}}p_x\log\left(\frac{1}{p_x}\right)\\
&\geq \sum_{x\notin\cS_{n,d}}p_x\log\left(\frac{1}{p_x}\right)\geq d\log(n)\sum_{x\notin\cS_{n,d}}p_x.
\end{align*}
This implies that
\begin{align*}
\Pr_{x\la\cD(1^n)}[x\in\cS_{\cD}(1^n,d)]\geq 1-\frac{c}{d}.
\end{align*}
\end{proof}
\end{proof}

\if0
For showing \cref{thm:small_entropy}, we use the following \cref{high,check,est,est_uk}.
\begin{theorem}\label{high}
Let $\cD$ be an arbitrary PPT algorithm, which takes $1^n$ as input, outputs $x\in\bit^{m(n)}$, and satisfies $H(\cD(1^n))\leq c\log(n)$.
We define $\cS_{n,d}$ as follows:
\begin{align}
\cS_{n,d}\seteq \left\{x\in\bit^{m(n)}: \Pr[x\la\cD(1^n)]\geq \frac{1}{n^d} \right\}.
\end{align}
Then, we have
\begin{align}
\Pr_{x\la \cD(1^n)}[x\in\cS_{n,d}]\geq 1-\frac{c}{d}.
\end{align}
\end{theorem}
\begin{proof}[Proof of \cref{high}]
This immediately holds from the definition of entropy as follows:
\begin{align}
c\log(n)
&\geq H(D(1^n))\\
&=\sum_{x\in\cS_{n,d}}p_x\log\left(\frac{1}{p_x}\right)+\sum_{x\notin\cS_{n,d}}p_x\log\left(\frac{1}{p_x}\right)\\
&\geq \sum_{x\notin\cS_{n,d}}p_x\log\left(\frac{1}{p_x}\right)\geq d\log(n)\sum_{x\notin\cS_{n,d}}p_x.
\end{align}
This implies that
\begin{align}
\Pr_{x\la\cD(1^n)}[x\in\cS_{n,d}]\geq 1-\frac{c}{d}.
\end{align}
\end{proof}

\begin{theorem}\label{thm:check}
Let $\cD$ be an arbitrary PPT algorithm, which takes $1^n$ as input, outputs $x\in\bit^{m(n)}$, and satisfies $H(\cD(1^n))\leq c\log(n)$.

We define $\cS_{n,d}$ as follows:
\begin{align}
\cS_{n,d}\seteq \left\{x\in\bit^{m(n)}: \Pr[x\la\cD(1^n)]\geq \frac{1}{n^d} \right\}.
\end{align}
For any constant $(K,\alpha,\beta)$ such that $\alpha-\beta<0$, there exists a PPT algorithm $\mathsf{Check}_{\alpha,\beta}$ such that
\begin{align}
\Pr[\top\la\mathsf{Check}_{K,\alpha,\beta}(1^n,x)]\geq 1-2^{-n^K}
\end{align}
for all $x\in\cS_{n,\alpha}$,
and
\begin{align}
\Pr[\bot\la\mathsf{Check}_{K,\alpha,\beta}(1^n,x)]\geq 1-2^{-n^K}
\end{align}
for all $x\notin\cS_{n,\beta}$.
\end{theorem}

\begin{proof}[Proof of \cref{thm:check}]
We give a construction of $\mathsf{Check}$:
\begin{description}
\item[$\mathsf{Check}_{K,\alpha,\beta}$:]$ $
\begin{enumerate}
\item Take $x$ as input.
\item Run $x_i\la\cD(1^n)$ for all $i\in[K\cdot n^{2\beta}]$. 
\item Let $\mathsf{Count}$ be the number of times such that $x_i=x$.
\item If 
\begin{align}
    \frac{\mathsf{Count}}{K\cdot n^{2\beta}}\geq \frac{1}{2}\left(\frac{1}{n^{\alpha}}+\frac{1}{n^{\beta}}\right),
\end{align}
then output $\top$.
Otherwise, output $\bot$.
\end{enumerate}
\end{description}
From Hoeffding, $\mathsf{Check}_{K,\alpha,\beta}$ satisfies the \cref{thm:check}.
\end{proof}

\begin{theorem}\label{est}
Let $\cD$ be an arbitrary PPT algorithm, which takes $1^n$ as input, outputs $x\in\bit^{m(n)}$, and satisfies $H(\cD(1^n))\leq c\log(n)$.

We define $\cS_{n,d}$ as follows:
\begin{align}
\cS_{n,d}\seteq \left\{x\in\bit^{m(n)}: \Pr[x\la\cD(1^n)]\geq \frac{1}{n^d} \right\}.
\end{align}

For any constant $K$, there exists a PPT algorithm $\mathsf{Approx}$ such that 
\begin{align}
\Pr\left[\frac{\Pr[x\la\cD(1^n)]}{2}\leq\mathsf{Approx}(1^n,x)\leq\Pr[x\la\cD(1^n)] \right]\geq 1-2^{-n^K}
\end{align}
for all $x\in\cS_{n,d}$.
\end{theorem}
\begin{proof}[Proof of \cref{est}]
We give a construction of $\mathsf{Approx}$.
\begin{description}
    \item[$\mathsf{Approx}$:]$ $
    \begin{enumerate}
        \item Take $x$ as input.
        \item Run $x^*\la\cD(1^n)$ for $100K\cdot n^d$ times.
              Let $\mathsf{Count}$ be the number of times $x^*=x$.
        \item Output $\frac{\mathsf{Count}}{K\cdot n^{d}}$.
    \end{enumerate}
\end{description}
\end{proof}

\begin{theorem}\label{est_uk}
Let $\cD$ be an arbitrary PPT algorithm, which takes $1^n$ as input, outputs $x\in\bit^{m(n)}$, and satisfies $H(\cD(1^n))\leq c\log(n)$.

We define $\cS_{n,d}$ as follows:
\begin{align}
\cS_{n,d}\seteq \left\{x\in\bit^{m(n)}: \Pr[x\la\cD(1^n)]\geq \frac{1}{n^d} \right\}.
\end{align}

For any constant $K$, there exists a PPT algorithm $\cM$ such that 
\begin{align}
\Pr\left[uK^t(x|1^n)-1 \leq\cM(1^n,x)\leq uK^t(x|1^n) \right]\geq 1-\frac{1}{K}
\end{align}
for all $x\in\cS_{n,d}$.
\end{theorem}

\begin{proof}[Proof of \cref{thm:small_entropy}]
For any constant $A$, we give a polynomial $s$ and a PPT algorithm $\Vrfy$ such that 
\begin{align}
\Pr_{x_1,\dots,x_{s(n)}\la\cD^{\otimes s(n)}}[\top\la\Vrfy(x_1,\dots,x_{s(n)})]\geq 1-A
\end{align}
and for any PPT algorithm $\cA(1^n)$ such that $\Delta(\mathsf{Marginal}_{\cA}(1^n),\cD(1^n))\geq \epsilon(n)$, 
we have
\begin{align}
\Pr_{x_1,\dots,x_{s(n)}\la\cA(1^n)}[\top\la\Vrfy(x_1,\dots,x_{s(n)})]\leq 1-\epsilon(n)+A
\end{align}
for all sufficiently large $n\in\N$.

For this, we introduce some notations.
\paragraph{Notations:}

Set $s(n)$.

Set $\alpha=A\cdot c$.

Set $\beta=2A\cdot c$.

Set $K$.

Set $d=(\log(n))^2$.

\paragraph{Construction.}
In the following, we give a construction of $\Vrfy$.
\begin{description}
    \item[$\Vrfy$:]$ $
    \begin{enumerate}
        \item Take $y_1, \dots,y_{s(n)}$ as input. 
        \item For all $i\in[s(n)]$, run $b_i\la\mathsf{Check}_{K,\alpha,\beta}(y_i)$.
        Let $\mathsf{Count}$ be the number of times such that $b_i=\top$.
        Let $\mathsf{Accept}$ be the family of $[s(n)]$ such that $b_i=\top$ for all $i\in\mathsf{Accept}$ and $b_i=\bot$ for all $i\in[s(n)]\backslash\mathsf{Accept}$. 
        \begin{enumerate}
            \item If $\mathsf{Count}\leq s(n)(1-\frac{2}{A})$, then halt and output $\bot$.
            \item If $\mathsf{Count}\geq s(n)(1-\frac{2}{A})$, then move to the next step.
        \end{enumerate}
        \item We let $Y=\{y_i\}_{i\in\mathsf{Accept}}$.
        Run $k\la\cM(Y)$.
        \item For all $i\in\mathsf{Accept}$ such that $b_i=\top$, run $p_i\la\mathsf{Approx}(y_i)$.
        \item Output $\top$ if $\sum_{i\in\mathsf{Accept}}\log_2 \frac{1}{p_i}\leq k+ d$.
        Otherwise, output $\bot$.
    \end{enumerate}
\end{description}

\paragraph{Correctness:}

\taiga{The following part is an intuition for me. From here }
From \cref{high}, 
for all $i\in[s(n)]$,
we have
\begin{align}
\Pr_{y_i\la\cD(1^n)}[y_i\in\cS_{n,A\cdot c}]\geq 1-\frac{c}{A\cdot c}=1-\frac{1}{A}.
\end{align}

From Hoeffding's inequality, with probability at least $1-\exp(-\frac{s(n)A^2}{2})$, 
the number of $i$ such that $y_i\in\cS_{n,A\cdot c}$ is at least $s(n)(1-\frac{2}{A})$.

On the other hand, from \cref{thm:check}, we have
\begin{align}
    \Pr[\top\la\mathsf{Check}(1^n,y_i)]\geq 1-\frac{1}{K}
\end{align}
for all $y_i\in\cS_{n,A\cdot c}$.
$\mathsf{Check}(1^n,y_i)$ outputs $\top$ for all $i$ such that $y_i\in\cS_{n,A\cdot c}$ with probability at least $\left(1-\frac{1}{K}\right)^{s(n)}$.

Therefore, we have
\begin{align}
\Pr[\mathsf{Count}\geq s(n)(1-\frac{2}{A})]\geq \left(1-\frac{1}{K}\right)^{s(n)}\left(1-\exp(-\frac{s(n)A^2}{2})\right)
\end{align}

Furthermore, for any $y_i\notin\cS_{n,A\cdot c}$, we have $\Pr[\bot\la\mathsf{Check}(y_i)]\geq 1-K$.
This implies that, with probability at least $\left(1-\frac{1}{K}\right)^{s(n)}$,
we have
$i\notin\mathsf{Accept}$ for all $i$ such that $y_i\notin\cS_{n,A\cdot c}$.

Therefore, with probability at least $\left(1-\frac{1}{K}\right)^{2s(n)+1}$,
$\cM$ and $\mathsf{Approx}$ succeeds in estimating $\Pr[y_i\la\cD(1^n)]$ for all $i$ such that $y_i\in\mathsf{Accept}$ and $uK^t(Y)$.

On the other hand, with probability at least $1-\negl(n)$ over $y_1,\dots,y_s\la\cD(1^n)^{\otimes s}$,
\begin{align}
\sum_{i\in \mathsf{Accept}}\log_2\frac{1}{\Pr[y_i\la\cD(1^n)]} \leq uK^t(Y|1^n)+d.
\end{align}

From union bound, we have
\begin{align}
\Pr_{y_1,\dots,y_{s(n)}}[\top\la\Vrfy(y_1,\dots,y_{s(n)})]\geq 1-\negl(n).
\end{align}
\taiga{To here}

In the following, we show that
\begin{align}
\Pr_{y_1,\dots,y_s\la\cD(1^n)^{\otimes s}}[\top\la\Vrfy(y_1,\dots,y_s)]\geq 1-\negl(n).
\end{align}

For showing this, we will use the following claims.
\begin{claim}
Let us denote
\begin{align}
\cT_{n,A\cdot c}\seteq\left\{y_1,\dots,y_s\in\bit^{s\cdot m(n)}: \mbox{The number of $i$ such that }y_i\in\cS_{n,A\cdot c}\mbox{ is at least $s(n)(1-\frac{2}{A})$ } \right\}.
\end{align}
Then, we have
\begin{align}
\Pr_{y_1,\dots,y_s\la\cD(1^n)^{\otimes s}}[y_1,\dots,y_s\in \cT_{n,A\cdot c}]\geq 1-\exp\left(\frac{-s(n)A^2}{2}\right).
\end{align}
\end{claim}

\begin{claim}
Let $\mathsf{Success}_{y_1,\dots,y_s}$ be the event that $\top\la\mathsf{Check}(y_i)$ for all $y_i\in \cS_{n,A\cdot c}$ and 
$\bot\la\mathsf{Check}(y_i)$ for all  $y_i\notin \cS_{n,2A\cdot c}$,
we have
\begin{align}
\Pr_{\mathsf{Check}_{K,A\cdot c,2A\cdot c}}[\mathsf{Scuccess}_{y_1,\dots,y_s} ]\geq \left(1-\frac{1}{K}\right)^{s}.
\end{align}
\end{claim}

\begin{claim}
\end{claim}

\begin{align}
&\Pr_{y_1,\dots,y_{s}}[\bot\la\Vrfy(y_1,\dots,y_{s})]\\
&\leq 
\sum_{y_1,\dots,y_s\notin \cT_{n,A\cdot c}}\Pr[y_1,\dots,y_s\la\cD(1^n)^{\otimes s}]\\
&+\sum_{y_1,\dots,y_s\in \cT_{n,A\cdot c}}\Pr[y_1,\dots,y_s\la\cD(1^n)^{\otimes s}]\Pr[\mathsf{Fail}_{y_1,\dots,y_s}]\\
&+\sum_{y_1,\dots,y_s\in \cT_{n,A\cdot c}}\Pr[y_1,\dots,y_s\la\cD(1^n)^{\otimes s}]\Pr_{\mathsf{Check}_{K,A\cdot c,2A\cdot c}}[\mathsf{Success}_{y_1,\dots,y_s}]\\
&\hspace{1.5cm}\cdot\Pr[-\log_2(\mathsf{Approx}(\{y_i\}_{i\in \cS_{n,2A\cdot c}}))\geq \cM(\{y_i\}_{i\in \cS_{n,2A\cdot c}})+d | \mathsf{Success}_{y_1,\dots,y_s}]\\
&\leq \exp\left(\frac{-s(n)A^2}{2}\right)+1-\left(1-\frac{1}{K}\right)^s\\
&+\sum_{y_1,\dots,y_s\in \cT_{n,A\cdot c}}\Pr[y_1,\dots,y_s\la\cD(1^n)^{\otimes s}]\\
&\hspace{1.5cm}\cdot\Pr[-\log_2(\mathsf{Approx}(\{y_i\}_{i\in \cS_{n,2A\cdot c}}))\geq \cM(\{y_i\}_{i\in \cS_{n,2A\cdot c}})+d | \mathsf{Success}_{y_1,\dots,y_s}]\\
&\leq \exp\left(\frac{-s(n)A^2}{2}\right)+1-\left(1-\frac{1}{K}\right)^s+\left( \right)\\
&+\sum_{y_1,\dots,y_s\in \cT_{n,A\cdot c}}\Pr[y_1,\dots,y_s\la\cD(1^n)^{\otimes s}]\\
&\hspace{1.5cm}\cdot\Pr[\sum_{i\in\cS_{n,2A\cdot c}}-\log_2(\Pr[y_i\la\cD(1^n)]))\geq uK^{t_0(t(n))}(\{y_i\}_{i\in \cS_{n,2A\cdot c}})+d]
\end{align}

\paragraph{Adaptive-Soundness:}

For contradiction, suppose that
\begin{align}
 \Pr_{y_1,\dots,y_s\la\cA(1^n)}[\top\la\Vrfy(y_1,\dots,y_{s})]\geq 1-\epsilon(n).   
\end{align}
Then, we show that $\Delta(\mathsf{Marginal}_{\cA}(1^n),\cD(1^n))\leq \epsilon(n)$.

Let $\cT_{n,\beta}$ be the set of $y_1,\dots,y_{s(n)}$ such that the number of $i$ with $y_i\in\cS_{n,\beta}$ is at least $s(n)(1-\frac{2}{A})$.
\if0
\begin{align}
&\Pr_{y_1,\dots,y_s\la\cA(1^n)}[\top\la\Vrfy(y_1,\dots,y_{s})]\\ 
&\geq\sum_{y_1,\dots,y_s\in\cT_{n,\beta}}\Pr[y_1,\dots,y_s(n)\la\cA(1^n)]\cdot\Pr_{\mathsf{Check}}[\mathsf{Count}_{y_1,\dots,y_{s(n)}}\geq s(n)(1-\frac{2}{A})]\\
&+\sum_{y_1,\dots,y_s\notin\cT_{n,\beta}}\Pr[y_1,\dots,y_s(n)\la\cA(1^n)]\cdot\Pr_{\mathsf{Check}}[\mathsf{Count}_{y_1,\dots,y_{s(n)}}\geq s(n)(1-\frac{2}{A})]
\end{align}
\fi

\begin{align}
&\Pr_{y_1,\dots,y_s\la\cA(1^n)}[\top\la\Vrfy(y_1,\dots,y_{s})]\\
&\approx\sum_{y_1,\dots,y_{s(n)}\in\cT_{n,\beta}} \Pr[y_1,\dots y_{s(n)}\la \cA(1^n)]\Pr[\sum_{i\in\mathsf{\cS_{n,\beta}}}\log_2 \frac{1}{\Pr[y_i\la\cD(1^n)]}\leq uK^t(Y|1^n)+ d ]
\end{align}

\color{red}
Let us denote 
\begin{align}
\cT_{n,2A\cdot c}\seteq\left\{y_1,\dots,y_s\in\bit^{s\cdot m(n)}: \mbox{The number of $i$ such that $y_i\in\cS_{n,2A\cdot c}$ is at least $s(n)(1-\frac{2}{A})$}\right\}
\end{align}

\begin{align}
&\Pr_{y_1,\dots,y_s\la\cA(1^n)}[\top\la\Vrfy(y_1,\dots,y_s)]\\
&\leq \sum_{y_1,\dots,y_s\notin\cT_{n,\beta}}\Pr[y_1,\dots,y_s\la\cA(1^n)] \Pr_{\mathsf{Check}}\left[\mathsf{Count}_{y_1,\dots,y_s}\geq s(n)\left(1-\frac{2}{A}\right)\right]\\
&+\sum_{y_1,\dots,y_s\in\cT_{n,\beta}}\Pr[y_1,\dots,y_s\la\cA(1^n)]\Pr[\mathsf{Success}_{y_1,\dots,y_s}]\\
&\hspace{2cm}\Pr[ -\log_2(\mathsf{Approx}(\{y_i\}_{i\in\cS_{n,2A\cdot c}})) \leq \cM(\{y_i\}_{i\in\cS_{n,2A\cdot c}})+d|\mathsf{Success}_{y_1,\dots,y_s}]\\
&\leq \sum_{y_1,\dots,y_s\notin\cT_{n,\beta}}\Pr[y_1,\dots,y_s\la\cA(1^n)] Small\\
&+\sum_{y_1,\dots,y_s\in\cT_{n,\beta}}\Pr[y_1,\dots,y_s\la\cA(1^n)]\\
&\hspace{2cm}\Pr[ -\log_2(\Pr[\{y_i\}_{i\in\cS_{n,2A\cdot c}}\la\cA(1^n)] )) \leq uK^{t_0(t(n))}(\{y_i\}_{i\in\cS_{n,2A\cdot c}}|1^n)+d]
\end{align}
\end{proof}
\fi

\section{Distribution Verification through Kolmogorov Complexity}
\label{sec:ver-with-K}

In this section, we show how to construct a distribution verification protocol using techniques that leverage Kolmogorov complexity.
\taiga{@Matthew: Is the sentence above grammatically correct? ``a distribution verification'' instead? }
We also show that the existence of OWFs and QEFID implies the hardness of verification of classical distributions and quantum distributions, respectively.

\subsection{Classical Distribution Verification}
\label{sec:OWF}
In this section, we show an equivalence between the existence of OWFs and
hardness of classical distribution verification.
\subsubsection{Efficient Verification of Classical Distributions from the Non-Existence of OWFs}
First, we show how to efficiently verifiy any classical distribution assuming the non-existence of OWFs.

\begin{theorem}[Restatement of \cref{inf:owf}]\label{thm:easiness}
    Assume that infinitely-often OWFs do not exist. 
    Then every uniform PPT algorithm $\cD$, which takes $1^n$ as input and outputs $x\in\bit^{m(n)}$, where $m$ is a polynomial, is adaptively-verifiable with a PPT algorithm with classical-security.
\end{theorem}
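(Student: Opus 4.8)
The plan is to instantiate the $uK^t$-based template from the technical overview and make the verifier efficient using the assumed non-existence of infinitely-often OWFs. First I would fix a polynomial time bound $T=T(n)$ large enough to dominate $t_0$ (from \cref{lem:coding}) of both $t(n)$ and the running time of $\cD(1^n)^{\otimes s}$, and a threshold $\beta(n)=\Theta((\log n)^2)$. From the non-existence of infinitely-often OWFs and \cite{FOCS:ImpLev90,FOCS:HirNan23} I would obtain a single PPT estimator $\cM$ such that, for \emph{every} PPT sampler $\cB$ outputting poly-length strings, $uK^{T}(z|1^n)-1\le\cM(z)\le uK^{T}(z|1^n)+1$ with probability $1-\negl(n)$ over $z\gets\cB(1^n)$; the universality (one $\cM$ good against all such $\cB$) follows by running the estimator against the universal time-bounded distribution and using \cref{lem:coding} to absorb the $2^{O(|\cB|)}$ domination loss into a still-negligible error. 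From the same assumption and \cite{FOCS:ImpLev90,STOC:IlaRenSan22,CHK25} I would obtain a one-sided estimator $\mathsf{Approx}$ for $\cD(1^n)^{\otimes s}$ such that, with probability $1-\negl(n)$ over its coins, $\mathsf{Approx}(z)\le\Pr[z\gets\cD(1^n)^{\otimes s}]$ for all $z$, and, with probability $1-\negl(n)$ over $z\gets\cD(1^n)^{\otimes s}$, $\mathsf{Approx}(z)\ge\tfrac12\Pr[z\gets\cD(1^n)^{\otimes s}]$. The verifier is then: $\Vrfy(1^n,x_1,\dots,x_s)$ runs $\cM$ and $\mathsf{Approx}$ on $(x_1,\dots,x_s)$ and outputs $\top$ iff $\cM(x_1,\dots,x_s)\ge\log_2\frac{1}{\mathsf{Approx}(x_1,\dots,x_s)}-\beta(n)$, where $s=s(n^c,t(n),1/\epsilon(n))$ is a large enough polynomial chosen in the security argument.

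For correctness I would take $(x_1,\dots,x_s)\gets\cD(1^n)^{\otimes s}$ and condition on the $(1-\negl(n))$-probability event that $\cM$ and $\mathsf{Approx}$ succeed and that the incompressibility bound of \cref{thm:kraft} (applied to $\cD^{\otimes s}$ with slack $\alpha_{\mathrm{corr}}(n)=\Theta((\log n)^2)$, whose failure probability $(s(n)m(n)+c)\,2^{-\alpha_{\mathrm{corr}}(n)+1}$ is negligible) holds. Then $\cM(x_1,\dots,x_s)\ge uK^{T}(x_1,\dots,x_s|1^n)-1\ge -\log_2\Pr[(x_1,\dots,x_s)\gets\cD(1^n)^{\otimes s}]-\alpha_{\mathrm{corr}}(n)-1$, whereas $\log_2\frac{1}{\mathsf{Approx}(x_1,\dots,x_s)}\le -\log_2\Pr[(x_1,\dots,x_s)\gets\cD(1^n)^{\otimes s}]+1$, so $\Vrfy$ accepts provided $\beta(n)\ge\alpha_{\mathrm{corr}}(n)+2$. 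Hence the acceptance probability is $1-\negl(n)\ge 1-n^{-c}$.

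For adaptive security, let $\cA$ be a $t(n)$-time uniform \emph{classical} adversary with $\Delta(\mathsf{Marginal}_{\cA}(1^n),\cD(1^n))\ge\epsilon(n)$, set $\alpha_{\mathrm{sec}}(n)=\alpha_{\mathrm{corr}}(n)+3$, and let $q(n)=\Pr[(x_1,\dots,x_s)\gets\cA(1^n)\in A^{\cD}_{n,s,\alpha_{\mathrm{sec}}}]$. Applying \cref{lem:modaaronson} to $\cD$ and $\cG:=\cA$ with its free function parameter set to $1-q(n)$ gives $\epsilon(n)\le\Delta(\cD,\mathsf{Marginal}_{\cA})\le(1-q(n))+\sqrt{\big(\log_2\tfrac{1}{q(n)}+\alpha_{\mathrm{sec}}(n)+C\big)/s(n)}$; a case split on whether $q(n)>1-\epsilon(n)$ then yields $q(n)\le 1-\epsilon(n)+\sqrt{\big(\log_2\tfrac{1}{1-\epsilon(n)}+\alpha_{\mathrm{sec}}(n)+C\big)/s(n)}$, which I make $\le 1-\epsilon(n)+n^{-c}/2$ by taking $s$ a sufficiently large polynomial in $n^c$, $t(n)$ and $1/\epsilon(n)$ (using $\alpha_{\mathrm{sec}}(n)=\Theta((\log n)^2)$). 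I then condition on $(x_1,\dots,x_s)\notin A^{\cD}_{n,s,\alpha_{\mathrm{sec}}}$ together with the success of $\cM$ and the one-sided guarantee of $\mathsf{Approx}$ (total failure $\negl(n)$): since $uK^{T}\le uK^{t_0(t(n))}$, being outside $A^{\cD}$ gives $\cM(x_1,\dots,x_s)\le uK^{T}(x_1,\dots,x_s|1^n)+1< -\log_2\Pr[(x_1,\dots,x_s)\gets\cD(1^n)^{\otimes s}]-\alpha_{\mathrm{sec}}(n)+1$, while $\log_2\frac{1}{\mathsf{Approx}(x_1,\dots,x_s)}\ge -\log_2\Pr[(x_1,\dots,x_s)\gets\cD(1^n)^{\otimes s}]$, so $\cM(x_1,\dots,x_s)<\log_2\frac{1}{\mathsf{Approx}(x_1,\dots,x_s)}-(\alpha_{\mathrm{sec}}(n)-1)\le\log_2\frac{1}{\mathsf{Approx}(x_1,\dots,x_s)}-\beta(n)$ (using $\beta(n)\le\alpha_{\mathrm{sec}}(n)-1$), and $\Vrfy$ rejects. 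Hence the acceptance probability on $\cA$ is at most $q(n)+\negl(n)\le(1-\epsilon(n))+n^{-c}$. The three slacks are consistent since $\alpha_{\mathrm{corr}}+2\le\beta\le\alpha_{\mathrm{sec}}-1$ holds with $\alpha_{\mathrm{sec}}=\alpha_{\mathrm{corr}}+3$, e.g.\ $\alpha_{\mathrm{corr}}=(\log n)^2$, $\beta=(\log n)^2+2$, $\alpha_{\mathrm{sec}}=(\log n)^2+3$.

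I expect the main difficulty to be bookkeeping rather than a single hard step: one must simultaneously meet $\beta\ge\alpha_{\mathrm{corr}}+2$ (correctness), $\beta\le\alpha_{\mathrm{sec}}-1$ (security), $\alpha_{\mathrm{corr}}=\omega(\log n)$ (so \cref{thm:kraft} fails with negligible probability over a $\Theta(s(n)m(n))$-bit string), and a choice of $s$ that drives the $\sqrt{\cdot/s(n)}$ error of \cref{lem:modaaronson} below $n^{-c}/2$ while remaining polynomial in $n^c,t(n),1/\epsilon(n)$; these are reconcilable but require care with edge cases (e.g.\ $\epsilon(n)$ very close to $1$, where one should instead argue rejection directly from $\mathsf{Approx}$ being essentially $0$ on $\cA$'s outputs). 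The one conceptual point needing a careful statement, rather than a hard proof, is the universality of the $uK^{T}$-estimator $\cM$: since $\Vrfy$ must be fixed before the adversary $\cA$ is revealed, $\cM$ has to be accurate over the output of every $t(n)$-time classical sampler and over $\cD^{\otimes s}$, which I obtain from the estimator for the universal time-bounded distribution together with \cref{lem:coding}.
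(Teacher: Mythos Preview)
Your proposal is correct and follows essentially the same approach as the paper: the verifier is exactly the paper's $\Vrfy$ (accept iff $\cM(y_1,\dots,y_s)\ge -\log_2\mathsf{Approx}(y_1,\dots,y_s)-\alpha$), correctness comes from incompressibility (\cref{thm:kraft}) plus the average-case guarantees of $\cM$ and $\mathsf{Approx}$, and security comes from \cref{lem:modaaronson} together with the one-sided worst-case upper bound of $\mathsf{Approx}$ and the universal accuracy of $\cM$. The paper fixes a single time parameter $t_0(t(n))$ and a single slack $\alpha=(\log n)^2$ and absorbs the constants into the choice $s=n^{4c}(\log_2\frac{1}{1-\epsilon(n)}+2(\log n)^2)$, whereas you track three slacks $\alpha_{\mathrm{corr}},\beta,\alpha_{\mathrm{sec}}$ and do an explicit case split in applying \cref{lem:modaaronson}; these are cosmetic differences, and your more explicit bookkeeping is arguably cleaner. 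One small point: your inequality $uK^{T}\le uK^{t_0(t(n))}$ relies on monotonicity of $uK^t$ in $t$, which is not stated in the paper; you can sidestep this entirely by taking $T=t_0(t(n))$ as the paper does (the universal estimator theorem you cite already lets a single $\cM$ work for every PPT source at that time bound, so no separate $T$ is needed).
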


For showing \cref{thm:easiness}, we will use the following \cref{thm:probest,thm:ukest,thm:ukest_better}.

\begin{theorem}[No i.o.-OWFs implies one-sided error probability estimation~\cite{FOCS:ImpLev90,STOC:IlaRenSan22,CHK25}]\label{thm:probest}
    If there do not exist infinitely-often OWFs, then for any polynomial $m$, for any PPT algorithm $\cD$, which takes $1^n$ as input and outputs $x\in\bit^{m(n)}$, and for any constant $c\in \N$, there exists a PPT algorithm $\mathsf{Approx}$ such that for all sufficiently large $n$,
    \begin{align*}
    \Pr_{\substack{x \la \cD(1^n) \\ \mathsf{Approx}}}\left[\frac{1}{2}\Pr[x\la\cD(1^n)]\leq \mathsf{Approx}(x,1^{n})\leq \Pr[x\la\cD(1^n)]\right] \geq 1-n^{-c}.
    \end{align*}

    Furthermore, for all sufficiently large $n$, for all $x\in\bit^{m(n)}$,
    \begin{align*}
        \Pr_{\mathsf{Approx}}[\mathsf{Approx}(x,1^n) \leq \Pr[x\la\cD(1^n)]] \geq 1 - n^{-c}.
    \end{align*}
\end{theorem}

\begin{theorem}[No i.o.-OWFs implies $uK^t$ estimation on all PPT distributions (Theorem 7.1 of \cite{FOCS:HirNan23})]\label{thm:ukest}
    If there do not exist infinitely-often OWFs, then there exists a PPT algorithm $\cM$ such that for any polynomial $m$ and for any PPT algorithm $\cD$, which takes $1^n$ as input and outputs $x\in\bit^{m(n)}$, there exists a polynomial $t_0$, such that for every $t > t_0(n)$ and every $\delta^{-1} \in \N$, for all sufficiently large $n$,

\begin{align*}
    \Pr_{\cM,x \la \cD(1^n)}[{uK^t(x)-1\leq \cM(x,1^t,1^{\delta^{-1}})\leq uK^t(x)}+1] \geq 1-\delta.
\end{align*}
\end{theorem}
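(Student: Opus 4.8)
The plan is to reduce the estimation of $uK^t$ to a \emph{probability-estimation} task for one fixed, $t$-parametrized PPT distribution --- the universal time-bounded distribution --- and then to use the coding theorem \cref{lem:coding} to upgrade a guarantee ``on average over the universal distribution'' into a guarantee ``on average over every PPT distribution $\cD$'' for a \emph{single} estimator $\cM$. For each $t,n$, let $\cP^t$ be the PPT sampler that on input $(1^n,1^t)$ draws $r\la\bit^{t}$, runs $\cU^{t}(r)$, and outputs the result (padded to a fixed length). By definition $\Pr[x\la\cP^t(1^n,1^t)]=2^{-uK^t(x)}$, so estimating $uK^t(x)$ within an additive $\pm 1$ is precisely estimating $p^t(x)\seteq\Pr[x\la\cP^t(1^n,1^t)]$ within a multiplicative factor of $2$. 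Crucially, $\cP^t$ is \emph{one} fixed PPT algorithm that merely reads $1^t$ from its input.

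Since infinitely-often OWFs do not exist, I would apply \cref{thm:probest} to the distribution $\cP^t$ with a fixed constant $c$, obtaining a PPT algorithm $\mathsf{Approx}$ with
\begin{align}
\Pr_{x\la\cP^t(1^n,1^t),\,\mathsf{Approx}}\!\left[\tfrac12\,p^t(x)\le\mathsf{Approx}(x,1^t)\le p^t(x)\right]\ge 1-n^{-c}
\end{align}
and, moreover, $\mathsf{Approx}(x,1^t)\le p^t(x)$ for \emph{every} $x$ with probability $\ge 1-n^{-c}$ over its coins. Define $\cM(x,1^t,1^{\delta^{-1}})$ to output $-\log_2\mathsf{Approx}(x,1^t)$. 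On the good event of the first display we get $uK^t(x)\le\cM(x,1^t,1^{\delta^{-1}})\le uK^t(x)+1$; the one-sided clause additionally makes the lower bound $\cM(x,1^t,1^{\delta^{-1}})\ge uK^t(x)$ hold for every $x$, which already gives $\cM\ge uK^t-1$ unconditionally.

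It remains to replace ``$x\la\cP^t$'' by ``$x\la\cD$'' for an arbitrary target PPT $\cD$ that runs in time $t_\cD(n)$ with outputs in $\bit^{m(n)}$. Let $t_0$ be the polynomial that \cref{lem:coding} attaches to $t_\cD$. For every $t>t_0(n)$, \cref{lem:coding} yields a constant $C_\cD=2^{2|\cD|+O(1)}$ with $p^t(x)=2^{-uK^t(x)}\ge C_\cD^{-1}\Pr[x\la\cD(1^n)]$ for all $x\in\bit^{m(n)}$; that is, $\cP^t$ pointwise dominates $\cD$. Hence, writing $B$ for the (coin-dependent) set of strings on which $\mathsf{Approx}$ fails to land in $[\tfrac12 p^t(x),\,p^t(x)]$, and averaging over $\mathsf{Approx}$'s coins,
\begin{align}
\Pr_{x\la\cD(1^n),\,\mathsf{Approx}}\![x\in B]\;\le\;C_\cD\cdot\Pr_{x\la\cP^t(1^n,1^t),\,\mathsf{Approx}}\![x\in B]\;\le\;C_\cD\cdot n^{-c}.
\end{align}
Since $|\cD|$ --- and so $C_\cD$ --- is a constant independent of $n$, for every $\delta$ and all sufficiently large $n$ (with the threshold depending on $\cD$ and $\delta$) we have $C_\cD\,n^{-c}\le\delta$, so outside $B$ we read off $uK^t(x)-1\le\cM(x,1^t,1^{\delta^{-1}})\le uK^t(x)+1$ with probability $\ge 1-\delta$ over $x\la\cD(1^n)$ and the coins of $\cM$, as required. (The input $1^{\delta^{-1}}$ is, in this argument, only used to determine the threshold on $n$; a variant could instead run $\mathsf{Approx}$ with a $\delta$-dependent precision.)

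The main obstacle is precisely this uniformity step: $\cM$ must be one algorithm serving \emph{all} PPT $\cD$, whereas the coding-theorem loss $C_\cD=2^{O(|\cD|)}$ depends on $\cD$. It is harmless only because $C_\cD$ is a constant in $n$ while \cref{thm:probest} provides, for each constant $c$, an estimator of error $n^{-c}$, so the loss disappears into the ``for all sufficiently large $n$'' slack. Carrying this out carefully --- checking that a single $c$ and a single $\mathsf{Approx}$ handle the whole parametrized family $\{\cP^t\}_t$, and that the one-sidedness of $\mathsf{Approx}$ is preserved so that the clean bound $\cM\ge uK^t-1$ survives --- is where the bulk of the work is.
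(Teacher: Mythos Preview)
The paper does not give its own proof of this statement; it is quoted directly from \cite{FOCS:HirNan23}. Your approach is correct and is exactly the standard argument --- in fact the paper carries out the identical strategy when proving the quantum analogue \cref{thm:qvk_estimate-avg}: apply the probability estimator to the single universal distribution, then use the coding theorem to transfer the average-case guarantee to an arbitrary efficiently samplable $\cD$ at the cost of the constant factor $2^{O(|\cD|)}$, which is absorbed by the ``for all sufficiently large $n$'' slack.
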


For our purposes, we need a slightly modified version of \cref{thm:ukest}. We give this modified version below.
While \cref{thm:ukest} asserts that $uK^t(x)$ can be approximated, \cref{thm:ukest_better} guarantees that $uK^t(x \mid 1^n)$ can be approximated.
In short, \cref{thm:ukest} is proven by estimating the probability $\Pr_{r\la \bit^t}[x\la\mathcal{U}^t(r)]$, which is possible assuming the non-existence of OWFs.
\cref{thm:ukest_better} can be proven by estimating $\Pr_{r\la \bit^t}[x\la\mathcal{U}^t(1^n,r)]$ instead.
Because the proofs are otherwise identical, we omit the proof of the length conditional version.

\begin{theorem}[Length conditional version of $uK^t$ estimation]\label{thm:ukest_better}
    If there do not exist infinitely-often OWFs, then there exists a PPT algorithm $\cM$ such that for any PPT algorithm $\cD$, which takes $1^n$ as input and outputs $x\in\bit^{m(n)}$, there exists a polynomial $t_0$, such that for every $t > t_0(n)$ and every $\delta^{-1} \in \N$, for all sufficiently large $n$,
    \begin{align*}
    \Pr_{\cM,x \la \cD(1^n)}[{uK^{t}(x|1^n)-1\leq \cM(x,1^{n,t,\delta^{-1}})\leq uK^t(x|1^n)+1}] \geq 1-\delta.
    \end{align*}
\end{theorem}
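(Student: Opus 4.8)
The plan is to reprove \cref{thm:ukest} in the length-conditional setting, carrying the auxiliary input $1^n$ through the entire argument; the conditional forms of the coding theorem and of incompressibility proved above, \cref{lem:coding,thm:kraft}, are exactly what make this possible. The key observation is that, by definition, $uK^t(x|1^n)=-\log_2\Pr_{r\la\bit^t}[x\la\cU^t(1^n;r)]$, so estimating $uK^t(x|1^n)$ up to an additive $1$ is the same as estimating, up to a multiplicative factor of $2$, the weight that the \emph{length-conditional universal time-bounded distribution} puts on $x$. Concretely, let $\cE_{n,t}$ be the algorithm that on input $1^n$ samples $r\la\bit^{t(n)}$, runs $\cU$ on $(1^n,r)$ for $t(n)$ steps, and outputs the result (truncated or padded to the relevant output length). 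For a polynomial $t$ this is PPT-samplable, so \cref{thm:probest}, applied to $\cE$, gives a PPT algorithm $\mathsf{Approx}$ that (i) never overestimates, $\mathsf{Approx}(x,1^n)\le\Pr[x\la\cE_{n,t}]$ for all $x$ with high probability over its coins, and (ii) satisfies $\tfrac12\Pr[x\la\cE_{n,t}]\le\mathsf{Approx}(x,1^n)\le\Pr[x\la\cE_{n,t}]$ with high probability over $x\la\cE_{n,t}$ and its coins. The reduction behind \cref{thm:probest} is uniform in unary parameters, so this single $\mathsf{Approx}$ may additionally be fed $1^t$ (equivalently, one folds $t$ into a single universal sampler), which is what gives us a single estimator working for all time bounds at once.

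The estimator is then $\cM(x,1^n,1^t,1^{\delta^{-1}}):=\big\lfloor\log_2(1/\hat p)\big\rfloor$ with $\hat p\la\mathsf{Approx}(x,1^n,1^t)$. Property~(i) immediately yields $\cM(x,\ldots)\ge uK^t(x|1^n)-1$ for every $x$, with high probability over the coins of $\cM$. Property~(ii) yields the matching bound $\cM(x,\ldots)\le uK^t(x|1^n)+1$, but only with high probability over $x\la\cE_{n,t}$, whereas the theorem demands it with high probability over $x\la\cD(1^n)$ for an \emph{arbitrary} PPT sampler $\cD$. This is where universality enters, via the conditional coding theorem \cref{lem:coding}: it provides a constant $c$ and a polynomial $t_0^{\mathrm{cod}}$ such that $\Pr[x\la\cE_{n,t}]\ge 2^{-2|\cD|-c}\Pr[x\la\cD(1^n)]$ for all $x$, whenever $t>t_0^{\mathrm{cod}}(t_\cD(n))$ where $t_\cD$ bounds the running time of $\cD$. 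Hence the distribution of $\cD(1^n)$ is dominated by $\cE_{n,t}$ up to the $\cD$-dependent factor $2^{2|\cD|+c}$, so for any bad event $B$ we have $\Pr_{x\la\cD(1^n)}[x\in B]\le 2^{2|\cD|+c}\Pr_{x\la\cE_{n,t}}[x\in B]$. Running $\mathsf{Approx}$ at error level $n^{-1}$ (i.e.\ with the constant in \cref{thm:probest} equal to $1$), the total failure probability of $\cM$ over $x\la\cD(1^n)$ is at most $\big(1+2^{2|\cD|+c}\big)n^{-1}$, which is below $\delta$ for all sufficiently large $n$ --- the threshold depending only on $\cD$ and $\delta$, exactly as in the statement. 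We take the $t_0$ promised by the theorem to be $t_0^{\mathrm{cod}}\circ t_\cD$, and \cref{thm:kraft} is used here in precisely the same auxiliary way as the incompressibility bound in the proof of \cref{thm:ukest}, to rule out degenerate behaviour of the quantities being estimated.

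I expect the only real difficulty to be quantifier bookkeeping rather than a new mathematical idea: one has to extract a \emph{single} PPT estimator $\cM$, uniform in the time bound $t$, out of the per-distribution estimators supplied by \cref{thm:probest}, and then reconcile the error one can actually guarantee for $\cE_{n,t}$ (roughly $n^{-1}$) with the target error $\delta$ despite the distribution-dependent multiplicative loss $2^{2|\cD|+c}$ coming from the coding theorem. Both are absorbed by letting the guarantees hold only ``for all sufficiently large $n$'', just as in \cref{thm:ukest}. Once this is set up, the rest of the proof is a line-by-line transcription of the proof of \cref{thm:ukest} with $1^n$ hardcoded throughout and with \cref{lem:coding,thm:kraft} substituted for their unconditional counterparts.
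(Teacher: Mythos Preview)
Your proposal is correct and takes the same approach the paper points to: the paper omits the proof, stating only that it is identical to that of \cref{thm:ukest}, and your sketch---applying \cref{thm:probest} to the length-conditional universal sampler $\cE_{n,t}$ and then transferring the average-case guarantee to an arbitrary PPT $\cD$ via the domination supplied by \cref{lem:coding}---is exactly that argument with the auxiliary input $1^n$ carried throughout. One minor remark: your closing invocation of \cref{thm:kraft} is superfluous for this particular estimator, as the argument is complete using only \cref{thm:probest} and \cref{lem:coding}.
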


\begin{proof}[Proof of \cref{thm:easiness}]

In the following, assume that infinitely-often OWFs do no exist.
Then, we give a polynomial $s$ and a PPT algorithm $\Vrfy$ which satisfies the correctness and adaptive soundness in \Cref{def:adaptive}.
For describing $\Vrfy$, let us introduce several notations below.

\paragraph{Notations:}
We set $s(n,t(n),\epsilon(n)) = n^{4c}\left(\log_2\frac{1}{1-\epsilon(n)} + 2(\log(n))^2 \right) $.
% For simplicity, we often denote $s=n^{4c}\left(\log_2\frac{1}{1-\epsilon(n)} + 2(\log(n))^2 \right)$ below.
Let $t_0$ be a polynomial given in \cref{lem:coding}. We also set $\alpha=(\log(n))^2$.

$\mathsf{Approx}$ is a PPT algorithm given in \Cref{thm:probest} such that 
\begin{align*}
\Pr_{x\la\cD(1^n)^{\otimes s}}\left[\frac{1}{2}\Pr[x\la\cD(1^n)^{\otimes s}]\leq \mathsf{Approx}(x,1^{n})\leq \Pr[x\la\cD(1^n)^{\otimes s}]\right] \geq 1-n^{-4c}
\end{align*}
and
\begin{align*}
\Pr[\mathsf{Approx}(x,1^n) \leq \Pr[x\la\cD(1^n)^{\otimes s}]] \geq 1 - n^{-4c}
\end{align*}
for all $x\in\bit^{s\cdot m(n)}$ and all sufficiently large $n\in\N$.

$\cM$ is a PPT algorithm given in \cref{thm:ukest_better} such that, for any PPT algorithm $\cA$, we have
\begin{align*}
\Pr_{x \la \cA(1^n)}[{uK^{t_0(t(n))}(x|1^n)-1\leq \cM(x,1^{t})\leq uK^{t_0(t(n))}(x|1^n)}+1] \geq 1-n^{-4c}
\end{align*}
for all sufficiently large $n\in\N$.

\paragraph{Construction:}
We give a construction of $\Vrfy$.

\begin{description}
    \item[$\Vrfy$:]$ $
    \begin{enumerate}
        \item Take $1^n$ and $y_1,\dots,y_{s(n)}$ as input.
        \item Compute $\cM(y_1,\dots,y_{s(n)},1^{t}) \to k$
        \item Compute $\mathsf{Approx}(y_1,\dots,y_{s(n)}) \to p$
        \item Output $\top$ if $-\log_2 p \leq k + \alpha$.
        Otherwise, output $\bot$.
    \end{enumerate}
\end{description}
In the following, we show that $\Vrfy$ satisfies correctness and security.

\paragraph{Correctness:}
    In the following, we show that
    \begin{equation*}
        \Pr[\top\la\Ver(1^n,y_1,\dots,y_s) : y_1,\dots,y_{s}\la\cD(1^n)^{\otimes s} ] \geq 1 - \frac{1}{n^c}
    \end{equation*}
    for all sufficiently large $n\in\N$.

In the following, we use the following Claims \ref{claim:est_D} to \ref{claim:kraft}.
These directly follow from probabilistic arguments.
For clarity, we describe the proof in the end of proof of correctness.
\begin{claim}\label{claim:est_D}
Let us denote
\begin{align*}
\cI_n\seteq \left\{y_1,\dots,y_s\in(\bit^{m(n)})^{\times s} : \Pr[\mathsf{Approx}(y_1,\dots,y_s)\geq \frac{\Pr[y_1,\dots,y_s\la\cD(1^n)^{\otimes s}]}{2}]\geq 1-n^{-2c}\right\}.
\end{align*}
Then, we have
\begin{align*}
\Pr_{y\la\cD(1^n)^{\otimes s}}[y_1,\dots,y_s\in\cI_n]\geq 1-n^{-2c}
\end{align*}
for all sufficiently large $n\in\N$.
\end{claim}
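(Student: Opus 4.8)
The plan is to derive Claim~\ref{claim:est_D} from the guarantee on $\mathsf{Approx}$ recorded in the Notations paragraph (an instance of \cref{thm:probest}, applied to the $s$-fold product distribution $\cD(1^n)^{\otimes s}$) by a one-line averaging argument. The relevant guarantee is that, for all sufficiently large $n$,
\begin{align}
\Pr_{\substack{x\la\cD(1^n)^{\otimes s}\\\mathsf{Approx}}}\left[\frac{1}{2}\Pr[x\la\cD(1^n)^{\otimes s}]\leq \mathsf{Approx}(x,1^n)\right]\geq 1-n^{-4c},
\end{align}
where the probability is over both the random draw of $x$ and the internal coins of $\mathsf{Approx}$; note that we only need the one-sided lower bound, and the event that this lower bound fails is contained in the event that the two-sided bound of \cref{thm:probest} fails, so it has joint probability at most $n^{-4c}$.

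The single step is Markov's inequality. First I would fix a tuple $x=(y_1,\dots,y_s)$ and write $g(x)\seteq \Pr_{\mathsf{Approx}}\!\left[\mathsf{Approx}(x)\geq \tfrac{1}{2}\Pr[x\la\cD(1^n)^{\otimes s}]\right]$, so that $x\in\cI_n$ exactly when $g(x)\geq 1-n^{-2c}$. Conditioning the joint probability above on $x$ gives $\mathbb{E}_{x\la\cD(1^n)^{\otimes s}}[1-g(x)]\leq n^{-4c}$, and since $1-g(x)\geq 0$, Markov's inequality yields
\begin{align}
\Pr_{x\la\cD(1^n)^{\otimes s}}\!\left[1-g(x)>n^{-2c}\right]\leq \frac{n^{-4c}}{n^{-2c}}=n^{-2c},
\end{align}
i.e. $\Pr_{x\la\cD(1^n)^{\otimes s}}[g(x)\geq 1-n^{-2c}]\geq 1-n^{-2c}$, which is precisely the claim.

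I do not expect any real obstacle here; the only thing to be careful about is the bookkeeping between the joint randomness over $(x,\mathsf{Approx})$ in the hypothesis and the conditional formulation (fix $x$, then take probability over $\mathsf{Approx}$) used in the definition of $\cI_n$, and this is exactly what the Markov step above handles. The same template will be reused essentially verbatim for \cref{claim:est_uK}, with $\mathsf{Approx}$ replaced by the $uK^{t_0(t(n))}$-estimator $\cM$ of \cref{thm:ukest_better}, and — combined with the incompressibility bound \cref{thm:kraft} — for \cref{claim:kraft}.
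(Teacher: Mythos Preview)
Your proposal is correct and takes essentially the same approach as the paper: both derive the claim from the $1-n^{-4c}$ guarantee on $\mathsf{Approx}$ by an averaging argument. The paper writes out the split into $\cI_n$ and its complement explicitly and bounds the conditional success probability on the complement by $1-n^{-2c}$, which after rearranging yields exactly the Markov bound you state; your formulation via $g(x)$ and Markov's inequality is simply a cleaner packaging of the same computation.
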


\begin{claim}\label{claim:est_uK}
Let us denote
\begin{align*}
\mathcal{J}_n\seteq \left\{(y_1,\dots,y_s)\in(\bit^{m(n)})^{\times s} : \Pr[\cM(y_1,\dots,y_s)\geq uK^{t_0(t(n))}(y_1,\dots,y_s|1^n)-1]\geq 1-n^{-2c}\right\}.
\end{align*}
Then, we have
\begin{align*}
\Pr_{y_1,\dots,y_s\la\cD(1^n)^{\otimes s}}[y_1,\dots,y_s\in \mathcal{J}_n]\geq 1-n^{-2c}
\end{align*}
for all sufficiently large $n\in\N$.
\end{claim}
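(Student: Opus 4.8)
The plan is to obtain \cref{claim:est_uK} from \cref{thm:ukest_better} by a routine averaging argument. First I would observe that the sampler $\cA$ which on input $1^n$ runs $\cD$ independently $s(n)$ times and outputs the concatenation $(y_1,\dots,y_{s(n)})\in(\bit^{m(n)})^{\times s(n)}$ is itself a PPT algorithm (since $\cD$ is PPT and $s$ is a polynomial, and its output length $s(n)\,m(n)$ is polynomial in $n$). Applying \cref{thm:ukest_better} to this $\cA$, with the internal accuracy/confidence parameters chosen as in the ``Notations'' paragraph so that the guarantee holds with probability $1-n^{-4c}$, yields
\begin{align}
\Pr_{\cM,\,(y_1,\dots,y_s)\la\cD(1^n)^{\otimes s}}\!\left[\cM(y_1,\dots,y_s)\geq uK^{t_0(t(n))}(y_1,\dots,y_s\mid 1^n)-1\right]\;\geq\;1-n^{-4c}
\end{align}
for all sufficiently large $n$ (I only need the lower estimate, which is implied by the two-sided bound of the theorem).

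Next I would turn this bound over the joint randomness of $\cM$ and the sample into a bound over the sample alone. For each tuple $(y_1,\dots,y_s)$ write $g(y_1,\dots,y_s)\seteq\Pr_{\cM}[\cM(y_1,\dots,y_s)<uK^{t_0(t(n))}(y_1,\dots,y_s\mid 1^n)-1]$ for the probability that $\cM$ underestimates on that fixed input. By definition, $(y_1,\dots,y_s)\notin\mathcal{J}_n$ precisely when $g(y_1,\dots,y_s)>n^{-2c}$, while the displayed inequality says exactly that $\mathbb{E}_{(y_1,\dots,y_s)\la\cD(1^n)^{\otimes s}}[g(y_1,\dots,y_s)]\leq n^{-4c}$. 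Markov's inequality applied to the nonnegative random variable $g$ then gives
\begin{align}
\Pr_{(y_1,\dots,y_s)\la\cD(1^n)^{\otimes s}}\!\left[(y_1,\dots,y_s)\notin\mathcal{J}_n\right]\;=\;\Pr_{(y_1,\dots,y_s)\la\cD(1^n)^{\otimes s}}\!\left[g(y_1,\dots,y_s)>n^{-2c}\right]\;\leq\;\frac{n^{-4c}}{n^{-2c}}\;=\;n^{-2c},
\end{align}
which is the claimed bound. (\cref{claim:est_D} is proved in the same way, with \cref{thm:probest} playing the role of \cref{thm:ukest_better}; there the extra ``never overestimates'' guarantee of $\mathsf{Approx}$ is recorded for later use in the security argument, but for \cref{claim:est_D} itself only the average-case lower estimate on $\mathsf{Approx}$ is needed.)

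There is essentially no genuine obstacle in this step; the only care needed is the bookkeeping of the failure probabilities and time bounds. One must invoke \cref{thm:ukest_better} with its confidence parameter set to $n^{-4c}$ (i.e.\ with $\delta^{-1}$ polynomial in $n$) so that the factor-of-$n^{2c}$ blow-up incurred in the Markov step still leaves a bound of $n^{-2c}$, matching the definition of $\mathcal{J}_n$ and the conclusion. One should also take $t_0$ large enough to dominate both the polynomial supplied by \cref{lem:coding} and the one produced by \cref{thm:ukest_better} for the sampler $\cA$, so that the time bound $t_0(t(n))$ appearing in $\mathcal{J}_n$ is legitimate; all of this holds for all sufficiently large $n$, exactly the regime in which $\Vrfy$ operates.
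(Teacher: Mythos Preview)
Your proposal is correct and matches the paper's approach: the paper states that the proof of \cref{claim:est_uK} is identical to that of \cref{claim:est_D}, which is precisely the averaging/Markov argument you describe, starting from the $1-n^{-4c}$ guarantee on $\cM$ (already fixed in the Notations paragraph) and converting the joint bound into a bound over the sample alone.
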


\begin{claim}\label{claim:kraft}
Let us denote 
\begin{align*}
&\cK_n\seteq\Bigg\{(y_1,\dots,y_{s})\in(\bit^{m(n)})^{\times s}:\\
&\hspace{2cm}-\log_2 (\Pr[y_1,\dots,y_s\la\cD(1^n)^{\otimes s}]) \geq uK^{t_0(t(n))}(y_1,\dots,y_s|1^n) + \alpha - 2\Bigg\}.
\end{align*}
Then, we have
\begin{align*}
\Pr_{y_1,\dots,y_s\la\cD(1^n)^{\otimes s}}[y_1,\dots,y_s\in\cK_n]\leq \frac{1}{n^{2c}}
\end{align*}
for all sufficiently large $n\in\N$.
\end{claim}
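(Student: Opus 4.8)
The plan is to obtain the claim as a direct instance of the incompressibility theorem \cref{thm:kraft}, applied not to $\cD$ itself but to its $s$-fold product. First I would view $\cD(1^n)^{\otimes s}$ as a single algorithm that on input $1^n$ runs $\cD(1^n)$ independently $s(n)$ times and outputs the concatenation $(y_1,\dots,y_{s(n)})\in\bit^{s(n)\cdot m(n)}$; since $s$ and $m$ are polynomials, its output length $s(n)m(n)$ is again a polynomial in $n$, so this product algorithm is a legitimate input to \cref{thm:kraft}. I would also record the trivial fact that $t_0(t(n))>n$ for all sufficiently large $n$ (using $t(n)\ge n$ without loss of generality and taking the polynomial $t_0$ of \cref{lem:coding} to satisfy $t_0(k)\ge k$), as required by the hypothesis of \cref{thm:kraft}.

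Next I would invoke \cref{thm:kraft} with this product algorithm, the time bound $t_0(t(n))$, and the parameter $\alpha-2=(\log n)^2-2>0$, obtaining
\begin{align}
&\Pr_{y_1,\dots,y_s\la\cD(1^n)^{\otimes s}}\!\left[uK^{t_0(t(n))}(y_1,\dots,y_s|1^n)\le -\log_2\Pr[y_1,\dots,y_s\la\cD(1^n)^{\otimes s}]-(\alpha-2)\right]\notag\\
&\hspace{2.8cm}\le (s(n)m(n)+c')\cdot 2^{-(\alpha-2)+1},
\end{align}
where $c'$ is the universal constant provided by \cref{thm:kraft}. Rearranging the inequality inside the probability shows that the event there is exactly $(y_1,\dots,y_s)\in\cK_n$, so the right-hand side bounds $\Pr_{y_1,\dots,y_s\la\cD(1^n)^{\otimes s}}[(y_1,\dots,y_s)\in\cK_n]$.

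Finally I would simplify the bound to $(s(n)m(n)+c')\cdot 2^{3-(\log n)^2}$: the first factor is a fixed polynomial in $n$, while $2^{-(\log n)^2}$ decays faster than any inverse polynomial, so the product is at most $n^{-2c}$ for all sufficiently large $n$, which is precisely the claimed bound. I do not expect any real obstacle here; the only steps that merit a moment's care are checking that passing to the product distribution keeps the parameters polynomial (so that \cref{thm:kraft} really applies) and that the slack $\alpha-2$ is positive for large $n$, both of which are immediate.
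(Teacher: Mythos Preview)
Your proposal is correct and essentially identical to the paper's proof: both apply \cref{thm:kraft} to the $s$-fold product $\cD(1^n)^{\otimes s}$ (output length $s(n)m(n)$) with time bound $t_0(t(n))$ and slack parameter $\alpha-2$, obtaining the bound $(s(n)m(n)+C)\cdot 2^{-\alpha+3}=8(s(n)m(n)+C)n^{-\log n}\le n^{-2c}$ for large $n$. Your version is slightly more careful in checking the hypotheses ($t_0(t(n))>n$ and $\alpha-2>0$), which the paper leaves implicit.
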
 
From union bound, we have
\begin{align*}
    &\Pr_{y_1,\dots,y_s\la\cD(1^n)^{\otimes s}}[\bot\la\Ver(1^n,y_1,\dots,y_s)]\\
    &\leq\sum_{y_1,\dots,y_s\notin \cI_n\cap\mathcal{J}_n}\Pr[y_1,\dots,y_s\la\cD(1^n)^{\otimes s(n)}]\\
    &+\sum_{y_1,\dots,y_s\in \cI_n\cap\mathcal{J}_n}
    \Pr[y_1,\dots,y_s\la\cD(1^n)^{\otimes s(n)}]
    \Pr_{\mathsf{Approx},\cM}[-\log_2 \left(\mathsf{Approx}(y_1,\dots,y_s)\right)\geq \cM(y_1,\dots,y_s)+\alpha]\\
    &\leq\frac{2}{n^{2c}}\\
    &+\sum_{y_1,\dots,y_s\in \cI_n\cap\mathcal{J}_n}
    \Pr[y_1,\dots,y_s\la\cD(1^n)^{\otimes s(n)}]
    \Pr_{\mathsf{Approx},\cM}[-\log_2 \left(\mathsf{Approx}(y_1,\dots,y_s)\right)\geq \cM(y_1,\dots,y_s)+\alpha]\\
    &\leq\frac{2}{n^{2c}}+\frac{2}{n^{2c}}\\
    &+ 
    \sum_{y_1,\dots,y_s\in \cI_n\cap\mathcal{J}_n}
    \Pr[y_1,\dots,y_s\la\cD(1^n)^{\otimes s(n)}]\\
    &\hspace{2cm}\Pr[-\log_2 \left(\Pr[y_1,\dots,y_s\la\cD(1^n)^{\otimes s}]\right)\geq uK^{t_0(t(n))}(y_1,\dots,y_s|1^n)+\alpha-2]
    \\
    &\leq\frac{4}{n^{2c}}+\sum_{y_1,\dots,y_s\in\cI_n\cap\mathcal{J}_n\cap\cK_n}\Pr[y_1,\dots,y_s\la\cD(1^n)^{\otimes s}]\\
    &\leq \frac{5}{n^{2c}}
    \end{align*}
for all sufficiently large $n\in\N$, which completes the proof of the correctness. 
Here, in the second inequality, we have used Claims~\ref{claim:est_D} and \ref{claim:est_uK}, in the third inequality, we have used the definition of $\cI_n$ and $\mathcal{J}_n$ and in the final inequality, we have used Claim~\ref{claim:kraft}.

\begin{proof}[Proof of Claim~\ref{claim:est_D}]
We have
\begin{align*}
1-n^{-4c}
&\leq\Pr_{y_1,\dots,y_s\la\cD(1^n)^{\otimes s}}\left[\frac{\Pr[y_1,\dots,y_s\la\cD(1^n)^{\otimes s}]}{2} \leq \mathsf{Approx}(y_1,\dots,y_s)\right]\\
&=\sum_{y_1,\dots,y_s\in\cI_n}\Pr[y_1,\dots,y_s\la\cD(1^n)^{\otimes s}]\Pr[\frac{\Pr[y_1,\dots,y_s\la\cD(1^n)^{\otimes s}]}{2} \leq \mathsf{Approx}(y_1,\dots,y_s)]\\
&+\sum_{y_1,\dots,y_s\notin\cI_n}\Pr[y_1,\dots,y_s\la\cD(1^n)^{\otimes s}]\Pr[\frac{\Pr[y_1,\dots,y_s\la\cD(1^n)^{\otimes s}]}{2} \leq \mathsf{Approx}(y_1,\dots,y_s)]\\
&\leq\sum_{y_1,\dots,y_s\in\cI_n}\Pr[y_1,\dots,y_s\la\cD(1^n)^{\otimes s}]+\sum_{y_1,\dots,y_s\notin\cI_n}\Pr[y_1,\dots,y_s\la\cD(1^n)^{\otimes s}](1-n^{-2c})
\end{align*}
for all sufficiently large $n\in\N$.
This implies that
\begin{align*}
1-n^{-2c}\leq \sum_{y_1,\dots,y_s\in\cI_n}\Pr[y_1,\dots,y_s\la\cD(1^n)^{\otimes s}]
\end{align*}
for all sufficiently large $n\in\N$.
\end{proof}

\begin{proof}[Proof of Claim~\ref{claim:est_uK}]
The proof of Claim~\ref{claim:est_uK} is the same as Claim~\ref{claim:est_D}.
More precisely, Claim~\ref{claim:est_D} analyzes $\mathsf{Approx}$, $\frac{\Pr[y_1,\dots,y_s \leftarrow \cD(1^n)^{\otimes s}]}{2}$, and $\cI_n$.
The proof can be adapted by performing the same analysis with $M$, $uK^t(x)$, and $\mathcal{J}_n$ in place of these quantities.
\end{proof}

\begin{proof}[Proof of Claim~\ref{claim:kraft}]
From Claim~\ref{thm:kraft}, we have
\begin{align*}
&\Pr_{y_1,\dots,y_s\la\cD(1^n)^{\otimes s}}[-\log_2(\Pr[y_1,\dots,y_s\la\cD(1^n)^{\otimes s}])\geq uK^{t_0(t(n))} (y_1,\dots,y_s|1^n)+\alpha-2]\\
&\leq (m(n)\cdot s+C)2^{-\alpha+3}
\leq 8(m(n)\cdot s+C)n^{-\log(n)}\leq n^{-2c}
\end{align*}
for all sufficiently large $n\in\N$.
This implies that
\begin{align*}
\Pr_{y_1,\dots,y_s\la\cD(1^n)^{\otimes s}}[y_1,\dots,y_s\in\cK_n]\leq \frac{1}{n^{2c}}
\end{align*}
for all sufficiently large $n\in\N$.
\end{proof}

\paragraph{Adaptive-Soundness:}

In the following, we prove the soundness.
In other words, for any $t$-time probabilistic adversary $\cA$, which takes $1^n$ as input and outputs $x\in\bit^{m(n)\cdot s}$, and satisfies $\Delta(\mathsf{Marginal}_\cA(1^n),\cD(1^n))\geq \epsilon(n)$, we show that
\begin{align*}
\Pr[\top\la\Vrfy(1^n,y_1,\dots,y_s) :y_1,\dots,y_s\la\cA(1^n)]\leq 1-\epsilon(n)+n^{-c}
\end{align*}
for all sufficiently large $n\in\N$.

We have the following Claims~\ref{claim:uK_est_2} and \ref{claim:coding}.
We can prove Claim~\ref{claim:uK_est_2} in the same way as Claim~\ref{claim:est_uK}.
Claim~\ref{claim:coding} directly follows from \cref{lem:coding} and $s=n^{4c}\left(\log_2\frac{1}{1-\epsilon(n)}+2(\log(n))^2\right)$.

\begin{claim}\label{claim:uK_est_2}
Let us denote $\cL_n$
\begin{align*}
\cL_n\seteq \left\{y_1,\dots,y_s\in\bit^{m(n)\cdot s}: \Pr\left[\cM(y_1,\dots,y_s)\leq uK^{t_0(t(n))}(y_1,\dots,y_s|1^n)+1\right]\geq 1-n^{-2c}\right\}.
\end{align*}
Then, we have
\begin{align*}
\Pr_{y_1,\dots,y_s\la\cA(1^n)^{\otimes s}}[y_1,\dots,y_s\in\cL_n]\geq 1-n^{-2c}
\end{align*}
% \matthew{[PRIORITY] Should be $y_1,\dots,y_s\la\cA(1^n)$ above. I'm basically certain.}
% \taiga{I agree. If I were correct, $y_1,\dots,y_s\la\cD(1^n)^{\otimes s}$ should be $y_1,\dots,y_s\la\cA(1^n)$.}
% for all sufficiently large $n\in\N$.
\end{claim}

\begin{claim}\label{claim:coding}
Let us denote 
\begin{align*}
\cN_n\seteq \left\{y_1,\dots,y_s\in\bit^{m(n)\cdot s}:-\log_2\left(\Pr[y_1,\dots,y_s\la\cD(1^n)^{\otimes s}]\right)\leq uK^{t_0(t(n))}(y_1,\dots,y_s|1^n)+1+\alpha\right\}.
\end{align*}
Then, for any PPT algorithm $\cA$ such that
\begin{align*}
\Delta(\mathsf{Marginal}_{\cA}(1^n),\cD(1^n))\geq \epsilon(n),
\end{align*}
we have
\begin{align*}
\Pr_{y_1,\dots,y_s\la\cA(1^n)}[y_1,\dots,y_s\in\cN_n]\leq 1-\epsilon(n)+n^{-2c}
\end{align*}
for all sufficiently large $n\in\N$.
\end{claim}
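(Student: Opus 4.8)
The plan is to derive \cref{claim:coding} from the quantitative bound of \cref{lem:modaaronson} by a contrapositive argument. First I would note that, with the choice $\alpha=(\log n)^2$, the set $\cN_n$ is literally $A^{\cD}_{n,s,\alpha+1}$ in the notation of \cref{lem:modaaronson}: indeed $\prod_{i\in[s]}\Pr[y_i\la\cD(1^n)]=\Pr[(y_1,\dots,y_s)\la\cD(1^n)^{\otimes s}]$, so $-\log_2\Pr[(y_1,\dots,y_s)\la\cD(1^n)^{\otimes s}]=\log_2\frac{1}{p_{y_1}\cdots p_{y_s}}$, and since $\cA$ runs in time $t(n)$ the complexity term $uK^{t_0(t(n))}$ appearing in $\cN_n$ is exactly the one appearing in $A^{\cD}_{n,s,\alpha+1}$.

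Set $\epsilon'(n)\seteq\Pr_{(y_1,\dots,y_s)\la\cA(1^n)}[(y_1,\dots,y_s)\notin\cN_n]$, so that the probability to be bounded equals $1-\epsilon'(n)$; it therefore suffices to prove $\epsilon'(n)\ge\epsilon(n)-n^{-2c}$ for all sufficiently large $n$. Suppose not, i.e. $\epsilon'(n)<\epsilon(n)-n^{-2c}$ for some large $n$; in particular $\epsilon'(n)<1$, so $\log_2\frac{1}{1-\epsilon'(n)}$ is a finite nonnegative number. Apply \cref{lem:modaaronson} with $\cG=\cA$, with its parameter ``$\epsilon$'' set to $\epsilon'(n)$ and its parameter ``$\alpha$'' set to $\alpha(n)+1$; since $\Pr[(y_1,\dots,y_s)\in A^{\cD}_{n,s,\alpha+1}]=1-\epsilon'(n)$ the hypothesis of the lemma holds, and we obtain a constant $C$ with $\SD(\cD(1^n),\mathsf{Marginal}_{\cA}(1^n))\le\epsilon'(n)+\sqrt{\bigl(\log_2\frac{1}{1-\epsilon'(n)}+(\log n)^2+1+C\bigr)/s(n)}$. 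Combining this with the claim's hypothesis $\SD(\cD(1^n),\mathsf{Marginal}_{\cA}(1^n))\ge\epsilon(n)$ and with $\epsilon(n)-\epsilon'(n)>n^{-2c}$, then squaring, rearranges to $s(n)<n^{4c}\bigl(\log_2\frac{1}{1-\epsilon'(n)}+(\log n)^2+1+C\bigr)$.

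To close the loop, I would use monotonicity and the slack built into $s$: since $\epsilon'(n)<\epsilon(n)<1$ we have $\log_2\frac{1}{1-\epsilon'(n)}\le\log_2\frac{1}{1-\epsilon(n)}$, and $1+C\le(\log n)^2$ for all sufficiently large $n$; hence the right-hand side above is at most $n^{4c}\bigl(\log_2\frac{1}{1-\epsilon(n)}+2(\log n)^2\bigr)=s(n)$, contradicting $s(n)<s(n)$. Therefore $\epsilon'(n)\ge\epsilon(n)-n^{-2c}$, i.e. $\Pr_{(y_1,\dots,y_s)\la\cA(1^n)}[(y_1,\dots,y_s)\in\cN_n]=1-\epsilon'(n)\le 1-\epsilon(n)+n^{-2c}$, which is the claim. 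The main (and essentially only) obstacle is bookkeeping: one must check that the potentially large term $\log_2\frac{1}{1-\epsilon'(n)}$ stays under control — which is exactly what the contradiction hypothesis $\epsilon'(n)<\epsilon(n)$ delivers — and that the additive $1$ incurred by passing from $\alpha$ to $\alpha+1$, together with the constant $C$ from \cref{lem:modaaronson}, fits inside the extra $(\log n)^2$ term in the definition of $s$; beyond this, no estimate other than \cref{lem:modaaronson} is needed.
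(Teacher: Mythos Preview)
Your proposal is correct and is essentially the approach the paper has in mind: the paper omits the proof, saying only that the claim ``directly follows from \cref{lem:coding} and $s=n^{4c}\bigl(\log_2\frac{1}{1-\epsilon(n)}+2(\log n)^2\bigr)$'' --- almost certainly a slip for \cref{lem:modaaronson}, whose proof in the appendix is where \cref{lem:coding} enters --- and your contrapositive application of \cref{lem:modaaronson} together with the slack in $s$ is exactly how to unpack that remark.
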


From union bound,
\begin{align*}
&\Pr_{y_1,\dots,y_s\la\cA(1^n)}[\top\la\Vrfy(1^n,y_1,\dots,y_s)]\\
&\leq\sum_{y_1,\dots,y_s\notin \cL_n}\Pr[y_1,\dots,y_s\la\cA(1^n)]\\
&+\sum_{y_1,\dots,y_s\in \cL_n}\Pr[y_1,\dots,y_s\la\cA(1^n)]\Pr[-\log_2\left(\mathsf{Approx}(y_1,\dots,y_s)\right)\leq \cM(y_1,\dots,y_s)+\alpha]\\
&\leq n^{-2c}\\
&+\sum_{y_1,\dots,y_s\in \cL_n}\Pr[y_1,\dots,y_s\la\cA(1^n)]\Pr[-\log_2\left(\mathsf{Approx}(y_1,\dots,y_s)\right)\leq \cM(y_1,\dots,y_s)+\alpha]\\
&\leq 3n^{-2c}\\
&+\sum_{y_1,\dots,y_s\in \cL_n}\Pr[y_1,\dots,y_s\la\cA(1^n)]\Pr[-\log_2\left(\Pr[y_1,\dots,y_s\la\cD(1^n)]\right)\leq uK^{t_0(t(n))}(y_1,\dots,y_s|1^n)+\alpha+1]\\
&\leq 4n^{-2c}+\sum_{y_1,\dots,y_s\in \cL_n\cap\cN_n}\Pr[y_1,\dots,y_s\la\cA(1^n)]\\
&\leq 1-\epsilon(n)+4n^{-2c} \leq 1-\epsilon(n)+n^{-c}
\end{align*}
for all sufficiently  large $n\in\N$. 
% \matthew{[PRIORITY] I have two questions about the last four or so lines in the above proof. 1) Why did it jump from $n^{-2c}$ to $3n^{-2c}$ instead of $2n^{-2c}$? and 2) How did it drop back down to $n^{-2c}$? I'm fairly sure that it should have been $n^{-2c}$ to $2n^{-2c}$ to $3n^{-2c}$, which is still fine, because that is still less than $n^{-c}$}
% \taiga{For (1), we need to stick to $3n^{-2c}$. This term comes from error for $\cM$ and $\mathsf{Approx}$. Both evaluation has error at least $n^{-2c}$.}
\taiga{For (2), I think that you are right. Please feel free to rewrite this.}
Here, in the second inequality, we have used Claim~\ref{claim:uK_est_2}, in the third inequality, we have used the definition of $\cL_n$ and the definition of $\mathsf{Approx}$, and in the final inequality, we have used Claim~\ref{claim:coding}.
This completes the proof of security.
\end{proof}

\if0
\begin{proof}[Proof of \cref{thm:easiness}]

In the following, assume that infinitely-often OWFs do not exist.
Then, we give a polynomial $s$ and a PPT algorithm $\Vrfy$ which satisfies the correctness and adaptive-soundness defined in \Cref{def:adaptive}.
For describing $\Vrfy_{n,,t,\epsilon,c}$, we introduce several notations below.

\paragraph{Notations:}
We set $s(n,t(n),\epsilon(n)) = n^{4c}\left(\log_2\frac{1}{1-\epsilon(n)} + 2(\log(n))^2 \right) $.
For simplicity, we often denote $s=n^{4c}\left(\log_2\frac{1}{1-\epsilon(n)} + 2(\log(n))^2 \right)$ below.
Let $t_0$ be a polynomial given in \cref{lem:coding}.

We also set $\alpha=(\log(n))^2$.

$\mathsf{Approx}$ is a PPT algorithm given in \Cref{thm:probest} such that 
\begin{align*}
\Pr_{x\la\cD(1^n)^{\otimes s}}\left[\frac{1}{2}\Pr[x\la\cD(1^n)^{\otimes s}]\leq \mathsf{Approx}(x,1^{n})\leq \Pr[x\la\cD(1^n)^{\otimes s}]\right] \geq 1-n^{-2c}
\end{align*}
and
\begin{align*}
\Pr[\mathsf{Approx}(x,1^n) \leq \Pr[x\la\cD(1^n)^{\otimes s}]] \geq 1 - n^{-2c}
\end{align*}
for all $x\in\bit^{s\cdot m(n)}$ and all sufficiently large $n\in\N$.

$\cM$ is a PPT algorithm given in \cref{thm:ukest_better} such that, for any PPT algorithm $\cA$, we have
$$\Pr_{x \la \cA(1^n)}[{uK^{t_0(t(n))}(x|1^n)-1\leq \cM(x,1^{t})\leq uK^{t_0(t(n))}(x|1^n)}+1] \geq 1-n^{-2c}$$
for all sufficiently large $n\in\N$.

\paragraph{Construction:}
We give a construction of $\Vrfy_{n,t,\epsilon,c}$.

\begin{description}
    \item[$\Vrfy_{n,t,\epsilon,c}$:]$ $
    \begin{enumerate}
        \item Take $y_1,\dots,y_{s(n)}$ as input.
        \item Compute $\cM(y_1,\dots,y_{s(n)},1^{t}) \to k$
        \item Compute $\mathsf{Approx}(y_1,\dots,y_{s(n)}) \to p$
        \item Output $\top$ if $-\log_2 p \leq k + \alpha$.
        Otherwise, output $\bot$.
    \end{enumerate}
\end{description}
In the following, we show that $\Vrfy_{n,t,\epsilon,c}$ satisfies correctness and soundness.

\paragraph{Correctness:}
    In the following, we show that
    \begin{equation*}
        \Pr[\top\la\Ver_{n,t,\epsilon,c}(y_1,\dots,y_s) : y_1,\dots,y_{s}\la\cD(1^n)^{\otimes s} ] \geq 1 - \frac{1}{n^c}
    \end{equation*}
    for all sufficiently large $n\in\N$.
    
    From \Cref{thm:probest}, we have
    \begin{align*}
        \Pr_{y_1,\dots y_s\la\cD(1^n)^{\otimes s}}[\mathsf{Approx}(y_1,\dots,y_s) \leq \Pr[y_1,\dots y_s\la\cD(1^n)^{\otimes s} ]/2] \leq n^{-2c}
    \end{align*}
    for all sufficiently large $n\in\N$.
    Furthermore, from \Cref{thm:ukest_better}, we have
    \begin{equation}
        \Pr_{y_1,\dots,y_s\la\cD(1^n)^{\otimes s} }[\cM(y_1,\dots,y_s) \leq uK^{t_0(t(n))}(y_1,\dots,y_s|1^n)-1] \leq n^{-2c}
    \end{equation}
    for all sufficiently large $n\in\N$.

    From union bound, we have
    \begin{align*}
        &\Pr_{y_1,\dots,y_s\la\cD(1^n)^{\otimes s}}[\bot\la\Ver_{n,t,\epsilon,c}(y_1,\dots,y_s)]\\
        &= \Pr_{y_1,\dots,y_s\la\cD(1^n)^{\otimes s}}[-\log_2 \left(\mathsf{Approx}(y_1,\dots,y_s)\right)\geq \cM(y_1,\dots,y_s)+\alpha]\\
        &\leq \Pr_{y_1,\dots,y_s\la \cD(1^n)^{\otimes s} }\left[-\log_2 \left(\frac{\Pr[y_1,\dots,y_s\la\cD(1^n)^{\otimes s}]}{2}\right)  \geq uK^{t_0(t(n))}(y_1,\dots,y_s|1^n) + \alpha-1\right] + \frac{1}{n^{2c}} + \frac{1}{n^{2c}}\\
        &\leq\Pr_{y_1,\dots,y_s\la\cD(1^n)^{\otimes s}}\left[-\log_2 (\Pr[y_1,\dots,y_s\la\cD(1^n)^{\otimes s}]) \geq uK^{t_0(t(n))}(y_1,\dots,y_s|1^n) + \alpha - 2\right] + \frac{2}{n^{2c}}\\
        &\leq \frac{(s\cdot m(n)+C)}{n^{\log(n)}} +\frac{2}{n^{2c}}
        \leq \frac{1}{n^{2c}}+\frac{2}{n^{2c}}\leq \frac{1}{n^{c}}
    \end{align*}
    for all sufficiently large $n\in\N$, where $C$ is a constant.
    Here, in the third inequality, we have used \cref{thm:kraft} and $\alpha=(\log(n))^2$.

\paragraph{Adaptive-Soundness:}

In the following, we prove the soundness.
In other words, for any $t$-time probabilistic adversary $\cA$, which takes $1^n$ as input and outputs strings of length $m(n)\cdot s$, and satisfies $\Delta(\mathsf{Marginal}_\cA(1^n),\cD(1^n))\geq \epsilon(n)$, we show that
\begin{align*}
\Pr[\top\la\Vrfy_{n,t,\epsilon,c}(y_1,\dots,y_s) :y_1,\dots,y_s\la\cA(1^n)]\leq 1-\epsilon(n)+n^{-c}
\end{align*}
for all sufficiently large $n\in\N$.
\if0
    For this,
    it is sufficient to give the upper bound of
    \begin{align*}
        \Pr\left[-\log_2 p \leq k + \alpha:
        \begin{array}{ll}
             & \cA(1^n)\ra y_1,\dots,y_s \\
             & \mathsf{Approx}(y_1,\dots,y_s) \ra p\\
             & \cM(y_1,\dots,y_s) \ra k
        \end{array}
        \right].
    \end{align*}
\fi

    From~\Cref{thm:ukest_better}, we have 
    \begin{align*}
    \Pr[\cM(y_1,\dots,y_s)\geq uK^{t_0(t(n))}(y_1,\dots,y_s|1^n)+1
    :
     y_1,\dots,y_s\la\cA(1^n) 
    ] \leq n^{-2c}
    \end{align*}
    for all sufficiently large $n\in\N$.
    
    From the worst case guarantee in ~\Cref{thm:probest} we have
    \begin{align*}
        \forall (y_1,...,y_s):\Pr_{\mathsf{Approx}}\left[\mathsf{Approx}(y_1,\dots,y_s) \geq \Pr[y_1,\dots,y_s\la\cD(1^n)^{\otimes s} ]
        \right] \leq n^{-2c}
    \end{align*}
    for all sufficiently large $n\in\N$.

    From union bound,
    \matthew{Most of these lines are missing that the probability is also over $\mathcal{M}$ or $\mathsf{Approx}$.}
    \begin{align*}
        &\Pr_{y_1,\dots,y_s\la\cA(1^n)}\left[-\log_2 \left(\mathsf{Approx}(y_1,\dots,y_s) \right)\leq \cM(y_1,\dots,y_s) + \alpha  
        \right]\\
        &\leq \Pr_{y_1,\dots,y_s\la\cA(1^n)}[-\log_2 \left(\mathsf{Approx}(y_1,\dots,y_s)\right) \leq uK^{t(n)}(y_1,\dots,y_s|1^n) + 1 + \alpha
        ] \nonumber\\
        &+\Pr_{y_1,\dots,y_s\la\cA(1^n)}[\cM(y_1,\dots,y_s)\geq uK^{t(1^n)}(y_1,\dots,y_s|1^n)+1]\\
        &\leq \Pr_{y_1,\dots,y_s\la\cA(1^n)}[-\log_2 \left(\mathsf{Approx}(y_1,\dots,y_s)\right) \leq uK^{t(n)}(y_1,\dots,y_s|1^n) + 1 +\alpha
        ] + n^{-2c}\\
        &\leq \Pr_{y_1,\dots,y_s\la\cA(1^n)}\left[-\log_2 \left(\Pr[y_1,\dots,y_s\la\cD(1^n)^{\otimes s}]\right) \leq uK^{t_0(t(n))}(y_1,\dots,y_s|1^n)+1+\alpha
        \right] \nonumber\\
        &+ \Pr\left[-\log_2 \left(\mathsf{Approx}(y_1,\dots,y_s)\right)\leq -\log_2 (\Pr[y_1,\dots,y_s\la\cD(1^n)^{\otimes s} ])
        :
        y_1,\dots,y_s\la\cA(1^n) 
        \right] + n^{-2c}\\
        &\leq \Pr_{y_1,\dots,y_s\la\cA(1^n)}\left[-\log (\Pr[y_1,\dots,y_s\la\cD(1^n)^{\otimes s}]) \leq uK^{t_0(t(n))}(y_1,\dots,y_s|1^n)+1+\alpha
        \right]+2n^{-2c}
    \end{align*}
    for all sufficiently large $n\in\N$.
    
    Furthermore, from \Cref{lem:modaaronson}, for any $t(n)$-time adversary $\cA$ such that
    \begin{align*}
        \Delta(\mathsf{Marginal}_{\cA}(1^n),\cD(1^n))\geq \epsilon(n),
    \end{align*}
    we have
    \begin{align*}
        &\Pr_{y_1,\dots,y_s\la\cA(1^n)}\left[-\log \Pr[y_1,\dots,y_s\la\cD(1^n)^{\otimes s}] \leq uK^{t_0(t(n))}(y_1,\dots,y_s|1^n)+1+\alpha
        \right]\\
        &\leq 1-\epsilon(n)+ \sqrt{\frac{\log_2\frac{1}{1-\epsilon(n)}+(\log(n))^2+C}{n^{4c}\left(\log_2\frac{1}{1-\epsilon(n)}+2(\log(n))^2 \right) }}   \\
        &\leq 1-\epsilon(n) + n^{-2c}
    \end{align*}
    for all sufficiently large $n\in\N$. 
    This completes the proof of soundness.
\end{proof}
\fi

\subsubsection{Hardness of Verification of Classical Distributions from OWFs}

\begin{proposition}[Restatement of \Cref{inf:hard_owf}]\label{thm:hardness}
Consider any PPT algorithm $\cD$ that takes $1^n$ as input and outputs an $x\in\bit^{m(n)}$ with a polynomial $m$.
Suppose that $H(\cD(1^n))\geq n^{1/c}$ for a constant $c>0$.  
If OWFs exist, then $\cD$ is not selectively-verifiable with a PPT algorithm with classical-security.
\end{proposition}

For showing \cref{thm:hardness}, we will use the following Lemma~\ref{lem:entropy-distance}.
\begin{lemma}[Fannes-Inequality~\cite{NielsenChuang}]\label{lem:entropy-distance} For any pair of random variables $X$ and $Y$ over domain $U$
\begin{align*}
|H(X)-H(Y)| \leq \Delta(X,Y)\cdot\log(|U|)+1/e.
\end{align*}
\end{lemma}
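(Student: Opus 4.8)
The plan is to give the short entropy-concavity proof of Fannes' inequality and then recover the sharp additive constant. Write $p_u \coloneqq \Pr[X=u]$ and $q_u \coloneqq \Pr[Y=u]$ for $u$ ranging over $U$, and let $\Delta \coloneqq \Delta(X,Y) = \tfrac12\sum_u |p_u-q_u|$. First I would dispose of the boundary cases: if $\Delta=0$ then $X$ and $Y$ are identically distributed and the claim is trivial, and if $\Delta=1$ then $p$ and $q$ have disjoint supports, so $H(X),H(Y)\le \log|U|$ and hence $|H(X)-H(Y)|\le \log|U| = \Delta\log|U|$. For $\Delta\in(0,1)$, set $m_u\coloneqq\min(p_u,q_u)$, $a_u\coloneqq p_u-m_u$, and $b_u\coloneqq q_u-m_u$; then $\sum_u m_u = 1-\Delta$ and $\sum_u a_u=\sum_u b_u=\Delta$ (the latter because $\sum_u(p_u-q_u)=0$). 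Normalizing, $p=(1-\Delta)\hat m+\Delta\hat a$ and $q=(1-\Delta)\hat m+\Delta\hat b$ with $\hat m,\hat a,\hat b$ genuine probability vectors.

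Next I would apply two textbook properties of Shannon entropy. Concavity gives $H(Y)=H\big((1-\Delta)\hat m+\Delta\hat b\big)\ge (1-\Delta)H(\hat m)+\Delta H(\hat b)$, while the mixture (grouping) bound $H(\sum_i\lambda_i\nu_i)\le \sum_i\lambda_iH(\nu_i)+H(\{\lambda_i\})$ applied to the two-term mixture gives $H(X)=H\big((1-\Delta)\hat m+\Delta\hat a\big)\le (1-\Delta)H(\hat m)+\Delta H(\hat a)+H_b(\Delta)$, where $H_b$ denotes binary entropy. Subtracting these, and using $H(\hat a)\le \log|U|$ together with $H(\hat b)\ge 0$, yields $H(X)-H(Y)\le \Delta\log|U|+H_b(\Delta)$; exchanging the roles of $X$ and $Y$ gives the same bound for $H(Y)-H(X)$, hence $|H(X)-H(Y)|\le \Delta\log|U|+H_b(\Delta)$.

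Finally I would tighten the additive term to match the stated bound, by invoking the standard sharpening of \cite{NielsenChuang}: decomposing $p-q$ into its positive and negative parts — which carry equal total mass $\Delta$ — and bounding the entropy defect contributed by each part more carefully replaces $H_b(\Delta)$ by $-\Delta\log\Delta$. Since $t\mapsto -t\log t$ is maximized at $t=1/e$ with value $1/e$, this gives $|H(X)-H(Y)|\le \Delta\log|U|+1/e$, as claimed.

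The only delicate step is this last one, i.e.\ extracting the sharp constant $1/e$ rather than the cruder $H_b(\Delta)\le 1$ produced by the concavity argument. I would also note that this sharpness is inessential for the way \cref{lem:entropy-distance} is used in the proof of \cref{thm:hardness}: there $|H(\cD(1^n))-H(\cD^*(1^n))|$ is polynomially large, so any additive $O(1)$ term would suffice, and one could instead cite the weaker $\Delta\log|U|+H_b(\Delta)$ form proved above.
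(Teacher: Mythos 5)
The paper does not prove this lemma at all: it is stated as a black-box citation to \cite{NielsenChuang}, so there is no ``paper's proof'' to compare against. Your concavity argument is correct and self-contained up to the bound $|H(X)-H(Y)|\le \Delta\log|U|+H_b(\Delta)$: the decomposition $p=(1-\Delta)\hat m+\Delta\hat a$, $q=(1-\Delta)\hat m+\Delta\hat b$ with $\sum_u a_u=\sum_u b_u=\Delta$ is right, concavity and the grouping bound $H(\sum_i\lambda_i\nu_i)\le\sum_i\lambda_iH(\nu_i)+H(\{\lambda_i\})$ are applied correctly, and the boundary cases $\Delta\in\{0,1\}$ are handled. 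Two caveats. First, the step you yourself flag as delicate --- replacing $H_b(\Delta)$ by $-\Delta\log\Delta$ --- is not actually carried out; you ``invoke the standard sharpening,'' which is just the same citation the paper already makes, so your write-up genuinely proves only the $H_b(\Delta)$ form (and, as you correctly note, that form suffices for \cref{thm:hardness}, where the entropy gap is polynomially large and any $O(1)$ additive slack is harmless). Second, the claim that $t\mapsto -t\log t$ has maximum value $1/e$ holds only for the natural logarithm; in base $2$ the maximum is $1/(e\ln 2)\approx 0.53$. The lemma as stated (and its use in the proof of \cref{thm:hardness}, which substitutes $m(n)$ for $\log|U|$ as if the logarithm were base $2$) is loose about this, but if you prove the bound you should fix a base consistently; again this is immaterial to the application.
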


\begin{proof}[Proof of \cref{thm:hardness}]
Let $\cD$ be a PPT algorithm, which takes $1^n$ as input and outputs $x\in\bit^{m(n)}$ such that $H(D(1^n))\geq n^{1/c}$ for some constant $c$.

In the following, we construct a PPT algorithm $\cA$ such that
\begin{align*}
\Delta(\cA(1^n),\cD(1^n)) \geq 1/\poly(n)
\end{align*}
and for every PPT algorithm $\Vrfy$, every polynomial $s$ and every polynomial $p$,
\begin{align}\label{eqn:ineq}
    \begin{split}
        &\Big|\Pr[\top\la\Vrfy(x_1,\dots,x_s):x_1,\dots,x_s\la\cD^{\otimes s(n)}(1^n)]\\
        &\hspace{3cm}-\Pr[\top\la\Vrfy(x_1,\dots,x_s):x_1,\dots,x_s\la\cA(1^n)^{\otimes s(n)}]\Big|
        \leq\frac{1}{p(n)}
    \end{split}
\end{align}
for infinitely many $n\in\N$.

Because $\cD$ is a PPT algorithm, $\cD$ internally takes $\ell(n)$ random bits as input for some polynomial $\ell$.
Let $G$ be an infinitely-often PRG, which takes $n^{1/c}/4$ bits of randomness to $\ell(n)$ bits.
Our construction of $\cA$ is as follows:
\begin{description}
\item[$\cA(1^n)$:]$ $
\begin{enumerate}
    \item Sample $r\la\bit^{\frac{n^{1/c}}{4}}$.
    \item Output $\cD(1^n;G(r))$.
\end{enumerate}
\end{description}

\paragraph{Statistically Far:}
From Lemma~\ref{lem:entropy-distance}, we have
\begin{align*}
\abs{H(\cA(1^n))-H(\cD(1^n))}&\leq \Delta(\cA(1^n),\cD(1^n))\cdot m(n)+1/e\\
\frac{3}{4}n^{1/c}-1/e &\leq\Delta(\cA(1^n),\cD(1^n))\cdot m(n).
\end{align*}
This implies that
\begin{align*}
\Delta(\cA(1^n),\cD(1^n) )\geq  \frac{1}{m(n)}\left(\frac{3n^{1/c}}{4}-1/e\right).
\end{align*}

\paragraph{Security:}
For contradiction, suppose that there exists a PPT adversary $\Vrfy$ that breaks \cref{eqn:ineq}.
Then, we construct a PPT adversary $\cB$ that breaks the security of PRGs.
\begin{description}
    \item[$\cB(1^n)$:]$ $
    \begin{enumerate}
        \item Receive $r_1,...,r_s$.
        \item Run $x_i\la \cD(1^n;r_i)$ for all $i\in[s]$.
        \item Run $b\la\Vrfy(x_1,\dots,x_s)$.
    \end{enumerate}
\end{description}
For contradiction, suppose that $\Vrfy$ breaks the \cref{eqn:ineq}.
This directly implies that $\cB$ breaks the security of PRGs.
\end{proof}

\section{Quantum State Verification}
\label{sec:state-ver}

\subsection{Weak Non-uniform EFI from Hardness of Quantum State Verification}

In this section, we prove \cref{thm:state_verification}.
\begin{theorem}[Restatement of \cref{informal:state_verification}]\label{thm:state_verification}
Assume that infinitely-often weak non-uniform EFI do not exist.
Then, every QPT algorithm $\cD$, which outputs a quantum state, is adaptively-verifiable.
\end{theorem}

\begin{proof}[Proof of \cref{thm:state_verification}]

In the following, we show that, for any polynomial $t$, any constant $c$, and any inverse polynomial $\epsilon$, there exists a QPT algorithm $\Vrfy$ and a polynomial $s$ that satisfies correctness and adaptive soundness assuming the non-existence of infinitely-often weak non-uniform EFI.

Before describing our verification protocol, let us introduce the notations to describe the verification protocol.

\paragraph{Notations.}

Let $U$ be a classical Turing machine.

For each $n\in\N$ and each $\Pi\in\bit^*$, we define $\widetilde{\mathsf{QU}}^{t(n)}_{s,m}(1^n,\Pi)$.

\begin{description}
    \item[$\widetilde{\mathsf{QU}}_{s,m}^{t(n)}(1^n,\Pi)$:]$ $
    \begin{itemize}
\item Run a classical universal Turing machine $U$ on $1^n$ and $\Pi$, halt within $t(n)$-steps, and obtain $c\in\bit^*$.
\item Consider $c$ as an encoding of an input size $k(n)$, and quantum circuit $C$.
\item Prepare $C\ket{0^{k(n)}}$, output the first $m(n)\cdot s(n)$-qubits $\rho(\Pi)_{\cR[1],...,\cR[s(n)]}$ of $C\ket{0^{k(n)}}$.\footnote{In quantum computing, the standard model of computation is the quantum circuit model rather than quantum Turing machine. Therefore, we define $\mathsf{QU}^{t(n)}(1^n,\Pi)$ using quantum circuits instead of using a quantum Turing machine.}  
\end{itemize}
\end{description}

We denote $\rho(\cD(1^n))$ to mean the output of $\cD(1^n)$. We often omit the security parameter $n$ of $\rho(\cD(1^n))$, writing it as $\rho(\cD)$.
For each $\Pi\in\bit^*$, we denote $\rho(\Pi)_{\cR[1],...,\cR[s(n)]}$ to mean the output of $\widetilde{\mathsf{QU}}_{s,m}^{t(n)}(1^n,\Pi)$.
We define $\rho(\Pi)_{\cR[i]} $ as follows:
\begin{align*}
\rho(\Pi)_{\cR[i]}\seteq \mathsf{Tr}_{\cR[1],...,\cR[i-1],\cR[i+1],...,\cR[s(n)]}\left( \rho(\Pi)_{\cR[1],...,\cR[s(n)]} \right).
\end{align*}

\matthew{TODO for me: Add some explanation here about what $\cS$ is.} 
Consider the following $\cS(1^n,a)$: interpret advice $a$ as a pair of a sampler $\Pi$, and a register $i$ and output $\rho(\Pi)_{\cR[i]}$. Applying observation~\ref{ob:EFI} to that $\cS$ (and hard-coding in $1^n$ for simplicity), we know there exists a QPT algorithm $\mathsf{Dis}$ such that the followings are satisfied: 
$\mathsf{Dis}$ takes $\Pi\in\bit^{\leq \log(n)}$, $i\in[s(n)]$, and $m(n)$-qubits $\rho$ as input, outputs $b\in\bit$, and satisfies the following properties for all $\Pi\in\bit^{\leq \log(n)}$ and $i\in[s(n)]$:
If 
\begin{align*}
    \mathsf{TD}(\rho(\Pi)_{\cR[i]},\rho(\cD))\geq \alpha(n)
\end{align*}
for some function $\alpha(n)>2\epsilon(n)\cdot n^{-100c}$, then we have
\begin{align*}
\Pr[1\la\mathsf{Dis}(\Pi,i,\rho(\Pi)_{\cR[i]})]-\Pr[1\la\mathsf{Dis}(\Pi,i,\rho(\cD)) ]\geq \mathsf{TD}(\rho(\Pi)_{\cR[i]},\rho(\cD))-\epsilon(n)\cdot n^{-100c}
\end{align*}
for all sufficiently large $n\in\N$.
We set $s(n)=\frac{8n^{200c+10}}{\epsilon(n)}$.

\matthew{[For Camera Ready] It would be great is we could either drop the $n^{-100c}$ or make it something smaller.}

\matthew{[For Camera Ready] It seems like it would make (some of) the analysis simpler if we just had the verification algorithm run each $\Pi$ $s(n)/n$ times. There's plenty of straightforward ways to do this algorithmically. This might make other parts of the analysis harder if we are assuming independence in certain parts of the analysis.}

\paragraph{Description of the Verification Algorithm.}
We consider the following algorithm $\Vrfy$:
\begin{description}
\item[$\Vrfy\left(1^n,\rho_{\cR[1],...,\cR[s(n)]} \right)$:]$ $
\begin{enumerate}
\item For all $i\in[s(n)]$, sample $\Pi_i\la\bit^{\leq \log(n)}$, and run $A_{\Pi,i}\la\mathsf{Dis}(\Pi_i,i,\rho_{\cR[i]})$. For each $ \Pi \in \bit^{\leq \log(n)} $, let $ \mathsf{Count}[\Pi] := |\{ i \in [s(n)] : \Pi_i = \Pi \}| $ denote the number of times $ \Pi $ is selected.
\item For each \( i \in [s(n)] \) and $\Pi\in\bit^{\leq \log(n)}$, run $\rho(\cD)\la\cD(1^n)$, and run \(B_{\Pi,i}\la\mathsf{Dis}(\Pi_i,i,\rho(\cD)) \).
\item If there exists $\Pi\in\bit^{\leq \log(n)}$ such that
\begin{align*}
\frac{1}{\mathsf{Count}[\Pi]}\left(\sum_{i \in [s(n)] : \Pi_i = \Pi} A_{\Pi,i}-\sum_{i\in[s(n)]:\Pi_i=\Pi}B_{\Pi,i}\right)\geq n^{-100c},
\end{align*}
then output $\bot$.
Otherwise, output $\top$.
\end{enumerate}
\end{description}
\end{proof}
\paragraph{Correctness.}
In the following, we show that
\begin{align*}
\Pr[\top\la\Vrfy(1^n,\rho(\cD)^{\otimes s(n)})]\geq 1-\negl(n).
\end{align*}

For showing this, it is sufficient to show the following Claim~\ref{claim:union_s}.
\begin{claim}\label{claim:union_s}
With probability at least $1-\negl(n)$, we have
\begin{align*}
\frac{1}{\mathsf{Count}[\Pi]}\left(\sum_{i:\Pi_i=\Pi}A_{\Pi,i}-\sum_{i:\Pi_i=\Pi}B_{\Pi,i}\right)\leq n^{-100c}
\end{align*}
for all $\Pi\in\bit^{\leq \log(n)}$.
Here, the probability is taken over $\{\Pi_i\la\bit^{\leq \log(n)}\}_{i\in[s(n)]}$, $\{A_{\Pi_i,i}\la \mathsf{Dis}_{\Pi_i,i} (\rho(\cD))\}_{i\in[s(n)]}$ and $\{B_{\Pi_i,i}\la \mathsf{Dis}_{\Pi_i,i} (\rho(\cD))\}_{i\in[s(n)]}$.
\end{claim}
Claim~\ref{claim:union_s} follows from union bound, Chernoff bound (Lemma~\ref{Chernoff}). and Hoeffding's inequality (Lemma~\ref{Hoeffding}).
For clarity, we describe the proof in the following. For showing this, we use Claim~\ref{claim:Chernoff_s}.
\begin{claim}\label{claim:Chernoff_s}
With probability at least $1-\negl(n)$ over $\{\Pi_i\la\bit^{\leq \log(n)}\}_{i\in[s(n)]}$, we have
\begin{align*}
\mathsf{Count}[\Pi]\geq \frac{s(n)}{4n}
\end{align*}
for all $\Pi\in\bit^{\leq \log(n)}$ and all $n\in\N$.
\end{claim}
Claim~\ref{claim:Chernoff_s} follows from Chernoff bound (Lemma~\ref{Chernoff}).
Once we have shown Claim~\ref{claim:Chernoff_s}, Claim~\ref{claim:union_s} follows from union bound and Hoeffding's inequality (Lemma~\ref{Hoeffding}).

\begin{proof}[Proof of Claim~\ref{claim:union_s}]
Suppose that
$
\mathsf{Count}[\Pi]\geq \frac{s(n)}{4n}
$
for all $\Pi\in\bit^{\leq \log(n)}$.

If this occurs, then from Hoeffding's inequality (Lemma~\ref{Hoeffding}), for each $\Pi\in\bit^{\leq \log(n)}$, we have
\begin{align*}
&\Pr_{\{A_{\Pi,i}\la\mathsf{Dis}(\Pi_i,i,\rho(\cD))\}_{i\in[s(n)]}}\left[\frac{1}{\mathsf{Count}[\Pi]}\left(\sum_{i\in[s(n)]:\Pi_i=\Pi}A_{\Pi,i}-\sum_{i\in[s(n)]:\Pi_i=\Pi}\Pr[1\la\mathsf{Dis}(\Pi,i,\rho(\cD))]\right)\geq \frac{n^{-100c}}{2}\right]\\
&\hspace{15cm}\leq 2^{-\frac{s(n)}{8n^{200c+1}}}
\end{align*}
and
\begin{align*}
&\Pr_{\{B_{\Pi,i}\la\mathsf{Dis}(\Pi_i,i,\rho(\cD))\}_{i\in[s(n)]}}\left[\frac{1}{\mathsf{Count}[\Pi]}\left(\sum_{i\in[s(n)]:\Pi_i=\Pi}\Pr[1\la\mathsf{Dis}(\Pi,i,\rho(\cD))]-\sum_{i\in[s(n)]:\Pi_i=\Pi}B_{\Pi,i}\right)\geq \frac{n^{-100c}}{2}\right]\\
&\hspace{15cm}\leq 2^{-\frac{s(n)}{8n^{200c+1}}}
\end{align*}
for all $n\in\N$.

If $\mathsf{Count}[\Pi]\geq \frac{s(n)}{4n}$ for all $\Pi\in\bit^{\leq \log(n)}$, then, for each $\Pi \in\bit^{\leq \log(n)}$, by union bounding over the previous two inequalities we have
\begin{align*}
\Pr_{\substack{
\{A_{\Pi,i}\la \mathsf{Dis}(\Pi_i,i,\rho(\cD))\}_{i\in[s(n)]}\\
\{B_{\Pi,i}\la \mathsf{Dis}(\Pi_i,i,\rho(\cD))\}_{i\in[s(n)]}
}}
\left[\frac{1}{s(n)}\left(\sum_{i \in [s(n)]} A_{\Pi,i}(x_i)-\sum_{i\in[s(n)]}B_{\Pi,i}(x_i^*)\right)<n^{-100c}\right]\geq 1-2\cdot 2^{-\frac{s(n)}{8n^{200c+1}}}
\end{align*}
for all $n\in\N$.

When $\mathsf{Count}[\Pi]\geq \frac{s(n)}{4n}$, with probability at least $1-\negl(n)$ over $\{A_{\Pi,i}\la \mathsf{Dis}(\Pi_i,i,\rho(\cD))\}_{i\in[s(n)]}$ and $\{B_{\Pi,i}\la \mathsf{Dis}(\Pi_i,i,\rho(\cD))\}_{i\in[s(n)]}$
\begin{align*}
\frac{1}{s(n)}\left(\sum_{i \in [s(n)]} A_{\Pi,i}(x_i)-\sum_{i\in[s(n)]}B_{\Pi,i}(x_i^*)\right)<n^{-100c}
\end{align*}
for all $\Pi\in\bit^{\leq \log(n)}$ .
From Claim~\ref{claim:Chernoff_s}, $\mathsf{Count}[\Pi]\geq \frac{s(n)}{4(n)}$ with probability at least $1-\negl(n)$, and hence Claim~\ref{claim:union_s} follows.
\end{proof}

\begin{proof}[Proof of Claim~\ref{claim:Chernoff_s}]    
First, we note that
\begin{align*}
\mathbb{E}_{\{\Pi_i\}_{i\in[s(n)]}\la\bit^{\leq \log(n)}}[\mathsf{Count}[\Pi]]=\frac{s(n)}{2n-2}
\end{align*}
for each $\Pi\in\bit^{\leq\log(n)}$.
From Chernoff bound (Lemma~\ref{Chernoff}), for any $0<\delta<1$, we have
\begin{align*}
\Pr_{\{\Pi_i\}_{i\in[s(n)]}\la\bit^{\leq \log(n)}}\left[\mathsf{Count}[\Pi]\leq (1-\delta)\frac{s(n)}{2n-2}\right]\leq 2^{\left(-\frac{\delta^2s(n)}{6(n-1)}\right)}
\end{align*}
for all $n\in\N$.
If we set $\delta=\frac{1}{2}$, then we have
\begin{align*}
\Pr_{\{\Pi_i\}_{i\in[s(n)]}\la\bit^{\leq \log(n)}}\left[\mathsf{Count}[\Pi]\leq \frac{s(n)}{4n}\right]\leq 2^{\left(-\frac{s(n)}{24n}\right)}
\end{align*}
for all $n\in\N$.
This completes the proof.
\end{proof}

\paragraph{Adaptive-Soundness.}

In the following, we show that for any $t(n)$-time quantum algorithm $\cA$ such that
\begin{align*}
    \mathsf{TD}(\mathsf{Marginal}_{\cA}(1^n),\cD(1^n))\geq \epsilon(n),
\end{align*}
we have
\begin{align*}
\Pr[\Vrfy(\rho(\Pi_\cA)_{\cR[1],...,\cR[s(n)]}) :\rho(\Pi_\cA)_{\cR[1],...,\cR[s(n)]}\la\cA(1^n)]\leq 1-\epsilon(n)+n^{-c}
\end{align*}
for all sufficiently large $n\in\N$.

For showing adaptive soundness, it is sufficient to show Claim~\ref{claim:Markov_s}.
\begin{claim}\label{claim:Markov_s}
For all sufficiently large $n\in\N$, there exists $\Pi_{\cA}\in\bit^{\leq \log(n)}$ such that, with probability at least 
$$\epsilon(n)-n^{-c},$$ 
we have
\begin{align*}
\frac{1}{\mathsf{Count}[\Pi_\cA]}\sum_{i\in[s(n)]:\Pi_i=\Pi_{\cA}}\left( A_{\Pi_\cA,i}-B_{\Pi_\cA,i}\right)\geq n^{-100c},
\end{align*}
where $\mathsf{Count}[\Pi_\cA]\seteq \{i\in[s(n)]:\Pi_i=\Pi_\cA\}$ is the number of $i\in[s(n)]$ such that $\Pi_i=\Pi_\cA$.
Here, the probability is taken over $\{\Pi_i\la\bit^{\leq \log(n)}\}_{i\in[s(n)]}$, $\{A_{\Pi_\cA,i}\la \mathsf{Dis}(\Pi_\cA,i,\rho(\Pi_\cA)_{\cR[i]})\}_{i\in[s(n)]}$ and $\{B_{\Pi_\cA,i}\la \mathsf{Dis}(\Pi_\cA,i,\rho(\cD))\}_{i\in[s(n)]}$.
\end{claim}
Once we have obtained Claim~\ref{claim:Markov_s}, adaptive soundness directly follows.
Recall that from the construction of $\vrfy$, if there exists a $\Pi\in\bit^{\leq \log(n)}$ such that
\begin{align*}
    \frac{1}{s(n)}\sum_{i\in[s(n)]}(A_{\Pi,i}-B_{\Pi,i})\geq n^{-100c},
\end{align*}
then $\Vrfy$ outputs $\bot$.
Therefore, Claim~\ref{claim:Markov_s} implies that 
\begin{align*}
\Pr[\bot\la\Vrfy(\rho(\Pi_\cA)_{\cR[1],...,\cR[s(n)]}):\rho(\Pi_\cA)_{\cR[1],...,\cR[s(n)]}\la\cA(1^n) ]\geq \epsilon(n)-n^{-c}
\end{align*}
for all sufficiently large $n\in\N$.
This implies that
\begin{align*}
\Pr[\top\la\Vrfy(\rho(\Pi_\cA)_{\cR[1],...,\cR[s(n)]}):\rho(\Pi_\cA)_{\cR[1],...,\cR[s(n)]}\la\cA(1^n) ]\leq 1-\epsilon(n)+n^{-c}
\end{align*}
for all sufficiently large $n\in\N$, which completes the proof.

The remaining part of the proof is to show Claim~\ref{claim:Markov_s}.
For this, we use the following Claim~\ref{claim:measurement_s}.
\begin{claim}\label{claim:measurement_s}
For all sufficiently large $n\in\N$, there exists $\Pi_\cA\in\bit^{\leq \log(n)}$ such that,
with probability at least $1-\negl(n)$,
we have
\begin{align*}
&\frac{1}{\mathsf{Count}[\Pi_\cA]}\left(\sum_{i\in[s(n)]:\Pi_i=\Pi_\cA} \Pr[1\la\mathsf{Dis}(\Pi_\cA,i,\rho(\Pi_\cA)_{\cR[i]})]-\sum_{i\in[s(n)]:\Pi_i=\Pi_\cA}\Pr[1\la\mathsf{Dis}(\Pi_\cA,i,\rho(\cD))]\right)\\
&\geq \epsilon(n)- 4\epsilon(n)\cdot n^{-50c},
\end{align*}
where $\mathsf{Count}[\Pi_\cA]\seteq \{i\in[s(n)]:\Pi_i=\Pi_\cA\}$ is the number of $i\in[s(n)]$ such that $\Pi_i=\Pi_\cA$.
Here, the probability is taken over $\{\Pi_i\la \bit^{\leq \log(n)}\}_{i\in[s(n)]}$.
\end{claim}
For showing Claim~\ref{claim:measurement_s}, we use the following Claim~\ref{claim:triangle_s}.
\begin{claim}\label{claim:triangle_s}
We have
\begin{align*}
&\Pr_{\{\Pi_i\}_{i\in[s(n)]}\la \bit^{\leq \log(n)}}\left[\frac{1}{\mathsf{Count}[\Pi_\cA]}\sum_{i\in[s(n)]: \Pi_i=\Pi_\cA}\mathsf{TD}(\rho(\Pi_\cA)_{\cR[i]},\rho(\cD))\geq\epsilon(n) - 2\epsilon(n)\cdot n^{-50c}\right]\\
&\geq 1-\negl(n)
\end{align*}
for all sufficiently large $n\in\N$, where $\mathsf{Count}[\Pi_\cA]\seteq \{i\in[s(n)]:\Pi_i=\Pi_\cA\}$ is the number of $i\in[s(n)]$ such that $\Pi_i=\Pi_\cA$.
\end{claim}

\begin{proof}[Proof of Claim~\ref{claim:Markov_s}]
From Claim~\ref{claim:measurement_s}, for all sufficiently large $n\in\N$, there exists $\Pi_\cA\in\bit^{\leq \log(n)}$ such that, with probability at least $1-\negl(n)$ over $\{\Pi_i\}_{i\in[s(n)]}\la\bit^{\leq \log(n)}$, we have
\begin{align*}
&\frac{1}{\mathsf{Count}[\Pi_\cA]}\left(\sum_{i\in[s(n)]:\Pi_i=\Pi_\cA} \Pr[1\la\mathsf{Dis}(\Pi_\cA,i,\rho(\Pi_\cA)_{\cR[i]})]-\sum_{i\in[s(n)]:\Pi_i=\Pi_\cA}\Pr[1\la\mathsf{Dis}(\Pi_\cA,i,\rho(\cD))]\right)\\
&\geq \epsilon(n)-4\epsilon(n)\cdot n^{-50c}.
\end{align*}

We denote $X$ to be the random variable \matthew{should the $1/s(n)$ be $1/\mathsf{Count}[\Pi_\mathcal{A}]$?}
\begin{align*}
\frac{1}{\mathsf{Count}[\Pi_\mathcal{A}]} \sum_{i\in[s(n)]:\Pi_i=\Pi_\cA}(A_{\Pi_\cA,i}-B_{\Pi_\cA,i}).
\end{align*}
Then, from union bound, we have
\begin{align*}
    \mathop{\mathbb{E}}_{\substack{\{\Pi_i\}_{i\in[s(n)]}\la\bit^{\leq \log(n)}\\ A_{\Pi_\cA,i}\la\mathsf{Dis}(\Pi_\cA,i,\rho(\Pi_\cA)_{\cR[i]})\\
    B_{\Pi_\cA,i}\la\mathsf{Dis}(\Pi_\cA,i,\rho(\cD))}}[X]\geq \epsilon(n)-4\epsilon(n)\cdot n^{-50c}-\negl(n)
\end{align*}
for all sufficiently large $n\in\N$.
We have
\begin{align*}
\epsilon(n)-4\epsilon(n)\cdot n^{-50c}-\negl(n)
&\leq \mathbb{E}[X]\leq n^{-100c}\Pr[X< n^{-100c}]+\Pr[ X \geq n^{-100c}]\\
&=n^{-100c}+(1-n^{-100c})\Pr[X \geq n^{-100c}].
\end{align*}
This implies that
\begin{align*}
  \Pr[X \geq n^{-100c}]\geq   \frac{\epsilon(n)-4\epsilon(n)\cdot n^{-50c}-n^{-100c}-\negl(n)}{1-n^{-100c}}\geq\epsilon(n)-n^{-c} 
\end{align*}
for all sufficiently large $n\in\N$.
\end{proof}

\begin{proof}[Proof of Claim~\ref{claim:measurement_s}]
Let us denote $\cT$  
\begin{align*}
    \cT\seteq \{i:\mathsf{TD}(\rho(\Pi)_{\cR[i]},\rho(\cD) )\geq 2\epsilon(n)\cdot n^{-100c}\}.
\end{align*}
From the definition of $\mathsf{Dis}$, for all $i\in\cT$, we have
\begin{align*}
\Pr[1\la\mathsf{Dis}(\Pi_\cA,i,\rho(\Pi_\cA)_{\cR[i]})]-\Pr[1\la\mathsf{Dis}(\Pi_\cA,i,\rho(\cD))]\geq \mathsf{TD}(\rho(\Pi)_{\cR[i]},\rho(\cD))-\epsilon(n)\cdot n^{-100c}
\end{align*}
for all sufficiently large $n\in\N$.
From the definition of $\mathsf{Dis}$ and $\cT$, we have
\begin{align*}
&\frac{1}{\mathsf{Count}[\Pi_\cA]}\sum_{i\in[s(n)]:\Pi_i=\Pi_\cA}\left(\Pr[1\la\mathsf{Dis}(\Pi_\cA,i,\rho(\Pi_\cA)_{\cR[i]})]-\Pr[1\la\mathsf{Dis}(\Pi_\cA,i,\rho(\cD))]\right)\\
&\geq \frac{1}{\mathsf{Count}[\Pi_\cA]}\sum_{i\in\cT:\Pi_i=\Pi_\cA}\left(\mathsf{TD}(\rho(\Pi_\cA)_{\cR[i]},\rho(\cD))-\epsilon(n)\cdot n^{-100c}\right)\\
&\geq \frac{1}{\mathsf{Count}[\Pi_\cA]}\sum_{i\in\cT:\Pi_i=\Pi_\cA}\mathsf{TD}(\rho(\Pi_\cA)_{\cR[i]},\rho(\cD))-\epsilon(n)\cdot n^{-100c}
\end{align*}
for all sufficiently large $n\in\N$.

Furthermore, with probability at least $1-\negl(n)$ over $\{\Pi_i \la\bit^{\log(n)}\}_{i\in[s(n)]}$ ,
\begin{align*}
&\frac{1}{\mathsf{Count}[\Pi_\cA]}\sum_{i\in\cT:\Pi_i=\Pi_\cA}\mathsf{TD}(\rho(\Pi_\cA)_{\cR[i]},\rho(\cD))\\
&= \frac{1}{\mathsf{Count}[\Pi_\cA]}\left(\sum_{i\in[s(n)]:\Pi_i=\Pi_\cA}\mathsf{TD}(\rho(\Pi_\cA)_{\cR[i]},\rho(\cD))-\sum_{i\in [s(n)]\backslash\cT:\Pi_i=\Pi_\cA}\mathsf{TD}(\rho(\Pi_\cA)_{\cR[i]},\rho(\cD))\right)\\
&\geq \epsilon(n)-2\epsilon(n)n^{-50c}-\frac{1}{\mathsf{Count}[\Pi_\cA]}\sum_{i\in [s(n)]\backslash\cT:\Pi_i=\Pi_\cA} 2\epsilon(n)\cdot n^{-100c}\geq \epsilon(n)-4\epsilon(n)\cdot n^{-50c},
\end{align*}
where the first inequality follows from Claim~\ref{claim:triangle_s}\matthew{and the second inequality follows from Claim 6.3}. \taiga{Here, we do not use Claim 6.3. We use $ \sum_{i\in [s(n)]\backslash\cT:\Pi_i=\Pi_\cA} 1 \leq \mathsf{Count}[\Pi_\cA]$}
This completes the proof.
\matthew{I think it would be good to 1) say exactly what value of $\delta$ we are using for Claim 6.6; and 2) add in one extra step between the second and final equation where we have $8\epsilon(n) \cdot n^{-100c + 1}$ i.e.  $|\{s[n] \setminus  \cT \}|/\mathsf{Count[\cA]} \cdot 2\epsilon(n)\cdot n^{-100c}$ before we over estimate it with $2\epsilon(n) - 4\epsilon(n)\cdot n^{-50c}$}
\taiga{For (1), I agree with you. I slightly modify the statement of Claim 6.6 so that it is easier to follow.}
\taiga{For (2), I am not sure that I understand what you are caring about. The last inequality directly follows once one understands $\sum_{i\in [s(n)]\backslash\cT:\Pi_i=\Pi_\cA}$, (while this might be unclear.)}
\end{proof}

\begin{proof}[Proof of Claim~\ref{claim:triangle_s}]
Because $\cA$ is a uniform algorithm, for all sufficiently large $n\in\N$, there exists a $\Pi_\cA\in\bit^{\leq \log(n)}$ such that $\mathsf{QU}^{t(n)}(1^n,\Pi_A)$ is statistically equivalent to $\cA(1^n)$.

First, from the triangle inequality, we have
\begin{align*}
\epsilon(n)&\leq \mathsf{TD}(\mathsf{Marginal}_{\cA}(1^n),\rho(\cD))\\
&= \frac{1}{2}\norm{\sum_{i\in[s(n)]}\frac{1}{s(n)}\rho(\Pi_{\cA})_{\cR[i]}-\rho(\cD)}_1\\
&\leq \frac{1}{2s(n)}\sum_{i\in[s(n)]}\norm{\rho(\Pi_\cA)_{\cR[i]}-\rho(\cD)}_1=\frac{1}{s(n)}\sum_{i\in[s(n)]}\mathsf{TD}(\rho(\Pi_\cA)_{\cR[i]},\rho(\cD)).
\end{align*}

Let $b_i\in\bit$ be a random variable such that $b_i=1$ if and only if $\Pi_i=\Pi_\cA$.
With probability $\frac{1}{2(n-1)}$ over $\Pi_i\la\bit^{\leq \log(n)}$, we have $\Pi_i=\Pi_\cA$. Therefore, it holds that 
\begin{align}
\mathbb{E}_{\{\Pi_i\}_{i\in[s(n)]}\la \bit^{\leq \log(n)}}\left[\sum_{i\in[s(n)]} b_i\right]=\frac{s(n)}{2(n-1)}\label{avg_1}.
\end{align}
From Chernoff bound (Lemma~\ref{Chernoff}), for any $\delta>0$, we have
\begin{align}
\Pr_{\{\Pi_i\}_{i\in[s(n)]}\la \bit^{\leq \log(n)}}\left[\sum_{i\in[s(n)]} b_i \geq (1+\delta) \mathbb{E}[\sum_{i\in[s(n)]} b_i] \right]
&\leq\exp\left(-\frac{\delta^2\mathbb{E}[\sum_{i\in s[(n)]}b_i]}{3}\right)\notag\\
&\leq\exp\left(-\frac{\delta^2\cdot s(n)}{6n}\right)\label{ineq_1}
\end{align}

On the other hand, we have
\begin{align}
\mathbb{E}_{\{\Pi_i\}_{i\in[s(n)]}\la\bit^{\leq \log(n)}}\left[\sum_{i\in[s(n)]}b_i\mathsf{TD}(\rho(\Pi_\cA)_{\cR[i]},\rho(\cD))\right]\notag
&=\sum_{i\in[s(n)]}\mathbb{E}[b_i]\mathsf{TD}(\rho(\Pi_\cA)_{\cR[i]},\rho(\cD))\\
&\geq \frac{1}{2(n-1)}\left(s(n)\cdot\epsilon(n)\right)\label{avg_2}.
\end{align}
From Chernoff bound (Lemma~\ref{Chernoff}), for any $\delta>0$, we have
\begin{align}
&\Pr_{\{\Pi_i\}_{i\in[s(n)]}\la\bit^{\leq\log(n)} }\left[\sum_{i\in[s(n)]}b_i\mathsf{TD}(\rho(\Pi_\cA)_{\cR[i]},\rho(\cD)) < (1-\delta) \mathbb{E}\left[\sum_{i\in[s(n)]}b_i\mathsf{TD}(\rho(\Pi_\cA)_{\cR[i]},\rho(\cD))\right]\right]\notag\\
&\leq\exp\left(\frac{-\delta^2}{2}\mathbb{E}\left[\sum_{i\in[s(n)]}b_i\mathsf{TD}(\rho(\Pi_\cA)_{\cR[i]},\rho(\cD))\right] \right)
\leq \exp\left(\frac{-\delta^2\cdot s(n)\epsilon(n)}{4n} \right)\label{ineq_2}
\end{align}

From \cref{ineq_1,ineq_2,avg_1,avg_2}, with probability at least
\begin{align*}
    1-\exp\left(\frac{-\delta^2\cdot s(n)\epsilon(n)}{4n} \right)-\exp\left(-\frac{\delta^2\cdot s(n)}{6n}\right)
\end{align*}
over $\{\Pi_i\la\bit^{\leq \log(n)}\}_{i\in[s(n)]}$,
we have
\begin{align*}
&\frac{1}{\mathsf{Count}[\Pi_\cA]}\sum_{i\in[s(n)]:\Pi_i=\Pi_\cA}\mathsf{TD}(\rho(\Pi_\cA)_{\cR[i]},\rho(\cD))\\
&=\frac{1}{\sum_{i\in[s(n)]} b_i}\sum_{i\in[s(n)]} b_i\mathsf{TD}(\rho(\Pi_\cA)_{\cR[i]},\rho(\cD))\\ 
&\geq \frac{(1-\delta)\mathbb{E}[\sum_{i\in[s(n)]}b_i\mathsf{TD}(\rho(\Pi_\cA)_{\cR[i]},\rho(\cD))] }{(1+\delta)\mathbb{E}[\sum_{i\in[s(n)]}b_i]}\geq \frac{1-\delta}{1+\delta}\epsilon(n).
\end{align*}
By setting $\delta=n^{-50c}$, with probability at least $1-\negl(n)$ over $\{\Pi_i\la \bit^{\leq \log(n)}\}_{i\in[s(n)]}$, we have
\begin{align*}
\frac{1}{\mathsf{Count}[\Pi_\cA]}\sum_{i\in[s(n)]:\Pi_i=\Pi_\cA}\mathsf{TD}(\rho(\Pi_\cA)_{\cR[i]},\rho(\cD))\geq \epsilon(n)-2n^{-50c}\epsilon(n).
\end{align*}
\end{proof}

\if0\taiga{kokokara}

In the following, we consider that an unknown algorithm $\cA$ satisfies the following condition:
\begin{align*}
    \mathsf{TD}(\mathsf{Marginal}_{\cA}(1^n),\cD(1^n))\geq \epsilon(n).
\end{align*}
For any $t(n)$-time quantum algorithm $\cA$, there exists a $n_0\in\N$ such that, for all $n\geq n_0$, there exists $\Pi\in\bit^{\leq \log(n)}$ such that $\widetilde{\mathsf{QU}}^{t(n)}(1^n,\Pi)$ is statistically equivalent to $\cA(1^n)$ because $\cA$ is a uniform quantum algorithm. We denote such $\Pi$ as $\Pi_\cA$.

For showing adaptive soundness, we use the following \cref{claim:Markov_state}.
\begin{claim}\label{claim:Markov_state}
With probability at least $\epsilon(n)-4\epsilon(n)\cdot n^{-50c}-n^{-100c}-\negl(n)$ over $\{\Pi_i\}_{i\in[s(n)]}\la\bit^{\leq \log(n)}$ and over the randomness $A_{\Pi_\cA,i}\la \mathsf{Dis}(\Pi_\cA,i,\rho(\Pi_\cA)_{\cR[i]})$ and $B_{\Pi_\cA,i}\la \mathsf{Dis}(\Pi_\cA,i,\rho(\cD))$, we have
\begin{align*}
\frac{1}{\mathsf{Count}[\Pi_\cA]}\sum_{i\in[s(n)]:\Pi_i=\Pi_{\cA}}\left( A_{\Pi_\cA,i}-B_{\Pi_\cA,i}\right)\geq n^{-100c},
\end{align*}
where $\mathsf{Count}[\Pi_\cA]\seteq \{i\in[s(n)]:\Pi_i=\Pi_\cA\}$ is the number of $i\in[s(n)]$ such that $\Pi_i=\Pi_\cA$.
\end{claim}

For showing \cref{claim:Markov_state}, we use the following \cref{claim:measurement}.
\begin{claim}\label{claim:measurement}
For any $\delta>0$, with probability at least $1-\exp\left(\frac{-\delta^2\cdot s(n)\epsilon(n)}{4n}\right)-\exp\left(-\frac{\delta^2\cdot s(n)}{6n}\right)$ over $\{\Pi_i\}_{i\in[s(n)]}\la \bit^{\leq \log(n)}$, we have
\begin{align*}
&\frac{1}{\mathsf{Count}[\Pi_\cA]}\left(\sum_{i\in[s(n)]:\Pi_i=\Pi_\cA} \Pr[1\la\mathsf{Dis}(\Pi_\cA,i,\rho(\Pi_\cA)_{\cR[i]})]-\sum_{i\in[s(n)]:\Pi_i=\Pi_\cA}\Pr[1\la\mathsf{Dis}(\Pi_\cA,i,\rho(\cD))]\right)\\
&\geq \frac{1-\delta}{1+\delta}\epsilon(n)- 2\epsilon(n)\cdot n^{-100c},
\end{align*}
where $\mathsf{Count}[\Pi_\cA]\seteq \{i\in[s(n)]:\Pi_i=\Pi_\cA\}$ is the number of $i\in[s(n)]$ such that $\Pi_i=\Pi_\cA$.
\end{claim}
For showing \cref{claim:measurement}, we use the following \cref{claim:triangle_state}.
\begin{claim}\label{claim:triangle_state}
For any $\delta>0$,
\begin{align*}
&\Pr_{\{\Pi_i\}_{i\in[s(n)]}\la \bit^{\leq \log(n)}}\left[\frac{1}{\mathsf{Count}[\Pi_\cA]}\sum_{i\in[s(n)]: \Pi_i=\Pi_\cA}\mathsf{TD}(\rho(\Pi_\cA)_{\cR[i]},\rho(\cD))\geq\frac{1-\delta}{1+\delta} \epsilon(n)\right]\\
&\geq 1-\exp\left(\frac{-\delta^2\cdot s(n)\epsilon(n)}{4n}\right)-\exp\left(-\frac{\delta^2\cdot s(n)}{6n}\right)
\end{align*}
for all $n\in\N$, where $\mathsf{Count}[\Pi_\cA]\seteq \{i\in[s(n)]:\Pi_i=\Pi_\cA\}$ is the number of $i\in[s(n)]$ such that $\Pi_i=\Pi_\cA$.
\end{claim}

Once we have obtained \cref{claim:Markov_state}, adaptive soundness directly follows.
Recall that from the construction of $\vrfy$, if there exists a $\Pi\in\bit^{\leq \log(n)}$ such that
\begin{align*}
    \frac{1}{s(n)}\sum_{i\in[s(n)]}(A_{\Pi,i}-B_{\Pi,i})\geq n^{-100c},
\end{align*}
then $\Vrfy$ outputs $\bot$.
Therefore, \cref{claim:Markov_state} implies that 
\begin{align*}
\Pr[\bot\la\Vrfy(\rho_\cA)]\geq \epsilon(n)-4\epsilon(n)\cdot n^{-50c}-\negl(n)
\end{align*}
for all sufficiently large $n\in\N$.
This implies that
\begin{align*}
\Pr[\top\la\Vrfy(\rho_\cA)]\leq 1-\epsilon(n)+4\epsilon(n)\cdot n^{-50c}+n^{-100c}+\negl(n)\cdot n^{-50c}\leq 1-\epsilon(n)+n^{-c}
\end{align*}
for all sufficiently large $n\in\N$, which completes the proof.

\begin{proof}[Proof of \cref{claim:Markov_state}]
In the following, we set $\delta=n^{-50c}$ so that, with probability at least $1-\negl(n)$ over $\{\Pi_i\}_{i\in[s(n)]}\la\bit^{\leq \log(n)}$, we have
\begin{align*}
&\frac{1}{\mathsf{Count}[\Pi_\cA]}\left(\sum_{i\in[s(n)]:\Pi_i=\Pi_\cA} \Pr[1\la\mathsf{Dis}(\Pi_\cA,i,\rho(\Pi_\cA)_{\cR[i]})]-\sum_{i\in[s(n)]:\Pi_i=\Pi_\cA}\Pr[1\la\mathsf{Dis}(\Pi_\cA,i,\rho(\cD))]\right)\\
&\geq \frac{1-\delta}{1+\delta}\epsilon(n)- 2\epsilon(n)\cdot n^{-100c}\geq (1-2\delta)\epsilon(n)-2\epsilon(n)\cdot n^{-100c}\geq \epsilon(n)-4\epsilon(n)\cdot n^{-50c},
\end{align*}
which follows from \cref{claim:measurement}.

We denote $X$ to mean the random variable 
\begin{align*}
\frac{1}{s(n)} \sum_{i\in[s(n)]:\Pi_i=\Pi_\cA}(A_{\Pi_\cA,i}-B_{\Pi_\cA,i}).
\end{align*}
Then, we have
\begin{align*}
    \mathbb{E}_{\substack{\{\Pi_i\}_{i\in[s(n)]}\la\bit^{\leq \log(n)}\\ A_{\Pi_\cA,i}\la\mathsf{Dis}(\Pi_\cA,i,\rho(\Pi_\cA)_{\cR[i]})\\
    B_{\Pi_\cA,i}\la\mathsf{Dis}(\Pi_\cA,i,\rho(\cD))}}[X]\geq \epsilon(n)-4\epsilon(n)\cdot n^{-50c}-\negl(n)
\end{align*}
for all sufficiently large $n\in\N$.
We have
\begin{align*}
\epsilon(n)-4\epsilon(n)\cdot n^{-50c}-\negl(n)
&\leq \mathbb{E}[X]\leq n^{-100c}\Pr[X< n^{-100c}]+\Pr[n^{-100c}\leq X]\\
&=n^{-100c}+(1-n^{-100c})\Pr[n^{-100c}\leq X].
\end{align*}
This implies that
\begin{align*}
    \frac{\epsilon(n)-4\epsilon(n)\cdot n^{-50c}-n^{-100c}-\negl(n)}{1-n^{-100c}}\leq \Pr[n^{-100c}\leq X]
\end{align*}
for all sufficiently large $n\in\N$.
\end{proof}

\begin{proof}[Proof of \cref{claim:measurement}]
Let us denote $\cT$  
\begin{align*}
    \cT\seteq \{i:\mathsf{TD}(\rho(\Pi)_{\cR[i]},\rho(\cD) )\geq 2\epsilon(n)\cdot n^{-100c}\}.
\end{align*}
From the definition of $\mathsf{Dis}$, for all $i\in\cT$, we have
\begin{align*}
\Pr[1\la\mathsf{Dis}(\Pi_\cA,i,\rho(\Pi_\cA)_{\cR[i]})]-\Pr[1\la\mathsf{Dis}(\Pi_\cA,i,\rho(\cD))]\geq \mathsf{TD}(\rho(\Pi)_{\cR[i]},\rho(\cD))-\epsilon(n)\cdot n^{-100c}
\end{align*}
for all sufficiently large $n\in\N$.
From the definition of $\mathsf{Dis}$ and $\cT$, we have
\begin{align*}
&\frac{1}{\mathsf{Count}[\Pi_\cA]}\sum_{i\in[s(n)]:\Pi_i=\Pi_\cA}\left(\Pr[1\la\mathsf{Dis}(\Pi_\cA,i,\rho(\Pi_\cA)_{\cR[i]})]-\Pr[1\la\mathsf{Dis}(\Pi_\cA,i,\rho(\cD))]\right)\\
&\geq \frac{1}{\mathsf{Count}[\Pi_\cA]}\sum_{i\in\cT:\Pi_i=\Pi_\cA}\left(\mathsf{TD}(\rho(\Pi_\cA)_{\cR[i]},\rho(\cD))-\epsilon(n)\cdot n^{-100c}\right)\\
&\geq \frac{1}{\mathsf{Count}[\Pi_\cA]}\sum_{i\in\cT:\Pi_i=\Pi_\cA}\mathsf{TD}(\rho(\Pi_\cA)_{\cR[i]},\rho(\cD))-\epsilon(n)\cdot n^{-100c}
\end{align*}
for all sufficiently large $n\in\N$.

Furthermore, with probability at least $1-\exp\left(\frac{-\delta^2\cdot s(n)\epsilon(n)}{4n}\right)-\exp\left(-\frac{\delta^2\cdot s(n)}{6n}\right)$ over $\{\Pi_i\}_{i\in[s(n)]} \la\bit^{\log(n)}$,
\begin{align*}
&\frac{1}{\mathsf{Count}[\Pi_\cA]}\sum_{i\in\cT:\Pi_i=\Pi_\cA}\mathsf{TD}(\rho(\Pi_\cA)_{\cR[i]},\rho(\cD))\\
&= \frac{1}{\mathsf{Count}[\Pi_\cA]}\left(\sum_{i\in[s(n)]:\Pi_i=\Pi_\cA}\mathsf{TD}(\rho(\Pi_\cA)_{\cR[i]},\rho(\cD))-\sum_{i\in [s(n)]\backslash\cT:\Pi_i=\Pi_\cA}\mathsf{TD}(\rho(\Pi_\cA)_{\cR[i]},\rho(\cD))\right)\\
&\geq \frac{1-\delta}{1+\delta}\epsilon(n)-\frac{1}{\mathsf{Count}[\Pi_\cA]}\sum_{i\in [s(n)]\backslash\cT:\Pi_i=\Pi_\cA} 2\epsilon(n)\cdot n^{-100c}\geq \frac{1-\delta}{1+\delta}\epsilon(n)-2\epsilon(n)\cdot n^{-100c},
\end{align*}
where the first inequality follows from \cref{claim:triangle_state}.
This completes the proof.
\end{proof}

\begin{proof}[Proof of \cref{claim:triangle_state}]
First, from the triangle inequality, we have
\begin{align*}
\epsilon(n)&\leq \mathsf{TD}(\mathsf{Marginal}_{\cA}(1^n),\rho(\cD))\\
&= \frac{1}{2}\sum_{i\in[s(n)]}\norm{\frac{1}{s(n)}\rho(\Pi_{\cA})_{\cR[i]}-\rho(\cD)}_1\\
&\leq \frac{1}{2s(n)}\sum_{i\in[s(n)]}\norm{\rho(\Pi_\cA)_{\cR[i]}-\rho(\cD)}_1=\frac{1}{s(n)}\sum_{i\in[s(n)]}\mathsf{TD}(\rho(\Pi_\cA)_{\cR[i]},\rho(\cD)).
\end{align*}

Let $b_i\in\bit$ be a random variable such that $b_i=1$ if and only if $\Pi_i=\Pi_\cA$.
With probability $\frac{1}{2(n-1)}$ over $\Pi_i\la\bit^{\leq \log(n)}$, we have $\Pi_i=\Pi_\cA$. Therefore, it holds that 
\begin{align}
\mathbb{E}_{\{\Pi_i\}_{i\in[s(n)]}\la \bit^{\leq \log(n)}}\left[\sum_{i\in[s(n)]} b_i\right]=\frac{s(n)}{2(n-1)}\label{avg_1}.
\end{align}
From Chernoff bound, for any $\delta$, we have
\begin{align}
\Pr_{\{\Pi_i\}_{i\in[s(n)]}\la \bit^{\leq \log(n)}}\left[\sum_{i\in[s(n)]} b_i \geq (1+\delta) \mathbb{E}[\sum_{i\in[s(n)]} b_i] \right]
&\leq\exp\left(-\frac{\delta^2\mathbb{E}[\sum_{i\in s[(n)]}b_i]}{3}\right)\notag\\
&\leq\exp\left(-\frac{\delta^2\cdot s(n)}{6n}\right)\label{ineq_1}
\end{align}

On the other hand, we have
\begin{align}
\mathbb{E}_{\{\Pi_i\}_{i\in[s(n)]}\la\bit^{\leq \log(n)}}\left[\sum_{i\in[s(n)]}b_i\mathsf{TD}(\rho(\Pi_\cA)_{\cR[i]},\rho(\cD))\right]\notag
&=\sum_{i\in[s(n)]}\mathbb{E}[b_i]\mathsf{TD}(\rho(\Pi_\cA)_{\cR[i]},\rho(\cD))\\
&\geq \frac{1}{2(n-1)}\left(s(n)\cdot\epsilon(n)\right)\label{avg_2}.
\end{align}
From Chernoff bound, we have
\begin{align}
&\Pr_{\{\Pi_i\}_{i\in[s(n)]}\la\bit^{\leq\log(n)} }\left[\sum_{i\in[s(n)]}b_i\mathsf{TD}(\rho(\Pi_\cA)_{\cR[i]},\rho(\cD)) < (1-\delta) \mathbb{E}\left[\sum_{i\in[s(n)]}b_i\mathsf{TD}(\rho(\Pi_\cA)_{\cR[i]},\rho(\cD))\right]\right]\notag\\
&\leq\exp\left(\frac{-\delta^2}{2}\mathbb{E}\left[\sum_{i\in[s(n)]}b_i\mathsf{TD}(\rho(\Pi_\cA)_{\cR[i]},\rho(\cD))\right] \right)
\leq \exp\left(\frac{-\delta^2\cdot s(n)\epsilon(n)}{4n} \right)\label{ineq_2}
\end{align}

From \cref{ineq_1,ineq_2,avg_1,avg_2}, with probability at least
\begin{align*}
    1-\exp\left(\frac{-\delta^2\cdot s(n)\epsilon(n)}{4n} \right)-\exp\left(-\frac{\delta^2\cdot s(n)}{6n}\right)
\end{align*}
over $\Pi_i\la\bit^{\leq \log(n)}$,
we have
\begin{align*}
&\frac{1}{\mathsf{Count}[\Pi_\cA]}\sum_{i\in[s(n)]:\Pi_i=\Pi_\cA}\mathsf{TD}(\rho(\Pi_\cA)_{\cR[i]},\rho(\cD))\\
&=\frac{1}{\sum_{i\in[s(n)]} b_i}\sum_{i\in[s(n)]} b_i\mathsf{TD}(\rho(\Pi_\cA)_{\cR[i]},\rho(\cD))\\ 
&\geq \frac{(1-\delta)\mathbb{E}[\sum_{i\in[s(n)]}b_i\mathsf{TD}(\rho(\Pi_\cA)_{\cR[i]},\rho(\cD))] }{(1+\delta)\mathbb{E}[\sum_{i\in[s(n)]}b_i]}\geq \frac{1-\delta}{1+\delta}\epsilon(n).
\end{align*}
This completes the proof.

\taiga{kokomade}
\end{proof}
\fi

\if0
\color{red}
The following is note for proof.
\color{black}

\begin{enumerate}
\item 
We have
\begin{align}
\frac{1}{s(n)}\sum_{i\in[s(n)]} \mathsf{TD}(\rho(\Pi)_{\cR[i]},\rho(\cD))\geq \epsilon(n),
\end{align}
which follows from the triangle inequality.
\item 
We consider the following procedure:
Sample $\Pi_i\la\bit^{\leq \log(n)}$ for all $i\in[s(n)]$.
Let $\mathsf{Count}=\{i:\Pi_i=\Pi_\cA\}$.
We have
\begin{align}
\Pr_{\{\Pi_i\}_{i\in[s(n)]}\la \bit^{\leq \log(n)}}\left[\frac{1}{\mathsf{Count}}\sum_{i\in[s(n)]: \Pi_i=\Pi_\cA}\mathsf{TD}(\rho(\Pi_\cA)_{\cR[i]},\rho(\cD))\geq\frac{1-\delta}{1+\delta} \epsilon(n)\right]\geq 1-\negl(n).
\end{align}

\item
\color{red}
I do not use this.
\color{black}
\begin{align}
&\frac{1}{s(n)}\left(\sum_{i \in [s(n)]} \Pr[1\la\mathsf{Dis}(\Pi_\cA,i,\rho(\Pi_\cA)_{\cR[i]})]-\sum_{i\in[s(n)]}\Pr[1\la\mathsf{Dis}(\Pi_\cA,i,\rho(\cD))] \right)\\
&\geq \epsilon(n)-\epsilon(n)\cdot n^{-100c}.
\end{align}

\item
With high probability over $\{\Pi_i\}_{i\in[s(n)]}\la \bit^{\leq \log(n)}$, we have
\begin{align}
&\frac{1}{\mathsf{Count}}\left(\sum_{i\in[s(n)]:\Pi_i=\Pi_\cA} \Pr[1\la\mathsf{Dis}(\Pi_\cA,i,\rho(\Pi_\cA)_{\cR[i]})]-\sum_{i\in[s(n)]}\Pr[1\la\mathsf{Dis}(\Pi_\cA,i,\rho(\cD))]\right)\\
&\geq \frac{1-\delta}{1+\delta}\epsilon(n)- \epsilon(n)\cdot n^{-100c}.
\end{align}

\item 
With high probability over $\{\Pi_i\}_{i\in[s(n)]}\la\bit^{\leq \log(n)}$ and over the randomness $A_{\Pi_\cA,i}\la \mathsf{Dis}(\Pi_\cA,i,\rho(\Pi_\cA)_{\cR[i]})$ and $B_{\Pi_\cA,i}\la \mathsf{Dis}(\Pi_\cA,i,\rho(\cD))$, we have
\begin{align}
\frac{1}{\mathsf{Count}}\sum_{i\in[s(n)]:\Pi_i=\Pi_{\cA}}\left( A_{\Pi_\cA,i}-B_{\Pi_\cA,i}\right)\geq n^{-100c}.
\end{align}
\end{enumerate}
\fi

\subsection{Hardness of Verifying Quantum States from EFI}

\begin{proposition}[Restatement of \cref{informal:state_verification}]\label{thm:hardness_EFI}
    Suppose that infinitely-often EFI exists.
    Then, there exists a QPT algorithm $\cD$ with quantum state outputs, which is hard to verify in the sense of Definition~\ref{def:verify_state}.
\end{proposition}

\begin{proof}[Proof of \cref{thm:hardness_EFI}]
Let us consider an EFI $\Gen$.
We can see that $\Gen(1^n,0)$ is hard to verify.
From the definition of EFI, for any QPT algorithm $\Vrfy$ and any polynomial $s$, if we have
\begin{align*}
\Pr[\top\la\Vrfy(\rho_0^{\otimes s(n)}):\rho_0^{\otimes s(n)}\la\Gen(1^n,0)^{\otimes s(n)}]\geq 1-\negl_1(n),
\end{align*}
then we have
\begin{align*}
\Pr[\top\la\Vrfy(\rho_1^{\otimes s(n)}):\rho_1^{\otimes s(n)}\la\Gen(1^n,1)^{\otimes s(n)}]\geq 1-\negl_2(n),
\end{align*}
On the other hand, we have
\begin{align*}
\mathsf{TD}(\Gen(1^n,0),\Gen(1^n,1))\geq 1-\negl(n).
\end{align*}
This means that any QPT algorithm $\Vrfy$ cannot satisfy correctness and adaptive-soundness at the same time.
\end{proof}

\section{Quantum Distribution Verification without Kolmogorov Complexity}
\label{sec:ver-without-K}

\subsection{OWPuzzs from Hardness of Quantum Distribution Verification}
Our main theorem in this section is \cref{thm:OWPuzz_Verify}, which shows how to efficiently verify any quantum distributions assuming the non-existence of OWPuzzs.
Let us emphasize that \cref{thm:OWPuzz_Verify} supersedes \cref{thm:q_easiness} in that the underlying assumption is weaker.
In \cref{thm:q_easiness}, we show that $\mathbf{BQP}=\mathbf{PP}$ implies the easiness of verifying quantum distributions.
While $\mathbf{BQP}=\mathbf{PP}$ implies the non-existence of OWPuzzs, the other direction is open.

\begin{theorem}[Restatement of \cref{inf:q_verify}]\label{thm:OWPuzz_Verify}
Assume that infinitely-often OWPuzzs do not exist.
Then, any QPT algorithm $\cD$ is adaptively-verifiable with a QPT algorithm.
\end{theorem}

\begin{remark}
We prove \cref{thm:OWPuzz_Verify} based on the same underlying idea as in the proof of \cref{thm:state_verification}.
However, in the present setting the protocol and its analysis admit a simpler formulation, and we therefore include the proof for clarity.
Concretely, in the quantum state verification protocol, for each $i\in[s(n)]$, we randomly sample $\Pi_i\la\bit^{\leq \log(n)}$ and run $\mathsf{Dis}_{\Pi_i,i}(\rho_{\cR_i})$.
This is because we might not be able to run $\mathsf{Dis}_{\Pi,i}(\rho_{\cR_i})$ for all $\Pi\in\bit^{\leq \log(n)}$ since quantum measurements may collapse the state $\rho_{\cR_i}$.
For distributions, such per-index sampling of $\Pi_i$ is unnecessary, which leads to a simpler analysis.
\end{remark}

For showing \cref{thm:OWPuzz_Verify}, we will use \cref{thm:owpuzz}.
We provide the proof of \cref{thm:owpuzz} in \cref{sec:owpuzz_non_uniform_QEFID}.
\begin{theorem}\label{thm:owpuzz}
Suppose that infinitely-often OWPuzzs do not exist.
Then, infinitely-often weak non-uniform QEFID with computational indistinguishability and short advice do not exist.
\end{theorem}

\begin{remark}
Let us remark that this result is itself new to the field.
%\cref{thm:owpuzz}
\cite{C:ChuGolGra24} 
also studies how to construct OWPuzzs from QEFID; however, their technique cannot be applied in our setting.
More specifically, they show that QEFID implies OWPuzzs, and that weak OWPuzzs can be amplified to full OWPuzzs in certain parameter regimes.
A natural approach would therefore be to first convert weak non-uniform QEFID into weak OWPuzzs, and then lift the latter to obtain a full OWPuzzs.
However, the transformation from weak OWPuzzs to OWPuzzs in \cite{C:ChuGolGra24} requires parameter constraints that are not satisfied in our setting, so to the best of our knowledge this approach does not apply.
We therefore develop an alternative technique.
\end{remark}

\begin{proof}[Proof of \cref{thm:OWPuzz_Verify}]

In the following, we show that, for any polynomial $t$, any constant $c$, and any inverse polynomial $\epsilon$, there exists a QPT algorithm $\Vrfy$ and a polynomial $s$ that satisfies the correctness and adaptive soundness assuming the non-existence of infinitely-often OWPuzzs.

Before describing our verification protocol, we introduce the notation used throughout.
\paragraph{Notations:}
Let $U$ be a classical universal Turing machine. For each $n\in\N$ and each $\Pi\in\bit^*$, we define $\mathsf{QU}_{s,m}^{t(n)}(1^n,\Pi)$.
In the proof of \cref{thm:OWPuzz_Verify}, we define $\widetilde{\mathsf{QU}}_{s,m}^{t(n)}(1^n,\Pi)$, which outputs a quantum state.
Compared to $\widetilde{\mathsf{QU}}_{s,m}^{t(n)}(1^n,\Pi)$, $\mathsf{QU}_{s,m}^{t(n)}(1^n,\Pi)$ outputs a classical string by measuring a quantum state in the computational basis.
For clarity, let us describe $\mathsf{QU}_{s,m}^{t(n)}(1^n,\Pi)$.

\begin{description}
\item[$\mathsf{QU}_{s,m}^{t(n)}(1^n,\Pi)$:]$ $
\begin{itemize}
\item Run a classical universal Turing machine $U$ on $1^n$ and $\Pi$, halts within $t(n)$-steps, and obtains $c\in\bit^*$.
\item Consider $c$ as an encoding of output length $m(n)\cdot s(n)$, an input size $k(n)$, and quantum circuits $C$.
\item Prepare $C\ket{0^{k(n)}}$, measure the first $m(n)\cdot s(n)$-qubits of $C\ket{0^{k(n)}}$ in the computational basis, and output its outcome.\footnote{In quantum computing, the standard model of computation is the quantum circuit model rather than quantum Turing machine. Therefore, we define $\mathsf{QU}_{s,m}^{t(n)}(1^n,\Pi)$ using quantum circuits instead of using a quantum Turing machine.}  
\end{itemize}
\end{description}
We denote $\mathsf{QU}_{s,m}^{t(n)}(1^n,\Pi,i)$ to mean the following algorithm:
\begin{enumerate}
\item Run $x_1\|...\|x_{s(n)}\la \mathsf{QU}_{s,m}^{t(n)}(1^n,\Pi)$.
\item Output $x_i$.
\end{enumerate}

\if0
\color{red}
From \cref{cor:non_uniform}, for each $\Pi\in\bit^{\log(n)}$, $i\in[s(n)]$, any polynomial $t$ and any polynomial $p$, there exists a QPT algorithm $\mathsf{Dis}_{\Pi,i}$ such that the following is satisfied:
\begin{align}
\Pr[1\la\mathsf{Dis}_{\Pi,i}(x):x\la \mathsf{QU}_{s,m}^{t(n)}(1^n,\Pi,i)]-\Pr[1\la\mathsf{Dis}_{\Pi,i}(x):x\la\cD(1^n) ]\geq\mathsf{SD}(\mathsf{QU}_{s,m}^{t(n)}(1^n,\Pi,i),\cD(1^n))-\frac{1}{p(n)}
\end{align}
for all sufficiently large $n\in\N$.
\color{black}
\fi

From \cref{thm:owpuzz} and Observation~\ref{cor:non_uniform}, there exists a QPT algorithm $\mathsf{Dis}$ such that the followings are satisfied:
$\mathsf{Dis}$ takes $\Pi\in\bit^{\leq \log(n)}$, $i\in[s(n)]$, and $x\in\bit^{m(n)}$ as input, outputs $b\in\bit$, and satisfies the following properties for all $\Pi\in\bit^{\leq \log(n)}$ and $i\in[s(n)]$:

If
\begin{align*}
\mathsf{SD}(\mathsf{QU}_{s,m}^{t(n)}(1^n,\Pi,i),\cD(1^n))\geq \alpha(n)
\end{align*}
for some function $\alpha(n)>\epsilon(n)\cdot 2n^{-100c}$,
then we have
\begin{align*}
&\Pr[1\la\mathsf{Dis}(\Pi,i,x):x\la \mathsf{QU}_{s,m}^{t(n)}(1^n,\Pi,i)]-\Pr[1\la\mathsf{Dis}(\Pi,i,x):x\la\cD(1^n) ]\\
&\geq \mathsf{SD}(\mathsf{QU}_{s,m}^{t(n)}(1^n,\Pi,i),\cD(1^n))-\epsilon(n)\cdot n^{-100c}
\end{align*}
for all sufficiently large $n\in\N$.

In the following, we set $s(n)\seteq 2n^{10}\cdot n^{200c}$.~\footnote{Our goal is not to analyze the sample complexity of distribution verification in our definition, but its computational complexity. Hence, we take $s(n)$ quite large polynomial.
We leave to optimize the sample complexity as future work.}
\paragraph{Description of Verification Algorithm:}
Now, we give our construction of $\Vrfy$ as follows:
\begin{description}
\item[$\Vrfy\left(1^n,x_1\|...\|x_{s(n)} \right)$:]$ $
\begin{enumerate}
\item For each \( i \in [s(n)] \) and $\Pi\in\bit^{\leq \log(n)} $, run 
$
A_{\Pi,i}(x_i)\la\mathsf{Dis}_{\Pi,i}(x_i)
$.
\item For each \( i \in [s(n)] \) and $\Pi\in\bit^{\leq \log(n)}$, run $x_i^*\la\cD(1^n)$, run $B_{\Pi,i}(x_i^*)\la\mathsf{Dis}_{\Pi,i}(x_i^*)$.
\item If there exists a $\Pi\in\bit^{\leq\log(n)}$ such that
\begin{align*}
\frac{1}{s(n)}\left(\sum_{i \in [s(n)]} A_{\Pi,i}(x_i)-\sum_{i\in[s(n)]}B_{\Pi,i}(x_i^*)\right)\geq n^{-100c},
\end{align*}
then output $\bot$.
Otherwise, output $\top$.
\end{enumerate}
\end{description}

In the following, we show that $\Vrfy$ satisfies the correctness and adaptive-soundness.
\paragraph{Correctness:}
In the following, we show that
\begin{align*}
\Pr[\top\la\Vrfy(1^n,x_1,...,x_{s(n)}):x_1,...,x_{s(n)}\la\cD(1^n)^{\otimes s(n)}]\geq 1-\negl(n).
\end{align*}

For this, it is sufficient to show the following claim~\ref{claim:union}.
\begin{claim}\label{claim:union}
With probability at least $1-\negl(n)$, we have    
\begin{align*}
\frac{1}{s(n)}\left(\sum_{i \in [s(n)]} A_{\Pi,i}(x_i)-\sum_{i\in[s(n)]}B_{\Pi,i}(x_i^*)\right)<n^{-100c}
\end{align*}
for all $\Pi\in\bit^{\leq \log(n)}$.
Here, the probability is taken over $x_1,...,x_{s(n)}\la\cD(1^n)^{\otimes s(s)}$, $x_1^*,...,x_{s(n)}^*\la\cD(1^n)$, $\{A_{\Pi,i}(x_i)\la\mathsf{Dis}_{\Pi,i}(x_i)\}_{i\in[s(n)],\Pi\in\bit^{\leq \log(n)}}$, and $\{B_{\Pi,i}(x_i)\la\mathsf{Dis}_{\Pi,i}(x_i)\}_{i\in[s(n)],\Pi\in\bit^{\leq \log(n)}}$.
\end{claim}
Claim~\ref{claim:union} directly follows from union bound and Hoeffding's inequality.
For clarity, we describe the proof.
\begin{proof}[Proof of \cref{claim:union}]
From Hoeffdings's inequality~(Lemma~\ref{Hoeffding}), for each fixed $\Pi\in\bit^{\leq \log(n)}$, we have
\begin{align*}
&\Pr_{\substack{x_1,...,x_{s(n)}\la\cD(1^n)^{\otimes s(n)}\\ \{A_{\Pi,i}(x_i)\la\mathsf{Dis}_{\Pi,i}(x_i)\}_{i\in[s(n)]} } }\left[\frac{1}{s(n)}\left(\sum_{i\in[s(n)]}A_{\Pi,i}(x_i)-\sum_{i\in[s(n)]}\Pr[1\la\mathsf{Dis}(\Pi,i,x):x\la\cD(1^n)]\right)>\frac{ n^{-100c}}{2}\right]\\
&\hspace{15cm}\leq 2^{-\frac{s(n)}{2n^{200c}}}.
\end{align*}
This follows by considering the procedure $x_i\la\cD(1^n)$ and $A_{\Pi,i}(x_i)\la\mathsf{Dis}_{\Pi,i}(x_i) $ as a random variable $X_i$ in Lemma~\ref{Hoeffding}.

Similarly, for each fixed $\Pi\in\bit^{\leq \log(n)}$, we have
\begin{align*}
&\Pr_{\substack{x_1^*,...,x_{s(n)}^*\la\cD(1^n)^{\otimes s(n)}\\ \{ B_{\Pi,i}(x_i^*)\la \mathsf{Dis}_{\Pi,i}(x_i^*) \}_{i\in[s(n)]}}  }\left[\frac{1}{s(n)}\left(\sum_{i\in[s(n)]}\Pr[\mathsf{Dis}(\Pi,i,x):x\la\cD(1^n)]-\sum_{i\in[s(n)]}B_{\Pi,i}(x_i^*)\right)>\frac{n^{-100c}}{2}\right]\\
&\hspace{15cm}\leq 2^{-\frac{s(n)}{2n^{200c}}}.
\end{align*}
From union bound, two inequalities above imply that for each fixed $\Pi\in\bit^{\leq \log(n)}$, we have
\begin{align*}
\Pr_{\substack{x_1,...,x_{s(n)}\la\cD(1^n)^{\otimes s(n)},\\
x_1^*,...,x_{s(n)}^*\la\cD(1^n)^{\otimes s(n)}\\
\{A_{\Pi,i}(x_i)\la \mathsf{Dis}_{\Pi,i}(x_i)\}_{i\in[s(n)]}\\
\{B_{\Pi,i}(x_i^*)\la \mathsf{Dis}_{\Pi,i}(x_i^*)\}_{i\in[s(n)]}
}}
\left[\frac{1}{s(n)}\left(\sum_{i \in [s(n)]} A_{\Pi,i}(x_i)-\sum_{i\in[s(n)]}B_{\Pi,i}(x_i^*)\right)<n^{-100c}\right]\geq 1-2\cdot 2^{-\frac{s(n)}{2n^{200c}}}
\end{align*}
for all $n\in\N$.
By applying union bound again, Claim~\ref{claim:union} follows.
\end{proof}

\paragraph{Adaptive-Soundness:}

In the following, we show that for any $t(n)$-time quantum algorithm $\cA$ such that
\begin{align*}
\mathsf{SD}(\mathsf{Marginal}_{\cA}(1^n),\cD(1^n))\geq \epsilon(n),
\end{align*}
we have
\begin{align*}
\Pr[\top\la \Vrfy(1^n,x_1,...,x_{s(n)}):x_1,...,x_{s(n)}\la\cA(1^n)]\leq 1-\epsilon(n)+n^{-c}
\end{align*}
for all sufficiently large $n\in\N$.

For this, it is sufficient to show Claim~\ref{claim:Markov}.
\begin{claim}\label{claim:Markov}
For all sufficiently large $n\in\N$, there exists a $\Pi_\cA\in\bit^{\leq \log(n)}$ such that, with probability at least 
\begin{align*}
    \epsilon(n)-n^{-c},
\end{align*}
we have
\begin{align*}
\frac{1}{s(n)}\sum_{i\in[s(n)]}\left(A_{\Pi_{\cA},i}(x_i)-B_{\Pi_{\cA},i}(x_i^*)\right)\geq n^{-100c}.
\end{align*}
Here, the probability is taken over $x_1,...,x_{s(n)}\la\cA(1^n)$, $x_1^*,...,x_{s(n)}^*\la\cD(1^n)^{\otimes s(n)}$, $\{A_{\Pi_\cA,i}(x_i)\la\mathsf{Dis}_{\Pi_\cA,i}(x_i)\}_{i\in[s(n)]}$, and $\{B_{\Pi_\cA,i}(x_i^*)\la\mathsf{Dis}_{\Pi_\cA,i}(x_i^*)\}_{i\in[s(n)]}$
\end{claim}
Once we have obtained Claim~\ref{claim:Markov}, adaptive-soundness directly follows.
Recall that from the construction of $\Vrfy$, if there exists $\Pi_\cA\in\bit^{\leq \log(n)}$ such that
\begin{align*}
\frac{1}{s(n)}\sum_{i\in[s(n)]}\left(A_{\Pi_\cA,i}(x_i)-B_{\Pi_\cA,i}(x_i^*)\right)\geq n^{-100c},
\end{align*}
then $\Vrfy$ outputs $\bot$.
Therefore, from Claim~\ref{claim:Markov}, we have
\begin{align*}
&\Pr[\bot\la\Vrfy(x_1,...x_{s(n)}):(x_1,...,x_{s(n)})\la\cA(1^n)]\geq \epsilon(n)-n^{-c}
\end{align*}
for all sufficiently large $n\in\N$.
This implies that
\begin{align*}
&\Pr[\top\la\Vrfy(x_1,...x_{s(n)}):(x_1,...,x_{s(n)})\la\cA(1^n)]\leq 1-\epsilon(n)+n^{-c}
\end{align*}
for all sufficiently large $n\in\N$, which completes the proof of adaptive-soundness.

The remaining part of the proof is to show Claim~\ref{claim:Markov}. For showing Claim~\ref{claim:Markov}, we will use the following Claim~\ref{claim:probabilistic_argument}.

\begin{claim}\label{claim:probabilistic_argument}
For all sufficiently large $n\in\N$, there exists $\Pi_\cA\in\bit^{\leq \log(n)}$ such that
\begin{align*}
&\frac{1}{s(n)}\left(\sum_{i\in[s(n)]}\Pr[1\la\mathsf{Dis}(\Pi_{\cA},i,x):x\la\mathsf{QU}_{s,m}^{t(n)}(1^n,\Pi_\cA,i)]-\Pr[1\la\mathsf{Dis}(\Pi_{\cA},i,x):x\la\cD(1^n)]\right)\\
&\geq \epsilon(n)-3\epsilon(n)\cdot n^{-100c}.
\end{align*}    
\end{claim}
For showing Claim~\ref{claim:probabilistic_argument}, we will use the following Claim~\ref{claim:triangle}.

\begin{claim}\label{claim:triangle}
For all sufficiently large $n\in\N$, there exists a $\Pi_\cA\in\bit^{\leq\log(n)}$ such that
\begin{align*}
\frac{1}{s(n)}\sum_{i\in[s(n)]}\mathsf{SD}(\mathsf{QU}_{s,m}^{t(n)}(1^n,\Pi_\cA,i),\cD(1^n))\geq \mathsf{SD}(\mathsf{Marginal}_{\cA}(1^n),\cD(1^n))\geq \epsilon(n).
\end{align*}
\end{claim}

First, let us describe the proof of Claim~\ref{claim:Markov}.

\begin{proof}[Proof of Claim~\ref{claim:Markov}]
From Claim~\ref{claim:probabilistic_argument}, for all sufficiently large $n\in\N$, there exits $\Pi_{\cA}\in\bit^{\leq \log(n)}$ such that
\begin{align*}
&\mathbb{E}_{\substack{x_1,...,x_{s(n)}\la\cA(1^n),\\
x_1^*,...,x_{s(n)}^*\la\cD(1^n)^{\otimes s(n)},\\ 
A_{\Pi_\cA,i}(x_i)\la\mathsf{Dis}_{\Pi_\cA,i}(x_i)\\
B_{\Pi_\cA,i}(x_i^*)\la\mathsf{Dis}_{\Pi_\cA,i}(x_i^*)}}\left[\frac{1}{s(n)}\sum_{i\in[s(n)]}\left(A_{\Pi_\cA,i}(x_i)-B_{\Pi_\cA,i}(x_i^*)\right)\right]\\
&=\frac{1}{s(n)}\sum_{i\in[s(n)]}\left(\Pr[1\la\mathsf{Dis}_{\Pi_\cA,i}(x_i):x_i\la\mathsf{QU}_{s,m}^{t(n)}(1^n,\Pi_\cA,i)]-\Pr[1\la\mathsf{Dis}(\Pi_\cA,i,x):x\la\cD(1^n)] \right)\\
&\geq \epsilon(n)-3\epsilon(n)\cdot n^{-100c}.
\end{align*}
In the following, for simplicity, we write $X$ to denote $\frac{1}{s(n)}\sum_{i\in[s(n)]}\left(A_{\Pi_\cA,i}(x_i)-B_{\Pi_\cA,i}(x_i^*)\right)$.

Then, we have
\begin{align*}
\epsilon(n)-3\epsilon(n)\cdot n^{-100c}\leq \mathbb{E}[X]
&\leq n^{-100c}\Pr\left[X<n^{-100c}\right] +\Pr\left[n^{-100c}\leq X\right]\\
&=n^{-100c}+\left(1-n^{-100c}\right) \Pr\left[n^{-100c}\leq X\right].
\end{align*}
This implies that
\begin{align*}
\Pr\left[n^{-100c}\leq X\right]\geq\frac{\epsilon(n)-3\epsilon(n)\cdot n^{-100c}-n^{-100c}}{1-n^{-100c}}\geq \epsilon(n)-n^{-c}
\end{align*}
for all sufficiently large $n\in\N$.
This completes the proof.
\end{proof}

\begin{proof}[Proof of Claim~\ref{claim:probabilistic_argument}]
Since $\cA$ is a uniform algorithm, for all sufficiently large $n\in\N$, there exists a description $\Pi_\cA\in\bit^{\leq \log(n)}$ of an algorithm $\cA$ such that $\mathsf{QU}_{s,m}^{t(n)}(1^n,\Pi_{\cA})$ is statistically equivalent to $\cA(1^n)$.

Let us define a family $\cT_n$ of $i\in [s(n)]$ as follows:
\begin{align*}
\cT_n\seteq \{i\in[s(n)] :\mathsf{SD}(\mathsf{QU}_{s,m}^{t(n)}(1^n,\Pi_\cA,i),\cD(1^n))\geq 2\epsilon(n)\cdot n^{-100c}\}.
\end{align*}
From the definition of $\mathsf{Dis}$, for all $i\in\cT_n$, we have
\begin{align*}
&\Pr[1\la\mathsf{Dis}(\Pi_{\cA},i,x):x\la\mathsf{QU}_{s,m}^{t(n)}(1^n,\Pi_\cA,i)]-\Pr[1\la\mathsf{Dis}(\Pi_{\cA},i,x):x\la\cD(1^n)]\\
&\geq \mathsf{SD}(\mathsf{QU}_{s,m}^{t(n)}(1^n,\Pi_\cA,i),\cD(1^n)) -\epsilon(n)\cdot n^{-100c}
\end{align*}
for all sufficiently large $n\in\N$.
Therefore, we have 
\begin{align}
&\frac{1}{s(n)}\left(\sum_{i\in[s(n)]}\Pr[1\la\mathsf{Dis}(\Pi_{\cA},i,x):x\la\mathsf{QU}_{s,m}^{t(n)}(1^n,\Pi_\cA,i)]-\Pr[1\la\mathsf{Dis}(\Pi_{\cA},i,x):x\la\cD(1^n)] \right)\notag\\
&\geq \frac{1}{s(n)}\left(\sum_{i\in\cT_n}\Pr[1\la\mathsf{Dis}(\Pi_{\cA},i,x):x\la\mathsf{QU}_{s,m}^{t(n)}(1^n,\Pi_\cA,i)]-\Pr[1\la\mathsf{Dis}(\Pi_{\cA},i,x):x\la\cD(1^n)] \right)\notag\\
&\geq \frac{1}{s(n)}\left(\sum_{i\in\cT_n}(\mathsf{SD}(\mathsf{QU}_{s,m}^{t(n)}(1^n,\Pi_\cA,i),\cD(1^n))-\epsilon(n)\cdot n^{-100c})\right)\notag\\
&\geq \frac{1}{s(n)}\left(\sum_{i\in\cT_n}\mathsf{SD}(\mathsf{QU}_{s,m}^{t(n)}(1^n,\Pi_\cA,i),\cD(1^n))\right)-\epsilon(n)\cdot n^{-100c}\label{ineq:bound_1}
\end{align}
for all sufficiently large $n\in\N$.
Furthermore, 
\begin{align}
&\frac{1}{s(n)}\sum_{i\in\cT_n}\mathsf{SD}(\mathsf{QU}_{s,m}^{t(n)}(1^n,\Pi_\cA,i),\cD(1^n))\notag\\
&\geq \frac{1}{s(n)}\left(\sum_{i\in[s(n)]}\mathsf{SD}(\mathsf{QU}_{s,m}^{t(n)}(1^n,\Pi_\cA,i),\cD(1^n))-\sum_{i\notin \cT_n}\mathsf{SD}(\mathsf{QU}_{s,m}^{t(n)}(1^n,\Pi_\cA,i),\cD(1^n))\right)\notag\\
&\geq\epsilon(n)-\frac{1}{s(n)}\sum_{i\notin \cT}2\epsilon(n)\cdot n^{-100c}\notag\\
&\geq \epsilon(n)-2\epsilon(n)\cdot n^{-100c}\label{ineq:bound_2},
\end{align}
for all $n\in\N$, where, in the second inequality, we have used Claim~\ref{claim:triangle}.

By combining \cref{ineq:bound_1,ineq:bound_2}, we have 
\begin{align*}
&\frac{1}{s(n)}\left(\sum_{i\in[s(n)]}\Pr[1\la\mathsf{Dis}(\Pi_{\cA},i,x):x\la\mathsf{QU}_{s,m}^{t(n)}(1^n,\Pi_\cA,i)]-\Pr[1\la\mathsf{Dis}(\Pi_{\cA},i,x):x\la\cD(1^n)] \right)\\
&\geq \epsilon(n)-3\epsilon(n)\cdot n^{-100c}
\end{align*}
for all sufficiently large $n\in\N$.
\end{proof}

\begin{proof}[Proof of Claim~\ref{claim:triangle}]
Because $\cA$ is a uniform algorithm, for all sufficiently large $n\in\N$, there exists a $\Pi_\cA\in\bit^{\leq \log(n)}$ such that $\mathsf{QU}_{s,m}^{t(n)}(1^n,\Pi_{\cA})$ is statistically equivalent to $\cA(1^n)$.

Such $\Pi_\cA$ satisfies
\begin{align*}
&\frac{1}{s(n)}\sum_{i\in[s(n)]}\mathsf{SD}(\mathsf{QU}_{s,m}^{t(n)}(1^n,\Pi_\cA,i),\cD(1^n))\\
&=\frac{1}{s(n)}\sum_{i\in[s(n)]}\left(\sum_{x\in\bit^*}\abs{\Pr[x\la\mathsf{QU}_{s,m}^{t(n)}(1^n,\Pi_\cA,i)]-\Pr[x\la\cD(1^n)]}\right)\\
&=\sum_{x\in\bit^*}\left(\frac{1}{s(n)}\sum_{i\in[s(n)]}\abs{\Pr[x\la\mathsf{QU}_{s,m}^{t(n)}(1^n,\Pi_\cA,i)]-\Pr[x\la\cD(1^n)]}\right)\\
&\geq\sum_{x\in\bit^*}\abs{\frac{1}{s(n)}\sum_{i\in[s(n)]}\Pr[x\la\mathsf{QU}_{s,m}^{t(n)}(1^n,\Pi_\cA,i)]-\Pr[x\la\cD(1^n)] }\\
&=\sum_{x\in\bit^*}\abs{\Pr[x\la\mathsf{Marginal}_{\cA}(1^n)]-\Pr[x\la\cD(1^n)]}\\
&=\mathsf{SD}(\mathsf{Marginal}_{\cA}(1^n),\cD(1^n))\geq \epsilon(n).
\end{align*}
Here, in the inequality we have used triangle inequality (Lemma~\ref{triangle}).
\end{proof}
\end{proof}

\subsection{OWPuzzs from Weak Non-Uniform QEFID}\label{sec:owpuzz_non_uniform_QEFID}

In this section, we prove \cref{thm:owpuzz}, which states that OWPuzzs can be constructed from weak non-uniform QEFID.
For showing this, we use the following \cref{thm:extrapolate}.

\begin{theorem}[\cite{C:ChuGolGra24}]\label{thm:extrapolate}
Suppose that infinitely-often OWPuzzs do not exist.
Then, for any constant $c>1$, and any QPT algorithm $\cQ$, which takes $1^n$ as input, and outputs $(k,s)\in\bit^*$, there exists a QPT algorithm $\mathsf{Ext}$ such that the followings are satisfied:
\begin{align*}
\mathsf{SD}((k,s)_{(k,s)\la\cQ(1^n)},(\mathsf{Ext}(s),s)_{(k,s)\la\cQ(1^n)})\leq n^{-c}
\end{align*}
for all sufficiently large $n\in\N$.
\end{theorem}

What we actually need is Lemma~\ref{lem:distOWPuzz}, which is a slight generalization of \cref{thm:extrapolate}.
Specifically, \cref{thm:extrapolate} considers a quantum algorithm $\cQ$ that takes $1^n$ as input and outputs $(k,s)$.
Instead, in the generalized Lemma~\ref{lem:distOWPuzz}, we consider an algorithm $\cQ^*$  that is also allowed to take a short advice string $\mu\in[n]$ as input, and for all short advice the QPT algorithm $\mathsf{Ext}$ must succeed in extrapolation.
This proposition can be proved using a basic probabilistic argument.
For completeness, we include the proof at the end of this section.

\begin{lemma}\label{lem:distOWPuzz}
Suppose that infinitely-often OWPuzzs do not exist.
Then, for any constant $c>1$, for any QPT algorithm $\cQ$, which takes $1^n$ and $\mu\in[n]$ as input, and outputs a pair of classical string $(k,s)\in\bit^*$, there exists a QPT algorithm $\mathsf{Ext}$ such that for all $\mu\in[n]$,
\begin{align*}
\mathsf{SD}((k,s)_{k,s\la\cQ(1^n,\mu)},(\mathsf{Ext}(1^n,\mu,s),s)_{s\la\cQ(1^n,\mu)} )\leq n^{-c}
\end{align*}
for all sufficiently large $n\in\N$.
\end{lemma}

Now, we provide the proof of \cref{thm:owpuzz}.
\begin{proof}[Proof of \cref{thm:owpuzz}]
Let $\Gen$ be an arbitrary candidate of weak non-uniform QEFID that takes $1^n$, $\mu\in[n]$, and $b\in\bit$ as input, and outputs $x\in\bit^*$.
Assume that infinitely-often OWPuzzs do not exist.
Then, for any constant $c$, we construct a QPT algorithm $\cA$ such that the followings are satisfied:
For any $\mu\in[n]$ such that
\begin{align*}
\mathsf{SD}(\Gen(1^n,\mu,0),\Gen(1^n,\mu,1))\geq 2n^{-c},
\end{align*}
we have
\begin{align*}
\abs{\Pr_{x\la\Gen(1^n,\mu,0)}[1\la\cA(1^n,\mu,x)]-\Pr_{x\la\Gen(1^n,\mu,1)}[1\la\cA(1^n,\mu,x)] }\geq 
\mathsf{SD}(\Gen(1^n,\mu,0),\Gen(1^n,\mu,1))-n^{-c}
\end{align*}
for all sufficiently large $n\in\N$.

To describe $\cA$, let us consider a QPT algorithm $\cQ^*$:
\begin{description}
\item[$\cQ^*(1^n,\mu)$:]$ $
\begin{enumerate}
\item Sample $b\la\bit$.
\item Run $x\la\Gen(1^n,\mu,b)$.
\item Output $(b,x)$.
\end{enumerate}
\end{description}
For the construction of $\cA$, we will use a QPT algorithm $\mathsf{Ext}$ such that the following is satisfied:
For any $\mu\in[n]$, 
\begin{align*}
\mathsf{SD}\left(\left(b,x\right)_{(b,x)\la\cQ^*(1^n,\mu)}, \left(\mathsf{Ext}(\mu,x),x)\right)_{x\la\cQ^*(1^n,\mu)}\right)\leq n^{-100c}
\end{align*}
for all sufficiently large $n\in\N$. 
Note that such $\mathsf{Ext}$ exists if infinitely-often OWPuzz does not exist from Lemma~\ref{lem:distOWPuzz}.

Now, we give the construction of $\cA$:
\begin{description}
\item[$\cA(1^n,\mu,x)$:]$ $
\begin{enumerate}
\item For all $i\in[36n^{20c+1}]$, run $b_i\la\mathsf{Ext}(1^n,\mu,x)$.
\item If 
\begin{align*}
    \frac{1}{36n^{20c+1}}\sum_{i\in[36n^{20c+1}]}b_i\geq \frac{1}{2},
\end{align*}
output $1$.
Otherwise, output $0$.
\end{enumerate}
\end{description}

Now, we analyze $\cA(1^n,\mu,x)$.
For this, let us introduce notations and claims.
We denote $\mathsf{Post}(1^n,\mu,x)$ to mean a post-selection algorithm $\cQ^*(1^n,\mu)$ defined as follows:
\begin{description}
\item[$\mathsf{Post}(1^n,\mu,x)$:]$ $ 
\begin{enumerate}
\item Run $(b^*,x^*)\la\cQ^*(1^n,\mu)$ many times until $x=x^*$.
\item Output $b^*$ when $x=x^*$.
\end{enumerate}
\end{description}
Let 
\begin{align*}
\mathsf{Good}_{\mu,n}\seteq \left\{x\in\bit^*:\frac{1}{2}\sum_{b\in\bit}\abs{\Pr[b\la\mathsf{Ext}(\mu,x)]-\Pr[b\la\mathsf{Post}(\mu,x)]}\leq n^{-25c} \right\}.
\end{align*}
For any $\mu\in[n]$, $n\in\N$, and $b\in\bit$, let 
\begin{align*}
\cS_{\mu,n}(b)\seteq \left\{x\in\bit^*: \Pr[x\la\Gen(1^n,\mu,b)]\geq \left(1+n^{-10c}\right)\Pr[x\la\Gen(1^n,\mu,b\oplus 1)] \right\}.
\end{align*}

Now, we have the following claims:
\begin{claim}\label{claim:bound_1}

For all $\mu\in[n]$, we have
\begin{align*}
\sum_{x\in\mathsf{Good}_{\mu,n}}\Pr[x\la\Gen(1^n,\mu,0)]\geq 1-2n^{-75c}
\end{align*}
and
\begin{align*}
\sum_{x\in\mathsf{Good}_{\mu,n}}\Pr[x\la\Gen(1^n,\mu,1)]\geq 1-2n^{-75c}
\end{align*}
for all sufficiently large $n\in\N$.
\end{claim}

\begin{claim}\label{claim:bound_2}
We have
\begin{align*}
\sum_{x\in\cS_{\mu,n}(b)}\left(\Pr[x\la\Gen(1^n,\mu,b)]-\Pr[x\la\Gen(1^n,\mu,b\oplus 1)]\right)\geq \mathsf{SD}(\Gen(1^n,\mu,b\oplus 1),\Gen(1^n,\mu,b))-n^{-10c}.
\end{align*}
\end{claim}

\begin{claim}\label{claim:bound_3}
For all sufficiently large $n\in\N$, for any $\mu\in[n]$, and $b\in\bit$, we have the following:
For all $x\in\cS_{\mu,n}(b)\cap \mathsf{Good}_{\mu,n}$, we have
\begin{align*}
\Pr[b\la\cA(1^n,\mu,x)]\geq 1-\negl(n).
\end{align*}
\end{claim}
We defer the proof of Claims~\ref{claim:bound_1} to \ref{claim:bound_3} to the end of the proof.
Once we have obtained Claims~\ref{claim:bound_1} to \ref{claim:bound_3}, we can conclude the proof.

For all sufficiently large $n\in\N$ and $\mu\in[n]$, we have
\begin{align}
&\Pr_{x\la\Gen(1^n,\mu,1)}[1\la\cA(1^n,\mu,x)]-\Pr_{x\la\Gen(1^n,\mu,0)}[1\la\cA(1^n,\mu,x)]\notag\\
&=\sum_{x\in\bit^*}\left(\Pr[x\la\Gen(1^n,\mu,1)]-\Pr[x\la\Gen(1^n,\mu,0)]\right)\Pr[1\la\cA(1^n,\mu,x)]\notag\\
&=\sum_{x\in S_{\mu,n}(1)}\left(\Pr[x\la\Gen(1^n,\mu,1)]-\Pr[x\la\Gen(1^n,\mu,0)]\right)\Pr[1\la\cA(1^n,\mu,x)]\label{eq:sum_1}\\
&+\sum_{x\in S_{\mu,n}(0)}\left(\Pr[x\la\Gen(1^n,\mu,1)]-\Pr[x\la\Gen(1^n,\mu,0)]\right)\Pr[1\la\cA(1^n,\mu,x)]\label{eq:sum_2}\\
&+\sum_{x\in \overline{S_{\mu,n}(1)}\cap \overline{S_{\mu,n}(0)}}\left(\Pr[x\la\Gen(1^n,\mu,1)]-\Pr[x\la\Gen(1^n,\mu,0)]\right)\Pr[1\la\cA(1^n,\mu,x)]\label{eq:sum_3}
\end{align}
In the following, we bound \cref{eq:sum_1,eq:sum_2,eq:sum_3}.

First, we bound \cref{eq:sum_1}.
For all $\mu\in[n]$ and all sufficiently large $n\in\N$, we have
\begin{align}
&\mbox{\cref{eq:sum_1}}\notag\\
&=\sum_{x\in S_{\mu,n}(1)\cap\mathsf{Good}_{\mu,n}}\left(\Pr[x\la\Gen(1^n,\mu,1)]-\Pr[x\la\Gen(1^n,\mu,0)]\right)\Pr[1\la\cA(1^n,\mu,x)]\notag\\
&+\sum_{x\in \cS_{\mu,n}(1)\cap\overline{\mathsf{Good}_{\mu,n}} }\left(\Pr[x\la\Gen(1^n,\mu,1)]-\Pr[x\la\Gen(1^n,\mu,0)]\right)\Pr[1\la\cA(1^n,\mu,x)]\notag\\
&\geq \sum_{x\in S_{\mu,n}(1)\cap\mathsf{Good}_{\mu,n}}\left(\Pr[x\la\Gen(1^n,\mu,1)]-\Pr[x\la\Gen(1^n,\mu,0)]\right)\Pr[1\la\cA(1^n,\mu,x)]\notag\\
&\hspace{2cm}-\sum_{x\in\overline{\mathsf{Good}_{\mu,n}} }\Pr[x\la\Gen(1^n,\mu,0)]\notag\\
&\geq(1-\negl(n))\sum_{x\in S_{\mu,n}(1)\cap\mathsf{Good}_{\mu,n}}\left(\Pr[x\la\Gen(1^n,\mu,1)]-\Pr[x\la\Gen(1^n,\mu,0)]\right)\notag\\
&\hspace{2cm}-\sum_{x\in\overline{\mathsf{Good}_{\mu,n}} }\Pr[x\la\Gen(1^n,\mu,0)]\notag\\
&\geq (1-\negl(n))\left(\sum_{x\in S_{\mu,n}(1)\cap\mathsf{Good}_{\mu,n}}\left(\Pr[x\la\Gen(1^n,\mu,1)]-\Pr[x\la\Gen(1^n,\mu,0)]\right)\right)-2n^{-75c}.\label{eq:sum_4}
\end{align}
Here, in the second inequality, we have used $$\Pr[1\la \cA(x)]\geq 1-\negl(n)$$ for all $x\in\cS_{\mu,n}(1)\cap \mathsf{Good}_{\mu,n}$ for all sufficiently large, which is the statement of Claim~\ref{claim:bound_3}.
In the final inequality, we have also used $$\sum_{x\in\overline{\mathsf{Good}_{\mu,n}}} \Pr[x\la \Gen(1^n,\mu,0) ] \leq 2n^{-75c},$$ which follows from Claim~\ref{claim:bound_1}.

Moreover, for all $\mu\in[n]$ and all sufficiently large $n\in\N$, we have
\begin{align*}
&\mbox{\cref{eq:sum_4}}\\
&\geq (1-\negl(n))\left(\sum_{x\in S_{\mu,n}(1)\cap\mathsf{Good}_{\mu,n}}\left(\Pr[x\la\Gen(1^n,\mu,1)]-\Pr[x\la\Gen(1^n,\mu,0)]\right)\right)-2n^{-75c}\\
&\geq (1-\negl(n))\left(\sum_{x\in S_{\mu,n}(1)}\left(\Pr[x\la\Gen(1^n,\mu,1)]-\Pr[x\la\Gen(1^n,\mu,0)]\right)\right)\\
&\,\,-(1-\negl(n))\left(\sum_{x\in S_{\mu,n}(1)\cap\overline{\mathsf{Good}_{\mu,n}} }\left(\Pr[x\la\Gen(1^n,\mu,1)]-\Pr[x\la\Gen(1^n,\mu,0)]\right)\right)-2n^{-75c}\\
&\geq (1-\negl(n))\left(\sum_{x\in S_{\mu,n}(1)}\left(\Pr[x\la\Gen(1^n,\mu,1)]-\Pr[x\la\Gen(1^n,\mu,0)]\right)\right)\\
&-(1-\negl(n))\sum_{x\in S_{\mu,n}(1)\cap\overline{\mathsf{Good}_{\mu,n}} }\Pr[x\la\Gen(1^n,\mu,1)]-2n^{-75c}\\
&\geq (1-\negl(n))\left(\sum_{x\in S_{\mu,n}(1)}\left(\Pr[x\la\Gen(1^n,\mu,1)]-\Pr[x\la\Gen(1^n,\mu,0)]\right)\right)-4n^{-75c}\\
&\geq(1-\negl(n))\cdot \left(\mathsf{SD}(\Gen(1^n,\mu,1),\Gen(1^n,\mu,0))-n^{-10c}\right)-4n^{-75c}.
\end{align*}
Here, in the fourth inequality, we have used $$\sum_{x\in\overline{\mathsf{Good}_{\mu,n}}}\Pr[x\la \Gen(1^n,\mu,1)]\leq 2n^{-75c},$$ which follows from Claim~\ref{claim:bound_1}.
In the final inequality, we have also used $$\sum_{x\in S_{\mu,n}(1)}\left(\Pr[x\la\Gen(1^n,\mu,1)]-\Pr[x\la\Gen(1^n,\mu,0)]\right) \geq \mathsf{SD}(\Gen(1^n,\mu,1),\Gen(1^n,\mu,0))-n^{-10c},$$ which is the statement of Claim~\ref{claim:bound_2}.

Second, we bound \cref{eq:sum_2}.
For all $\mu\in[n]$ and all sufficiently large $n\in\N$, we have
\begin{align*}
&\mbox{\cref{eq:sum_2}}\\
&=\sum_{x\in S_{\mu,n}(0)\cap\mathsf{Good}_{\mu,n}}\left(\Pr[x\la\Gen(1^n,\mu,1)]-\Pr[x\la\Gen(1^n,\mu,0)]\right)\Pr[1\la\cA(1^n,\mu,x)]\\
&+\sum_{x\in \cS_{\mu,n}(0)\cap\overline{\mathsf{Good}_{\mu,n}} }\left(\Pr[x\la\Gen(1^n,\mu,1)]-\Pr[x\la\Gen(1^n,\mu,0)]\right)\Pr[1\la\cA(1^n,\mu,x)]\\
&\geq -\sum_{x\in S_{\mu,n}(0)\cap\mathsf{Good}_{\mu,n}}\Pr[x\la\Gen(1^n,\mu,0)]\Pr[1\la\cA(1^n,\mu,x)]-\sum_{x\in \cS_{\mu,n}(0)\cap\overline{\mathsf{Good}_{\mu,n}} }-\Pr[x\la\Gen(1^n,\mu,0)]\\
&\geq -\negl(n)-\sum_{x\in \cS_{\mu,n}(0)\cap\overline{\mathsf{Good}_{\mu,n}} }\Pr[x\la\Gen(1^n,\mu,0)]\\
&\geq -\negl(n)-2n^{-75c}.
\end{align*}
Here, in the second inequality, we have used 
$$
\Pr[1\la \cA(1^n,\mu,x)]\leq \negl(n)
$$
for all $x\in\cS_{\mu,n}(0)\cap\mathsf{Good}_{\mu,n}$
which is the statement of Claim~\ref{claim:bound_3}.
In the final inequality, we have also used 
$$
\sum_{x\in \overline{\mathsf{Good}_{\mu,n}} }\Pr[x\la\Gen(1^n,\mu,0)]\leq 2n^{-75c},
$$
which is the statement of Claim~\ref{claim:bound_1}.

Finally, we bound \cref{eq:sum_3}.
From the definition of $\cS_{\mu,n}(0)$ and $\cS_{\mu,n}(1)$, we have
\begin{align*}
&\overline{\cS_{\mu,n}(0)}\cap \overline{\cS_{\mu,n}(1)}\\
&=\left\{x\in\bit^*:\frac{1}{1+n^{-10c}}\Pr[x\la\Gen(1^n,\mu,1)]<\Pr[x\la\Gen(1^n,\mu,0)]<(1+n^{-10c})\Pr[x\la\Gen(1^n,\mu,1)] \right\}.
\end{align*}
For all $\mu\in[n]$ and all sufficiently large $n\in\N$, we have
\begin{align*}
&\mbox{\cref{eq:sum_3}}\\
&\geq \sum_{x: (1+n^{-10c})\Pr[x\la\Gen(1^n,\mu,1)]>\Pr[x\la\Gen(1^n,\mu,0)]\geq\Pr[x\la\Gen(1^n,\mu,1)] }\left(\Pr[x\la\Gen(1^n,\mu,1)]-\Pr[x\la\Gen(1^n,\mu,0)]\right)\\
&\geq -n^{-10c}\sum_{x: (1+n^{-10c})\Pr[x\la\Gen(1^n,\mu,1)]>\Pr[x\la\Gen(1^n,\mu,0)]\geq\Pr[x\la\Gen(1^n,\mu,1)] }\Pr[x\la\Gen(1^n,\mu,0)]\\
&\geq -n^{-10c}.
\end{align*}

\if0
\taiga{kokokara}
Finally, we bound \cref{eq:sum_3}.
For all $\mu\in[n]$ and all sufficiently large $n\in\N$, we have
\begin{align*}
&\mbox{\cref{eq:sum_3}}\\
&\geq \sum_{x: (1+n^{-10c})\Pr[x\la\Gen(1^n,\mu,1)]>\Pr[x\la\Gen(1^n,\mu,0)]\geq\Pr[x\la\Gen(1^n,\mu,1)] }\left(\Pr[x\la\Gen(1^n,\mu,1)]-\Pr[x\la\Gen(1^n,\mu,0)]\right)\\
&\geq -n^{-10c}.
\end{align*}
Here, in the first inequality, we have used the definition of $\cS_{\mu,n}(0)$ and $\cS_{\mu,n}(1)$, and in the final inequality, we have used the fact that
\begin{align*}
&\sum_{x: (1+n^{-10c})\Pr[1\la\Gen(1^n,\mu,1)]>\Pr[1\la\Gen(1^n,\mu,0)]\geq\Pr[1\la\Gen(1^n,\mu,1)] }\left(\Pr[x\la\Gen(1^n,\mu,0)]-\Pr[x\la\Gen(1^n,\mu,1)]\right)\\
&=\mathsf{SD}(\Gen(1^n,\mu,0),\Gen(1^n,\mu,1))-\sum_{x\in \cS_{\mu,n}(0)}\left(\Pr[x\la\Gen(1^n,\mu,0)]-\Pr[x\la\Gen(1^n,\mu,1)]\right)\leq n^{-10c},
\end{align*}
where the final inequality follows from \cref{claim:bound_3}.
\taiga{kokomade}
\fi

By combining these bounds, we have
\begin{align*}
&\Pr[1\la\cA(1^n,\mu,x):x\la\Gen(1^n,\mu,1)]-\Pr[1\la\cA(1^n,\mu,x):x\la\Gen(1^n,\mu,0)]\\
&=\mbox{\cref{eq:sum_1}}+\mbox{\cref{eq:sum_2}}+\mbox{\cref{eq:sum_3}}\\
&\geq (1-\negl(n))(\mathsf{SD}(\Gen(1^n,\mu,1),\mathsf{SD}(1^n,\mu,0)) -n^{-10c})-4n^{-75c} -(\negl(n)+2n^{-75c})-n^{-10c}\\
&\geq\mathsf{SD}(\Gen(1^n,\mu,1),\mathsf{SD}(1^n,\mu,0))-8n^{-10c}-\negl(n)\\
&\geq\mathsf{SD}(\Gen(1^n,\mu,1),\mathsf{SD}(1^n,\mu,0))-n^{-c}
\end{align*}
for all $\mu\in[n]$ and all sufficiently large $n\in\N$.
This completes the proof.

\begin{proof}[Proof of Claim~\ref{claim:bound_1}]
For any $\mu\in[n]$, we have
\begin{align*}
n^{-100c}
&\geq \mathsf{SD}\left((b,x)_{(b,x)\la\cQ^*(1^n,\mu)},(\mathsf{Ext}(\mu,x),x)_{x\la\cQ^*(1^n,\mu)}\right)\\
&=\frac{1}{2}\sum_{x\in\bit^*}\Pr[x\la\cQ^*(1^n,\mu)]\sum_{b\in\bit}\abs{\Pr[b\la\mathsf{Ext}(\mu,x)]-\Pr[b\la\mathsf{Post}(\mu,x)]}\\
&=\frac{1}{2}\sum_{x\in\mathsf{Good}_{\mu,n}}\Pr[x\la\cQ^*(1^n,\mu)]\sum_{b\in\bit}\abs{\Pr[b\la\mathsf{Ext}(\mu,x)]-\Pr[b\la\mathsf{Post}(\mu,x)]}\\
&+\frac{1}{2}\sum_{x\in\overline{\mathsf{Good}_{\mu,n}}}\Pr[x\la\cQ^*(1^n,\mu)]\sum_{b\in\bit}\abs{\Pr[b\la\mathsf{Ext}(\mu,x)]-\Pr[b\la\mathsf{Post}(\mu,x)]}\\
&\geq\sum_{x\in\overline{\mathsf{Good}_{\mu,n}}}\Pr[x\la\cQ^*(1^n,\mu)]n^{-25c}
\end{align*}
for all sufficiently large $n\in\N$.
This means that 
\begin{align*}
    \sum_{x\in\mathsf{Good}_{\mu,n}}\Pr[x\la\cQ^*(1^n,\mu)]\geq 1-n^{-75c}
\end{align*}
for all $\mu\in[n]$ for all sufficiently large $n\in\N$.

Furthermore, for each $b\in\bit$,
\begin{align*}
1-n^{-75c}&\leq\sum_{x\in\mathsf{Good}_{\mu,n} }\Pr[x\la\cQ^*(1^n,\mu)] \\
&= \sum_{x\in\mathsf{Good}_{\mu,n}} \frac{1}{2}\left(\Pr[x\la\Gen(1^n,\mu,0)]+\Pr[x\la\Gen(1^n,\mu,1)]\right)\\
&\leq \frac{1}{2}+\sum_{x\in\mathsf{Good}_{\mu,n}} \frac{1}{2}\Pr[x\la\Gen(1^n,\mu,b)]
\end{align*}
for all $\mu\in[n]$
for all sufficiently large $n\in\N$.
Hence, for each $b\in\bit$, we have
\begin{align*}
\sum_{x\in\mathsf{Good}_{\mu,n}} \Pr[x\la\Gen(1^n,\mu,b)]\geq 1-2n^{-75c}
\end{align*}
for all $\mu\in[n]$ and all sufficiently large $n\in\N$.
\end{proof}

\begin{proof}[Proof of Claim~\ref{claim:bound_2}]
For all $\mu\in[n]$, $b\in\bit$, and all sufficiently large $n\in\N$, we have
\begin{align*}
&\mathsf{SD}(\Gen(\mu,b\oplus 1),\Gen(\mu,b))\\
&=\sum_{x:\Pr[x\la \Gen(\mu,b)]>\Pr[x\la \Gen(\mu,b\oplus 1)]}\Pr[x\la\Gen(\mu,b)]-\Pr[x\la\Gen(\mu,b\oplus 1)]\\
&=\sum_{x\in\cS_{\mu,n}(b)}\Pr[x\la\Gen(\mu,b)]-\Pr[x\la\Gen(\mu,b\oplus 1)]\\
&+\sum_{x:(1+n^{-25c})\Pr[x\la \Gen(\mu,b\oplus 1)]>\Pr[x\la \Gen(\mu,b)]>\Pr[x\la \Gen(\mu,b\oplus 1)]}\Pr[x\la\Gen(\mu,b)]-\Pr[x\la\Gen(\mu,b\oplus 1)]\\
&\leq \sum_{x\in\cS_{\mu,n}(b)}\Pr[x\la\Gen(\mu,b)]-\Pr[x\la\Gen(\mu,b\oplus 1)]\\
&+\sum_{x:(1+n^{-25c})\Pr[x\la \Gen(\mu,b\oplus 1)]>\Pr[x\la \Gen(\mu,b)]>\Pr[x\la \Gen(\mu,b\oplus 1)]}n^{-25c}\Pr[x\la\Gen(\mu,b\oplus 1)]\\
&\leq \sum_{x\in\cS_{\mu,n}(b)}\left(\Pr[x\la\Gen(\mu,b)]-\Pr[x\la\Gen(\mu,b\oplus 1)]\right)+n^{-25c}.
\end{align*}
This means that, for all $\mu\in[n]$, $b\in\bit$, and all sufficiently large $n\in\N$,
\begin{align*}
\sum_{x\in\cS_{\mu,n}(b)}\left(\Pr[x\la\Gen(\mu,b)]-\Pr[x\la\Gen(\mu,b\oplus 1)]\right)\geq \mathsf{SD}(\Gen(\mu,b\oplus 1),\Gen(\mu,b))-n^{-25c} 
\end{align*}
which completes the proof.

\end{proof}

\begin{proof}[Proof of Claim~\ref{claim:bound_3}]
From the definition of $\cS_{\mu,n}(b)$, for all $x\in\cS_{\mu,n}(b)$, we have 
\begin{align*}
\Pr[b\la\mathsf{Post}(1^n,\mu,x)]=\frac{\Pr[x\la\Gen(1^n,\mu,b)]}{\Pr[x\la\Gen(1^n,\mu,0)]+\Pr[x\la\Gen(1^n,\mu,1)]}\geq \frac{1+n^{-10c}}{2+n^{-10c}}
\end{align*}
for all $\mu\in[n]$ and $n\in\N$.
From the definition of $\mathsf{Good}_{\mu,n}$, for all $x\in\cS_{\mu,n}(b)\cap\mathsf{Good}_{\mu,n}$, 
\begin{align*}
n^{-25c}&\geq \frac{1}{2}\sum_{b^*\in\bit}\abs{\Pr[b^*\la\mathsf{Ext}(\mu,x)]-\Pr[b^*\la\mathsf{Post}(\mu,x)]}\\
&\geq  \frac{1}{2}\abs{\Pr[b\la\mathsf{Ext}(\mu,x)]-\Pr[b\la\mathsf{Post}(\mu,x)]}
\end{align*}
for all sufficiently large $n\in\N$.
This means that 
\begin{align*}
\Pr[b\la\mathsf{Ext}(\mu,x)]\geq \frac{1+n^{-10c}}{2+n^{-10c}}-2n^{-25c} 
&=\frac{1}{2}+\left(\frac{n^{-10c}}{4+2n^{-10c}}-2n^{-25c}\right)\\
&\geq \frac{1}{2}+\frac{1}{6}n^{-10c}
\end{align*}
for all $x\in\cS_{\mu,n}(b)\cap\mathsf{Good}_{\mu,n}$, all $\mu\in[n]$, and all sufficiently large $n\in\N$.

On the other hand, from Hoeffding's inequality, for all $\mu\in[n]$, all $x\in\bit^*$, and all $n\in\N$, we have
\begin{align*}
\Pr_{\{b_i\}_{i\in[36n^{20c+1}]}\la\mathsf{Ext}(1^n,\mu,x)^{\otimes 36n^{20c+1}}}\left[\left|\frac{1}{36n^{20c+1}}\sum_{i\in[36^{20c+1}]}b_i- \Pr[1\la\mathsf{Ext}(\mu,x)]\right|\leq\frac{1}{6}n^{-10c} \right]\geq 1-2^{-n+1}.
\end{align*}
From the construction of $\cA$, this means that for all $x\in\cS_{\mu,n}(b)\cap\mathsf{Good}_{\mu,n}$, all $\mu\in[n]$, and all sufficiently large $n\in\N$, we have
\begin{align*}
    \Pr[b\la\cA(x)]\geq 1-2^{-n+1}\geq 1-\negl(n).
\end{align*}
\end{proof}
\end{proof}

\begin{proof}[Proof of Lemma~\ref{lem:distOWPuzz}]
For any QPT algorithm $\cQ$, which takes $1^n$ and $\mu\in[n]$ as input, and outputs $(k,s)\in\bit^*$, we consider the following QPT algorithm $\cQ^*$.
\begin{description}
    \item[$\cQ^*(1^n)$:]$ $
    \begin{enumerate}
        \item Sample $\mu\la [n]$.
        \item Run $(k,s)\la\cQ(1^n,\mu) $, and output $k^*=k$ and $s^*=(s,\mu)$.
    \end{enumerate}
\end{description}
Suppose that infinitely-often OWPuzzs do not exist. 
Then, from \cref{thm:extrapolate}, for any constant $c>0$, there exists a QPT algorithm $\mathsf{Ext}^*$ such that
\begin{align*}
\mathsf{SD}((k,(s,\mu))_{(k,(s,\mu))\la\cQ^*(1^n)},(\mathsf{Ext}^*(s,\mu),(s,\mu))_{(s,\mu)\la\cQ^*(1^n)} )\leq n^{-c-1}
\end{align*}
for all sufficiently large $n\in\N$.
We have
\begin{align*}
&n^{-c-1}\\
&\geq\mathsf{SD}((k,(s,\mu))_{(k,(s,\mu))\la\cQ^*(1^n)},(\mathsf{Ext}^*(s,\mu),(s,\mu))_{(s,\mu)\la\cQ^*(1^n)} )\\
&=\frac{1}{2}\sum_{(k,s)\in\bit^*,\mu\in[n]}\abs{\Pr[(k,s,\mu)\la\cQ^*(1^n)]-\Pr[k\la\mathsf{Ext}^*(s,\mu)]\Pr[s\la\cQ(\mu)]\Pr[\mu\la[n] ]}\\
&=\frac{1}{2}\sum_{(k,s)\in\bit^*,\mu\in[n]}\abs{\frac{1}{n}\Pr[(k,s)\la\cQ(1^n,\mu) ]-\frac{1}{n}\Pr[k\la\mathsf{Ext}^*(s,\mu)]\Pr[s\la\cQ(\mu) ]}\\
&=\frac{1}{n}\sum_{\mu\in[n]}\sum_{(k,s)\in\bit^*}\frac{1}{2}\abs{\Pr[(k,s)\la\cQ(1^n,\mu) ]-\Pr[k\la\mathsf{Ext}^*(s,\mu)]\Pr[s\la\cQ(\mu)]}\\
&=\frac{1}{n}\sum_{\mu\in[n]}\mathsf{SD}\left((k,s)_{(k,s)\la\cQ(1^n,\mu)},(\mathsf{Ext}^*(s,\mu),s)_{(k,s)\la\cQ(1^n,\mu)} \right)
\end{align*}

For all $\mu^*\in[n]$, we have
\begin{align*}
n^{-c-1}&\geq \frac{1}{n}\sum_{\mu\in[n]}\mathsf{SD}\left((k,s)_{(k,s)\la\cQ(1^n,\mu)},(\mathsf{Ext}^*(s,\mu),s)_{(k,s)\la\cQ(1^n,\mu)} \right)\\
&\geq \frac{1}{n}\mathsf{SD}\left((k,s)_{(k,s)\la\cQ(1^n,\mu^*)},(\mathsf{Ext}^*(s,\mu^*),s)_{(k,s)\la\cQ(1^n,\mu^*)} \right)
\end{align*}
for all sufficiently large $n\in\N$.
This implies that, for all $\mu^*\in[n]$,
\begin{align*}
    n^{-c}\geq \mathsf{SD}\left((k,s)_{(k,s)\la\cQ(1^n,\mu^*)},(\mathsf{Ext}^*(s,\mu^*),s)_{(k,s)\la\cQ(1^n,\mu^*)} \right)
\end{align*}
for all sufficiently large $n\in\N$, which completes the proof.
\end{proof}

\subsection{Hardness of Quantum Distribution Verification from QEFID}

\begin{proposition}[Restatement of \cref{inf:q_verify}]\label{thm:q_hardness}
Assume that infinitely-often QEFID exists.
Then, there exists a QPT algorithm $\cQ$, which is not selectively-verifiable with a QPT algorithm.
\end{proposition}
\begin{proof}[Proof of \cref{thm:q_hardness}]

Let us consider a infinitely-often QEFID $\Gen$.
Then, from the definition, we have
\begin{align*}
\mathsf{SD}(\Gen(1^n,0),\Gen(1^n,1))\geq 1-\negl(n)
\end{align*}
and for any QPT algorithm $\Vrfy$, any polynomial $s$, and any polynomial $p$, we have
\begin{align*}
&\Pr[\top\la\Vrfy(x_1,\dots,x_s):(x_1,\dots,x_{s})\la\Gen(1^n,0)^{\otimes s}]\\
&\hspace{2cm}-\Pr[\top\la\Vrfy(x_1,\dots,x_s):(x_1,\dots,x_s)\la\Gen(1^n,1)^{\otimes s}]\leq \frac{1}{p(n)}
\end{align*}
for infinitely many $n\in\N$.
This means that, for each $b\in\bit$, $\Gen(1^n,b)$ is not  selectively-verifiable in the sense of Definition~\ref{def:adaptive}.
\end{proof}

\section{Classical Certified Randomness with Inefficient Verification}
\label{sec:certified-randomness}

In this section, we prove that certified randomness can be obtained unconditionally when the verification algorithm is allowed to be inefficient.
Several prior works~\cite{STOC:AarHun23,BBFGT26} in the certified randomness literature also allow inefficient verification procedures, but their constructions rely on additional assumptions related to sampling quantum advantage and require the prover to run quantum algorithms.
Our construction, in contrast, does not rely on any unproven assumptions and avoids the need to execute quantum algorithms.
Because the security definitions of \cite{STOC:AarHun23} are different from ours, the results are incomparable.
On the other hand, \cite{BBFGT26} considers side-information free security, but based on the random-oracle model while our construction is unconditional.

\begin{definition}[Non-Interactive Inefficiently-Verifiable Classical Certified Randomness]\label{def:cert_randomness}
Let $0<c<1$ be an arbitrary constant.
A non-interactive inefficiently-verifiable certified randomness with $c$-(side-information free) security is a pair $(\cV,\cP)$ of algorithms.
$\cV$ is an unbounded algorithm that takes $1^n$ and $x\in\bit^{n}$ as input, and outputs $\top$ or $\bot$.
$\cP$ is a PPT algorithm that takes $1^n$ as input, and outputs $x\in\bit^{n}$.

We require the following two properties:
\begin{description}
\item[Correctness:]
\begin{align*}
\Pr_{x\la\cP(1^n)}[\top\la\cV(1^n,x)]\geq 1-\negl(n).
\end{align*}
\item[$c$-(Side-Information Free) Security:] 
For any polynomial $p$ and any QPT algorithm $\cA$, the following holds for all sufficiently large $n\in\mathbb{N}$:
If
\begin{align*}
\Pr_{x\la\cA(1^n)}[\top\la\cV(1^n,x) ]\geq \frac{1}{p(n)},
\end{align*}
then
\begin{align*}
H_{min}(\cA_{\top}(1^n))\geq cn.
\end{align*}
Here, $\cA_{\top}(1^n)$ denotes the output of the algorithm $\cA(1^n)$ conditioned on the event that $\cV(1^n,x)$ outputs $\top$.
\end{description}
\end{definition}

\begin{theorem}[Restatement of \cref{inf:cert_rand}]\label{thm:certified_randomness}
For an arbitrary $0<c<1$, a non-interactive inefficiently-verifiable classical certified randomness with $c$-(side-information free) security unconditionally exists.
\end{theorem}

\begin{proof}[Proof of \cref{thm:certified_randomness}]
We provide a construction of a non-interactive inefficiently-verifiable classical certified randomness with $c$-(side-information free) security $(\cV,\cP)$.
\begin{description}
\item[$\cP(1^n)$:]$ $
\begin{enumerate}
\item Uniformly randomly sample $x\la\bit^n$.
\item Output $x$.
\end{enumerate}
\item[$\cV(1^n,x)$:]$ $ 
\begin{enumerate}
\item Compute $quK^{n^{\log(n)}}(x|1^n)$.
\item Output $\top$ if
\begin{align*}
quK^{n^{\log(n)}}(x|1^n)\geq \frac{c+1}{2}n.
\end{align*}
Otherwise, output $\bot$.
\end{enumerate}
\end{description}

\paragraph{Correctness.}
From \cref{thm:q_kraft}, we have
\begin{align*}
\Pr_{x\la\cP(1^n)}[quK^{n^{\log(n)}}(x|1^n) \geq \frac{c+1}{2}n]\geq 1-2^{-\frac{1-c}{2}n}
\end{align*}
for all sufficiently large $n\in\N$.
This implies that
\begin{align*}
\Pr_{x\la\cP(1^n)}[\top\la\cV(1^n,x)]\geq 1-\negl(n).
\end{align*}

\paragraph{Side-information free security.}
For showing this, it is sufficient to show that
\begin{align*}
\Pr[x\la \cA_{\top}(1^n)]\leq 2^{-cn}
\end{align*}
for all $x \in\bit^n$,
when
\begin{align}
\Pr_{x\la\cA(1^n)}[\top\la\cV(1^n,x)]\geq \frac{1}{p(n)}\label{eq:condition}
\end{align}
for some polynomial $p$.
From the construction of $\cV$, we have
\begin{align}
\Pr[x\la\cA_{\top}(1^n)]=
\left\{
\begin{array}{ll}
\frac{1}{\Pr_{x\la\cA(1^n)}[\top\la\cV(x) ]}\Pr[x\la\cA(1^n)]  & (x:quK^{n^{\log(n)}}(x|1^n)\geq \frac{c+1}{2}n)\\
0 & (x:quK^{n^{\log(n)}}(x|1^n)<\frac{c+1}{2}n)
\end{array}
\right.\label{eq:definition}
\end{align}
From \cref{lem:q_coding}, for any QPT algorithm $\cA$,
\begin{align*}
quK^{n^{\log(n)}}(x|1^n)\leq -\log(\Pr[x\la \cA(1^n) ])+\log(n)
\end{align*}
for all sufficiently large $n\in\N$.
This implies that, for all sufficiently large $n\in\N$,
for all $x\in\bit^n$ such that
\begin{align*}
quK^{n^{\log(n)}}(x|1^n)\geq \frac{c+1}{2}n,
\end{align*}
we have
\begin{align}
\Pr[x\la \cA(1^n) ]\leq \frac{1}{n}2^{-\frac{c+1}{2}n}\label{eq:kolmogorov}
\end{align}
for all sufficiently large $n\in\N$.

By applying \cref{eq:condition,eq:definition,eq:kolmogorov}, for all $x\in\bit^n$, we have
\begin{align*}
\Pr[x\la \cA_{\top}(1^n)]\leq \frac{p(n)}{n}2^{-\frac{c+1}{2}n}\leq 2^{-cn}
\end{align*}
for all sufficiently large $n\in\N$.
\end{proof}

\section{Kolmogorov-Complexity-Based Verification of Any Quantum Sampling Advantage}
\label{sec:universal-ver}

\subsection{Results}
In this section, we prove the following \cref{lem:QAS-qkt}.

\begin{theorem}[Kolmogorov complexity as a benchmark for quantum advantage (Restatement of \cref{inf:qcd})]\label{lem:QAS-qkt}
For any polynomial $t$, there exists another polynomial $t^*$ such that
the value 
\begin{align*}
  qcd^{t^*(n)}(x|1^n)\seteq  uK^{t^*(n)}(x|1^n)-quK^{t^*(n)}(x|1^n)
\end{align*}
satisfy the following two conditions:
\begin{description}
\item[Quantum Easiness:]
For any $t(n)$-time quantum algorithm $\cD_\cQ(1^n)$ such that, for any PPT algorithm $\cD$,
\begin{align*}
\mathsf{SD}(\cD_\cQ(1^n),\cD(1^n))\geq 1-\negl(n),
\end{align*}
we have
\begin{align*}
\Pr_{x\la\cD_\cQ(1^n)}[c^*\log(n)\leq qcd^{t^*(n)}(x|1^n)]\geq 1-n^{-c}
\end{align*}
for all sufficiently large $n\in\N$.
Here, $c,c^*>0$ is an arbitrary constant.
\item[Classical Hardness:]
For any $\alpha>0$, for any $t(n)$-time classical algorithm $\cA(1^n)$, there exists a constant $C$ such that
\begin{align*}
\Pr_{x\la\cA(1^n)}[ \alpha \leq qcd^{t^*(n)}(x|1^n)]\leq C \abs{x} 2^{-\alpha}
\end{align*}
for all sufficiently large $n\in\N$.
\end{description}    
\end{theorem}

Combining~\Cref{lem:QAS-qkt} with the fact that if one-way puzzles don't exist one can estimate $qcd^t$ gives us

\begin{corollary}\label{thm:QAS_Verification}
    Suppose that infinitely-often OWPuzzs do not exist. Then, for any polynomial $t$ and any constant $c > 0$, there exists a QPT algorithm $\Vrfy$ that, on input $x \in \{0,1\}^{m(n)}$, outputs $\top$ or $\bot$, where $m(n)$ is a polynomial satisfying $m(n) \ge n$ for all $n \in \mathbb{N}$.
    \begin{description}
    \item[Correctness:]
    For any $t(n)$-time quantum algorithm $\cD_\cQ(1^n)$ such that, for any PPT algorithm $\cD$, 
    \begin{align*}
    \mathsf{SD}(\cD_\cQ(1^n),\cD(1^n))\geq 1-\negl(n),
    \end{align*}
    we have
    \begin{align*}
    \Pr_{x\la\cD_\cQ(1^n)}[\top\la\Vrfy(x)] \geq 1-n^{-c}
    \end{align*}
    for all sufficiently large $n\in\N$.
    \item[Soundness:] 
    For any $t(n)$-time classical algorithm $\cA$, we have
    \begin{align*}
        \Pr_{x\la\cA(1^n)}[\top\la\Vrfy(x)]\leq n^{-c}
    \end{align*}
    for all sufficiently large $n\in\N$.
    \end{description}
\end{corollary}
\begin{remark}
Let us make three remarks on this result.
First, \cref{thm:QAS_Verification} is incomparable with \cref{thm:OWPuzz_Verify}.
Informally, \cref{thm:QAS_Verification} guarantees that there exists a single verification algorithm testing whether a given sample $x$ is generated by a classical algorithm or by a quantum algorithm that cannot be efficiently simulated by any PPT algorithm.
In contrast, \cref{thm:OWPuzz_Verify} guarantees that, for each arbitrary quantum algorithm $\cD$, there exists a verification algorithm $\Vrfy_\cD$ that can check whether an unknown distribution is close to the specific quantum algorithm $\cD$.
This means that the verification algorithm $\Vrfy_\cD$ might reject $x$ even if it is generated by a quantum algorithm which is hard to simulate by PPT algorithm.
Second, the assumption of the non-existence of OWPuzzs is used only for constructing a QPT verification algorithm.
Without this assumption, our construction still yields a verifier, but it is not efficient.
Importantly, even constructing such a verifier is new to field.
Third, in our correctness requirement, we consider a quantum algorithm $\cD_Q(1^n)$ such that, for any PPT algorithms, $\mathsf{SD}(\cD_\cQ(1^n), \cD(1^n))\geq 1-\negl(n)$.
We can see that there exists such a quantum algorithm under plausible cryptographic assumptions.
More specifically, if quantum pseudorandom generators
(QPRG) secure against QPT algorithms querying an NP oracle exist, then a strong QAS exists\footnote{However  the existence of such QPRGs implies the existence of OWPuzz and therefore the hardness of computing $quK^t$. Consequently in this setting the universal verifier we discuss in this section is only implementable as an inefficient algorithm.}
We refer the proof to \cref{app:strong_qas}.
\end{remark}

\begin{proof}[Proof of \cref{lem:QAS-qkt}]
    We first show quantum easiness.
   \paragraph{Quantum Easiness.} 
       
    From \cref{lem:q_coding}, we have $quK^{t^*(n)}(x|1^n) \leq -\log(\Pr[x \la \cD_Q(1^n)]) +O(1)$.
    On the other hand, we have $uK^{t^*(n)}(x|1^n) = -\log(\Pr[x \la \mathcal{U}^{t^*(n)}(r,1^n)]) $.
    
    Therefore, it is sufficient to prove that, for any $c',c>0$, with probability at least $1-1/n^c$ over $x\la\cD_{\cQ}(1^n)$,
    $$ c'\log(n) \leq -\log(\Pr[x \la \mathcal{U}^{t^*(n)}(r,1^n)])+\log(\Pr[x \la \cD_Q(1^n)])$$
    or equivalently 
    \begin{align}\label{eq:QAS-prb}
    \Pr[x \la \mathcal{U}^{t^*(n)}(r,1^n)]/\Pr[x \la \cD_Q(1^n)] \leq \frac{1}{n^{c'}}.
    \end{align}
    for all sufficiently large $n\in\N$.

    For convenience of notation we define $Q(x) = \Pr[x \la \cD_Q(1^n)] -
    \Pr[x \la \mathcal{U}^{t^*}(r,1^n)]$, $R(x) = \Pr[x \la
    \mathcal{U}^{t^*}(r,1^n)]$ $/ \Pr[x \la \cD_Q(1^n)]$, $S_D = \{x: Q(x) \geq
    0\}$, and $S_{B,c'} = \{x: R(x) \geq 1/n^{c'}\}$.
    For the sake of contradiction with \cref{eq:QAS-prb} we assume that there exists $c>0$
    and $c'>0$ such that 
    \begin{align}\label{eq:contraposition}
    \Pr_{x \la \cD_Q}[x \in S_{B,c'}] \geq 1/n^{c}
    \end{align}
    for infinitely many $n\in\N$.

    By the definition of $S_{B,c'}$ for all $x \in S_{B,c'}$ 
\begin{align*}
\Pr[x \la \mathcal{U}^{t^*}(r,1^n)] \geq \frac{1}{n^{c'}}\Pr[x \la \cD_Q(1^n)].
\end{align*}

    Subtracting the above equation from $\Pr[x \la \cD_Q(1^n)] = \Pr[x \la \cD_Q(1^n)]$ we get that for all $x \in S_{B,c}$

    \begin{align*}
    Q(x) = \Pr[x \la \cD_Q(1^n)] - \Pr[x \la \mathcal{U}^{t^*}(r,1^n)] & \leq \Pr[x \la \cD_Q(1^n)] - \frac{1}{n^{c'}}\Pr[x \la \cD_Q(1^n)] \\
    & = \left(1- \frac{1}{n^{c'}}\right)\Pr[x \la \cD_Q(1^n)] 
    \end{align*}

    From the definition of $\cD_Q$,

    \begin{align*}
    1 - \negl(n) & \leq \Delta(D_Q(1^n),\mathcal{U}^{t^*}(r,1^n)) = \sum_{x \in S_D}Q(x) \\
    & = \sum_{x \in S_D / S_{B,c'}}Q(x) + \sum_{x \in S_D \cap S_{B,c}}Q(x) \\
    & \leq \sum_{x \in S_D / S_{B,c'}}\Pr[x \la \cD_Q(1^n)] + \sum_{x \in S_D \cap S_{B,c'}}Q(x) \\
    & = \Pr_{x \la \cD_Q(1^n)}[x \in S_D / S_{B,c'}] + \sum_{x \in S_D \cap S_{B,c'}}Q(x) \\
    & \leq \Pr_{x \la \cD_Q(1^n)}[x \in S_D / S_{B,c'}] + \Pr_{x \la \cD_Q(1^n)}[x \in S_D \cap S_{B,c'}](1-\frac{1}{n^{c'}}) \\
    & = 1 - \Pr_{x \la \cD_Q(1^n)}[x \in S_D \cap S_{B,c'}] - \Pr_{x \la \cD_Q(1^n)}[x \not\in S_D] \nonumber \\
    & \hspace{3em}+ \Pr_{x \la \cD_Q(1^n)}[x \in S_D \cap S_{B,c'}](1-\frac{1}{n^{c'}}) \\
    & = 1 - \frac{1}{n^{c'}} \Pr_{x \la \cD_Q(1^n)}[x \in S_D \cap S_{B,c'}] - \negl(n) \\
    & \leq 1 - \frac{1}{n^{c+c'}} -\negl(n) < 1- \negl(n)
    \end{align*}
    which is a contradiction.
    This arises from \cref{eq:contraposition}, and hence, for all $c,c'>0$, we have
    \begin{align*}
        \Pr_{x\la\cD_\cQ(1^n)}[c'\log(n)\leq qcd^{t^*(n)}(x|1^n)]\geq 1-n^{-c}
    \end{align*}
    for all sufficiently large $n\in\N$.

    \paragraph{Classical Hardness:}
    For this, it is sufficient to prove that, for any $\alpha>0$, for any PPT algorithm $\cA$, there exists a constant $C$ such that for all x
    \begin{align*}
        \Pr_{x\la\cA(1^n)}[quK^{t^*(n)}(x|1^n)\leq uK^{t^*(n)}(x|1^n)-\alpha]\leq C\abs{x}2^{-\alpha}
    \end{align*}
    for all sufficiently large $n\in\N$.
    From \cref{lem:coding}, for any $t(n)$-time classical algorithm $\cA(1^n)$, there exists a constant $A$ such that
    \begin{align*}
        uK^{t^*(n)}(x|1^n)\leq -\log(\Pr[x\la\cA(1^n)])+ A.
    \end{align*}
    for all sufficiently large $n\in\N$.
    This implies that
    \begin{align*}
        & \Pr_{x\la\cA(1^n)}[quK^{t^*(n)}(x|1^n)\leq uK^{t^*(n)}(x|1^n)-\alpha ]\\
        & \leq \Pr_{x\la\cA(1^n)}[quK^{t^*(n)}(x|1^n)\leq -\log(\Pr[x\la\cA(1^n)])+A-\alpha]
    \end{align*}
    
    Furthermore, from \cref{thm:q_kraft}, for some constant $c$, we have
    \begin{align*}
        \Pr_{x\la\cA(1^n) }[quK^{t^*(n)}(x|1^n)\leq -\log(\Pr[x\la\cA(1^n)])+A-\alpha ]\leq (\abs{x}+c) 2^{A+1-\alpha}\leq C\abs{x}2^{-\alpha} 
    \end{align*}
    for all sufficiently large $n\in\N$.
    This completes the proof.
\end{proof}

\subsection{Proof of \cref{thm:QAS_Verification}}

To show \cref{thm:QAS_Verification}, we additionally use the following \cref{thm:qvk_estimate-avg,thm:pq-ukest}.

\begin{theorem}[No i.o.-pq-OWFs implies $uK^t$ estimation on all PPT distributions~\cite{FOCS:HirNan23}]\label{thm:pq-ukest}
Let $\cD$ be a PPT algorithm running in time at most $t$ that takes $1^n$ as input, and outputs $x\in\bit^{m(n)}$, where $m$ is an arbitrary polynomial.
    Assume that there do not exist infinitely-often post-quantum OWFs.
    %$\not\exists \mathsf{io}$-$\mathsf{OWPuzz}$.
Then, for any constant $c\in\N$ and polynomial $t$, there exists a QPT algorithm $\cM$ such that for all sufficiently large $n\in\N$
\begin{align*}
\Pr_{\substack{x \gets \mathcal{D}(1^n) \\ \cM}}[uK^{t(n)}(x|1^n)\leq \cM(x)\leq uK^{t(n)}(x|1^n)+1]\geq 1-n^{-c}
\end{align*}

Furthermore, for all sufficiently large $n$ and for all $x\in\bit^{m(n)}$,
    \begin{align*}
    \Pr_{\substack{\cM}}[\cM(x)\geq uK^{t(n)}(x|1^n)]\geq 1-n^{-c}.
    \end{align*} 
\end{theorem}

\begin{theorem}\label{thm:qvk_estimate-avg}
Let $\cD$ be a QPT algorithm running in time at most $t(n)$ that takes $1^n$ as input, and outputs $x\in\bit^{m(n)}$, where $m$ is an arbitrary polynomial. Assume that there do not exist infinitely-often OWPuzzs.
Then, for any constant $c\in\N$ and polynomial $t$, there exists a polynomial $t_0$ and a QPT algorithm $\cM$ such that
\begin{align*}
\Pr_{\substack{x \gets \mathcal{D}(1^n) \\ \cM}}[quK^{t_0(t(n))}(x|1^n)\leq \cM(x)\leq quK^{t_0(t(n))}(x|1^n)+1]\geq 1-n^{-c}
\end{align*}
for all sufficiently large $n\in\N$.
\end{theorem}

Let us make two remarks on \cref{thm:qvk_estimate-avg,thm:pq-ukest}.
First, \cref{thm:qvk_estimate-avg} can be viewed as a quantum generalization of \cref{thm:pq-ukest}.
Using a similar idea to that of \cite{FOCS:HirNan23}, we can prove \cref{thm:qvk_estimate-avg}.
For completeness, we present the proof in the end of this section.

Second, a variant of \cref{thm:pq-ukest} was proved in \cite{FOCS:HirNan23}.
The difference is that \cref{thm:pq-ukest} assumes the non-existence of \emph{post-quantum} one-way functions, whereas \cite{FOCS:HirNan23} assumes the non-existence of one-way functions.
We observe that \cref{thm:pq-ukest} can be proved in exactly the same way as in \cite{FOCS:HirNan23}, replacing one-way functions by post-quantum functions.
Therefore, we omit the proof of \Cref{thm:pq-ukest}.
The proof of \cref{thm:qvk_estimate-avg} is also very similar, but we provide a proof at the end of this section for completeness.

In the following, we describe a proof of \cref{thm:QAS_Verification}.

\begin{proof}[Proof of \cref{thm:QAS_Verification}]

Let $\cD_\cQ(1^n)$ be an arbitrary $t(n)$-time quantum algorithm such that,
for any PPT algorithms $\cD$, we have
\begin{align*}
\mathsf{SD}(\cD_Q(1^n),\cD(1^n))\geq 1-\negl(n).
\end{align*}

Let $t^*$ be the polynomial given by \cref{lem:QAS-qkt} for $t$. Let $m(n)\geq n$ be the length of the outputs of the $\cD_\cQ(1^n)$.

$\cM_Q$ is the QPT algorithm given in \cref{thm:qvk_estimate-avg} such that for all sufficiently large $n$
\begin{align*}
\Pr_{\substack{x \gets \mathcal{D}(1^n) \\ \cM_Q}}[quK^{t^*(n)}(x|1^n)\leq \cM_Q(x)\leq quK^{t^*(n)}(x|1^n)+1]\geq 1-n^{-3c}.
\end{align*}

$\cM_C$ is the QPT algorithm given in \cref{thm:pq-ukest} such that for all sufficiently large $n$
\begin{align*}
\Pr_{\substack{x \xleftarrow{\$} \cD_C(1^n) \\ \cM_C}}[uK^{t^*(n)}(x|1^n)\leq \cM_C(x)\leq uK^{t^*(n)}(x|1^n)+1]\geq 1-n^{-3c},
\end{align*}

and
    \begin{align*}
    \Pr_{\cM_C}[\cM_C(x,1^n) \geq uK^{t^*(n)}(x)] \geq 1 - n^{-3c}.
    \end{align*} 

\paragraph{Construction.}
We give a construction of $\Vrfy$.

\begin{description}
    \item[$\Vrfy$:]$ $
    \begin{enumerate}
        \item Take $1^n$ and $x$ as input.
        \item Compute $\cM_C(x)$.
        \item Compute $\cM_Q(x)$.
        \item Output $\top$ if 
        \begin{align*}
          (1+3c) \log(m(n)) \leq \cM_C(x)-\cM_Q(x).
        \end{align*}
        Otherwise, output $\bot$.
    \end{enumerate}
\end{description}
In the following, we show that $\Vrfy$ satisfies correctness and soundness.

\paragraph{Correctness:}
    In the following, we show that
    \begin{equation*}
        \Pr_{x\la\cD_Q(1^n)}[\top\la\Ver(x) ] \geq 1 - n^{-c}
    \end{equation*}
    for all sufficiently large $n\in\N$.

    From \cref{lem:QAS-qkt},  we have
\begin{align*}
    \Pr_{x \la \cD_Q}[ (1+3c)\log(m(n))+1 < uK^t(x|1^n)-quK^t(x|1^n)] \geq 1-n^{-3c}
\end{align*}
for all sufficiently large $n\in\N$.

    From \cref{thm:qvk_estimate-avg} we get that
    \begin{align*}
    \Pr_{\substack{x \gets \mathcal{D}_{\cQ}(1^n) \\ \cM_Q}}[\cM_Q(x)\leq quK^{t(n)}(x|1^n)+1]\geq 1-n^{-3c}
    \end{align*}
    for all sufficiently large $n\in\N$.
    From \cref{thm:pq-ukest} we get that 
    \begin{align*}
        \Pr_{\cM_C}[\cM_C(x) \geq uK^t(x|1^n)] \geq 1-n^{-3c}
    \end{align*}
    for all sufficiently large $n\in\N$.
    
    By union bound over the above three conditions we get that
    \begin{align*}
        \Pr_{\substack{x \la \cD_Q(1^n) \\ \cM_Q \\ \cM_C}}[(1+3c)\log(m(n))\leq \cM_C(x)-\cM_Q(x)] \geq 1-3n^{-3c}
    \end{align*}    
    for all sufficiently large $n\in\N$.

    This implies that 
    \begin{equation*}
        \Pr_{x\la\cD_Q(1^n)}[\top\la\Ver(x) ] \geq 1-3n^{-3c} \geq  1 - n^{-c}
    \end{equation*}
    for all sufficiently large $n\in\N$.

\paragraph{Soundness:}

From \cref{lem:QAS-qkt}, for any $t(n)$-time classical algorithm $\cA(1^n)$, there exists a constant $C$ such that
\begin{align*}
\Pr_{x\la \cA(1^n)}[ (1+3c)\log(m(n))-1\leq uK^{t^*(n)}(x|1^n) -quK^{t^*(n)}(x|1^n)] \leq C\cdot m(n) 2^{ 1- (1+3c)\log(m(n))} 
\end{align*}
for all sufficiently large $n\in\N$.
For any polynomial $m(n)\geq n$, we have 
\begin{align*}
\Pr_{x\la \cA(1^n)}[ (1+3c)\log(m(n))-1\leq uK^{t^*(n)}(x|1^n) -quK^{t^*(n)}(x|1^n)] \leq 2Cn^{-3c} 
\end{align*}
for all sufficiently large $n\in\N$.
The inequality above implies that
\begin{align*}
\Pr_{x\la \cA(1^n)}[(1+3c)\log(m(n))-1\leq  uK^{t^*(n)}(x|1^n) -quK^{t^*(n)}(x|1^n)]  \geq 1- 2Cn^{-3c}
\end{align*}
for all sufficeintly large $n\in\N$.

    From \cref{thm:qvk_estimate-avg} we get that
    \begin{align*}
    \Pr_{\substack{x \gets \cA(1^n) \\ \cM_Q}}[ quK^{t(n)}(x|1^n)\leq \cM_Q(x)]\geq 1-n^{-3c}
    \end{align*}
    and from \cref{thm:pq-ukest} we get that 
    \begin{align*}
        \Pr_{\substack{x \gets \cA(1^n) \\ \cM_C}}[\cM_C(x) \leq uK^t(x|1^n)+1] \geq 1-n^{-3c}.
    \end{align*}
    By union bound over the above three conditions, we have
    \begin{align*}
    \Pr_{\substack{x \gets \cA(1^n) \\ \cM_Q\\\cM_C}}[(1+3c)\log(m(n)) < \cM_C(x)-\cM_Q(x) ] \geq 1-(2+2C) n^{-3c}
    \end{align*}
    for all sufficiently large $n\in\N$.
    This implies that
    \begin{align*}
        \Pr_{x\la \cA(1^n)}[ \top\la\Vrfy(x)]\leq (2+2C) n^{-3c}\leq n^{-c}
    \end{align*}
    for all sufficiently large $n\in\N$.
\end{proof}

\begin{proof}[Proof of \cref{thm:qvk_estimate-avg}]

For showing \cref{thm:qvk_estimate-avg}, we use the following \cref{lem:avg-prob-est}.
\begin{theorem}[\cite{EC:CGGH25,C:HirMor25,STOC:KhuTom25}]\label{lem:avg-prob-est}
Assume that there do not exist infinitely-often OWPuzzs. Then, for any constant $c\in\N$, and for any QPT algorithm $\cD$, 
which takes $1^n$ as input and outputs $x\in\bit^{m(n)}$, where $m$ is a polynomial,
there exists a QPT algorithm $\mathsf{Approx}$ such that
    \begin{align*}
        \Pr_{\substack{x \gets \mathcal{D}(1^n) \\ \mathsf{Approx}}}
\left[
\frac{1}{2}\Pr[x\gets\cD(1^n)] \leq \mathsf{Approx}(x) \leq \Pr[x\gets\cD(1^n)]
\right]
\geq 1 - n^{-c}
    \end{align*}    
    for all sufficiently large $n\in\mathbb{N}$.
\end{theorem}

Assume that infinitely-often OWPuzzs do not exist.
Then, from \cref{lem:avg-prob-est}, there exists a QPT algorithm $\mathsf{Approx}$
\begin{align*}
\Pr_{\substack{x\la\cQ^t(1^n)\\ \mathsf{Approx}}}\left[\frac{1}{2}\Pr[x\la\cQ^{t_0(t(n))}(1^n)]\leq \mathsf{Approx}(x)\leq \Pr[x\la\cQ^{t_0(t(n))}(1^n)]\right]\geq 1-n^{-4c}
\end{align*}
for all sufficiently large $n\in\N$.
We let 
\begin{align*}
\cS_n\seteq \left\{x\in\bit^*: \Pr_{\mathsf{Approx}}\left[\frac{1}{2}\Pr[x\la\cQ^{t_0(t(n))}(1^n)]\leq \mathsf{Approx}(x)\leq \Pr[x\la\cQ^{t_0(t(n))}(1^n)]\right] \leq 1-n^{-2c}\right\}.
\end{align*}
From Markov inequality, we have
\begin{align*}
\Pr_{x\la\cQ^{t_0(t(n))}(1^n)}\left[x\in\cS_n\right]\leq n^{-2c}.
\end{align*}
This implies that, for any QPT algorithm $\cD$ that takes $1^n$ as input, and outputs $x\in\bit^{m(n)}$ running in time at most $t(n)$, there exists a constant $D$ such that
\begin{align*}
n^{-2c}
&\geq \Pr_{x\la\cQ^{t_0(t(n))}(1^n)}\left[x\in\cS_n\right]\\
&=\sum_{x\in\cS_n}\Pr[x\la\cQ^{t_0(t(n))}(1^n)]\\
&\geq \sum_{x\in\cS_n}\frac{\Pr[x\la\cD(1^n)]}{D}
\end{align*}
for all sufficiently large $n\in\N$.
Here, in the last inequality, we have used \cref{lem:q_coding}.
Therefore, we have
\begin{align*}    
\Pr_{x\la\cD(1^n)}[x\in\cS_n]\leq D\cdot n^{-2c}
\end{align*}
for all sufficiently large $n\in\N$.
This implies that 
\begin{align*}
\Pr_{x\la\cD(1^n)}\left[\Pr_{\mathsf{Approx}}\left[\frac{1}{2}\Pr[x\la\cQ^{t_0(t(n))}(1^n)]\leq \mathsf{Approx}(x)\leq \Pr[x\la\cQ^{t_0(t(n))}(1^n)]\right] \geq 1-n^{-2c} \right]\geq 1-D\cdot n^{-2c}
\end{align*}
for all sufficiently large $n\in\N$.
Hence, we have
\begin{align*}
\Pr_{\substack{x\la\cD(1^n)\\\mathsf{Approx}}}\left[\frac{1}{2}\Pr[x\la\cQ^{t_0(t(n))}(1^n)]\leq \mathsf{Approx}(x)\leq \Pr[x\la\cQ^{t_0(t(n))}(1^n)]\right]&\geq 1-(D+1)\cdot n^{-2c}\\
&\geq 1-n^{-c}
\end{align*}
for all sufficiently large $n\in\N$.
\end{proof}

\ifnum\anonymous=0
\paragraph{Acknowledgments.}

BC acknowledges support of Royal Society University Research Fellowship URF$\backslash$R1$\backslash$211106
and
EPSRC project EP/Z534158/1 on ``Integrated Approach to Computational Complexity: Structure, Self-Reference and Lower Bounds''.
TM is supported by
JST CREST JPMJCR23I3,
JST Moonshot R\verb|&|D JPMJMS2061-5-1-1, 
JST FOREST, 
MEXT QLEAP, 
the Grant-in Aid for Transformative Research Areas (A) 21H05183,
and 
the Grant-in-Aid for Scientific Research (A) No.22H00522. EG is supported by a NSF graduate research fellowship.
This work was done in part while Matthew Gray and Eli Goldin were visiting the Simons Institute for the Theory of Computing.
\else
\fi

\ifnum\llncs=1
\bibliographystyle{alpha} 
\bibliography{references/abbrev3,references/crypto,references/refs}
\else
\bibliographystyle{alpha} 
\bibliography{references/abbrev3,references/crypto,references/refs}
\fi

\appendix
\section{Proof of \cref{lem:modaaronson}}\label{appendix}

\taiga{I checked the proof. The proof seems fine, but it is hard to follow. So, we need to elaborate if we have time.}
\bnote{A reviewer also complained that it is difficult to follow (particularly after applying Pinsker).}

\if0
\begin{proof}
    Define the distribution $\cB_n$ to be the distribution $\cG_n$ conditioned on outputting elements in $A_{n,s,\beta}^{\cD}$.

    For any $Y=(y_1,\dots,y_s)$, define 
    $$p_Y\coloneqq \Pr[\cD_n^{\otimes s} \to Y]$$
    $$q_Y\coloneqq \Pr[\cG_n \to Y]$$
    and
    $$\widetilde{q}_Y\coloneqq \Pr[\cB_n \to Y]$$

    From \Cref{lem:coding}, we have
    \begin{align}\label{eq:aarlem1}
        uK^{t_0(t(n))}(Y|1^n) \leq \log_2 \frac{1}{q_Y} + |\cG| + O(1)
    \end{align}
    for all $Y$.
    Similarly, for all $Y\in A_{n,s,\beta}^{\cD}$, we have
    \begin{equation}\label{eq:aarlem2}
        \log_2 \frac{1}{p_{y_1}\dots p_{y_s}} \leq uK^{t_0(t(n))}(Y|1^n) + \beta(n).
    \end{equation}

    And so~\Cref{eq:aarlem1,eq:aarlem2} together give that for all $Y\in A_{n,\beta}^{\cD}$,
    \begin{equation}
        \log_2 \frac{q_Y}{p_{Y}} \leq |\cG| + O(1) + \beta(n)
    \end{equation}

    By Bayes law, for all $Y$,
    \begin{equation}
        \widetilde{q}_Y \leq \frac{q_Y}{1-\epsilon(n)}
    \end{equation}

    and so we also get
    \begin{equation}
        \log_2 \frac{\widetilde{q}_Y}{p_{Y}} \leq \log_2 \frac{1}{1-\epsilon(n)} + \log_2 \frac{{q}_Y}{p_{Y}}\leq \log_2 \frac{1}{1-\epsilon(n)} + |\cG| + O(1) + \beta(n)
    \end{equation}

    We can explicitly bound $\SD(\cB_n,\cD_n^{\otimes t})$ by working with KL-divergence.
    \begin{align}
        \begin{split}
            D_{KL}(\cB_n || \cD_n^{\otimes s}) &= \sum_{Y\in [2^n]^s} \widetilde{q}_Y \log_2 \frac{\widetilde{q}_Y}{p_Y}\\
            &\leq \max_{Y \in A_{n,s,\delta}^{\cD}} \log_2 \frac{\widetilde{q}_Y}{p_Y}\\
            &\leq \log_2 \frac{1}{1-\epsilon(n)} + |\cG| + O(1) + \beta(n)
        \end{split}
    \end{align}

    Let $\cB_n^i$ be the marginal distribution of $\cB_n$ restricted to the $i$th coordinate.\eli{TODO CITE}
    \begin{equation}
    \begin{split}
        \sum_{i=1}^s D_{KL}(\cB_n^i || \cD_n) \leq D_{KL}(\cB_n||\cD_n^{\otimes s}) \leq \log_2 \frac{1}{1-\epsilon(n)} + |\cG| + O(1) + \beta(n)\\
        \sum_{i=1}^s \frac{1}{2}\SD(\cB_n^i, \cD_n)^2 \leq \log_2 \frac{1}{1-\epsilon(n)} + |\cG| + O(1) + \beta(n)\\
        \sum_{i=1}^s \SD(\cB_n^i|| \cD_n) \leq \sqrt{2s(\log_2 \frac{1}{1-\epsilon(n)} + |\cG| + O(1) + \beta(n))}\\
        \frac{1}{s}\sum_{i=1}^s\SD(\cB_n^i,\cD_n) \leq \sqrt{\frac{2}{s}}\sqrt{\log_2 \frac{1}{1-\epsilon(n)} + |\cG| + O(1)}\\
        \SD(\cG_n',\cD_n) \leq \epsilon(n) + \sqrt{\frac{\log_2 \frac{1}{1-\epsilon(n)} +\beta(n)+C}{s}}.
    \end{split}
    \end{equation}
\end{proof}
\fi

\begin{proof}
    Define the distribution $\cB_n$ to be the distribution $\cG_n$ conditioned on outputting elements in $A_{n,s,\alpha}^{\cD}$.

    For any $Y=(y_1,\dots,y_s)$, define 
    $$p_Y\coloneqq \Pr[\cD_n^{\otimes s} \to Y]$$
    $$q_Y\coloneqq \Pr[\cG_n \to Y]$$
    and
    $$\widetilde{q}_Y\coloneqq \Pr[\cB_n \to Y]$$

    ~\Cref{lem:coding} tells us that for all $Y$,
    \begin{equation}\label{eq:aarlem1}
        uK^{t_0(t(n))}(Y|1^n) \leq \log_2 \frac{1}{q_Y} + |\cG| + O(1).
    \end{equation}

    Similarly, for all $Y\in A_{n,s,\alpha}^{\cD}$, we have
    \begin{equation}\label{eq:aarlem2}
        \log_2 \frac{1}{p_{y_1}\dots p_{y_s}} \leq uK^{t_0(t(n))}(Y|1^n) + \alpha(n).
    \end{equation}

    And so~\Cref{eq:aarlem1,eq:aarlem2} together give that for all $Y\in A_{n,s,\alpha}^{\cD}$,
    \begin{equation*}
        \log_2 \frac{q_Y}{p_{Y}} \leq |\cG| + O(1) + \alpha(n)
    \end{equation*}

    By Bayes law, for all $Y$,
    \begin{equation*}
        \widetilde{q}_Y \leq \frac{q_Y}{1-\epsilon(n)}
    \end{equation*}

    and so we also get
    \begin{equation*}
        \log_2 \frac{\widetilde{q}_Y}{p_{Y}} \leq \log_2 \frac{1}{1-\epsilon(n)} + \log_2 \frac{{q}_Y}{p_{Y}}\leq \log_2 \frac{1}{1-\epsilon(n)} + |\cG| + O(1) + \alpha(n)
    \end{equation*}

    We can explicitly bound $\SD(\cB_n,\cD_n^{\otimes t})$ by working with KL-divergence.
    \begin{align*}
        \begin{split}
            D_{KL}(\cB_n || \cD_n^{\otimes s}) &= \sum_{Y\in [2^n]^s} \widetilde{q}_Y \log_2 \frac{\widetilde{q}_Y}{p_Y}\\
            &\leq \max_{Y \in A_{n,s,\delta}^{\cD}} \log_2 \frac{\widetilde{q}_Y}{p_Y}\\
            &\leq \log_2 \frac{1}{1-\epsilon(n)} + |\cG| + O(1) + \alpha(n)
        \end{split}
    \end{align*}

    Let $\cB_n^i$ be the marginal distribution of $\cB_n$ restricted to the $i$th coordinate.
    We have
    \begin{align*}
        \sum_{i=1}^s D_{KL}(\cB_n^i || \cD_n) \leq D_{KL}(\cB_n||\cD_n^{\otimes s}) \leq \log_2 \frac{1}{1-\epsilon(n)} + |\cG| + O(1) + \alpha(n).
    \end{align*}
    From Pinsker's inequality, we have
    \begin{align*}
        \sum_{i=1}^s \frac{1}{2}\SD(\cB_n^i, \cD_n)^2\leq D_{KL}(\cB_n||\cD_n^{\otimes s}) \leq \log_2 \frac{1}{1-\epsilon(n)} + |\cG| + O(1) + \alpha(n).
    \end{align*}
    From Cauchy-Schwarz inequality, we obtain 
    \begin{align*}
        \sum_{i=1}^s\Delta(\cB_n^i,\cD_n)\leq \sqrt{s\sum_{i=1}^s \Delta(\cB_n^i,\cD_n)^2}\leq \sqrt{2s\left(\log_2 \frac{1}{1-\epsilon(n)} + |\cG| + O(1) + \alpha(n)\right)}.
    \end{align*}
    This implies that
    \begin{align*}
        \Delta(\mathsf{Marginal}_{\cB}(1^n),\cD_n)=\frac{1}{s}\sum_{i=1}^s\Delta(\cB_n^i,\cD_n)\leq\sqrt{\frac{2}{s}\left(\log_2 \frac{1}{1-\epsilon(n)} + |\cG| + O(1) + \alpha(n)\right)}.
    \end{align*}
    From triangle inequality, we have
    \begin{align*}
    \Delta(\mathsf{Marginal}_{\cG}(1^n),\cD_n)
    &\leq \Delta(\mathsf{Marginal}_{\cG}(1^n),\mathsf{Marginal}_{\cB}(1^n))+\Delta(\mathsf{Marginal}_{\cB},\cD_n)\\
    &\leq \epsilon(n)+\sqrt{\frac{2}{s}\left(\log_2 \frac{1}{1-\epsilon(n)} + |\cG| + O(1) + \alpha(n)\right)}.
    \end{align*}
\end{proof}

\section{Strong Quantum Advantage Sampler from Quantum Cryptography}\label{app:strong_qas}

\if0
In this section, we prove the following \cref{thm:strong_QAS_from_crypto}.
\begin{theorem}\label{thm:strong_QAS_from_crypto}
If quantum advantage assumption~(\cref{def:QAA}) holds and $ \mathbf{PP}\nsubseteq\mathbf{i.o.BQP}^{\mathbf{NP}}$, then a strong QAS exists.
\end{theorem}

Here, quantum advantage assumption is a standard assumption in the literature of sampling-based quantum advantage defined as follows.
To the best of our knowledge, no previous works explicitly show how to obtain strong quantum advantage sampler from only quantum advantage assumption and $ \mathbf{PP}\nsubseteq\mathbf{i.o.BQP}^{\mathbf{NP}}$.
Therefore, we provide the proof of \cref{thm:strong_QAS_from_crypto}.

\begin{assumption}[Quantum Advantage Assumption~\cite{STOC:AarArk11,BreMonShe16}]\label{def:QAA}
We say that quantum advantage assumption holds if both of the following two conditions are satisfied:
\begin{enumerate}
\item There exists a family $\cC=\{\cC_n\}_{n\in\N}$ of distributions such that for each $n\in\N$, $\cC_n$ is a (uniform) QPT sampleable distribution over quantum circuits $C$, which takes $1^n$ as input, and outputs $n$-bit classical bit strings.
\item
There exist polynomials $p$ and $\gamma$ such that:
\begin{enumerate}
\item For all sufficiently large $n\in\N$,
\begin{align}
\Pr\left[\Pr[x\la C(1^n)]\geq \frac{1}{p(n)2^n}:C\la\cC_n,x\la\bit^n\right]\geq\frac{1}{\gamma(n)}.
\end{align}
\item For any oracle $\mathcal{O}$ satisfying that for all sufficiently large $n\in\N$,
\begin{align}
\Pr\left[\abs{\mathcal{O}(C,x)-\Pr[x\la C(1^n)]}\leq\frac{\Pr[x\la C(1^n)]}{p(n)}:C\la\cC_n,x\la\bit^n \right]\geq \frac{1}{\gamma(n)}-\frac{1}{p(n)},
\end{align}
we have that $\mathbf{P^{\#P}}\subseteq \mathbf{BPP}^{\mathcal{O}}$.
\end{enumerate}
\end{enumerate}
\end{assumption}

\subsection{Preliminaries}
For showing \cref{thm:strong_QAS_from_crypto}, we use several notions and theorems.

\begin{definition}[Weak Hardness of Quantum Probability Estimation]\label{def:weak_QPE} 
We say that weak hardness of quantum probability estimation holds if there exists a QPT algorithm $\cD$ and a polynomial $p$, which takes $1^n$ and outputs $x\in\bit^n$, such that 
\begin{align}
\Pr_{x\la\cD(1^n)}\left[\left(1-\frac{1}{p(n)}\right)\Pr[x\la\cD(1^n)]\leq \cA^{\mathbf{NP}}(x)\leq \left(1+\frac{1}{p(n)}\right)\Pr[x\la\cD(1^n)] \right]\leq 1-\frac{1}{p(n)}
\end{align}
for any QPT algorithm $\cA$ for all sufficiently large $n\in\N$.
\end{definition}

\begin{definition}[Strong Hardness of Quantum Probability Estimation]\label{def:strong_QPE}
\end{definition}

\begin{theorem}
If quantum advantage assumption~(\cref{def:QAA}) and $\mathbf{PP}\nsubseteq\mathbf{i.o.BQP}^{\mathbf{NP}}$ holds, then the weak hardness of quantum probability estimation (\cref{def:weak_QPE}) holds.
\end{theorem}

\begin{theorem}
If the weak hardness of quantum probability estimation~(\cref{def:weak_QPE}) holds, then non-uniform QPRGs exist.
\end{theorem}

\begin{theorem}
If non-uniform QPRGs exist, then the strong hardness of quantum probability estimation~(\cref{def:strong_QPE}) holds.
\end{theorem}

\subsection{Proof of \cref{thm:strong_QAS_from_crypto}}
\fi

In this section, we prove \cref{thm:strong_QAS_from_crypto}, which states that a strong quantum advantage samples exists from a plausible cryptographic assumption.

\begin{definition}
A strong quantum advantage sampler is a QPT algorithm $\cD_\cQ$ taking $1^n$ as input, and outputting strings $x\in\bit^{m(n)}$, where $m$ is an arbitrary polynomial, such that for all PPT algorithms $\cD_\cC$, we have
\begin{align*}
\mathsf{SD}(\cD_\cQ(1^n),\cD_\cC(1^n))\geq 1-\negl(n).
\end{align*}
\end{definition}

\begin{theorem}\label{thm:strong_QAS_from_crypto}
If there exists a QPRG secure against QPT algorithms querying to an $\mathbf{NP}$ oracle, then a strong QAS exists.
\end{theorem}

Here, a QPRG secure against QPT algorithms querying to an $\mathbf{NP}$ oracle is defined as follows.

\begin{definition}[Quantum Pseudorandom Generator secure against QPT algorithms querying to an $\mathbf{NP}$ oracle.]
Let $\Gen$ be a QPT algorithm that takes $1^n$ as input, and outputs $x\in\bit^n$.
We say that $\Gen$ is a quantum pseudorandom generator (QPRG) secure against QPT algorithms querying to an $\mathbf{NP}$ oracle if the following holds:
\begin{description}
\item[Statistically far:]
\begin{align*}
\Delta(\Gen(1^n),U_n)\geq 1-\negl(n)
\end{align*}
\item[Computationally indistinguishable:]
For any QPT adversary $\cA$ querying to an $\mathbf{NP}$ oracle, we have
\begin{align*}
\abs{\Pr_{x\la\Gen(1^n)}[1\la\cA^{\mathbf{NP}}(x)]-\Pr_{x\la U_n}[1\la\cA^{\mathbf{NP}}(x)]}\leq \negl(n)
\end{align*}
\end{description}
\end{definition}
For showing ~\cref{thm:strong_QAS_from_crypto}, we use the following Lemma~\ref{lem:amp}. Lemma~\ref{lem:amp} directly follows from a standard padding argument and parallel repetition, and this is also proven in \cite{C:HirMor25}. 
For clarity, we describe the proof in the end of this section.
\begin{lemma}\label{lem:amp}
Suppose that there exists a QPRG secure against QPT algorithms querying to $\mathbf{NP}$ oracle.
Then, for any $0<\tau<1$, there exists a $(1-2^{-n^\tau})$-statistically-far QPRG $\Gen^*$ such that the following holds:
\begin{description}
\item[$(1-2^{-n^\tau})$-statistically far:]
\begin{align*}
\Delta(\Gen^*(1^n),U_x)\geq 1-2^{-n^{\tau}}
\end{align*}
\item[Computationally indistinguishable:]
For any QPT adversary $\cA$ querying to $\mathbf{NP}$ oracle, we have
\begin{align*}
\abs{\Pr_{x\la\Gen^*(1^n)}[1\la\cA^{\mathbf{NP}}(x)]-\Pr_{x\la U_n}[1\la\cA^{\mathbf{NP}}(x)]}\leq \negl(n)
\end{align*}
\end{description}
\end{lemma}

\begin{proof}[Proof of Lemma~\ref{thm:strong_QAS_from_crypto}]
From \cref{lem:amp}, it is sufficient to construct a strong QAS from a $(1-2^{-n^{\tau}})$-statistically-far QPRG secure against QPT algorithms querying to an $\mathbf{NP}$ oracle for some $0<\tau<1$.

For contradiction, let us assume that there does not exist a strong QAS.
Then, we show that there does not exist a $(1-2^{-n^\tau})$-statistically-far QPRG secure against QPT algorithms querying to an $\mathbf{NP}$ oracle for any $0<\tau<1$.
More specifically, for any QPT algorithm $\cQ$ and any constant $0<\tau<1$ such that
\begin{align*}
\Delta(\cQ(1^n),U_n)\geq 1-2^{-n^\tau} 
\end{align*}
for all $n\in\N$, we construct a QPT algorithm $\cA$ querying to an $\mathbf{NP}$ oracle such that
\begin{align*}
\abs{\Pr_{x\la\cQ(1^n)}[1\la\cA^{\mathbf{NP}}(x) ] -\Pr_{x\la U_n}[1\la\cA^{\mathbf{NP}}(x)]}\geq n^{-\alpha}
\end{align*}
for some constant $0<\alpha<1$ for infinitely many $n\in\N$ assuming the non-existence of strong QAS.

For showing this, we use the following Claims~\ref{claim:far} and \ref{claim:est}.
We defer the proof of them.
\begin{claim}\label{claim:far}
For any algorithm $\cQ$ and any constant $0<\tau<1$ such that 
\begin{align*}
\Delta(\cQ(1^n),U_n)\geq 1-2^{-n^\tau}
\end{align*}
for all sufficiently large $n\in\N$,
we have
\begin{align*}
\Pr_{x\la\cQ(1^n)}[ \Pr[x\la\cQ(1^n)]\geq 2^{-n+\frac{n^{\tau}}{2}} ]\geq 1-2\cdot 2^{\frac{-n^{\tau}}{2}}
\end{align*}
for all sufficiently large $n\in\N$.
\end{claim}

\begin{claim}\label{claim:close}
For any constant $c>0$, and any algorithm $\cQ$ and $\cC$ such that
\begin{align*}
\Delta(\cQ(1^n),\cC(1^n))\leq 1-n^{-c}
\end{align*}
for all sufficiently large $n\in\N$,
we have
\begin{align*}
\Pr_{x\la\cQ(1^n)}\left[ \frac{1}{n^{2c}}\Pr[x\la\cQ(1^n) ]\leq \Pr[x\la\cC(1^n)] \right]\geq \frac{n^{-c}}{2}
\end{align*}
for all sufficiently large $n\in\N$.
\end{claim}

\begin{claim}\label{claim:est}
For any PPT algorithm $\cD$ and any $c>0$, there exists a PPT algorithm $\mathsf{Estimate}$ querying to an $\mathbf{NP}$ oracle such that
\begin{align*}
\Pr[ \frac{1}{2}\Pr[x\la\cD(1^n)] \leq \mathsf{Estimate}^{\mathbf{NP}}(1^n,x)\leq \Pr[x\la\cD(1^n)] ]\geq 1-n^{-c}
\end{align*}
for all $x\in\bit^*$ and all sufficiently large $n\in\N$.
\end{claim}

From the non-existence of strong QAS, there exists a PPT algorithm $\cC$ and a constant $c>0$ such that
\begin{align*}
\Delta(\cQ(1^n),\cC(1^n))\leq 1-n^{-c}
\end{align*}
for infinitely many $n\in\N$.
From Claim~\ref{claim:est}, there exists a PPT algorithm $\mathbf{Estimate}$ querying to an $\mathbf{NP}$ oracle such that
\begin{align*}
\Pr[ \frac{1}{2}\Pr[x\la\cC(1^n)] \leq \mathsf{Estimate}^{\mathbf{NP}}(1^n,x)\leq \Pr[x\la\cC(1^n)] ]\geq 1-n^{-2c}
\end{align*}
for all $x\in\bit^*$ and all sufficiently large $n\in\N$.

Now, we describe $\cA$.
\begin{description}
\item[The description of $\cA^{\mathbf{NP}}$:]$ $
\begin{enumerate}
\item Receive $x\in\bit^n$.
\item Run $p\la\mathsf{Estimate}^{\mathbf{NP}}(x)$.
\item If $p\geq n^{-2c}\cdot 2^{-n+\frac{n^\tau}{2}-1}$, output $1$.
Otherwise, sample $b\la\bit$ and output $b$.
\end{enumerate}
\end{description}

Let us define $\cS_{n}(A)$.
\begin{align*}
\cS_{n}(A)\seteq \left\{x\in\bit^n: \Pr[x\la\cC(1^n)]\geq A \right\}.
\end{align*}
We have
\begin{align*}
\Pr_{x\la U_n}[x\in\cS_{n}(A)]=\sum_{x\in\cS_{n}(A)} 2^{-n}\leq \abs{\cS_{n}(A)}2^{-n}\leq \frac{2^{-n}}{A}.
\end{align*}
Furthermore, from Claims~\ref{claim:far} and \ref{claim:close} and union bound, we have
\begin{align*}
\Pr_{x\la \cQ(1^n)}[x\in\cS_{n}(n^{-2c}\cdot 2^{-n+\frac{n^{\tau}}{2}})]\geq \frac{n^{-c}}{2}-2\cdot 2^{-\frac{n^\tau}{2}}\geq \frac{49}{100}n^{-c}
\end{align*}
for all sufficiently large $n\in\N$.
Hence, 
\begin{align*}
&\Pr_{x\la\cQ(1^n)}[1\la\cA^{\mathbf{NP}}(x)]-\Pr_{x\la U_n}[1\la\cA^{\mathbf{NP}}(x)]\\
&=\sum_{x\in\cS_{n}(n^{-2c}\cdot 2^{-n+\frac{n^{\tau}}{2}})}\Pr[1\la\cA^{\mathbf{NP}}(x)]\Pr[x\la\cQ(1^n)]+\sum_{x\notin\cS_{n}(n^{-2c}\cdot 2^{-n+\frac{n^{\tau}}{2}})}\Pr[1\la\cA^{\mathbf{NP}}(x)]\Pr[x\la\cQ(1^n)] \\
&-\left(\sum_{x\in\cS_{n}(n^{-2c}\cdot 2^{-n+\frac{n^{\tau}}{2}-2})}\Pr[1\la\cA^{\mathbf{NP}}(x)]\Pr[x\la U_n]+\sum_{x\notin\cS_{n}(n^{-2c}\cdot 2^{-n+\frac{n^{\tau}}{2}-2})}\Pr[1\la\cA^{\mathbf{NP}}(x)]\Pr[x\la U_n]\right)\\
&\geq \sum_{x\in\cS_{n}(n^{-2c}\cdot 2^{-n+\frac{n^{\tau}}{2}})}(1-n^{-2c})\Pr[x\la\cQ(1^n)]+\sum_{x\notin\cS_{n}(n^{-2c}\cdot 2^{-n+\frac{n^{\tau}}{2}})}\frac{1}{2}\Pr[x\la\cQ(1^n)]\\
&-\left(\sum_{x\in\cS_{n}(n^{-2c}\cdot 2^{-n+\frac{n^{\tau}}{2}-2})}\Pr[x\la U_n]+\sum_{x\notin\cS_{n}(n^{-2c}\cdot 2^{-n+\frac{n^{\tau}}{2}-2})}\left(\frac{1}{2}+\frac{n^{-2c}}{2}\right)\Pr[x\la U_n]\right)\\
&\geq \left(1-n^{-2c}\right) \cdot \frac{49}{100}n^{-c}+\frac{1}{2}\left(1-\frac{49}{100}n^{-c}\right)
-\left( 4n^{2c}\cdot 2^{\frac{-n^\tau}{2}} + \left(\frac{1}{2}+\frac{n^{-2c}}{2}\right)(1-4n^{2c}\cdot 2^{\frac{-n^\tau}{2}})\right)\\
&\geq \frac{1}{2}\left(1+\frac{49}{200}n^{-c}\right)-\left(\frac{1}{2}+\frac{n^{-2c}}{2}\right)\left(1+\negl(n)\right) \geq \frac{49}{400}n^{-2c}-\negl(n)
\end{align*}
for all sufficiently large $n\in\N$, which is a contradiction to the security of $\cQ$.
Here, in the first inequality, we have used that
\begin{align*}
\Pr[1\la\cA^{\mathbf{NP}}(x)]\geq 1-n^{-2c}
\end{align*}
for all $x\in\cS(n^{-2c}\cdot 2^{-n+n^{\frac{n^\tau}{2}}})$ and 
\begin{align*}
\Pr[1\la\cA^{\mathbf{NP}}(x)]\leq \frac{1}{2}+\frac{n^{-2c}}{2}
\end{align*}
for all $x\notin \cS(n^{-2c}\cdot 2^{-n+n^{\frac{n^\tau}{2}}-2})$.

\begin{proof}[Proof of Claim~\ref{claim:far}]
We define the following sets.
\begin{align*}
    &A\seteq \{x\in\bit^n:\Pr[x\la \cQ(1^n)]<2^{-n} \} \\
    &B\seteq \{x\in\bit^n: 2^{-n}\leq \Pr[x\la\cQ(1^n)]<2^{-n+\frac{n^{\tau}}{2}} \}\\
    &C\seteq \{x\in\bit^n:2^{-n+\frac{n^{\tau}}{2}}\leq \Pr[x\la\cQ(1^n)]\leq 1 \}.
\end{align*}
From the definition of $\cQ$ and total variation distance, we have
\begin{align*}
1-2^{-n^{\tau}}&\leq \sum_{x\in A}\Pr[x\la\bit^n]-\Pr[x\la\cQ(1^n)]\\
               &=\sum_{x\in A}2^{-n}-\Pr[x\la\cQ(1^n)]\\
               &\leq \abs{A}2^{-n}.
\end{align*}
This implies that 
\begin{align*}
\abs{B}+\abs{C}\leq 2^{n-n^{\tau}}.
\end{align*}
Furthermore, we have
\begin{align*}
1-2^{-n^{\tau}}
&\leq \sum_{x\in B}(\Pr[x\la\cQ(1^n)]-2^{-n})+\sum_{x\in C}(\Pr[x\la\cQ(1^n)]-2^{-n})\\
&\leq \sum_{x\in B}2^{-n+\frac{n^{\tau}}{2}}+\sum_{x\in C}\Pr[x\la\cQ(1^n)]\\
&\leq \abs{B}\cdot 2^{-n+\frac{n^\tau}{2}}+\sum_{x\in C}\Pr[x\la\cQ(1^n)].
\end{align*}
This implies that
\begin{align*}
    1-2\cdot 2^{-\frac{n^{\tau}}{2}}\leq \sum_{x\in C}\Pr[x\la\cQ(1^n)].
\end{align*}
\end{proof}

\begin{proof}[Proof of Claim~\ref{claim:close}]
Let us define
\begin{align*}
&A\seteq \left\{x\in\bit^n: \frac{1}{n^{2c}}\Pr[x\la\cQ(1^n)]\geq \Pr[x\la \cD(1^n)]\right\}.
\end{align*}
Then, we have
\begin{align*}
\sum_{x\in A}\Pr[x\la \cD(1^n)]&\leq \sum_{x\in A} \frac{1}{n^{2c}}\Pr[x\la\cQ(1^n)]\\
\sum_{x\in A}\Pr[x\la \cD(1^n)]-\Pr[x\la\cQ(1^n)]&\leq \sum_{x\in A} \left(\frac{1}{n^{2c}}-1\right)\Pr[x\la\cQ(1^n)]
\end{align*}
This implies that
\begin{align*}
\left(1-\frac{1}{n^{2c}}\right)\sum_{x\in A}\Pr[x\la\cQ(1^n)]\leq \sum_{x\in A}\Pr[x\la\cQ(1^n)]-\Pr[x\la\cD(1^n)]\leq \Delta(\cD(1^n),\cQ(1^n)).
\end{align*}
Therefore, we have
\begin{align*}
\sum_{x\in A}\Pr[x\la\cQ(1^n)]\leq \frac{1-n^{-c}}{1-n^{-2c}}\leq 1-\frac{n^{-c}}{2}.
\end{align*}
This implies that
\begin{equation*}
\Pr_{x\la\cQ(1^n)}\left[\frac{1}{n^{2c}}\Pr[x\la\cQ(1^n)]\leq
\Pr[x\la\cC(1^n)] \right]\geq \frac{n^{-c}}{2}.
\qedhere
\end{equation*}
\end{proof}

\begin{proof}[Proof of Claim~\ref{claim:est}]
\cref{claim:est} directly follows from Claim~\ref{thm:probest}.
\end{proof}
\end{proof}

\begin{proof}[Proof of \cref{lem:amp}]
Suppose that QPRG secure against QPT algorithms querying to an $\mathbf{NP}$ oracle exists.
Then, there exists $\Gen$ such that
\begin{align*}
    \mathsf{SD}((x)_{x\la\Gen(1^n)},(x)_{x\la \bit^n})\geq \frac{1}{n}
\end{align*}
for all sufficiently large $n\in\N$.

For an arbitrary real $0<\tau <1$, we consider the following $\Gen^*$.
\begin{description}
    \item[The description of $\Gen^*(1^{n})$:]$ $
    \begin{itemize}
        \item Let $A=n^{\frac{1-\tau}{2}}$ and $B=n^{\frac{1+\tau}{2}}$.
        \item Run $ x_i\la \Gen(1^A)$ for all $i\in[B]$.
        \item Output $\{x_i\}_{i\in[B]}$.
    \end{itemize} 
\end{description}
We have
\begin{align*} 
    \mathsf{SD}\left( (\{x_i\}_{i\in[B]})_{\{x_i\}_{i\in[B]}\la \Gen^*(1^n)},X_{X\la\bit^{n}}  \right)&\geq 1- \exp\left(- B\cdot\mathsf{SD}\left(x_{x\la\Gen(1^A)},x_{x\la\bit^A}\right) \right)  \\
    &\geq 1-\exp(-\frac{B}{A})=1-\exp(-n^{\tau})>1-2^{-n^\tau}.
\end{align*}
Furthermore, computationally indistinguishability of $\Gen^*$ follows from a standard hybrid argument.
\end{proof}

\section{Efficient verification of quantum distributions with a $\PP$ oracle using Kolmogorov complexity}

%\subsection{Quantum Distribution Verification}
%\subsection{Quantum Distribution Verification}

%\subsubsection{Efficient Verification of Quantum Distributions with $\mathsf{PP}$ Oracle.}

In this section, we provide quantum distribution verification by using quantum Kolmogorov complexity for clarity. Remark that \cref{thm:q_easiness} is superseded by the \cref{thm:extrapolate}.

\label{sec:BQP=PP}
%In the following, we show the following \cref{thm:q_easiness}.
\begin{theorem}\label{thm:q_easiness}
Every QPT algorithm $\cD$ is adaptively-verifiable with a classical deterministic polynomial-time algorithm querying to $\mathsf{PP}$ oracle.
\end{theorem}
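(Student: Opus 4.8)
The plan is to instantiate the generic verifier sketched in the Technical Overview using the quantum time-bounded Kolmogorov complexity $quK^t$, and then argue that this verifier — though defined via an a priori uncomputable quantity — can actually be executed by a classical deterministic polynomial-time machine with access to a $\mathsf{PP}$ oracle. Fix a QPT sampler $\cD$, a polynomial time bound $t$, an error function $\epsilon$, and a constant $c$. Let $t_0$ be the polynomial from the quantum coding theorem (\cref{lem:q_coding}). Set $\alpha \seteq (\log n)^2$ and $s(n^c,t(n),1/\epsilon(n)) \seteq n^{4c}\bigl(\log_2 \tfrac{1}{1-\epsilon(n)} + 2(\log n)^2\bigr)$, exactly as in the classical case (\cref{thm:easiness}). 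The verifier $\Vrfy$, on input $1^n$ and $y_1,\dots,y_{s}$, computes $k \seteq quK^{t_0(t(n))}(y_1,\dots,y_s \mid 1^n)$ and $p \seteq \Pr[(y_1,\dots,y_s) \la \cD(1^n)^{\otimes s}]$, and outputs $\top$ iff $-\log_2 p \le k + \alpha$, and $\bot$ otherwise.

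\textbf{Correctness and adaptive security.} These follow almost verbatim from the proof of \cref{thm:easiness}, with $quK^t$ in place of $uK^t$ and the exact values $k$, $p$ in place of the estimates $\cM(\cdot)$, $\mathsf{Approx}(\cdot)$ (which only makes the analysis cleaner, since there is now no estimation error). Correctness: by the quantum incompressibility theorem (\cref{thm:q_kraft}), with probability at least $1 - (m(n)s + c)\cdot 2^{-\alpha+1} \ge 1 - n^{-c}$ over $(y_1,\dots,y_s) \la \cD(1^n)^{\otimes s}$ we have $quK^{t_0(t(n))}(y_1,\dots,y_s \mid 1^n) \ge -\log_2 p - \alpha$, i.e. $\Vrfy$ outputs $\top$. (One uses that $t_0(t(n)) > n$ so the hypothesis of \cref{thm:q_kraft} applies, and that $\cD(1^n)^{\otimes s}$ is itself a QPT sampler.) Adaptive security: given a $t$-time adversary $\cA$ with $\Delta(\mathsf{Marginal}_\cA(1^n),\cD(1^n)) \ge \epsilon(n)$, consider the set $A^{\cD}_{n,s,\alpha+1}$ of \cref{lem:q_modaaronson}; if $\Pr_{\cA}[(y_1,\dots,y_s) \in A^{\cD}_{n,s,\alpha+1}] \ge 1 - (1-\epsilon(n)) = \epsilon(n) + n^{-c}$ — which is precisely the event that $\Vrfy$ accepts, up to the additive constant absorbed into $\alpha$ — then \cref{lem:q_modaaronson} forces $\SD(\cD(1^n),\mathsf{Marginal}_\cA(1^n)) \le (1-\epsilon(n)) + \sqrt{(\log_2\frac{1}{1-\epsilon(n)} + \alpha + C)/s} < \epsilon(n)$ for large $n$ by the choice of $s$, a contradiction. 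Hence $\Vrfy$ accepts with probability at most $(1-\epsilon(n)) + n^{-c}$.

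\textbf{Efficiency: the main point.} It remains to show $\Vrfy$ is implementable in $\mathsf{P}^{\mathsf{PP}}$. There are two quantities to produce. First, $\Pr[(y_1,\dots,y_s) \la \cD(1^n)^{\otimes s}] = \prod_i \Pr[y_i \la \cD(1^n)]$, and each factor $\Pr[y_i \la \cD(1^n)]$ is, for a fixed polynomial-size quantum circuit, a quantity whose bits a classical deterministic poly-time machine can extract exactly using a $\mathsf{PP}$ oracle (this is the standard fact that output probabilities of poly-size quantum circuits are $\#\mathsf{P}$-computable and hence accessible to $\mathsf{P}^{\mathsf{PP}}$; cf.\ \cite{FR99}). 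Second, $quK^{t_0(t(n))}(y_1,\dots,y_s \mid 1^n) = -\log_2 \Pr[(y_1,\dots,y_s) \la \cQ^{t_0(t(n))}(1^n)]$, and $\cQ^{t_0(t(n))}$ is itself a machine of the same type: it samples a random string $\Pi$, runs a (polynomial-time) universal Turing machine $\cU^{t_0(t(n))}(\Pi,1^n)$ to get a circuit description $c$, and then runs the poly-size quantum circuit $\mathsf{QU}^{t_0(t(n))}(1^n,c)$. Thus $\Pr[(y_1,\dots,y_s) \la \cQ^{t_0(t(n))}(1^n)]$ is again an output probability of a (uniformly describable, polynomial-time) randomized-then-quantum process, so its bits can likewise be computed exactly in $\mathsf{P}^{\mathsf{PP}}$ by the same $\#\mathsf{P}$-counting argument. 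With both $-\log_2 p$ and $k$ in hand, the comparison $-\log_2 p \le k + \alpha$ is trivial. The one technical subtlety I expect to need care is making the $\#\mathsf{P}$/$\mathsf{PP}$ computation of these real-valued probabilities precise — one works with suitably many bits of binary precision, notes that both probabilities are (finite-precision) rationals with denominators that are powers of two bounded by $2^{\poly(n)}$ coming from the gate-set discretization and the $2^{t}$ randomness space, and checks that the additive slack $\alpha = (\log n)^2$ comfortably dominates any rounding error; this is routine but is the place where one must be explicit, so it is the main obstacle to a fully rigorous write-up.
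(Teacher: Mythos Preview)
Your proposal follows essentially the same approach as the paper: same choice of $s$ and $\alpha$, same verifier comparing $-\log_2 p$ against $quK^{t_0(t(n))}(\cdot\mid 1^n)+\alpha$, correctness via \cref{thm:q_kraft}, adaptive security via (the contrapositive of) \cref{lem:q_modaaronson}, and efficiency via the $\mathsf{P}^{\mathsf{PP}}$ computability of quantum-circuit output probabilities (\cite{FR99}). The paper packages the latter into \cref{thm:q_estimate} and \cref{thm:qvk_estimate}, while you spell out the $\#\mathsf{P}$ counting and precision considerations more explicitly, which is fine.

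One small wrinkle in your security paragraph: the line ``$\Pr_{\cA}[\cdot] \ge 1 - (1-\epsilon(n)) = \epsilon(n) + n^{-c}$'' and the subsequent ``$\SD \le (1-\epsilon(n)) + \sqrt{\cdots} < \epsilon(n)$'' are garbled. What you want (and what the paper does) is: the acceptance event is exactly $(y_1,\dots,y_s)\in A^{\cD}_{n,s,\alpha}$, and by \cref{lem:q_modaaronson} together with your choice of $s$, if this probability exceeded $(1-\epsilon(n))+n^{-2c}$ then $\SD(\cD(1^n),\mathsf{Marginal}_\cA(1^n))$ would drop below $\epsilon(n)$, contradicting the hypothesis on $\cA$. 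Once you straighten out the quantifiers there, your argument matches the paper's.
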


Although the proof is similar to that of \cref{thm:easiness}, we describe the proof for clarity. 
For showing \cref{thm:q_easiness}, we use the following \cref{thm:q_estimate,thm:qvk_estimate}.
%which guarantees that for any QPT algorithm $\cD$, there exists a ${\rm P}^{\mathbf{PP}}$ algorithm $\mathsf{Approx}$ that can compute $\Pr[x\la\cD(1^n)]$ for all $x\in\bit^*$.
\begin{theorem}[Worst-case probability estimation with $\mathsf{PP}$ oracle \cite{FR99}]\label{thm:q_estimate}
For any constant $c\in\N$, and for any QPT algorithm $\cD$, which takes $1^n$ as input and outputs $x\in\bit^{m(n)}$, where $m$ is a polynomial,
there exists a deterministic polynomial-time algorithm $\mathsf{Approx}$ querying to $\mathsf{PP}$ oracle such that
\begin{align*}
\Pr\left[\mathsf{Approx}^{\mathsf{PP}}(x,1^n)=\Pr[x\la\cD(1^n)]\right]= 1
\end{align*}
for all $x\in\bit^{m(n)}$ and for all sufficiently large $n\in\N$.
%\mor{Isn't the success probability is 1 for any $n$?}\taiga{I am not sure because we might have error to implement $\mathbf{PP}$ oracle assuming $\mathbf{BQP}=\mathbf{PP}$.}
\end{theorem}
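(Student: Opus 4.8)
The plan is to exhibit $\Pr[x\la\cD(1^n)]$ explicitly, via a Feynman sum over computational-basis paths, as a dyadic rational $f(x,1^n)/2^{N(n)}$ in which $N$ is a fixed polynomial and $f$ is a nonnegative, polynomially-bounded integer that is the value of a $\mathsf{GapP}$ function, and then to recover $f(x,1^n)$ exactly by binary search using polynomially many queries to the $\mathsf{PP}$ oracle. Since $\mathsf{Approx}$ may depend on $\cD$ and receives $1^n$, it can first reconstruct in polynomial time the circuit $C_n$ that $\cD$ runs on input $1^n$; I would fix once and for all a real universal gate set of Hadamard$+$Toffoli type, so that $C_n$ acts on $p(n)$ qubits and consists of $T(n)$ gates, where Toffoli gates have $\{0,1\}$-valued matrix entries and each Hadamard equals $2^{-1/2}$ times a matrix with $\{\pm 1\}$-valued entries. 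The only genuinely delicate point is this choice of gate set: it is what makes $\Pr[x\la\cD(1^n)]$ an exactly representable dyadic rational. For a complex universal gate set such as Clifford$+T$ the same argument goes through, except that $\Pr[x\la\cD(1^n)]$ is then an algebraic rather than a dyadic number and $\mathsf{Approx}$ outputs a finite symbolic representation of it (a tuple of integers and a denominator exponent), obtained by separately extracting each of a constant number of $\mathsf{GapP}$ quantities; the extra ring-theoretic bookkeeping is routine but unenlightening.

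Concretely, writing $C_n=G_T\cdots G_1$, for a path $\vec y=(y_1,\dots,y_{T-1})\in(\bit^{p(n)})^{T-1}$ with $y_0\seteq 0^{p(n)}$ and for $y_T$ ranging over the basis states whose first $m(n)$ bits equal $x$, we have
\begin{align}
\Pr[x\la\cD(1^n)] \;=\; \sum_{y_T}\Bigl(\sum_{\vec y}\ \prod_{j=1}^{T}\langle y_j \vert G_j \vert y_{j-1}\rangle\Bigr)^{2}.
\end{align}
Because each $G_j$ acts on $O(1)$ qubits, every factor $\langle y_j\vert G_j\vert y_{j-1}\rangle$ is computable from $(\vec y,y_T)$ in polynomial time; with the above gate set it equals $2^{-N(n)/2}$ times a value in $\{-1,0,1\}$, where $N(n)$ is the number of Hadamard gates in $C_n$, a polynomial. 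Hence the inner sum equals $2^{-N(n)/2}A(x,1^n,y_T)$, where $A(x,1^n,y_T)\seteq\sum_{\vec y}\prod_j\bigl(2^{N(n)/2}\langle y_j\vert G_j\vert y_{j-1}\rangle\bigr)$ is a sum over exponentially many inputs of a polynomial-time computable, $\{-1,0,1\}$-valued function, hence a $\mathsf{GapP}$ function of $(x,1^n,y_T)$. Squaring and using that $\mathsf{GapP}$ is closed under products and under polynomially-indexed summation, the quantity
\begin{align}
f(x,1^n)\;\seteq\;\sum_{y_T} A(x,1^n,y_T)^2
\end{align}
is a $\mathsf{GapP}$ function; it is a nonnegative integer, and since $|A(x,1^n,y_T)|$ is at most the number of paths $2^{\poly(n)}$ and there are at most $2^{p(n)}$ choices of $y_T$, it satisfies $0\le f(x,1^n)\le 2^{q(n)}$ for some polynomial $q$. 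Moreover $\Pr[x\la\cD(1^n)]=f(x,1^n)/2^{N(n)}$.

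Finally I would extract $f(x,1^n)$ exactly. For any $\mathsf{GapP}$ function $g$ and any integer threshold $k$ given in binary, the predicate ``$g(w)\ge k$'' is decidable in deterministic polynomial time with a single query to a $\mathsf{PP}$ oracle, because $w\mapsto g(w)-k$ is again $\mathsf{GapP}$ and deciding the sign of a $\mathsf{GapP}$ function is exactly the power of $\mathsf{PP}$ (equivalently $\mathrm{P}^{\mathsf{PP}}=\mathrm{P}^{\#\mathsf{P}}$). So $\mathsf{Approx}^{\mathsf{PP}}(x,1^n)$ performs a binary search over $[0,2^{q(n)}]$, pinning down $f(x,1^n)$ after $q(n)+1$ oracle queries, and outputs $f(x,1^n)/2^{N(n)}$. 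Every step is deterministic and the oracle answers are exact, so the output equals $\Pr[x\la\cD(1^n)]$ with probability $1$. This is precisely the sum-over-paths proof that $\mathsf{BQP}\subseteq\mathsf{PP}$~\cite{FR99}, used here to compute, rather than merely threshold, the output probability.
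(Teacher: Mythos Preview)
The paper does not prove this theorem; it simply states it with a citation to \cite{FR99} and uses it as a black box. Your proposal correctly reconstructs the standard Fortnow--Rogers argument: express the acceptance probability as a Feynman path sum over a real universal gate set, recognise the numerator as a $\mathsf{GapP}$ function, and recover it exactly by binary search using that $\mathsf{PP}$ decides the sign of $\mathsf{GapP}$ quantities (equivalently, $\mathrm{P}^{\mathsf{PP}}=\mathrm{P}^{\#\mathsf{P}}$). This is exactly the content behind the citation, so your approach and the paper's (implicit) one coincide.

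One small presentational slip: in your definition of $A(x,1^n,y_T)$ you place the global factor $2^{N(n)/2}$ \emph{inside} the product over $j$, which as written would yield $(2^{N(n)/2})^{T}$ rather than $2^{N(n)/2}$. What you mean, and what makes each summand lie in $\{-1,0,1\}$, is to rescale only the Hadamard factors by $\sqrt{2}$ (equivalently, pull a single global $2^{N(n)/2}$ outside the whole product). With that corrected, the argument is complete.
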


\cref{thm:q_estimate} directly implies the following \cref{thm:qvk_estimate}.
\begin{theorem}\label{thm:qvk_estimate}
For any constant $c\in\N$ and polynomial $t$, there exists a deterministic polynomial-time algorithm $\cM$ querying to $\mathsf{PP}$ oracle such that
\begin{align*}
\Pr[\cM^{\mathsf{PP}}(x,1^n)=quK^{t(\abs{x})}(x|1^n)]=1
\end{align*}
for all $x\in\bit^*$ and all sufficiently large $n\in\N$.
\end{theorem}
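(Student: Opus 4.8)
The plan is to instantiate the worst-case probability estimator of \cref{thm:q_estimate} on the specific QPT algorithm $\cQ^t$ appearing in \cref{def:quK}. Recall that by definition $quK^{t(\abs{x})}(x|1^n)=-\log_2\Pr[x\la\cQ^{t(\abs{x})}(1^n)]$, so it suffices to exhibit a deterministic polynomial-time algorithm with a $\mathsf{PP}$ oracle that, on input $(x,1^n)$, computes $\Pr[x\la\cQ^{t(\abs{x})}(1^n)]$ exactly, and then returns $-\log_2$ of that value.

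The only wrinkle is that the relevant time bound $t(\abs{x})$ depends on the candidate string $x$ rather than on $1^n$. To handle this, first define a single QPT algorithm $\cD$ that takes as input a unary encoding $1^{\langle n,\ell\rangle}$ of a pair $(n,\ell)$ and runs $\cQ^{t(\ell)}(1^n)$: it samples $\Pi\la\bit^{t(\ell)}$, computes $c\la\cU^{t(\ell)}(\Pi,1^n)$, runs $x\la\mathsf{QU}^{t(\ell)}(1^n,c)$, and outputs $x$ (padding to a fixed length $\poly(n+\ell)$ if one wants it in the exact syntactic form of \cref{thm:q_estimate}). Since $t$ is a polynomial, $\cD$ runs in time polynomial in $n+\ell$, so it is a bona fide QPT algorithm of the kind covered by \cref{thm:q_estimate}. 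Applying that theorem to $\cD$ yields a deterministic polynomial-time algorithm $\mathsf{Approx}$ which, using a $\mathsf{PP}$ oracle, computes $\Pr[z\la\cD(1^{\langle n,\ell\rangle})]$ exactly for every string $z$.

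Now define $\cM^{\mathsf{PP}}(x,1^n)$ as follows: set $\ell\seteq\abs{x}$, compute $p\seteq\mathsf{Approx}^{\mathsf{PP}}(x,1^{\langle n,\ell\rangle})$, and output $-\log_2 p$ (equivalently, output the rational $p$, which determines the real number $quK^{t(\abs{x})}(x|1^n)$). By construction $\cD(1^{\langle n,\ell\rangle})$ produces exactly the distribution $\cQ^{t(\ell)}(1^n)=\cQ^{t(\abs{x})}(1^n)$, so $p=\Pr[x\la\cQ^{t(\abs{x})}(1^n)]=2^{-quK^{t(\abs{x})}(x|1^n)}$ and hence $\cM^{\mathsf{PP}}(x,1^n)=quK^{t(\abs{x})}(x|1^n)$ with probability $1$ for all $x\in\bit^*$ and all sufficiently large $n$. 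The pairing, the length computation, the call to $\mathsf{Approx}$, and the final $-\log_2$ (or the trivial ``output $p$'') step are all polynomial time, so $\cM$ is a deterministic polynomial-time algorithm querying a $\mathsf{PP}$ oracle, as required.

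I do not expect a genuine obstacle: the entire content lies in \cref{thm:q_estimate} (i.e. in \cite{FR99}). The two points that merely require care are (i) folding the output-dependent time bound $t(\abs{x})$ into the input of a fixed QPT algorithm, as done above; and (ii) noting that the output probabilities of a polynomial-size quantum circuit over the fixed gate set are rationals with polynomially many bits, so that ``outputs $\Pr[x\la\cQ^{t(\abs{x})}(1^n)]$ exactly'' — and therefore a representation of the value $quK^{t(\abs{x})}(x|1^n)$ — is meaningful for a polynomial-time machine, consistent with the convention already in force in \cref{thm:q_estimate}.
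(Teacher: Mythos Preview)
Your proposal is correct and matches the paper's approach exactly: the paper simply states that \cref{thm:q_estimate} ``directly implies'' \cref{thm:qvk_estimate}, and what you have written is precisely the unpacking of that implication (apply worst-case probability estimation to the QPT sampler $\cQ^{t}$ and take $-\log_2$). Your handling of the $t(\abs{x})$-dependence via a pairing input is a reasonable bit of bookkeeping that the paper leaves implicit.
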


\begin{proof}[Proof of \cref{thm:q_easiness}]
In the following, let us introduce notations to describe $\Vrfy$.

\paragraph{Notations.}
We set $s(n,t(n),\epsilon(n)) = n^{4c}\left(\log_2\frac{1}{1-\epsilon(n)} + 2(\log(n))^2 \right) $.
For simplicity, we often denote $s=n^{4c}\left(\log_2\frac{1}{1-\epsilon(n)} + 2(\log(n))^2 \right)$ below.

We also set $\alpha=(\log(n))^2$.

$\mathsf{Approx}$ is a deterministic polynomial-time algorithm querying to $\mathsf{PP}$ oracle given in \Cref{thm:q_estimate} such that 
\begin{align*}
\Pr\left[\mathsf{Approx}^{\mathsf{PP}}(x,1^{n})= \Pr[x\la\cD(1^n)^{\otimes s}]\right] = 1
\end{align*}
for all $x\in\bit^{s\cdot m(n)}$ and all sufficiently large $n\in\N$.

$\cM$ is a deterministic polynomial-time algorithm querying to $\mathsf{PP}$ oracle given in \cref{thm:qvk_estimate} such that
\begin{align*}
\Pr[\cM^{\mathsf{PP}}(x,1^{t})= quK^{t(1^n)}(x|1^n)] =1
\end{align*}
for all $x\in\bit^{s\cdot m(n)}$ and for all sufficiently large $n\in\N$.

\paragraph{Construction:}
We give a construction of $\Vrfy$.

\begin{description}
    \item[$\Vrfy$:]$ $
    \begin{enumerate}
        \item Take $1^n$ and  $y_1,\dots,y_{s(n)}$ as input.
        \item Compute $k\la\cM^{\mathsf{PP}}(y_1,\dots,y_{s(n)},1^{t})$.
        \item Compute $p\la\mathsf{Approx}^{\mathsf{PP}}(y_1,\dots,y_{s(n)})$.
        \item Output $\top$ if $-\log_2 p \leq k + \alpha$.
        Otherwise, output $\bot$.
    \end{enumerate}
\end{description}
In the following, we show that $\Vrfy$ satisfies the correctness and security.

\paragraph{Correctness:}
From union bound, and the definition of $\cM$ and $\mathsf{Approx}$, we have
\begin{align*}
&\Pr_{y_1,\dots,y_s\la\cD(1^n)^{\otimes s}}[\bot\la\Ver(1^n,y_1,\dots,y_s)]\\
&= \Pr_{y_1,\dots,y_s\la\cD(1^n)^{\otimes s}}[-\log_2 \left(\mathsf{Approx}^{\mathsf{PP}}(y_1,\dots,y_s)\right)\geq \cM^{\mathsf{PP}}(y_1,\dots,y_s)+\alpha]\\
&= \Pr_{y_1,\dots,y_s\la\cD(1^n)^{\otimes s}}\left[-\log_2 \left(\Pr[\cD(1^n)^{\otimes s}\ra y_1,\dots,y_s]\right)  \geq quK^{t(n)}(y_1,\dots,y_s|1^n) + \alpha\right]
\end{align*}
for all sufficiently large $n\in\N$.
From \cref{thm:q_kraft}, we have
\begin{align*}
\Pr_{y_1,\dots,y_s\la\cD(1^n)^{\otimes s}}\left[-\log_2 (\Pr[y_1,\dots,y_s\la\cD(1^n)^{\otimes s}]) \geq quK^{t(n)}(y_1,\dots,y_s|1^n) + \alpha\right]\leq \frac{s\cdot m(n)}{n^{\log(n)}}\leq \frac{1}{n^{2c}}
\end{align*}
for all sufficiently large $n\in\N$.
This concludes the correctness.

\paragraph{Adaptive-Soundness:}

From union bound and the definition of $\cM$ and $\mathsf{Approx}$, we have
\begin{align*}
&\Pr_{y_1,\dots,y_s\la\cD(1^n)^{\otimes s}}[\Ver(1^n,y_1,\dots,y_s)\to \top]\\
&\Pr_{y_1,\dots,y_s\la\cD(1^n)^{\otimes s}}[-\log_2 \left(\mathsf{Approx}^{\mathsf{PP}}(y_1,\dots,y_s)\right)\leq \cM^{\mathsf{PP}}(y_1,\dots,y_s)+\alpha]\\
&= \Pr_{y_1,\dots,y_s\la\cD(1^n)^{\otimes s}}\left[-\log_2 \left(\Pr[y_1,\dots,y_s\la\cD(1^n)^{\otimes s}]\right)  \leq quK^{t(n)}(y_1,\dots,y_s|1^n) + \alpha\right]
\end{align*}
for all sufficiently large $n\in\N$.

Furthermore, from \cref{lem:q_modaaronson},
for any $t(n)$-time quantum adversary $\cA$, which takes $1^n$ and outputs strings of length $m(n)\cdot s$, and satisfies $\Delta(\mathsf{Marginal}_{\cA}(1^n),\cD_n)\geq \epsilon(n)$, we have 
\begin{align*}
&\Pr_{y_1,\dots,y_s\la\cD(1^n)^{\otimes s}}\left[-\log_2 \left(\Pr[y_1,\dots,y_s\la\cD(1^n)^{\otimes s}]\right) \leq quK^{t(n)}(y_1,\dots,y_s|1^n) + \alpha\right]\\
&\leq 1-\epsilon(n)+ \sqrt{\frac{\log_2\frac{1}{1-\epsilon(n)}+(\log(n))^2+C}{n^{4c}\left(\log_2\frac{1}{1-\epsilon(n)}+2(\log(n))^2 \right) }}\\
&\leq 1-\epsilon(n) + n^{-2c}
\end{align*}
for all sufficiently large $n\in\N$.
\end{proof}

\ifnum\cameraready=1
\else
\ifnum\submission=1
\newpage
\setcounter{tocdepth}{1}
%\tableofcontents
\else
\fi
\fi

\end{document}